\newif\ifsubmission\submissionfalse
\newif\ifcomments\commentsfalse
\newif\ifproofsinbody\proofsinbodytrue
\newif\iffullversion\fullversionfalse   
\newif\ificalpfullversion\icalpfullversiontrue 
\newif\ificalpshortversion\icalpshortversionfalse 
\newif\ificalpcorrectedversion\icalpcorrectedversionfalse


\ifsubmission
  \documentclass[USenglish]{lipics-v2016}
  \RequirePackage{etex}
  \usepackage{microtype}
  \bibliographystyle{plainurl}
\else
	\documentclass[11pt]{article}
  \usepackage[margin=1in]{geometry}
  \usepackage{amsfonts}
  \usepackage{amsmath, amssymb}
  \RequirePackage{etex}
  \usepackage{microtype}
\fi

\usepackage[dvipsnames]{xcolor}
\usepackage{stmaryrd}
\usepackage{url}
\usepackage{hyperref}
\usepackage{tikz}
\usepackage{algorithm}
\usepackage{pgfplots}
\usepackage{subcaption}
\usepackage{enumerate}
\usepackage{environ,refcount}
\usepackage{wrapfig}

\usepackage{color}

\linespread{0.98}

\usepackage{enumitem}
\setlist[itemize]{noitemsep, topsep=2pt, leftmargin=.25in}
\setlist[enumerate]{noitemsep, topsep=2pt, leftmargin=.25in}
\setlist[description]{leftmargin=\parindent,labelindent=0pt}


  \globtoksblk\savedenvtoks{100}  
  \newcounter{savedenvcount}

  \NewEnviron{savedenv}[1][]{%
    \refstepcounter{savedenvcount}%
    \if\relax\detokenize{#1}\relax
    \else
      \label{#1}%
    \fi
    \global\toks\numexpr\savedenvtoks+\value{savedenvcount}\relax=\expandafter{\BODY}%
  }
  \toks\savedenvtoks={??}

  \newcommand{\printsaved}[1]{%
    \the\toks\numexpr\savedenvtoks+\getrefnumber{#1}\relax
  }

  \makeatletter
  \newcommand{\printallsaved}{%
    \@tempcnta=\z@
    \loop
      \ifnum\@tempcnta<\value{savedenvcount}
      \advance\@tempcnta\@ne
      \the\toks\numexpr\savedenvtoks+\@tempcnta\relax\par
    \repeat
  }
  \makeatother



\newcommand{\subh}[1]{\par \vspace{4pt} \noindent \textbf{\textsf{#1}}}

\ifsubmission
  \theoremstyle{plain}
  \newtheorem{conjecture}[theorem]{Conjecture}
	\newtheorem{property}[theorem]{Property}
  \newtheorem{gconstruction}[theorem]{Graph Construction}
  \theoremstyle{definition}
  \newtheorem*{openquestion}{Open Question}
  \renewcommand{\paragraph}[1]{\subh{#1}}
\else
	\usepackage{amsthm}
	\theoremstyle{plain}
	\newtheorem{theorem}{Theorem}[section]
	
	\newtheorem{lemma}[theorem]{Lemma}
	\newtheorem{corollary}[theorem]{Corollary}
	\newtheorem{definition}[theorem]{Definition}

	\newtheorem{gconstruction}[theorem]{Graph Construction}
  \newtheorem*{openquestion}{Open Question}

	\theoremstyle{definition}
	\newtheorem*{remark}{Remark}
\fi

\newenvironment{informalthm}
  {\medskip\noindent{Theorem (informal).}}
  {\medskip}

\newenvironment{informalthmwithnum}[1]
  {\medskip\noindent\textbf{\textsf{Informal version of Theorem #1.}}}
  {\medskip}

\newenvironment{informaldef}
  {\medskip\noindent{\bf Definition (informal).}}
  {\medskip}

\usepackage{xspace}

\usepackage{mathtools}

\DeclarePairedDelimiter\floor{\lfloor}{\rfloor}

\renewcommand{\sec}{\kappa}
\renewcommand{\deg}{\delta}

\newcommand{\pcost}{\mathsf{p{\mh}cc}_{\alpha}}
\newcommand{\pcostm}{\mathsf{p{\mh}cc}_{\alpha}^M}
\newcommand{\sspace}{\mathbf{P}_{\rm ss}}
\newcommand{\sspacem}{\mathbf{P}_{\rm ss}}
\newcommand{\stime}{\mathbf{P}_{\rm opt{\mh}ss}}
\newcommand{\stimem}{\mathbf{P}_{\rm opt{\mh}ss}}
\newcommand{\scost}{\mathbf{P}_{\rm s}}
\newcommand{\scostm}{{\mathbf{P}_{\rm s}}}

\newcommand{\p}{\scost}

\newcommand{\ti}{\mathsf{Time}}

\newcommand{\s}{\mathcal{P}}
\newcommand{\ps}{\mathcal{P}^{||}}

\newcommand{\g}{G}

\newcommand{\glong}{G_{n,\delta}=(V_{n},E_{n})}

\newcommand{\sg}[1]{G}
\newcommand{\gf}{\mathbb{G}}
\newcommand{\sgf}{\mathbb{G}}
\newcommand{\gfsec}{\mathbb{G}_{\delta}}

\newcommand{\cc}{\text{CC}}
\newcommand{\cca}{\text{CC}^{\alpha}}
\newcommand{\pyr}{\Pi_{h}^C}

\makeatletter
\def\moverlay{\mathpalette\mov@rlay}
\def\mov@rlay#1#2{\leavevmode\vtop{%
   \baselineskip\z@skip \lineskiplimit-\maxdimen
   \ialign{\hfil$\m@th#1##$\hfil\cr#2\crcr}}}
\newcommand{\charfusion}[3][\mathord]{
    #1{\ifx#1\mathop\vphantom{#2}\fi
        \mathpalette\mov@rlay{#2\cr#3}
      }
    \ifx#1\mathop\expandafter\displaylimits\fi}
\makeatother

\newcommand{\bigcupdot}{\charfusion[\mathop]{\bigcup}{\cdot}}

\usepackage[
	n,
	operators,
	advantage,
	sets,
	adversary,
	landau,
	probability,
	notions,	
	logic,
	ff,
	mm,
	primitives,
	events,
	complexity,
	asymptotics,
	keys]{cryptocode}

\newcommand{\cA}{{\cal A}}
\newcommand{\cB}{{\cal B}}

\newcommand{\cE}{{\cal E}}
\newcommand{\cF}{{\cal F}}

\newcommand{\cH}{{\cal H}}

\newcommand{\cP}{{\cal P}}

\newcommand{\ff}{\mathfrak{f}}
\newcommand{\bq}{\mathbf{q}}

\let\epsilon=\varepsilon

\renewcommand{\emptyset}{\varnothing}
\renewcommand{\phi}{\varphi}
\renewcommand{\theta}{\vartheta}
\newcommand{\eps}{\varepsilon}

\newcommand{\Seek}{\mathsf{Seek}}

\newcommand{\zo}{\{0,1\}}
\newcommand{\mh}{\textrm{-}}

\newcommand{\defeq}{=}
\newcommand{\RO}{\mathcal{O}}
\newcommand{\OSet}{\mathbb{O}}

\newcommand{\LabelSet}{\mathfrak{L}}
\newcommand{\lab}{\mathsf{label}}
\renewcommand{\state}{\sigma}
\newcommand{\cost}{\mathsf{c{\mh}mem}}

\newcommand{\satscore}{\mathsf{mem}}

\newcommand{\sattime}{\mathsf{s{\mh}mem}}

\newcommand{\comp}{\mathsf{comp}}
\newcommand{\acomp}{\mathsf{a{\mh}comp}}
\newcommand{\pred}{\mathsf{pred}}
\newcommand{\indeg}{\mathsf{indeg}}
\newcommand{\sink}{\mathsf{sink}}
\newcommand{\prelab}{\mathsf{pre{\mh}lab}}

\newcommand{\epfm}{\mathsf{epf{\mh}magic}}
\newcommand{\wsize}[1]{\talloblong#1\talloblong}
\newcommand{\Mod}[1]{\ \mathrm{mod}\ #1}
\newcommand{\mbound}{\mathfrak{M}}

\newcommand{\crypt}{{\tt crypt}\xspace}

\ifcomments
\newcommand{\sunoo}[1]{{\color{WildStrawberry}/* Sunoo: #1 */}}
\renewcommand{\qq}[1]{{\color{RoyalBlue}/* Quanquan: #1 */}}
\newcommand{\tadge}[1]{{\color{OliveGreen}/* Tadge: #1 */}}
\else
\newcommand{\sunoo}[1]{\ignorespaces}
\renewcommand{\qq}[1]{\ignorespaces}
\newcommand{\tadge}[1]{\ignorespaces}
\fi

\newcounter{section-preserve}
\newcounter{theorem-preserve}
\newcommand{\blank}[1]{}
\newtoks\magicAppendix
\magicAppendix={}
\newtoks\magictoks
\newif\iflater
\laterfalse
\long\def\later#1{\magictoks={#1}%
  \edef\magictodo{\noexpand\magicAppendix={\the\magicAppendix \par
    \the\magictoks%
  }}
  \magictodo}
\long\def\both#1{\magictoks={#1}%
  \edef\magictodo{\noexpand\magicAppendix={\the\magicAppendix \par
    \noexpand\setcounter{theorem-preserve}{\noexpand\arabic{theorem}}%
    \noexpand\setcounter{theorem}{\arabic{theorem}}%
    \noexpand\setcounter{section-preserve}{\noexpand\arabic{section}}%
    \noexpand\setcounter{section}{\arabic{section}}%
    \noexpand\let\noexpand\oldsection=\noexpand\thesection
    \noexpand\def\noexpand\thesection{\thesection}
    \noexpand\let\noexpand\oldlabel=\noexpand\label
    \noexpand\let\noexpand\label=\noexpand\blank
    \the\magictoks%
    \noexpand\setcounter{theorem}{\noexpand\arabic{theorem-preserve}}%
    \noexpand\setcounter{section}{\noexpand\arabic{section-preserve}}%
    \noexpand\let\noexpand\thesection=\noexpand\oldsection
    \noexpand\let\noexpand\label=\noexpand\oldlabel
  }}
  \magictodo
  \the\magictoks}
\def\magicappendix{\latertrue \the\magicAppendix}

\ifproofsinbody\else
\let\pf\proof
\let\endpf\endproof
\NewEnviron{mypf}[1][]{
  
  \ifproofsinbody\pf[#1]\BODY\endpf\fi
}
    \renewenvironment{proof}[1][]
      {\mypf[#1]}
      {\endmypf}
\fi

\let\ft\footnote
\renewcommand{\footnote}[1]{\ificalpfullversion\ft{#1}\else\ignorespaces\fi}


\makeatletter
\newcommand{\removelatexerror}{\let\@latex@error\@gobble}
\makeatother

\begin{document}

\ificalpfullversion
  \title{Static-Memory-Hard Functions and Nonlinear Space-Time Tradeoffs via Pebbling}
\else
  \title{Static-Memory-Hard Functions and Nonlinear Space-Time Tradeoffs via Pebbling}
\fi
\ifsubmission
  \author[1]{Thaddeus Dryja}
  \author[1]{Quanquan C. Liu}
  \author[1]{Sunoo Park}
  \affil[1]{MIT, Cambridge, MA, USA\\ \protect\url{{tdryja,quanquan,sunoo}@mit.edu}}
\else
  \author{Thaddeus Dryja \and Quanquan C. Liu \and Sunoo Park}
  \date{MIT}
\fi

\ifsubmission
  \authorrunning{T. Dryja, Q. C. Liu, and S. Park} 

  \Copyright{Thaddeus Dryja, Quanquan C. Liu, and Sunoo Park}


  \EventEditors{Ioannis Chatzigiannakis, Christos Kaklamanis, Daniel Marx, and Don Sannella}
  \EventNoEds{4}
  \EventLongTitle{45th International Colloquium on Automata, Languages, and Programming (ICALP 2018)}
  \EventShortTitle{ICALP 2018}
  \EventAcronym{ICALP}
  \EventYear{2018}
  \EventDate{July 9--13, 2018}
  \EventLocation{Prague, Czech Republic}
  \SeriesVolume{80}
  \ArticleNo{}
\fi

\maketitle

\begin{abstract}
Pebble games were originally formulated to study time-space tradeoffs in computation, modeled by games played on directed acyclic graphs (DAGs). Close connections between pebbling and cryptography have been known for decades. A series of recent research starting with (Alwen and Serbinenko, STOC 2015) has deepened our understanding of the notion of \emph{memory-hardness} in cryptography --- a useful property of hash functions for deterring large-scale password-cracking attacks --- and has shown memory-hardness to have intricate connections with the theory of graph pebbling. 
Definitions of memory-hardness are not yet unified in this somewhat nascent field, however, and the guarantees proven are with respect to a range of proposed definitions.

In this work, we improve upon two main limitations of existing models of memory-hardness.
First, existing measures of memory-hardness only account for \emph{dynamic} (i.e., runtime) memory usage, and do not consider \emph{static} memory usage. We propose a new definition of \emph{static-memory-hard} function (SHF) which takes into account static memory usage and allows the formalization of larger memory requirements for efficient functions, than in the dynamic setting (where memory usage is inherently bounded by runtime). We then give two SHF constructions based on pebbling; to prove static-memory-hardness, we define a new pebble game (``\emph{black-magic pebble game}''), and new graph constructions with optimal complexity under our proposed measure.
Secondly, existing memory-hardness models implicitly consider \emph{linear} tradeoffs between the costs of time and space. We propose a new model to capture \emph{nonlinear} time-space trade-offs and prove that nonlinear tradeoffs can in fact cause adversaries to employ different strategies from linear tradeoffs.

Finally, as an additional contribution of independent interest, we present an asymptotically tight graph construction that achieves the best possible space complexity up to $\log{\log{n}}$-factors for an existing memory-hardness measure called \emph{cumulative complexity} in the sequential pebbling model.\qq{I changed the last sentence here.}

\end{abstract}

\ificalpfullversion
  \newpage
  \tableofcontents
  \newpage
\fi

\section{Introduction}\label{sec:intro}

Pebble games were originally formulated to model time-space tradeoffs by a game played on DAGs. Generally, a DAG can be thought to represent a computation graph where each node is associated with some computation and a pebble placed on a node represents saving the result of its computation in memory. Thus, the number of pebbles represents the amount of memory necessary to perform some set of computations. The natural complexity measures to optimize in this game is the minimum number of pebbles used, as well as the minimum amount of time it takes to finish pebbling all the nodes; these goals correspond with minimizing the amount of memory and time of computation. 

\ificalpfullversion
Pebble games were first introduced to study programming languages and compiler construction~\cite{PH70} but have since then been used to study a much broader range of tasks such as register allocation~\cite{Sethi75}, proof complexity~\cite{ARNV17, Nor12}, time-space tradeoffs in Turing machine computation~\cite{Cook73,HPV77}, reversible computation~\cite{Bennett89}, circuit complexity~\cite{Potechin17}, and time-space tradeoffs in various algorithms such as FFT~\cite{Tompa81}, linear recursion~\cite{Chandra73,SS79}, matrix multiplication~\cite{Tompa81}, and integer multiplication~\cite{SS79b} in the RAM as well as the external memory model~\cite{HK81}. To see a more comprehensive survey of the results in pebbling up to the last couple of years, see~\cite{Pippenger82} up to the 1980s and~\cite{Nor15} up to 2015.
\fi

The relationship between pebbling and cryptography has been a subject of research interest for decades, which has enjoyed renewed activity in the last few years.
A series of recent works \cite{AB16,ABH17,ABP17,ABP17sustained,AS15,AT17,ACPRT16,AGKKOPRRR16,BZ16,BZ17} has deepened our understanding of the notion of \emph{memory-hardness} in cryptography, and has shown memory-hardness to have intricate connections with the theory of graph pebbling. 

\emph{Memory-hard functions (MHFs)} have garnered substantial recent interest as a security measure against adversaries trying to perform attacks at scale, particularly in the ubiquitous context of password hashing. Consider the following scenario: hashes of user passwords are stored in a database,\footnote{In practice, the password should first be concatenated with a random user-specific string called a \emph{salt}, and then hashed. The salt is stored in the database alongside the hash to deter \emph{dictionary attacks}.} and when a user enters a password $p$ to log in, her computer sends $H(p)$ to the database server, and the server compares the received hash to its stored hash for that user's account. For a normal user, it would be no problem if hash evaluation were to take, say, one second. An attacker trying to guess the password by brute-force search, on the other hand, would want to try orders of magnitude more passwords, so a one-second hash evaluation could be prohibitively expensive for the attacker.

The evolution of password hashing functions has been something of an arms race for decades, starting with the ability to increase the number of rounds in the DES-based unix \crypt function to increase its computation time---a feature that was used for exactly the above purpose of deterring large-scale password-cracking.
Attackers responded by building special-purpose circuits for more efficient evaluation of \crypt, resulting in a gap between the evaluation cost for an attacker and the cost for an honest user.\footnote{E.g., \cite{CB02} discusses FPGA-based attacks on DES.}

A promising approach to mitigating this asymmetry in cost between hash evaluation on general- and special-purpose hardware is to increase the use of \emph{memory} in the password hashing function.  Memory is implemented in standardized ways which have been highly optimized, and memory chips are widely regarded to be an interchangeable commodity.  Commonly used forms of memory --- whether on-die SRAM cache, DRAM, or hard disks --- are already optimized for the purpose of data I/O operations; and while there is active research in improving memory access times and costs, progress is and has been relatively incremental.  This state of affairs sets up a relatively ``even playing field,'' as the normal user and the attacker are likely to be using memory chips of similar memory access speed. While an attacker may choose to buy more memory, the cost of doing so scales linearly with the amount purchased.

The designs of several MHFs proposed to date (e.g., \cite{Per09,AS15,AB16,ACPRT16,ABP17}) have proven memory-hardness guarantees by basing their hash function constructions on DAGs, and using space complexity bounds from graph pebbling. Definitions of memory-hardness are not yet unified in this somewhat nascent field, however --- the first MHF candidate was proposed only in 2009 \cite{Per09} --- and the guarantees proven are with respect to a range of definitions. The ``cumulative complexity''-based definitions of \cite{AS15} have enjoyed notable popularity, but some of their shortcomings have been pointed out by subsequent work proposing alternative more expressive measures, in particular, \cite{ABP17sustained,AT17}.

\subh{Our Contribution}
In this work, we improve upon two main limitations of existing models of memory-hardness as described in (1) and (2) below. 
We also provide an additional contribution of separate interest, described in (3).
\begin{enumerate}
	\item Existing measures of memory-hardness only account for \emph{dynamic} (i.e., runtime) memory usage, and do not consider \emph{static} memory usage. Among other things, this means that the amount of memory usage is inherently upper-bounded by runtime; in contrast, counting static memory would potentially allow quantification of much larger memory requirements. To address this, we introduce \emph{static-memory-hard functions} (SHFs) (Definition~\ref{def:hardness}). To prove such properties of our functions, we formulate a new type of pebble game called the \emph{black-magic pebble game} (Definition~\ref{def:blackmagic}) and prove properties of the space complexity of this game for new graphs (Graph Constructions~\ref{def:wrappyramid} and \ref{def:longer-time}). The black-magic pebble game may additionally be of independent interest for the pebbling literature.

	Based on our new graph constructions, we construct SHFs with provable guarantees on sustained memory usage, as follows.
	Graph Construction~\ref{def:longer-time} gives a better asymptotic guarantee, but Graph Construction~\ref{def:wrappyramid} has the advantage of simplicity in practice, and indeed our full version describes an implementation of the latter construction.

	\begin{informalthmwithnum}{\ref{thm:wraparound2}}
		The ``cylinder graph'' (Graph Construction~\ref{def:wrappyramid})
		can be used to construct an SHF with static memory requirement $\Lambda\in\Theta(\sqrt{n})$ where $n$ is the number of nodes in the graph, such that any adversary using non-trivially less \emph{static} memory than $\Lambda$ must incur at least $\Lambda$ \emph{dynamic} memory usage for at least $\Theta(\sqrt{n})$ steps.
	\end{informalthmwithnum}

	\begin{informalthmwithnum}{\ref{thm:highest-memory2}}
		Graph Construction~\ref{def:longer-time}
		can be used to construct an SHF with static memory requirement $\Lambda\in\Theta(\sqrt{n})$ where $n$ is the number of nodes in the graph, such that any adversary using non-trivially less \emph{static} memory than $\Lambda$ must incur at least $\Lambda$ \emph{dynamic} memory usage for at least $\Theta(n)$ steps.
	\end{informalthmwithnum}

	\item Existing measures of memory-hardness implicitly assume a linear trade-off between the costs of space and time. This model precludes situations where the relative costs of space and time might be more unbalanced (e.g., quadratic or cubic). We demonstrate that this modeling limitation is significant, as follows.

	\begin{informalthmwithnum}{\ref{thm:different-strategies}}
		 There exist graphs for which an adversary facing a linear space-time cost trade-off would in fact employ a \emph{different pebbling strategy} from one facing a cubic trade-off. 
	\end{informalthmwithnum}

	To remedy this shortcoming, we define \emph{graph-optimal} variants (defined in Section~\ref{sec:pebbling-defs}) 
	that \emph{explicitly} model the relative cost of space and time. These can be seen as extending the main memory-hardness measures in the literature (namely, \emph{cumulative complexity} and \emph{sustained memory complexity}).

	\begin{informalthmwithnum}{\ref{lem:cc-bound}}
		Given any graph construction $G = (V, E)$, there exists a pebbling strategy that is less expensive asymptotically than any strategy using a number of pebbles asymptotically equal to the number of nodes in the graph for any time-space tradeoff.
	\end{informalthmwithnum}
	\item We give the first graph construction that is tight, up to $\log{\log{n}}$-factors, to the optimal cumulative complexity that can be achieved for any graph (upper bound due to \cite{ABP17,ABP17sustained}).
	
	\begin{informalthmwithnum}{\ref{thm:cca-lb}}
		There exists a family of graphs where the cumulative complexity of any constant in-degree graph with $n$ nodes in the family is $\Theta\left(\frac{n^2 \log{\log{n}}}{\log{n}}\right)$ which is asymptotically tight to the upper bound of $\Theta\left(\frac{n^2 \log{\log{n}}}{\log{n}}\right)$ given in~\cite{ABP17,ABP17sustained} in the sequential pebbling model.
	\end{informalthmwithnum}
\end{enumerate}

\ificalpfullversion
	\subsection{Background on graph pebbling}

	The standard \emph{black pebble game} is parametrized by a directed acyclic graph (DAG) and a special subset of its nodes (called the \emph{target set}). In the game, an unlimited supply of ``pebbles'' is made available to a player, who must place and remove pebbles on the nodes of the DAG in a sequence of moves according to the following two rules.
	\begin{enumerate}
		\item A pebble may be placed or moved onto a node only if all of its predecessors have already been pebbled. (In particular, pebbles may be placed on source nodes at any time.)
		\item Any pebble can be removed from the graph at any time.
	\end{enumerate}
	The goal of the game is to arrive at a state where every pebble in the target set is covered by a pebble. Often, the target set is the set of the sink nodes.

	The pebbling literature, starting with \cite{PH70,Sethi75,Cook73,HPV77}, has established a number of complexity measures describing the complexity of pebbling: e.g., measuring the minimum number of pebbles that must be used to achieve a complete pebbling, or the minimum number of moves needed. In the literature, there are several variants of the game, including sequential and parallel (depending on whether many pebbles can be placed in a single move), and versions where other different types of pebbles are used (such as the red-blue pebble game~\cite{HK81} and the black-white pebble game~\cite{CS74}). In this work, our results are stated and proven in the context of constant in-degree graphs for simplicity; however, most of our results extend straightforwardly to non-constant in-degree graphs.

	\subh{Graph pebbling and memory-hardness}
	Graph pebbling algorithms can be used to construct hash functions in the (parallel) random oracle model. This paradigm has been used by prior constructions of memory-hard hashing \cite{AS15} as well as other prior works \cite{DKW11}. 

	Informally, the idea to ``convert'' a graph into a hash function is to associate with each node $v$ a string called a \emph{label}, which is defined to be $\RO(v,\pred(v))$ where $\RO$ is a random oracle and $\pred(v)$ is the list of labels of predecessors of $v$. For source nodes, the label is instead defined to be $\RO(v,\zeta)$ for a string $\zeta$ which is an input to the hash function. The output of the hash function is defined to be the list of labels of target nodes. Intuitively, since the label of a node cannot be computed without the ``random'' labels of all its predecessors, any algorithm computing this hash function must move through the nodes of the graph according to rules very similar to those prescribed by the pebbling game; and therefore, the memory requirement of computing the hash function roughly corresponds to the pebble requirement of the graph. Thus, proving lower bounds on the pebbling complexity of graph families has useful implications for constructing provably memory-hard functions.

	In our setting, in contrast to previous work, we employ a variant of the above technique: the string $\zeta$ is a fixed parameter of our hash function,
	and the input to the hash function instead specifies the indices of the target nodes whose labels are to be outputted.
\fi

\ificalpshortversion
	\subh{Discussion and related work}
	The original paper proposing memory-hard functions \cite{Per09} suggested a very simple measure: the minimum amount of memory necessary to compute the hash function. It was subsequently observed that a major drawback of this measure is that it does not distinguish between functions $f$ and $g$ with the same peak memory usage, even if the peak memory lasts a long time in evaluating $f$ and is just fleeting in evaluating $g$ (see the figure in the attached full version). This is significant as the latter type of function is much better for a password-cracking adversary.

	Towards addressing this, \cite{AS15} put forward the notion of \emph{cumulative complexity} (CC), a complexity measure on graphs. CC was adopted by several subsequent works as a canonical measure of memory-hardness. CC measures the \emph{cumulative} memory usage of a graph pebbling function evaluation: that is, the sum of memory usage over all time-steps of computation. In other words, this is the area under a graph of memory usage against time. 

	Yet more recently, Alwen, Blocki, and Pietrzak~\cite{ABP17} proposed a new measure of memory complexity, which captures not only the cumulative memory usage over time (as does CC), but goes further and captures the amount of time for which a particular level of memory usage is sustained. Our SHF definition also captures \emph{sustained} memory usage: we propose a definition of capturing the duration for which a given amount of memory is required, designed to capture static as well as dynamic memory requirements.

	More detailed discussion of notions from related work including \emph{depth-robust graphs} \cite{AB16,ABP17}, \emph{sustained memory complexity} \cite{ABP17sustained}, \emph{core-area memory ratio} \cite{BK15,AB16,RD17}, and \cite{AT17}'s more general theory of moderately hard functions is given in the attached full version due to space constraints.

	\ificalpfullversion
		\subh{Outline}
	\fi
	\ificalpshortversion

		\medskip
		
	\fi
	All proofs are deferred to the full version due to space constraints. The full version is attached to this submission as the attached full version.
\fi

\ificalpfullversion
	\subsection{Discussion and Related Work}

	The original paper proposing memory-hard functions \cite{Per09} suggested a very simple measure: the minimum amount of memory necessary to compute the hash function. It was subsequently observed that a major drawback of this measure is that it does not distinguish between functions $f$ and $g$ with the same peak memory usage, even if the peak memory lasts a long time in evaluating $f$ and is just fleeting in evaluating $g$ (Figure \ref{fig:mem1}). This is significant as the latter type of function is much better for a password-cracking adversary. In particular, pipelining the evaluation of the latter type of function would allow reuse of the same memory for many function evaluations at once, effectively reducing the adversary's amortized memory requirement by a factor of the number of concurrent executions (Figure \ref{fig:mem2}).

	\pgfmathdeclarefunction{gauss}{2}{%
	\pgfmathparse{1/(#2*sqrt(2*pi))*exp(-((x-#1)^2)/(2*#2^2))}%
	}
	\pgfplotsset{ticks=none}
	\begin{center}
		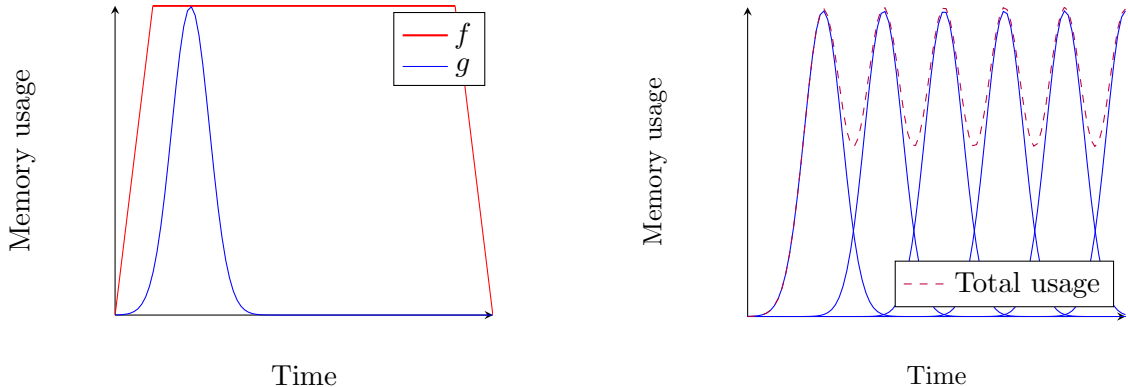
\begin{figure}[ht!]
			\begin{subfigure}[t]{0.5\textwidth}
				\begin{tikzpicture}
				\begin{axis}[
					width=0.8\textwidth,
					axis lines = left,
					xlabel = {Time},
					ylabel = {Memory usage},
				]
				\addplot [
					domain=0:1, 
					samples=80, 
					color=red,
					forget plot,
				]
				{0.8*x};
				\addplot [
					domain=9:10, 
					samples=80, 
					color=red,
					forget plot,
				]
				{0.8-0.8*(x-9)};
				\addplot[const plot, no marks, thick, color=red] coordinates {(1,0.8) (9,0.8)} node[above,pos=.57,black] {$F_x$};
				\addlegendentry{$f$}
				\addplot [
					domain=0:10, 
					samples=100, 
					color=blue,
					]
					{gauss(2,0.5)};
				\addlegendentry{$g$}
				\end{axis}
				\end{tikzpicture}
				\caption{Functions with the same peak memory usage}
				\label{fig:mem1}
			\end{subfigure}%
			~
			\begin{subfigure}[t]{0.5\textwidth}
				\begin{tikzpicture}
				\begin{axis}[
					width=0.8\textwidth,
					axis lines = left,
					xlabel = {\small Time},
					ylabel = {\small Memory usage},
					legend pos = south east,
				]
				\addplot [
					domain=0:10, 
					samples=100, 
					color=blue,
					forget plot,
					]
					{gauss(2,0.5)};
				\addplot [
					domain=0:10, 
					samples=100, 
					color=blue,
					forget plot,
					]
					{gauss(3.6,0.5)};
				\addplot [
					domain=0:10, 
					samples=100, 
					color=blue,
					forget plot,
					]
					{gauss(5.2,0.5)};
				\addplot [
					domain=0:10, 
					samples=100, 
					color=blue,
					forget plot,
					]
					{gauss(6.8,0.5)};
				\addplot [
					domain=0:10, 
					samples=100, 
					color=blue,
					forget plot,
					]
					{gauss(8.4,0.5)};
				\addplot [
					domain=0:10, 
					samples=100, 
					color=blue,
					forget plot,
					]
					{gauss(10,0.5)};
				\addplot [
					domain=0:10, 
					samples=100, 
					color=purple,
					dashed,
					]
					{gauss(2,0.5)+gauss(3.6,0.5)+gauss(5.2,0.5)+gauss(6.8,0.5)+gauss(8.4,0.5)+gauss(10,0.5)}; 
				\addlegendentry{Total usage}
				\end{axis}
				\end{tikzpicture}
				\caption{Pipelined evaluations of $g$ (reusing memory)}
				\label{fig:mem2}
			\end{subfigure}
			\caption{Limitations of peak memory usage as a memory-hardness measure}
			\label{fig:memory}
		\end{figure}
	\end{center}

	\paragraph{Cumulative complexity}
	\cite{AS15} put forward the notion of \emph{cumulative complexity} (CC), a complexity measure on graphs. CC was adopted by several subsequent works as a canonical measure of memory-hardness. CC measures the \emph{cumulative} memory usage of a graph pebbling function evaluation: that is, the sum of memory usage over all time-steps of computation. In other words, this is the area under a graph of memory usage against time. 
	CC is designed to be very robust against amortization, and in particular, scales linearly when computing many copies of a function on different inputs. This is a great advantage compared to the simpler measure of \cite{Per09}, which does not account well for an amortizing adversary (as shown in Figure \ref{fig:memory}).

	\paragraph{Depth-robust graphs}
	More recently, \cite{AB16,ABP17} proved bounds on optimal CC of certain graph families. They showed that a particular graph property called \emph{depth-robustness} suffices to attain optimal CC (the CC of any graph with bounded in-degree is upper bounded by $O\left(\frac{n^2 \log{\log{n}}}{\log{n}}\right)$~\cite{AB16,ABP17sustained}). 
	An $(r,s)$-depth-robust graph is one where there exists a path of length $s$ even when any $r$ vertices are removed. Intuitively, this captures the notion that storing any $r$ vertices of the graph will not shortcut the pebbling in a significant way.
	It turns out that depth-robustness will again be a useful property in our new model of memory-hardness with preprocessing.

	\paragraph{Sustained memory complexity}
	Very recently, Alwen, Blocki, and Pietrzak~\cite{ABP17} proposed a new measure of memory complexity, which captures not only the cumulative memory usage over time (as does CC), but goes further and captures the amount of time for which a particular level of memory usage is sustained. Our SHF definition also captures \emph{sustained} memory usage: we propose a definition of capturing the duration for which a given amount of memory is required, designed to capture static as well as dynamic memory requirements. By the nature of static memory, it is especially appropriate in our setting to consider (and maximize) the amount of time for which a static memory requirement is \emph{sustained}. 

	\paragraph{Core-area memory ratio}
	Previous works have considered certain hardware-dependent non-linearities in the ratio between the cost of memory and computation~\cite{BK15,AB16,RD17}. Such phenomena may incur a multiplicative factor increase in the memory cost that is dependent, in a possibly non-linear way, on specific hardware features. Note that the non-linearity here is in the hardware-dependence, rather than the space-time tradeoff itself.
	In contrast, our new models are more expressive, in that they
	make configurable the asymptotic tradeoff between space and time (by a parameter $\alpha$ which is in the exponent, as detailed in Definition~\ref{def:cca}) in an application-dependent way. This versatility of configuration targets applications where the trade-off may realistically depend on arbitrary and possibly exogenous space/time costs, and thus contrasts with metrics tailored for a specific hardware feature, such as core-memory ratio.

	\paragraph{Towards a general theory of moderately hard functions}
	Most recently, Alwen and Tackmann \cite{AT17} proposed a more general (though not comprehensive) framework for defining desirable guarantees of ``moderately hard functions,'' i.e., functions that are efficient to compute but somewhat hard to invert. Their work points out a number of drawbacks of prior measures such as those described above. Notably, many of the prior measures characterized the hardness of \emph{computing} the function with an implicit assumption that this hardness would translate to the hardness of \emph{inverting} the function (as it would indeed in the case of a brute-force approach to inversion). In other words, these measures implicitly assume that the hash function in question ``behaves like a random oracle'' in the sense that brute-force inversion is the optimal approach.
\fi

\iffullversion

	\subsection{Our contributions in more detail}

	\subsubsection{Static-memory-hard functions (SHFs)}\label{sec:intro-shf}
	Prior memory-hardness measures make a modeling assumption: namely, that the memory usage of interest is solely that of memory dynamically generated at run-time. However, static memory can be costly for the adversary too, and yet it is not taken into account by existing measures such as CC. Intuitively, it can be beneficial to design a function whose evaluation requires keeping a large amount of static memory on disk (which may be thought to be produced in a one-time initial setup phase). While not all the static memory might be accessed in any given evaluation, the ``necessity'' to maintain the data on disk can arise from the idea that an adversary attempting to evaluate the function on an arbitrary input while having stored a lesser amount of data would be forced to \emph{dynamically} generate comparable amounts of memory.

	We propose a new model and definitions for \emph{static-memory-hard functions} (SHFs). Our new model captures sustained memory-hardness with a parameter $\alpha$ that can be tuned to particular applications depending on the relative cost of memory and time.

	\begin{informaldef}[Theorem~\ref{thm:wraparound}]
	We model a \emph{static-memory-hard function family} as a two-part algorithm $\cH=(\cH_1,\cH_2)$ in the parallel random oracle model, where $\cH_1(1^\sec)$ outputs a ``large'' string to which $\cH_2$ has oracle access,\footnote{More precisely, $\cH_2$ may adaptively query the value of $\cH_1$'s output string at specific locations.} and $\cH_2$ receives an input $x$ and outputs a hash function output $y$.
	Informally, our hardness requirement is that with high probability, any \emph{two-part} adversary $\cA=(\cA_1,\cA_2)$ must \emph{either} have $\cA_1$ output a large state (comparable to the output size of $\cH_1$), \emph{or} have $\cA_2$ use large (dynamic) space.
	\end{informaldef}

	We then give several constructions of SHFs based on graph pebbling. The simplest one is based on a family of tree-like ``cylinder'' graphs, which achieves memory usage proportional to the square root of the number of nodes, sustained over time proportional to the square root of the number of nodes. Furthermore, we give a better construction based on pebbling of a new graph family, that achieves better parameters: the same (square root) memory usage, but sustained over time proportional to the number of nodes.

	\begin{informalthmwithnum}{\ref{thm:wraparound}}
	There is a family of \emph{static-memory-hard functions}, based on a simple \emph{tesselated cylinder} graph structure, such that evaluation on any input requires \emph{sustained} usage of at least $\Theta(\sqrt{n})$ memory for $\Theta(\sqrt{n})$ steps.
	\end{informalthmwithnum}

	\begin{informalthmnum}{\ref{thm:highest-memory}}
	There is a family of \emph{static-memory-hard functions}, based on a novel \emph{incrementally hard} graph construction, such that evaluation on any input requires \emph{sustained} memory usage at least $\Theta(\sqrt{n})$ for $\Theta(n)$ steps. 
	\end{informalthmnum}

	\subsubsection{Black-magic pebble game}
	We introduce a new pebble game called the \emph{black-magic pebble game}. This game bears some similarity to the standard (black) pebble game, with the main difference that the player has access to an additional set of pebbles called \emph{magic pebbles}. Magic pebbles are subject to different rules from standard pebbles: they may be placed anywhere at any time, but cannot be removed once placed, and may be limited in supply. The pebbling space cost of this game is defined as the maximum number of standard pebbles on the graph at any time-step plus the total number of magic pebbles used throughout the computation. Observe that while the most time-efficient strategy in the black-magic pebble game is always to pebble all the target nodes with magic pebbles in the first step, the most space-efficient strategy is much less clear.

	Lower-bounds on space usage can be non-trivially different between the standard and magic pebbling games.
	For example, if a graph has a constant number of targets, then magic pebbling space usage will never exceed a constant number of pebbles, whereas the standard pebbling space usage can be super-constant. In particular, it is unclear, in the new setting of magic pebbling, whether known lower-bounds on pebbling space usage in the standard pebble game\footnote{E.g., $\Theta\left(\frac{n}{\log{n}}\right)$ space is necessary to pebble certain classes of graphs in the standard pebble game~\cite{LT82}.} are transferable to the magic pebble game. We prove in Section~\ref{sec:constructions} that for layered graphs,\footnote{``Layered graph'' is a standard term in the pebbling literature that refers to graphs whose nodes can be partitioned into a sequence of ``layers'' such that edges only go between vertices in adjacent layers.} the best possible lower-bound for the magic pebbling game is $\Theta(\sqrt{n})$. 

	We leave determining the lower bound for magic pebbling space usage in general graphs as an open question. An answer to this open question would be useful towards constructing better static-memory-hard functions using the paradigm presented herein.

	Our proof techniques rely on a close relationship between black-magic pebbling complexity and a new graph property which we define, called \emph{local hardness}. Local hardness considers black-magic pebbling complexity in a variant model where \emph{subsets} of target nodes are required to be pebbled (rather than \emph{all} target nodes, as in the traditional pebbling game), and moreover, a ``preprocessing phase'' is allowed, wherein magic pebbles may be placed on the graph in advance of knowing which target nodes are to be produced. This ``preprocessing'' aspect bears some resemblance to the \emph{black-white pebbling game} \cite{CS74}, a variant of the standard pebbling game in which some limited number of \emph{white} pebbles can be placed ``for free,'' and the black pebbles must be placed according the standard rules. However, our setting differs from the black-white pebbling game: while preprocessing and storing magic pebbles in advance can be viewed as analogous to placing white pebbles for free, the black-white pebbling game imposes restrictions on the \emph{removal} of white pebbles from the graph, which are not present in our setting. 

	\bigskip

	Some remarks about the static-memory model are in order.

	\paragraph{Sidestepping the ``$n^2$ bound'' inherent to the dynamic-memory model}
	In our model, it is possible to work with much larger amount of memory than would be feasible in the CC regime. The maximal complexity in dynamic-memory-based measures, such as CC, is inherently bounded by a function's run-time. A function that generates a 1~GB table dynamically at run-time would be prohibitively slow to evaluate. In contrast, if a 1~GB table were preprocessed and stored ``as part of the function description,'' then a function could perform just a handful of table look-ups per evaluation, run very fast, and yet have a large memory requirement in a provable sense as captured by our model. 

	Existing MHF constructions derive their memory requirements from run-time memory, and so must necessarily take at least a certain amount of time for a given memory requirement. Since our model decouples memory usage from run-time, it allows for functions with faster run-times than existing MHFs, while requiring similar or larger amounts of memory.

	\paragraph{On static vs. dynamic memory}
	In the case of function performing lookups to a large, static memory table, the memory storing the table does not need to be writable.  This may seem to point to an optimization for the attacker: produce a read-only memory chip which supports fast, random-access read queries, but omits the hardware needed for writing data, as it has been pre-programmed at the factory with the precomputed static table.  However, in practice, this optimization seems implausible.  In modern hardware, ROM chips have almost entirely disappeared;  where they do still exist, they are used for their non-volatile storage properties (they retain data when power is lost, unlike most RAM), and are copied to RAM before being read from, due to the low speed of the ROM.  Current development focuses almost exclusively on dynamic access memory which supports both reads and writes, so it is reasonable to beleive that an attacker would need to use this type of hardware; switching to ROM would likely increase costs and slow down access to the static table.

	Static and dynamic memory requirements are incomparable but both can be useful to deter a password-cracking adversary. This work focuses on introducing the notion of static-memory-hard hash functions, but in practice it would likely be beneficial to use a hash function with memory-hardness guarantees with respect to both static and dynamic memory.

	\subsubsection{Implementation}

	We have a prototype implementation of our ``cylinder'' SHF construction. The code is available on github at \url{https://github.com/adiabat/masshash}. A discussion of the implementation and its performance for different static memory sizes is given in Section \ref{sec:implementation}.

	\subsubsection{Capturing relative cost of memory vs. time}

	Existing measures such as CC and sustained memory complexity trade off space against time at a linear ratio. In particular, CC measures the minimal area under a graph of memory usage against time, over all possible algorithms that evaluate a function.\footnote{Of course, in general, memory usage and time depend on the specific computational model in discussion. However, in the stylized parallel random oracle model (PROM), on which all analyses in this paper (and previous literature on MHFs) are based, time-steps and memory usage are well-defined. We refer to Section \ref{sec:prom} for a description of the PROM.}

	However, different applications may have different relative cost of space and time. We propose and define a variant of CC called $\alpha$-CC, parametrized by $\alpha$ which determines the relative cost of space and time, and observe that $\alpha$-CC may be meaningfully different from CC and more suitable for certain application scenarios. For example, when memory is ``quadratically'' more expensive than time, the measure of interest to an adversary may be the area under a graph of memory squared against time, as demonstrated by the following theorem.

	\begin{informalthm}
	There exist graph pebbling functions where the optimal algorithm for an adversary minimizing the integral of memory against time is \emph{different} from the optimal algorithm for an adversary minimizing the integral of memory squared against time. It follows that when the costs of space and time are not linearly related, the CC measure may be measuring the complexity of \emph{the wrong algorithm}, i.e., not the algorithm that an adversary would in fact favor.
	\end{informalthm} 

	We thus see that our $\alpha$-CC measure is more appropriate in settings where space may be substantially more costly than time (or vice versa).
	Moreover, our parametrized approach generalizes naturally to sustained memory complexity. We show that our graph constructions are invariant across different values of $\alpha$, a potentially desirable property for hash functions so that they are robust against different types of adversaries.

	\subsection{Decremental complexity: comparing sequential vs. parallel pebbling}

	Finally, we present a new complexity measure called \emph{decremental complexity}, motivated by the idea that
	an honest evaluator doesn't have unbounded parallelism, yet most analyses of memory-hard functions have been done with respect to algorithms with unbounded parallelism because that is a reasonable worst-case assumption on an adversary. However, unbounded parallelism is unrealistic for an honest evaluator, and the discrepancy in parallelism between honest user and adversary may have implications for the usability of a memory-hard function. We define a new measure called \emph{decremental complexity} that characterizes the difference between sequential and parallel pebbling time of a graph family. 

	\begin{informalthm}
	The graphs used for our more complicated constructions of MHFs with preprocessing (Theorems \ref{thm:highest-memory}) have optimal decremental complexity, in that the parallel and sequential optimal strategies take asymptotically the same amount of time.
	\end{informalthm}
\fi

\section{Pebbling definitions}\label{sec:pebbling-defs}

\ificalpfullversion
	A \emph{pebbling game} is a one-player game played on a DAG where the goal of the player is to place pebbles on a set of one or more \emph{target nodes} in the DAG. 
\fi
\ificalpshortversion
	We introduce the \emph{black-magic pebble game} below
	(assuming familiarity with the standard pebble game--details of the standard black pebble game is given in the  attached full version). 
\fi
\ificalpfullversion
	
	In Section \ref{sec:standard-defs}, we formally define two variations of the sequential and parallel pebble games: the \emph{standard (black) pebble game} and the \emph{black-magic pebble game}, the latter of which we introduce in this work. 
	We also give the definitions of valid strategies and moves in these games.
	Then in Section \ref{sec:cost-of-peb}, we define measures for evaluating the sequential and parallel pebbling complexity on families of graphs. 
\fi

\ificalpfullversion
	\subsection{Standard and magic pebbling definitions}
	\label{sec:standard-defs}
\fi
\label{sec:standard-defs}
\ificalpfullversion
	\begin{definition}[Standard (black) pebble game]\label{def:standard-peb-game}

	~

	\begin{itemize}
		\item \textbf{Input:} A DAG, $G = (V, E)$, and a \emph{target set} $T \subseteq V$. Define $\pred{(v)} = \left\{u \in V: (u, v) \in E \right\}$, and let $S \subseteq V$ be the set of sources of $G$.
		\item \textbf{Rules at move $i$:} At the start of the game, no node of $G$ contains a pebble. The player has access to a supply of \emph{black pebbles}. Game-play proceeds in discrete moves, and $P_i$ (called a ``pebble configuration'') is defined as the set of nodes containing pebbles after the $i$th move. $P_0=\emptyset$ represents the initial configuration where no pebbles have been placed.
	Each move may consist of multiple actions adhering to the following rules.\footnote{Multiple applications of rules \ref{itm:peb-source}, \ref{itm:peb-remove}, and \ref{itm:peb-place} can occur in a single move. E.g., multiple sources can be pebbled in a single move. Rule \ref{itm:peb-slide} can also be applied multiple times in a single move \emph{for different pebbles}, but cannot be applied more than once to the same pebble (since, naturally, a single pebble cannot move to multiple locations).}
		\begin{enumerate}
			\item\label{itm:peb-source} 
				A pebble can be placed on any source, $s \in S$.
			\item\label{itm:peb-remove} 
				A pebble can be removed from any vertex.
			\item\label{itm:peb-place}
				A pebble can be placed on a non-source vertex, $v$, if and only if its direct predecessors were pebbled at time $i-1$ (i.e., $\pred{(v)} \in P_{i-1}$).
			\item\label{itm:peb-slide}
				A pebble can be moved from vertex $v$ to vertex $w$ if and only if $(v, w) \in E$ and $\pred{(w)} \in P_{i - 1}$.
		\end{enumerate}
		\item \textbf{Goal:} Pebble all nodes in $T$ at least once (i.e., $T \subseteq \bigcup_{i=0}^t P_i$).\footnote{This goal statement corresponds to the notion of a \emph{visiting pebbling} as defined in \cite{Nor15}. Our paper will use this \emph{visiting pebbling} notion throughout; however, we remark that an alternative notion of pebbling exists in the literature, called \emph{persistent pebbling}, which requires that all the nodes in $T$ be pebbled in the final configuration (i.e., $T\subseteq P_t$).}
	\end{itemize}
	\end{definition}
\fi

\ificalpfullversion
\begin{remark}
	At first glance, it may seem that rule \ref{itm:peb-slide} in Definition \ref{def:standard-peb-game} is redundant as a similar effect can be achieved by a combination of the other rules. However, the application of rule \ref{itm:peb-slide} can allow the usage of fewer pebbles. For example, a simple two-layer binary tree (with three nodes) could be pebbled with two pebbles using rule \ref{itm:peb-slide}, but would require three pebbles otherwise. Nordstrom \cite{Nor15} showed that in sequential strategies, it is always possible to use one fewer pebble by using rule \ref{itm:peb-slide}. 
	\iffullversion
		In fact, in parallel strategies, the difference made by rule \ref{itm:peb-slide} can be up to a factor of two, as we show in Appendix \ref{sec:pebble-sliding-rule}.
	\fi

	We note for completeness that while rule \ref{itm:peb-slide} is standard in the pebbling literature, not all the papers in the MHF literature include rule \ref{itm:peb-slide}.
	\end{remark}
\fi

\ificalpfullversion
	Next, we define the \emph{black-magic pebble game} which we will use to prove security properties of our static-memory-hard functions.
\fi

\begin{definition}[Black-magic pebble game]\label{def:blackmagic}

	~

	\begin{itemize}
		\item \textbf{Input:} A DAG $G = (V, E)$, a target set $T\subseteq V$, and \emph{magic pebble bound} $\mbound\in\NN\cup\{\infty\}$.
		\item \textbf{Rules:} At the start of the game, no node of $G$ contains a pebble. The player has access to two types of pebbles: \emph{black pebbles} and up to $\mbound$ \emph{magic pebbles}. Game-play proceeds in discrete moves, and $P_i=(M_i,B_i)$ is the pebble configuration after the $i$th move, where $M_i,B_i$ are the sets of nodes containing magic and black pebbles after the $i$th move, respectively. $P_0=(\emptyset,\emptyset)$ represents the initial configuration where no black pebbles or magic pebbles have been placed. 
		Each move may consist of multiple actions adhering to the following rules.
			\begin{enumerate}
				\item Black pebbles can be placed and removed according to the rules of the standard pebble game which are defined in the full version.\footnote{The rules of the standard pebble game are a standard definition in the pebbling literature. In the black-magic game, a predecessor node counts as ``pebbled'' if it contains either a black or a magic pebble. Where Definition~\ref{def:standard-peb-game} treats $P_i$ as a set of nodes, Definition~\ref{def:blackmagic} treats $P_i$ as equal to $M_i\cup B_i$.}
				\item A magic pebble can be placed on and removed from any node, subject to the constraint that at most $\mbound$ magic pebbles are used throughout the game.
				\item Each magic pebble can be placed at most once: after a magic pebble is removed from a node, it disappears and can never be used again.
			\end{enumerate}
		\item \textbf{Goal:} Pebble all nodes in $T$ at least once (i.e., $T \subseteq \bigcup_{i=0}^t \left(M_i \cup B_i\right)$).
	\end{itemize}
\end{definition}

\begin{remark}
	In the black-magic pebble game, unlike in the standard pebble game, there is always the simple strategy of placing magic pebbles directly on all the target nodes. At first glance, this may seem to trivialize the black-magic game. When optimizing for space usage, however, this simple strategy may not be favorable for the player: by employing a different strategy, the player might be able to use much fewer than $T$ pebbles overall.
\end{remark}

\ificalpfullversion
	Next, we define valid sequential and parallel strategies in these games.
\fi

\ificalpfullversion
	\begin{definition}[Pebbling strategy]\label{def:peb-strategy}
	Let $G$ be a graph and $T$ be a target set.
	A \emph{standard (resp., black-magic) pebbling strategy for $(G,T)$} is defined as a sequence of pebble configurations, $\s = \left\{P_0, \dots, P_t\right\}$, satisfying conditions \ref{itm:start-empty} and \ref{itm:legal-pebbling} below. $\s$ is moreover \emph{valid} if it satisfies condition \ref{itm:valid-pebbling}, and \emph{sequential} if it satisfies condition \ref{itm:sequential-pebbling}.
	\begin{enumerate}
	\item\label{itm:start-empty} $P_0=\emptyset$. 
	\item\label{itm:legal-pebbling} For each $i\in[t]$, $P_i$ can be obtained from $P_{i-1}$ by a legal move in the standard (resp., black-magic) pebble game.
	\item\label{itm:valid-pebbling} $\s$ successfully pebbles all targets, i.e., $T \subseteq \bigcup\limits_{i=0}^{t} P_i$. 
	\item\label{itm:sequential-pebbling} For each $i\in[t]$,
	$P_i$ contains at most one vertex not contained in $P_{i-1}$ 
	(i.e., $|P_i\setminus P_{i-1}|\leq 1$).
	\end{enumerate}
	A black-magic pebbling strategy must satisfy one additional condition to be considered \emph{valid}:
	\begin{enumerate}
		\setcounter{enumi}{4}
		\item At most $\mbound$ magic pebbles are used throughout the strategy, i.e., $|\bigcup_{i\in[t]} M_i|\leq\mbound$ where $M_i$ is the $i$th configuration of magic pebbles.
	\end{enumerate}
	\end{definition}
\fi
\ificalpshortversion
	A \emph{pebbling strategy} $\cP=(P_1,\dots,P_t)$ is a sequence of configurations, describing a series of moves on a DAG. A strategy is \emph{valid} if it follows the rules of a particular pebbling game.
\fi

\ificalpfullversion
	\subsection{Cost of pebbling}
	\label{sec:cost-of-peb}

	In this subsection, we give definitions of several cost measures of graph pebbling, applicable to the standard and black-magic pebbling games.
	While these definitions assume parallel strategies, we note that the sequential versions of the definitions are entirely analogous.

	\subsubsection{Space complexity in standard pebbling}

	We give a brief informal summary of the definitions in this subsection, before proceeding to the formal definitions.
\fi

\subh{Pebbling complexity measures}
We informally overview the pebbling complexity definitions, some of which are new to this work.

\ificalpshortversion
We refer to the full version for formal definitions.
\fi

The \emph{time complexity} of a pebbling strategy $\s$ is the number of steps, i.e., $\ti\left(\s\right) = |\s|$. The \emph{time complexity} of a graph $G = (V, E)$ given that at most $S$ pebbles can be used is $\ti(G, S) = \min_{\s \in \mathbb{P}_{G, T, S}} \left(\ti\left(\s\right)\right)$. Next, we overview variants of space complexity.

\begin{enumerate}
\item \textbf{Space complexity} of a \emph{pebbling strategy} $\s$ on a graph $G$, denoted by $\scost(\s)$, is the minimum number of pebbles required to execute $\s$. Space complexity of the \emph{graph} $G$ with target set $T$, written $\scost(G,T)$, is the minimum space complexity of any valid pebbling strategy for $G$.
\item \textbf{$\Lambda$-sustained space complexity} \cite{ABP17}\footnote{We note that our notation diverges from that of \cite{ABP17}, but our Definition \ref{def:lambda-sattime} is equivalent to their definition of ``$s$-sustained space complexity.'' (E.g., they write $\Pi_{ss}(\s,\Lambda)$ instead of $\sspace(G,\s,\Lambda)$.) \ificalpfullversion We gave this decision some consideration as inconsistent notation can add confusing clutter to a literature; we decided on our notation (1) in order to keep consistency with the pebbling literature, where the pyramid graphs that will be used in our SHF construction are traditionally denoted by $\Pi$; and (2) because our notation makes the graph $G$ explicit where sometimes it is implicit in \cite{ABP17}, and this is important for the new ``graph-optimal sustained complexity'' notion we introduce.\fi} of a \emph{pebbling strategy} $\s$ on a graph $G$, denoted by $\sspace(\s,\Lambda)$, is the number of time-steps during the execution of $\s$, in which at least $\Lambda$ pebbles are used. $\Lambda$-sustained space complexity of the \emph{graph} $G$ with target set $T$, written $\sspace(G,\Lambda,T)$ is the minimum $\Lambda$-sustained space complexity of all valid pebbling strategies for $G$.
\item \textbf{Graph-optimal sustained complexity} of a \emph{pebbling strategy} $\s$, denoted by $\stime(\s)$, is the number of time-steps during the execution of $\s$, in which the number of pebbles in use is equal to the space complexity of $G$. Graph-optimal sustained complexity of the \emph{graph} $G$ with target set $T$, written $\stime(G,T)$ is the minimum graph-optimal sustained complexity of all valid pebbling strategies for $G$.
\item \textbf{$\Delta$-suboptimal sustained complexity} of a \emph{pebbling strategy} $\s$ is the number of time-steps, during the execution of $\s$, in which the number of pebbles in use is at least the space complexity of $G$ \emph{minus $\Delta$}. $\Delta$-suboptimal sustained complexity of the \emph{graph} $G$ is the minimum $\Delta$-suboptimal sustained complexity of all valid pebbling strategies for $G$.
\end{enumerate}

\ificalpshortversion
	The third and fourth definitions are new to this paper. They can be seen as special variants of $\Lambda$-sustained space complexity, i.e., with a special setting of $\Lambda$ dependent on a graph family. They are useful to define in their own right, as they express complexity relative to the best possible value of $\Lambda$ at which sustained space usage could be hoped for, for a \emph{given graph family}. 
	Hereafter, we prove guarantees on \emph{graph-optimal sustained complexity} of our constructions. We also define \emph{$\Delta$-suboptimal sustained complexity} above for completeness, since it is more general \footnote{More specifically, graph-optimal sustained complexity is $\Delta$-suboptimal sustained complexity for $\Delta=0$.} and preferable to graph-optimal complexity when evaluating graph families where the maximal space usage may not be sustained for very long.
\fi
\ificalpfullversion
	A couple of remarks are in order.

	\begin{remark}
	The third and fourth definitions are new to this paper. They can be seen as special variants of $\Lambda$-sustained space complexity, i.e., with a special setting of $\Lambda$ dependent on the specific graph family in question. They are useful to define in their own right, as unlike plain $\Lambda$-sustained space complexity, these measures express complexity for a given graph family relative to the best possible value of $\Lambda$ at which sustained space usage could be hoped for. In the rest of this paper, we prove guarantees on \emph{graph-optimal sustained complexity} of our constructions, which have high sustained space usage at the optimal $\Lambda$-value. However, we also define \emph{$\Delta$-suboptimal sustained complexity} here for completeness, since it is more general\footnote{More specifically, graph-optimal sustained complexity is $\Delta$-suboptimal sustained complexity for $\Delta=0$.} and preferable to graph-optimal complexity when evaluating graph families where the maximal space usage may not be sustained for very long.
	\end{remark}

	\begin{remark}
	We have found the term ``$\Lambda$-sustained space complexity'' can be slightly confusing, in that it measures a number of time-steps rather than an amount of space. We retain the original terminology as it was introduced, but include this remark to clarify this point.
	\end{remark}
\fi

\ificalpfullversion
	We now present the formal definitions of the complexity measures for the standard pebbling game. In all of the below definitions, $G = (V, E)$ is a graph, $T\subseteq V$ is a target set, $\cP=(P_1,\dots,P_t)$ is a standard pebbling strategy on $(G,T)$, and $\mathbb{P}_{G,T}$ denotes the set of all valid standard pebbling strategies on $(G,T)$.

	\begin{definition}
	\label{def:satcost}\label{def:min-satcost}
	The \emph{space complexity of pebbling strategy $\s$} is: 
	$\scost(\s) \defeq \max_{P_i \in \s} \left(\left|P_i\right|\right)$. 
	The \emph{space complexity of $G$}
	is the minimal space complexity of any valid pebbling strategy that pebbles the target set $T \subset V$:
	$\scost(\g, T) \defeq \min_{\s' \in \mathbb{P}_{G,T}}\left(\scost\left(\s'\right)\right)$.
	\end{definition}

	\begin{definition}
	\label{def:lambda-sattime}
	The \emph{$\Lambda$-sustained space complexity of $\s$} is: 
	$\sspace(\s, \Lambda) \defeq \left|\left\{P_i: |P_i| \geq \Lambda\right\}\right|$.
	The \emph{$\Lambda$-sustained space complexity of $G$} 
	is the minimal $\Lambda$-sustained space complexity of any valid pebbling strategy that pebbles the target set $T \subseteq V$:
	$\sspace(\sg{2},\Lambda, T) \defeq \min_{\s' \in \mathbb{P}_{G,T}}\left(\sspace\left(\s', \Lambda\right)\right)$.
	\end{definition}

	\begin{definition}
	\label{def:sattime}
	The \emph{graph-optimal sustained complexity of $\s$} is: 

	$\stime(\s)=\sspace(\s,\scost(G, T))$.
	The \emph{graph-optimal sustained complexity of $G$}
	is the minimal graph-optimal sustained complexity of any valid pebbling strategy that pebbles the target set $T \subseteq V$:
	$\stime(\sg{2}, T) \defeq \min_{\s' \in \mathbb{P}_{G,T}}\left(\stime\left(\s'\right)\right)$.
	\end{definition}

	\begin{definition}
	The \emph{$\Delta$-suboptimal sustained complexity of $\s$} is: 
	$$\stime(\s,\Delta)=\sspace(\s,\scost(G, T)-\Delta).$$
	The \emph{$\Delta$-suboptimal sustained complexity of $G$}
	is the minimal graph-optimal sustained complexity of any valid pebbling strategy that pebbles the target set $T \subseteq V$:
	$\stime(\sg{2},\Delta, T) \defeq \min_{\s' \in \mathbb{P}_{G,T}}\left(\stime\left(\s',\Delta\right)\right)$.
	\end{definition}

	\subsubsection{Time complexity in standard pebbling}

	We present the following formal definitions for measuring the time complexity of strategies in the standard pebble game.  In all the below definitions, $G = (V, E)$ is a graph, $T \subseteq V$ is a target set, $\s = (P_1, \dots, P_t)$ is a standard pebbling strategy on $(G, T)$ where $\mathbb{P}_{G, T, S}$ denotes the set of all valid pebbling strategies on $(\sg{2}, T)$ that use at most $S$ pebbles. 

	\begin{definition}\label{def:strategy-time}
	The \emph{time complexity} of a pebbling strategy $\s$ is $\ti\left(\s\right) = |\s|$. The \emph{time complexity} of a graph $G = (V, E)$ given that at most $S$ pebbles can be used is $\ti(G, S) = \min_{\s \in \mathbb{P}_{G, T, S}} \left(\ti\left(\s\right)\right)$.
	\end{definition}

	\subsubsection{Space complexity in black-magic pebbling}

	Next, we define the corresponding complexity notions for the black-magic pebbling game. As above, $G = (V, E)$ is a graph, $T\subseteq V$ is a target set, and $\mbound$ is a magic pebble bound. In this subsection, $\cP=(P_1,\dots,P_t)=((M_1,B_1),\dots,(M_t,B_t))$ denotes a black-magic pebbling strategy on $(G,T)$. Moreover, $\mathbb{M}_{G,T,\mbound}$ denotes the set of all valid magic pebbling strategies on $(G,T)$, and $m(\s)$ denotes the total number of magic pebbles used in the execution of $\s$.

	\newcommand{\magicstrategies}{\mathbb{P}_{G,T,\mbound}}

	\begin{definition}\label{def:magic-satcost}
	The \emph{(magic) space complexity} of $\cP$ is:
	$\scostm(\s) \defeq \max\left(m({\s}), \max_{P_i \in \s} \left(\left|P_i\right|\right)\right)$. 
	The \emph{(magic) space complexity of $G$ w.r.t. $\mbound$}
	is the minimal space complexity of any valid magic pebbling strategy that pebbles the target set $T \subseteq V$:
	$\scostm(\g,\mbound, T) \defeq \min_{\s \in \magicstrategies}\left(\scostm\left(\s\right)\right)$.
	\end{definition}

	\begin{remark}
	We briefly provide some intuition for the complexity measure defined above in Def.~\ref{def:magic-satcost}. If we consider all magic pebbles to be static memory objects that were saved from a previous evaluation of the hash function, then the total number of magic pebbles is the amount of memory that was used to save the results of a previous evaluation of the hash function. Because of this, it is natural to take the maximum of the memory used to store results from a previous evaluation of the function and the current memory that is used by our current pebbling strategy since that would represent how much memory was used to compute the results of hash function during the current evaluation. 
	\end{remark}

	\begin{definition}\label{def:magic-lambda-sattime}
	The \emph{(magic) $\Lambda$-sustained space complexity} of $\cP$ is:
	$\sspacem(\s,\Lambda) \defeq \left|\left\{P_i: \left|P_i\right| \geq \Lambda\right\}\right|$.
	The \emph{$\Lambda$-sustained space complexity of $G$ w.r.t. $\mbound$ and $T \subseteq V$} is:
	$\sspacem(\sg{2},\Lambda,\mbound, T) \defeq \min_{\s \in \magicstrategies}\left(\stimem\left(\s,\Lambda\right)\right)$.
	\end{definition}

	\begin{definition}\label{def:magic-sattime}
	The \emph{(magic) graph-optimal sustained complexity} of $\cP$ is:
	$\stimem(\s)=\sspacem(\s,\scostm(G, T))$.
	The \emph{graph-optimal sustained complexity of $G$ w.r.t. $\mbound$ and $T \subseteq V$} is:
	$\stimem(\sg{2},\mbound, T) \defeq \min_{\s \in \magicstrategies}\left(\stimem\left(\s\right)\right)$.
	\end{definition}

	\begin{definition}\label{def:magic-suboptimal}
	The \emph{(magic) $\Delta$-suboptimal sustained complexity} of $\cP$ is:
	$\stimem(\s,\Delta)=\sspacem(\s,\scostm(G, T)-\Delta)$.
	The \emph{$\Delta$-suboptimal sustained complexity of $G$ w.r.t. $\mbound$ and $T \subseteq V$} is:
	$$\stimem(\sg{2},\Delta,\mbound, T) \defeq \min_{\s \in \magicstrategies}\left(\stimem\left(\s,\Delta\right)\right).$$
	\end{definition}
\fi

\ificalpfullversion
	\subsection{Incrementally hard graphs}

\fi
\ificalpshortversion
	\subh{Incrementally hard graphs}
\fi
We introduce the following definition for our notion of graphs which require $|T|$ pebbles to pebble regardless of the number of targets that are asked, given a constraint on the number of magic pebbles that can be used. This concept has not been previously analyzed in the pebbling literature; traditional pebbling complexity usually treats graphs with fixed target sets.

\begin{definition}[Incremental Hardness]\label{def:incrementally-hard}
Given at most $\mbound$ magic pebbles, for any subset of targets $C \subseteq T$ where $|C| > \mbound$, the number of pebbles (magic and black pebbles) necessary in the black-magic pebble game to pebble $C$ is at least $|T|$ where the number of magic pebbles used in this game is upper bounded by $\mbound$: $\scostm(\sg{2}, |C| - 1, C) \geq |T|$.
\end{definition}

\subsubsection{$\alpha$-tradeoff cumulative complexity}

\emph{$\alpha$-tradeoff cumulative complexity}, or $\cca$, is a new measure introduced in this paper, which accounts for situations where space and time do not trade off linearly. 
\ificalpfullversion
	Here, we see the usefulness of defining sustained complexities in terms of the minimum required space (as opposed to being parametrized by $\Lambda$) since we can always obtain an upper bound on $\cca$, for \emph{any} $\alpha$, of a graph directly from our proofs of the space complexity and sustained time complexity of a DAG.
\fi

\begin{definition}[Standard pebbling $\alpha$-space cumulative complexity]\label{def:pcost}
Given a valid parallel standard pebbling strategy, $\s$, for pebbling a graph $\sg{2} = (V, E)$, the \emph{standard pebbling $\alpha$-space cumulative complexity} is the following:
\ificalpfullversion
	\begin{align*}
	\pcost(\sg{2}, \s) \defeq \sum_{P_i \in \s} |P_i|^{\alpha}\ .
	\end{align*}
\fi
\ificalpshortversion
	$\pcost(\sg{2}, \s) \defeq \sum_{P_i \in \s} |P_i|^{\alpha}$.
\fi
\end{definition}

\begin{definition}[Black-magic pebbling $\alpha$-space cumulative complexity]\label{def:pcost}
Given a valid parallel black-magic pebbling strategy, $\s$, for pebbling a graph $\sg{2} = (V, E)$, the \emph{black-magic pebbling $\alpha$-space cumulative complexity} is the following:
\begin{align*}
\pcostm(\sg{2}, \s) \defeq \max\left(m(\s)^{\alpha}, \sum_{P_i \in \s} |P_i|^{\alpha}\right) = \max\left(m(\s)^{\alpha}, \sum_{P_i \in \s}|B_i \cup M_i|^{\alpha}\right)\ .
\end{align*}
\end{definition}

\ificalpfullversion
	The following definition, $\cca$, is an analogous definition to $CC$ as defined by~\cite{AS15} (specifically, $\cca$ when $\alpha = 1$ is equivalent to $\cc$) to account for varying costs of memory usage vs. time. 
\fi

\begin{definition}[$\cca$]\label{def:cca}
Given a graph, $\g \in \gf$, and a valid standard/magic pebbling strategy, $\s$, we define the $\cca(\g)$ to be

\begin{align*}
	\cca(\s) \defeq \left(\pcost\left(\sg{2}, \s\right)\right).
\end{align*}

Given a graph, $\g \in \gf$, and a family of valid standard pebbling strategies, $\mathbb{P}$, we define the $\cca(\g)$ to be
\ificalpfullversion
	\begin{align*}
	    \cca(\g) \defeq \min_{\s \in \mathbb{P}}\left(\pcost\left(\g, \s \right)\right)\ ,
	\end{align*}
\fi
\ificalpshortversion
	$\cca(\g) \defeq \min_{\s \in \mathbb{P}}\left(\pcost\left(\g, \s \right)\right)$.
\fi
and, given a family $\mathbb{P}^M$ of valid black-magic pebbling strategies, we define $\cca(\g)$ to be
\ificalpfullversion
	\begin{align*}
	\cca(\g) \defeq \min_{\s^M \in \mathbb{P}^M}\left(\pcostm\left(\g, \s^M\right)\right)\ .
	\end{align*}
\fi
\ificalpshortversion
	$\cca(\g) \defeq \min_{\s^M \in \mathbb{P}^M}\left(\pcostm\left(\g, \s^M\right)\right)$.
\fi
\end{definition}

\section{Parallel random oracle model (PROM)}\label{sec:prom}

\ificalpfullversion
    In this paper, we consider two broad categories of computations: \emph{pebbling strategies} and \emph{PROM algorithms}. Specifically, we discussed above the pebbling models and pebble games we use to construct our static memory-hard functions. Now, we define our PROM algorithms. 
\fi

\ificalpfullversion
    Prior work has observed the close connections between these two types of computations as applied to DAGs, and our work brings out yet more connections between the two models. In this section, we give an overview of how PROM computations work and define the complexity measures that we apply to PROM algorithms. Some of the complexity measures were introduced by prior work, and others are new in this work.
\fi

\ificalpfullversion
    \subsection{Overview of PROM computation}
\fi
\label{sec:prom-overview}
\ificalpfullversion
    The random oracle model was introduced by \cite{BR93}.
    When we say random oracle, we always mean a \emph{parallel} random oracle unless otherwise specified.

\fi
An \emph{algorithm} in the PROM is a probabilistic algorithm $\cB$ which has parallel access to a stateless oracle $\RO$: that is, $\cB$ may submit many queries in parallel to $\RO$.
We assume $\RO$ is sampled uniformly from an oracle set $\OSet$ and that $\cB$ may depend on $\OSet$ but not $\RO$.

The algorithm proceeds in discrete time-steps called \emph{iterations}, and may be thought to consist of a series of algorithms $(\cB_i)_{i\in\NN}$,
indexed by the iteration $i$, where each $\cB_i$ passes a \emph{state} $\state_i\in\zo^*$ to its successor $\cB_{i+1}$.
$\state_0$ is defined to contain the input to the algorithm.
We write $|\state_i|$ to denote the size, in bits, of $\state_i$. We write $\wsize{\state_i}$ to denote $\frac{|\state_i|}{w}$, where $w$ is the output length of the oracle $\RO$. In other words, $\wsize{\state_i}$ is the size of $\state_i$ when counting in words of size $w$. 
In each iteration, the algorithm $\cB_i$ may make a \emph{batch} $\bq_i=(q_{i,1},\dots,q_{i,|\bq_i|})$ of queries, consisting of $|\bq_i|$ individual queries to $\RO$, and instantly receive back from the oracle the evaluations of $\RO$ on the individual queries, i.e.,
$(\RO(q_{i,1}),\dots,\RO(q_{i,|\bq_i|}))$.

At the end of any iteration, $\cB$ can append values to a special output register, and it can end the computation by appending a special terminate symbol $\bot$ on that register.
When this happens, the contents $y$ of the output register, excluding the trailing $\bot$, is considered to be the output of the computation.
To denote the process of sampling an output, $y$, provided input $x$, we write $y\gets\cB^{\RO}(x)$.

\begin{definition}[Oracle functions]
An \emph{oracle function} is a collection $\ff=\{f^\RO:D\to R\}_{\RO\in\OSet}$ of functions with domain $D$ and outputs in $R$ indexed by oracles $\RO\in\OSet$.

A \emph{family of oracle functions} is a set $\cF=\{\ff_\sec:D_\sec\to R_\sec\}_{\sec\in\NN}$
where each $\ff_\sec$ is indexed by oracles from an oracle set $\OSet_\sec:\zo^\sec\to\zo^\sec$ indexed by a security parameter $\kappa$.\footnote{For simplicity, we have the input and output domains of the oracles equal to $\zo^\sec$, but this is not a necessary restriction: the sizes could be any polynomials in $\sec$.}
\end{definition}

\begin{definition}[Memory complexity of PROM algorithms]
The \emph{memory complexity of $\cB(x;\rho)$} 
\ificalpfullversion
    (i.e., the memory complexity of $\cB$ on input $x$ and randomness $\rho$) 
\fi
is defined as:
\ificalpfullversion
    \begin{equation}\label{eqn:satscore-def}
        \satscore_\OSet(\cB,x,\rho)\defeq\max_{i\in\NN}\left\{\wsize{\sigma_i}\right\}\ .
    \end{equation}
\fi
\ificalpshortversion
    $\satscore_\OSet(\cB,x,\rho)\defeq\max_{i\in\NN}\left\{\wsize{\sigma_i}\right\}$.
\fi
\end{definition}

\begin{definition}[$\Lambda$-sustained memory complexity of PROM algorithms]
The \emph{$\Lambda$-sustained memory complexity of $\cB(x;\rho)$} is defined as: 
\ificalpfullversion
    \begin{equation}\label{eqn:lambda-sus-def}
        \sattime_\OSet(\Lambda,\cB,x,\rho)\defeq\left|\left\{i\in\NN: |\sigma_i|\geq\Lambda\right\}\right|\ .
    \end{equation}
\fi
\ificalpshortversion
    $\sattime_\OSet(\Lambda,\cB,x,\rho)\defeq\left|\left\{i\in\NN: |\sigma_i|\geq\Lambda\right\}\right|$.
\fi
\end{definition}

\ificalpfullversion
    Note that \eqref{eqn:satscore-def} and \eqref{eqn:lambda-sus-def} are distributions over the choice of $\RO\gets\OSet$.
\fi

\ificalpfullversion
	\subsection{Functions defined by DAGs}

\fi
\ificalpshortversion
	\subh{Functions defined by DAGs}
\fi
\ificalpfullversion
	We now describe how to translate a graph construction into a function family, whose evaluation involves a series of oracle calls in the PROM. 
\fi
Any family of DAGs induces a family of \emph{oracle functions} in the PROM, whose complexity is related to the pebbling complexity of the DAG.
We first define the syntax of \emph{labeling} of DAG nodes, then define a \emph{graph function family}.

\begin{definition}[Labeling]\label{def:labeling}
Let $G=(V,E)$ be a DAG with maximum in-degree $\deg$, let
$\LabelSet$ be an arbitrary ``label set,'' and define
$\OSet(\deg,\LabelSet)=\left(V\times \bigcup_{\deg'=1}^{\deg}\LabelSet^{\deg'}\to \LabelSet\right)$.
For any function $\RO\in\OSet(\deg,\LabelSet)$ and any label $\zeta\in\LabelSet$,
the $(\RO,\zeta)$-labeling of $G$ is a mapping 
$\lab_{\RO,\zeta}:V\to\LabelSet$ defined recursively as 
\ificalpshortversion
	described in the full version of the paper.
\fi
\ificalpfullversion
	follows.\footnote{%
	We abuse notation slightly and also invoke $\lab_{\RO,\zeta}$ on \emph{sets} of vertices, in which case the output is defined to be a tuple containing the labels of all the input vertices, arranged in lexicographic order of vertices.}
	$$\lab_{\RO,\zeta}(v)=\begin{cases}
	\RO(v,\zeta) & \mbox{ if } \indeg(v)=0 \\
	\RO(v,\lab_{\RO,\zeta}(\pred(v))) & \mbox{ if } \indeg(v)>0 
	\end{cases}\ .$$
\fi
\end{definition}

\begin{definition}[Graph function family]\label{def:gff}
Let $n=n(\sec)$ and let $\gfsec=\{\glong\}_{\sec\in\NN}$ be a graph family. We write $\OSet_{\delta,\sec}$ to denote the set $\OSet(\deg,\zo^\sec)$ as defined in Definition \ref{def:labeling}.
The \emph{graph function family} of $\gf$ is the family of oracle functions 
$\cF_{\gf}=\{\ff_{\g}\}_{\sec\in\NN}$ where
$\ff_{\g}=\{f^{\RO}_{\g}:\zo^\sec\to(\zo^\sec)^z\}_{\RO\in\OSet_{\delta,\sec}}$ and $z=z(\sec)$ is the number of sink nodes in $\g$. The output of $f^{\RO}_{\g}$ on input label $\zeta\in\zo^\sec$ is defined to be
\ificalpfullversion
	$$f^{\RO}_{\g}(\zeta)\defeq\lab_{\RO,\zeta}(\sink(\g))\ ,$$
	where $\sink(\g)$ is the set of sink nodes of $\g$.
\fi
\ificalpshortversion
	$f^{\RO}_{\g}(\zeta)\defeq\lab_{\RO,\zeta}(\sink(\g))$,
	where $\sink(\g)$ is the set of sink nodes of $\g$.
\fi
\end{definition}

\ificalpfullversion
	\subsection{Relating complexity of PROM algorithms and pebbling strategies}

\fi
\ificalpshortversion
	\subh{Relating complexity of PROM algorithms and pebbling}
\fi
Any PROM algorithm $\cB$ and input $x$ induce a black-magic pebbling strategy, $\epfm_\zeta(\cB,\RO,x,\$)$, called an \emph{ex-post-facto black-magic pebbling strategy}. The way in which this strategy is induced is similar to \emph{ex-post-facto pebbling} as originally defined by \cite{AS15} in the context of the standard pebble game. We adapt their technique for the black-magic game. 
\ificalpfullversion
	\begin{definition}[Ex-post-facto black-magic pebbling]\label{def:expostfacto-magic}
	Let $n=n(\sec)$ and let $\gfsec=\{\glong\}_{\sec\in\NN}$ be a graph family. Let $\zeta=\zeta(\sec)\in\zo^\sec$ be an arbitrary input label for the graph function family $\cF_{\gf}$. For any $v\in V_n$, define \begin{equation*}
		\prelab_{\RO,\zeta}(v)\defeq(v,\lab_{\RO,\zeta}(\pred(v)))\ .
	\end{equation*}
	
	Let $\cB$ be a non-uniform PROM algorithm.
	Fix an implicit security parameter $\sec$.
	Let $x$ be an input to $\cB$.
	We now define a magic pebbling strategy induced by any given execution of $\cB^\RO(x;\$)$, where $\$$ denotes the random coins of $\cB$. Such an execution makes a sequence of batches of random oracle calls (as defined in Section \ref{sec:prom-overview}), which we denote by
	\begin{equation*}
		\bq(\cB,\RO,x,\$)\defeq(\bq_1,\dots,\bq_t)\ .
	\end{equation*}
	
	The induced black-magic pebbling strategy,
	\begin{equation}\label{eqn:epf_def-magic}
	\epfm_\zeta(\cB,\RO,x,\$)=((B_0,M_0),\dots,(B_t,M_t)) \ ,
	\end{equation}
	is called an \emph{ex-post-facto black-magic pebbling}, and is defined by the following procedure.
	
	\begin{enumerate}
	\item $B_0=M_0=\emptyset$.
	\item For $i=1,\dots,t$:
		\begin{enumerate}
		\item $B_i=B_{i-1}$. 
		\item $M_i=M_{i-1}$.
		\item For each individual query $q\in\bq_i$, if there is some $v\in V_n$ such that $q=\prelab_{\RO,\zeta}(v)$ and $v\notin P_i$, then ``pebble $v$'' by performing the following steps:
			   \begin{enumerate}
				   \item If $\pred(v)\subseteq M_i\cup B_i$:
					   \begin{itemize}
						   \item $B_i=B_i\cup\{v\}$.
					   \end{itemize} 
				   \item Else:
					   \begin{itemize}
						   \item $V=\{v\}$.
						   \item Let $V^*$ be the transitive closure of $V$ under the following operation:\\
						   $V=V\cup\left(\bigcup_{v'\in V}\pred(v')\cap(M_i\cup B_i)\right)$.
						   \item $M_i=M_i\cup V^*$.
					   \end{itemize}
			   \end{enumerate}
		\end{enumerate}
	\item For $i=1,\dots,t$:
		\begin{enumerate}
			\item A node $v\in M_i\cup B_i$ is said to be \emph{necessary at time $i$} if
				\begin{align*}
					\exists j\in[t], q\in\bq_j, v'\in V_n \mbox{ s.t. } \quad & j>i \wedge v\in\pred(v') \wedge q=\prelab_{\RO,\zeta}(v') \\ & \wedge \Big(\not\exists k\in[t], q'\in\bq_k \mbox{ s.t. } i<k<j \wedge q'=\prelab_{\RO,\zeta}(v)\Big) \ .
				\end{align*}
				In other words, a node is necessary if its label will be required in a future oracle call,
				but its label will not be obtained by any oracle query between now and that future oracle call.
		
				Remove from $B_i$ and $M_i$ all nodes that are not necessary at time $i$.
		\end{enumerate}
	\end{enumerate}
\end{definition}
\fi
\ificalpshortversion
	The formal definition 
	is given in the full version
	due to space constraints.
\fi

\ificalpfullversion
	\subsection{Legality and space usage of ex-post-facto black-magic pebbling}
\fi

The following theorems establish that the space usage of PROM algorithms is closely related to the space usage of the induced pebbling.

\ificalpfullversion
	We will use the following supporting lemma, also used in prior work such as \cite{AS15,DKW11} (see, e.g., \cite{DKW10} for a proof).

	\begin{lemma}\label{lemma1}
	Let $B=b_1,\dots,b_u$ be a sequence of random bits and let $\mathbb{H}$ be a set. Let $\cP$ be a randomized procedure that gets a hint $h\in\mathbb{H}$, and can adaptively query any of the bits of $B$ by submitting an index $i$ and receiving $b_i$ as a response. At the end of its execution, $\cP$ outputs a subset $S\subseteq\{1,\dots,u\}$ of $|S|=\phi$ indices which were not previously queried, along with guesses for the values of the bits $\{b_i:i\in S\}$. Then the probability (over the choice of $B$ and the randomness of $\cP$) that there exists some $h\in\mathbb{H}$ such that $\cP(h)$ outputs \emph{all} correct guesses is at most $|\mathbb{H}|/2^\phi$.
	\end{lemma}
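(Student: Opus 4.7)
My plan is to prove the bound by fixing the hint and applying a union bound. The core observation is that the $\phi$ bits whose values $\cP$ guesses in $S$ are, by construction, never queried, so they are information-theoretically independent of the transcript that $\cP$ sees. Once this is formalized, the probability of guessing them all correctly is exactly $2^{-\phi}$ for each fixed hint, and a union bound over $\mathbb{H}$ finishes the proof.

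First I would fix an arbitrary hint $h \in \mathbb{H}$ and analyze $\Pr[E_h]$, where $E_h$ denotes the event that on input $h$, the procedure $\cP$ outputs a subset $S$ of $\phi$ previously-unqueried indices together with guesses $\{g_i\}_{i\in S}$ such that $g_i = b_i$ for all $i\in S$. The probability is taken over the uniform choice of $B \in \zo^u$ and the randomness $\$$ of $\cP$. The key invariant is that the ``view'' of $\cP(h)$ during its execution consists only of $h$, $\$$, and the answers to the queries it has made so far; in particular, it contains no information about the values $b_i$ for indices $i$ not yet queried.

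Next I would formalize the claim $\Pr[E_h] \leq 2^{-\phi}$ by conditioning on the full transcript $\tau$, consisting of $h$, $\$$, the sequence of queried indices, and the answers received. Conditioned on $\tau$, the output subset $S$ and the guesses $\{g_i\}_{i\in S}$ are deterministic, while the bits $\{b_i : i \notin \text{queried}(\tau)\}$ remain jointly uniform and independent (since $B$ was sampled uniformly and only the queried positions have been revealed). Because $S$ consists entirely of unqueried indices, the probability (over $B$ given $\tau$) that $g_i = b_i$ for all $i\in S$ is exactly $2^{-\phi}$. Averaging over all transcripts $\tau$ yields $\Pr[E_h] \leq 2^{-\phi}$.

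Finally, I would apply a union bound: the event in the lemma statement is $\bigcup_{h\in\mathbb{H}} E_h$, so
\[
\Pr\Big[\exists h\in\mathbb{H} : \cP(h) \text{ succeeds}\Big] \;\leq\; \sum_{h\in\mathbb{H}} \Pr[E_h] \;\leq\; \frac{|\mathbb{H}|}{2^{\phi}}.
\]
The only mildly subtle point is ensuring that the set $S$, although random, still consists of unqueried indices under the transcript-conditioning argument; this is immediate from the problem setup, which mandates that the indices in $S$ were not previously queried. Everything else is a clean application of independence plus a union bound, so I do not expect any serious obstacle.
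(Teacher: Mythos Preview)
Your proof is correct and is the standard argument for this lemma. Note, however, that the paper does not actually give its own proof of this statement: it cites prior work (specifically \cite{AS15,DKW11}, with a pointer to \cite{DKW10} for a proof) and uses the lemma as a black box. Your argument---fix the hint, condition on the transcript so that the unqueried bits remain uniform, get $2^{-\phi}$ per hint, then union-bound over $\mathbb{H}$---is exactly the proof one finds in those references, so there is nothing to compare.
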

\fi

\begin{lemma}[Legality and magic pebble usage of ex-post-facto black-magic pebbling]\label{lem:epfm-legal}
Let $n=n(\sec)$ and let $\gfsec=\{\glong\}_{\sec\in\NN}$ be a graph family.
Let $\zeta\in\zo^\sec$ be an arbitrary input label for $\gfsec$.
Fix any PROM algorithm $\cB$ and input $x$.
With overwhelming probability over 
\ificalpfullversion the choice of random oracle  \fi
$\RO\gets\OSet$ and the random coins $\$$ of $\cB$, it holds that
the ex-post-facto magic pebbling $\epfm_\zeta(\cB,\RO,x,\$)$ consists of valid magic-pebbling moves, 
and uses no more than $\chi=\floor{|x|/\sec}$ magic pebbles (i.e., for all $i$, $|M_i|\leq\chi$).
\end{lemma}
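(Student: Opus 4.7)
I would split the proof into two parts: (i) verifying that the induced sequence is a syntactically legal black-magic pebbling, and (ii) bounding the total number of magic pebbles via a random-oracle compression argument based on Lemma~\ref{lemma1}.

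Legality follows directly from inspecting the construction. A black pebble is placed on $v$ only in the sub-branch where $\pred(v)\subseteq M_i\cup B_i$, which is precisely the black-pebble precondition of Definition~\ref{def:blackmagic}. Magic-pebble placements and pebble removals are unconditionally permitted in the black-magic game, and the parallel variant accommodates multiple placements within a single step. Hence every move produced by the procedure is a legal move of the black-magic pebble game.

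For the magic-pebble count, I would argue by contradiction. Suppose $|\bigcup_i M_i|>\chi=\lfloor|x|/\sec\rfloor$ with non-negligible probability. The key observation is that every magic placement on a node $u$ arises from some query $q=(v,\lab(\pred(v)))$ emitted by $\cB$ with $u\in\pred(v)$ and $u\notin M_i\cup B_i$; in particular $\cB$ has not recently issued the preimage query $(u,\lab(\pred(u)))$ to $\RO$ (otherwise $u$ would still be tracked). The value $\lab(u)=\RO(u,\lab(\pred(u)))$ is however syntactically present inside the tuple $q$, so it can be read off directly from the query transcript. I would then construct a predictor $\cP$ in the sense of Lemma~\ref{lemma1} whose hint is the input $x$ together with a short ancillary description of which queries trigger magic placements and on which nodes, and which simulates $\cB$ while consulting $\RO$ only on non-magic queries. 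For each magic placement, $\cP$ emits $\RO(u,\lab(\pred(u)))=\lab(u)$ as a prediction at a previously-unqueried preimage, with $\lab(\pred(u))$ supplied either by the simulation (when $\pred(u)$ has already been pebbled) or by the transitive-closure construction (which magic-pebbles enough ancestors to furnish the required preimage). Strictly more than $\chi$ magic pebbles yield strictly more than $\chi\cdot\sec\geq|x|$ predicted bits of $\RO$ output from a hint not much larger than $|x|$, so Lemma~\ref{lemma1} bounds the success probability by a negligible quantity, contradicting the assumption.

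The main obstacle is the tightness of the accounting: the bound $\chi=\lfloor|x|/\sec\rfloor$ leaves essentially no slack between ``hint size'' and ``prediction size.'' I would handle this by observing that the bad event is ``at least $\chi+1$ magic pebbles,'' which automatically buys $\sec$ extra bits of prediction, and then designing the ancillary hint encoding (indices of size $O(\log n)$ per magic pebble, and a bitmap marking magic queries) so that its overhead is dominated by this slack for $\sec$ sufficiently large relative to $\log n$. A secondary subtlety is that the trimming pass in step~3 can remove a magic pebble later on, potentially creating ambiguity about which placement to charge; I would resolve this by counting distinct magic placements in the untrimmed sequence, noting that by Definition~\ref{def:blackmagic} a magic pebble cannot be re-used after removal, so each element of $\bigcup_i M_i$ is produced by exactly one placement event. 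Finally, the randomness of $\cB$ is handled by a standard averaging argument that fixes $\$$ while preserving non-negligible probability, with $\$$ either folded into the hint or left implicit in the predictor.
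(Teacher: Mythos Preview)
Your proposal is correct and follows essentially the same approach as the paper: a compression argument via Lemma~\ref{lemma1}, where the predictor simulates $\cB$ with a hint consisting of $x$ together with the indices of the magic-pebble-triggering queries, and extracts each predicted label $\lab(u)$ directly from the triggering query while computing the corresponding preimage $\prelab(u)$ without querying $\RO$ at that point. Your explicit separation of legality (by inspection of the construction) from the compression bound is slightly cleaner than the paper's presentation, which folds both into a single contradiction; one small imprecision to fix is the phrase ``consulting $\RO$ only on non-magic queries''---the predictor must forward \emph{all} of $\cB$'s queries to keep the simulation faithful, and it is the \emph{preimage} points $(u,\lab(\pred(u)))$ that are never queried (because $\cB$ itself never queried them).
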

\begin{proof}
Fix an algorithm $\cB$ and, for the sake of contradiction, suppose that there is an input $x$ such that with non-negligible probability over $\RO$ and $\$$, the induced pebbling $\epfm_\zeta(\cB,\RO,x,\$)$ uses more than $\chi$ magic pebbles or contains an invalid move. By definition, this means that the following event $\cE$ occurs with non-negligible probability: on more than $\chi$ occasions, a (magic) pebble is placed on a node $v$ although its parents were not all pebbled in the previous step. In turn, this means that a correct random-oracle query for the label of $v$ is made by $\cB$; and the correct query contains the label of some predecessor node $v'$ which was not contained in the output of any previous oracle call.

Let us suppose that event $\cE$ occurs with probability more than $p=(q(\chi+1)2^{|x|})/2^{\sec(\chi+1)}$, where $q$ is the number of oracle queries made by $\cB$. Note that this probability is negligible, since $\sec(\chi+1)=\sec(\floor{|x|/\sec}+1)\geq |x|+\sec-1\gg|x|+\log(q(\chi+1))$. Based on this assumption, we construct a predictor that predicts $\chi+1$ output values of the random oracle with impossibly high probability (specifically, violating Lemma~\ref{lemma1}) as follows. The predictor $\cP$ depends on input $x$ and can query the random oracle on inputs of its choice, before outputting its prediction. Let $\hat{r}$ be an upper bound on the number of random bits used by $\cB(x)$. The predictor also has access to a sequence $\hat{R}$ of $\hat{r}$ random bits, that it can use to simulate the random coins of $\cB$.
\begin{itemize}
	\item \emph{Hint:} The predictor $\cP$ receives as its hint\footnote{Note that the hint may depend both on the choice of random oracle, and on the randomness $\hat{R}$.} \emph{either} $\bot$ if the induced pebbling $\epfm_\zeta(\cB,\RO,x,\$)$ is valid and uses no more than $\chi$ magic pebbles, \emph{or} the following information otherwise:
		\begin{itemize}
			\item the index $i^*\in[q]$ of the first oracle call causing the illegal event (inducing the $(\chi+1)$th placement of a magic pebble on some node $v$) to happen;
			\item the indices $I\subset[i^*]$ of all oracle calls preceding the $i^*$th oracle call, that induce the placement of a magic pebble or pebbles; and
			\item $\cB$'s input $x$.
		\end{itemize}
		The size of this hint is at most $(\chi+1)\log(q)+|x|$ bits.
	\item \emph{Execution:} If the hint is $\bot$, then $\cP$ halts and outputs nothing. Otherwise, $\cP$ runs $\cB(x;\hat{R})$, forwarding all oracle calls to the random oracle, until the $i^*$th query. By construction, for each $i'\in I\cup\{i^*\}$, the $i'$th query contains the labels of the parents of the node $v_{i'}$ whose pebbling is induced by the $i'$th query, and at least one of these labels (say, label $\ell_{w_{i'}}$ for parent node $w_{i'}$) was not the output of any previous query to the random oracle. For each $i'\in I\cup\{i\}$, our predictor recomputes the value $\tilde{w}_{i'}=\prelab_{\RO,\zeta}(w_{i'})$ which is the preimage under $\RO$ of $\ell_{w_{i'}}$. Note that by definition of $\prelab$, $\tilde{w}_{i'}$ can be computed without ever querying $\RO$ on input $\tilde{w}_{i'}$. Finally, $\cP$ outputs the following pairs:
	$$\left\{(\tilde{w}_{i'},\ell_{w_{i'}})\right\}_{i'\in I\cup\{i^*\}}\ .$$
	Since by construction, each query in $i'\in I\cup\{i^*\}$ induced the placement of a magic pebble, it follows that each pair $(\tilde{w}_{i'},\ell_{w_{i'}})$ is a valid input-output pair of $\RO$. Moreover, $\cP$ never queried $\RO$ on any $\tilde{w}_{i'}$.
\end{itemize}
By our assumption about the probability $p$ of the event $\cE$, the predictor's hint is $\bot$ with probability at most $\cE$, and the predictor succeeds whenever the hint is not $\bot$. That is, the predictor must succeed with probability greater than $p=(q(\chi+1)2^{|x|})/2^{\sec(\chi+1)}$. By construction, the size of the predictor's hint set is at most $q(\chi+1)2^{|x|}$, and the predictor's output is $\sec(\chi+1)$ bits long. Thus Lemma \ref{lemma1} implies that the probability (over the choice of $\RO$ and the randomness of $\cP$) that there is some hint such that $\cP$ outputs all correct guesses is at most $(q(\chi+1)2^{|x|})/2^{\sec(\chi+1)}$. (This is equal to $p$.) We have a contradiction, and the lemma follows.
\end{proof}

\begin{lemma}[Space usage of ex-post-facto black-magic pebbling]\label{lem:epfm-space}
Let $n,\gfsec,\zeta$ be as in Lemma~\ref{lem:epfm-legal}.
Fix any PROM algorithm $\cB$ and input $x$.
Fix any $i\in[t]$, $\lambda\geq 0$, 
and define 
\ificalpfullversion
	\begin{equation*}
		\epfm_\zeta(\cB,\RO,x,\$)=(P^\RO_1,\dots,P^\RO_t)=((B^\RO_1,M^\RO_1),\dots,(B^\RO_t,M^\RO_t)) 
	\end{equation*}
\fi
\ificalpshortversion
	$\epfm_\zeta(\cB,\RO,x,\$)=(P^\RO_1,\dots,P^\RO_t)=((B^\RO_1,M^\RO_1),\dots,(B^\RO_t,M^\RO_t))$
\fi
for oracle $\RO$. 
\ificalpfullversion
	We may omit the superscript $\RO$ for notational simplicity.
\fi
It holds for all large enough $\sec$ that the following probability is overwhelming:
\ificalpfullversion
	\begin{equation*}
		\Pr\left[\forall i\in[t], ~|P_i|\leq\wsize{\state_i}+\lambda\right]\ ,
	\end{equation*}
\fi
\ificalpshortversion
	$\Pr\left[\forall i\in[t], ~|P_i|\leq\wsize{\state_i}+\lambda\right]$,
\fi
where the probability is taken over $\RO\gets\OSet$ and the coins of $\cB$. 
\end{lemma}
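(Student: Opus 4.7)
The plan is to mirror the predictor-based contradiction argument used for Lemma~\ref{lem:epfm-legal} (itself an adaptation of the \cite{AS15} technique) but now applied to the bound on the \emph{size} of each pebble configuration rather than to the number of magic pebbles. Fix an index $i\in[t]$. Suppose toward contradiction that $p\defeq\Pr_{\RO,\$}\bigl[|P_i|>\wsize{\state_i}+\lambda\bigr]$ is non-negligible. I would then construct a predictor $\cP$ in the sense of Lemma~\ref{lemma1}, whose hint is $\bot$ outside the bad event and equals $(\state_i,P_i)$ on the bad event; its task is to output the full list of $\sec$-bit labels $\{\lab_{\RO,\zeta}(v)\}_{v\in P_i}$, i.e., $\phi\defeq|P_i|\cdot\sec$ output bits. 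The cardinality of the hint space is at most $|\mathbb{H}|\leq 1+2^{|\state_i|}\cdot\binom{n}{\leq|P_i|}\leq 2^{|\state_i|+|P_i|\log n+O(1)}$. I would let $\cP$ use its own coins $\hat{R}$ in place of $\cB$'s coins $\$$; since $\hat R$ and $\$$ have identical distributions, the event $\cE_i$ still occurs with probability $p$ over $(\RO,\hat R)$, so whenever the hint is non-$\bot$ the predictor gets a ``faithful'' simulation starting point.

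Given a non-$\bot$ hint, $\cP$ reboots $\cB$ from the state $\state_i$ and continues running forward, forwarding each oracle call of $\cB$ to $\RO$. The key structural fact is the pruning rule in Definition~\ref{def:expostfacto-magic}: a node $v$ remains in $P_i$ only because it is \emph{necessary at time $i$}, meaning some future query $\bq_{j_v}=\prelab_{\RO,\zeta}(w)$ (for a successor $w$ of $v$) uses $\lab_{\RO,\zeta}(v)$ as one of its input components, and in the interval $(i,j_v)$ the algorithm $\cB$ never queries $\RO$ on $\prelab_{\RO,\zeta}(v)$ itself. Consequently, $\cP$ can read $\lab_{\RO,\zeta}(v)$ off the input of $\bq_{j_v}$ without ever having queried its preimage. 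After time $j_v$, if $\cB$ ever does query $\prelab_{\RO,\zeta}(v)$, $\cP$ intercepts the query and returns the cached $\lab_{\RO,\zeta}(v)$ rather than forwarding it to $\RO$; collisions in cached outputs can be used to recognize such re-queries (or, alternatively, $\cP$ can reconstruct $\prelab_{\RO,\zeta}(v)$ from the inputs of queries of the form $\bq_{j_v}$), exactly in the spirit of the preimage-reconstruction step in the proof of Lemma~\ref{lem:epfm-legal}. Thus $\cP$ terminates having extracted all $|P_i|$ labels and never having queried $\RO$ on the corresponding preimages.

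Applying Lemma~\ref{lemma1} with $|\mathbb{H}|$ and $\phi$ as above gives
\[
   p\;\leq\;\frac{|\mathbb{H}|}{2^{\phi}}\;\leq\;2^{|\state_i|+|P_i|\log n+O(1)-|P_i|\sec}.
\]
On the bad event, $|P_i|>\wsize{\state_i}+\lambda$, i.e., $|P_i|\sec>|\state_i|+\sec\lambda$, so
\[
   p\;\leq\;2^{|P_i|\log n+O(1)-\sec\lambda},
\]
which is negligible for all large enough $\sec$ (any $\lambda$ that is $\omega((\log n)/\sec)$ suffices, in particular any constant $\lambda\geq 1$ once $\sec\gg\log n$). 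This contradicts the assumed non-negligibility of $p$, so for each fixed $i$, $|P_i|\leq\wsize{\state_i}+\lambda$ holds with overwhelming probability. A union bound over the polynomially many $i\in[t]$ then yields the ``for all $i$'' statement of the lemma.

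\emph{Main obstacle.} The delicate point is managing ``forbidden'' queries during the forward simulation. The necessity clause in Definition~\ref{def:expostfacto-magic} directly forbids $\cB$ from querying $\prelab_{\RO,\zeta}(v)$ in the window $(i,j_v)$, but gives no guarantee beyond $j_v$; handling possible later re-queries via caching (and arguing that caching does not alter $\cB$'s view because $\RO$ would have returned the same value) is the crux. A secondary subtlety is the interaction between $\cP$'s randomness and $\cB$'s coins, which I would resolve by identifying $\hat R$ with $\$$ as above so that no additional bits of $\cB$'s randomness need to be paid for in the hint.
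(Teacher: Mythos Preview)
Your overall skeleton---build a predictor that restarts $\cB$ from $\state_i$, harvests the labels of the necessary pebbles from their ``critical'' child queries, and then invokes Lemma~\ref{lemma1}---is exactly the paper's approach. The substantive difference is in what you put in the hint, and that difference creates a real gap.

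You give the predictor $(\state_i,P_i)$. The paper instead gives it the \emph{indices} $J\subset[q]$ of the critical calls (plus, for the magic-pebble part, the indices $I$ of the magic-pebble-placing calls and the input $x$, and the number of queries made up to round $i$). The point of $J$ is not bookkeeping convenience; it is what lets the predictor function at all. During the forward simulation the predictor will see many queries whose first coordinate is a child $v'$ of some $v\in P_i$. Only the one equal to $\prelab_{\RO,\zeta}(v')$ actually carries $\lab_{\RO,\zeta}(v)$; a bogus query $(v',\text{garbage})$ is indistinguishable from it unless the predictor already knows $\lab_{\RO,\zeta}(v)$---the very value it is trying to predict. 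With $J$ in hand the paper sidesteps this circularity entirely: it declares a query ``correct'' if it is in $J$ or if all parents' labels have already been certified correct, and this recursion (bootstrapped by $J$) simultaneously tells the predictor where to harvest each label and when a later query equals some $\prelab_{\RO,\zeta}(v)$ and must be answered from cache. Your ``main obstacle'' paragraph notices the second of these two issues (post-$j_v$ re-queries) but not the first (identifying $j_v$), and your proposed fixes---recognizing re-queries by output collisions, or reconstructing $\prelab_{\RO,\zeta}(v)$ from the critical call---do not work: the critical call gives you $\lab(v)$, not $\lab(\pred(v))$, so you still cannot form $\prelab_{\RO,\zeta}(v)$ without further information.

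The fix is exactly what the paper does: put $J$ (at cost $\le |B_i|\log q$ bits, comparable to your $|P_i|\log n$) in the hint rather than $P_i$; the set $B_i$ itself can then be reconstructed from the full transcript at the end. A minor additional difference is that the paper also predicts the labels of \emph{all} magic-pebble placements (using $I$ and $x$ in the hint, with $|I|\le |x|/\sec$ by Lemma~\ref{lem:epfm-legal}), which is how it accounts for the $M_i$-part of $P_i$ in the final arithmetic.
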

\begin{proof}
Fix any $i\in[t]$ and $\lambda\geq 0$.
We design a predictor $\cP$ to predict the labels of all nodes in $B_i^\RO$, 
as well as the labels of all magic pebbles placed, as follows. 
We refer to the oracle call that causes a node $v\in B_i^\RO$ to be deemed necessary a \emph{critical call}. 
The number $c$ of critical calls is at most $|B_i^\RO|$, i.e., $c\leq|B_i^\RO|$. 
As in the proof of Lemma \ref{lem:epfm-legal}, $\cP$ depends on $x$, $\RO$, 
and a long enough sequence $\hat{R}$ of random bits used to simulate the coins of $\cB$.

\begin{itemize}
	\item \emph{Hint:} The predictor $\cP$ receives as its hint:
		\begin{itemize}
			\item the indices $J=\{j_1,\dots,j_c\}\in[q]^c$ of the critical calls made by $\cB$, and
			\item the state $\sigma_i$ outputted by $\cB$ at the end of iteration $i$, and
			\item the set of indices $I$ of all oracle calls that induce the placement of a magic pebble or pebbles, and
			\item the number of oracle calls that were made up to and including the $i$th iteration, and
			\item $\cB$'s input $x$.
		\end{itemize}
		In the above, $q$ is the number of queries made by $\cB$.
		The size of this hint is $c\log(q)\cdot2^{|\sigma_i|}\cdot q|I|\cdot\log(q)\cdot 2^{|x|}$.
		Recall that by Lemma \ref{lem:epfm-legal}, the set of indices $I$ of all oracle calls that induce the placement of a magic pebble or pebbles has size at most $|I|\leq\chi$ where $\chi=|x|/\sec$.
	\item \emph{Execution:}
		$\cP$ runs $\cB$ on input $(x,z,\sigma_i)$, recording the labels of all input-nodes of the critical calls. To answer any oracle call $Q$ with output-node $v$, the predictor does the following:
		\begin{itemize}
			\item Determines if the call is correct. A call is correct iff it is a critical call or for each parent $w_{i'}$ of $v$, a correct call for $w_{i'}$ has already been made and $Q$ matches the results of those calls. In particular, $Q=\prelab_{\RO,\zeta'}(w_{i'})$ and no new oracle calls need be made by the predictor to check this. 
			\item If the call is correct and the label of $v$ has already been recorded then output the label. Otherwise query $\RO$ to answer the call.
		\end{itemize}
		Finally, $\cP$ outputs predictions of all of the labels of the magic pebbles and all the labels associated with $B_i$, as follows.
		\begin{itemize} 
			\item The labels of the magic pebbles are determined as described in the proof of Lemma \ref{lem:epfm-legal}.
			\item When $\cB$ terminates, $\cP$ checks the transcript to determine the set $B_i$. 
			It is easy to verify that their labels were never queried to $\RO$ by $\cP$. 
			Then, for all $v\in B_i$ the predictor computes $\tilde{v}=\prelab_{\RO,\zeta'}(v)$ 
			and outputs the pair $(\tilde{v},\ell_v)$ where $\ell_v$ is the label of $v$ 
			(as specified in the input of the oracle call for associated critical call).
		\end{itemize}
\end{itemize}

Assume for contradiction that there is some $\lambda\geq 0$ such that with non-negligible probability for some $i\in[t]$ it holds that $|P_i|>\wsize{\sigma_i}+k/\sec+\lambda/\sec$.
Let $\cE$ denote the event that the predictor succeeds (i.e., outputs all correct guesses).
By construction, $\Pr[\cE]$ is non-negligible.
The predictor's output is $\sec(|I|+|P_i|)$ bits long.
From Lemma \ref{lemma1}, it follows that the probability (over the choice of $\RO$ and the randomness of $\cP$) that there is some hint such that $\cP$ outputs all correct guesses is at most
\begin{equation}\label{eqn:epfm-space-prob}
\frac{c\log(q)\cdot 2^{|\sigma_i|}\cdot q|I|\cdot \log(q)\cdot 2^{|x|}}{2^{\sec(|I|+|P_i|)}}\ .
\end{equation}
\eqref{eqn:epfm-space-prob} is negligible 
if $|P_i|>\wsize{\sigma_i}+\lambda$.
By assumption, this inequality holds with non-negligible probability. 
Thus, \eqref{eqn:epfm-space-prob} contradicts our earlier observation that by construction, $\Pr[\cE]$ is non-negligible. The lemma follows.
\end{proof}

\section{Static-memory-hard functions}\label{sec:memory-hardness}

We now define \emph{static-memory-hard functions}.
As mentioned above, prior notions of memory-hardness consider only dynamic memory usage.
To model static memory usage, we consider a hash function with two parts $(\cH_1,\cH_2)$ where $\cH_2(x)$ computes the output of the hash function $h(x)$ given oracle access to the output of $\cH_1$.
This design can be seen to reduce honest party computation time by limiting the hard work to one-off preprocessing phase, while maintaining a large space requirement for password-cracking adversaries. 
Informally, our guarantee says that unless the adversary stores a specified amount of \emph{static} memory, he must use an equivalent amount of \emph{dynamic} memory to compute $h$ correctly on many outputs.
Definition~\ref{def:preprocessing_mhf} is syntactic and Definition~\ref{def:hardness} formalizes the memory-hardness guarantee.

\paragraph{Notation}
PPT stands for ``probabilistic polynomial time.''
For $\vec{b}\in\{0,1\}^*$, define $\Seek_{\vec{b}}:\{1,\dots,|\vec{b}|\}\to\{0,1\}$ to be an oracle that on input $\iota$ returns the $\iota$th bit of $\vec{b}$.

\begin{definition}[Static-memory hash function family (SHF)]\label{def:preprocessing_mhf}
A \emph{static-memory-hard hash function family} $\cH^\RO=\{h^\RO_\sec : \{0,1\}^{w'}\to\{0,1\}^w\}_{\sec\in\NN}$ mapping $w'=w'(\sec)$ bits to $w=w(\sec)$ bits
is described by a pair of deterministic oracle algorithms $(\cH_1,\cH_2)$ such that for all $\sec\in\NN$ and $x\in\{0,1\}^n$,
\ificalpfullversion
	\begin{equation*}
	\cH_2^{\Seek_{\hat{R}}}(1^\sec,x)=h_\sec(x), \mbox{ where }
	R=\cH_1(1^\sec)\ .
	\end{equation*}
\fi
\ificalpshortversion
	$\cH_2^{\Seek_{\hat{R}}}(1^\sec,x)=h_\sec(x), \mbox{ where }
	R=\cH_1(1^\sec)$.
\fi
(The superscript $\RO$ is left implicit.)
\end{definition}

\begin{definition}[$(\Lambda,\tau,q)$-hardness of SHF]\label{def:hardness}
	Let $\cH=\{h_\sec\}_{\sec\in\NN}$ be a static-memory hash function family described by algorithms $(\cH_1,\cH_2)$, mapping $w'$ to $w$ bits.
	$\cH$ is \emph{$(\Lambda,\tau)$-hard} if for any large enough $\sec\in\NN$,
	any $\Delta=\Delta(\sec)\in\omega(\log(\sec))$,
	any string $R\in\{0,1\}^{\Lambda-\Delta}$,
	and any PPT algorithm $\cA$, 
	for any set $X=\{x_1,\dots,x_q\}\subseteq\{0,1\}^{w'}$,
	there is a negligible $\eps$ such that
	\begin{gather*}\label{eqn:cheatonstring}
	\Pr_{\RO,\rho}\Big[
	\big\{(x_1,h_\sec(x_1)),\dots,(x_q,h_\sec(x_q))\big\}=\cA(1^\sec,R;\rho)
	\wedge
	\sattime_{\OSet}(\Lambda,\cA,R,\rho)\geq \tau\Big]<\eps\ .
	\end{gather*}

	\iffullversion
	Let $\Lambda^*=\Lambda^*(\sec)$ be the output size of $\cH_1$. If $\cH$ is $(\Lambda^*,\tau,q)$-hard, we say it is \emph{$(\tau,q)$-hard}.
	\fi
\end{definition}

For simplicity, we henceforth assume $w'=w=\sec$ (i.e., the oracle's input and output sizes are equal to the security parameter) unless otherwise stated.

\subh{The role of $q$.}
The parameter $q$ in Definition~\ref{def:hardness} serves to capture
the intuitive idea that an adversary that uses a certain amount of space
could always use that space to directly store output values of $h_\sec$.
Clearly, an adversary with an arbitrary input $R$ 
could very easily output up to $\wsize{|R|}$
correct output values. Our goal is to lower bound the amount of space
needed by an adversary who outputs nontrivially more correct values
than that --- and $q$, which is a function of $|R|$, captures how many more.


\section{SHF constructions}\label{sec:constructions}

\ificalpshortversion
Recall from Definition~\ref{def:incrementally-hard} that the property we want in our SHF constructions (more specifically, our $\cH_1$ constructions) is this ``locally hard to access'' notion, meaning that if an adversarial party chooses to not store the static part of our hash function which they obtain from performing the ``preprocessing'' computation associated with $\cH_1$, then they must use the same memory and sustained time to recompute the function when our static-memory-hard function is called on \emph{any subset of inputs} larger than the memory used to store the preprocessed computation.
We achieve this desired property in our $\cH_1$ functions using two novel DAG constructions, one of which is optimal for a specific graph class and the other we conjecture to be optimal for all general graph classes.
\fi

\ificalpfullversion
\paragraph{A first attempt}
What if we pebble a hard-to-pebble graph, and then let $R_{k,i}=H(P(k), i)$ where $P(k)$ is the entire pebbling of the graph (on input $k$ and iteration $i$ is the $i$-th call to the hash function $H$)? 
This would in fact work in the random oracle model where the random oracle takes arbitrary-length input.
However, in practice, hash functions do not take arbitrary-length input.
While constructions like Merkle-Damg{\aa}rd~\cite{Mer79} and sponge~\cite{BDPA08} can transform a fixed-input-length hash function into one that takes arbitrary-length inputs, the resulting function does \emph{not} behave like a random oracle even if the fixed-length hash function does.\footnote{For example, both the constructions mentioned process the input sequentially in chunks. Evaluating the hash function on inputs that differ only in the final chunk will yield outputs that differ in a known way; this provides a way to distinguish these constructions from a random oracle even if the underlying fixed-length hash function is a random oracle.} Moreover, the computation graphs of known length-expanding transformations such as Merkle-Damg{\aa}rd and sponge functions require very little space to compute. For instance, the computation graph of the Merkle-Damg{\aa}rd construction is a binary tree and the computation graph of the sponge function is a caterpillar graph both of which take logarithmic and constant space, respectively, to compute. Thus, we have to use special constructions to achieve the local-hardness properties we need.

Recall from Definition~\ref{def:incrementally-hard} that the property we want is this ``locally hard to access'' notion, meaning that if an adversarial party chooses to not store the static part of our hash function which they obtain from performing the ``preprocessing'' computation associated with $\cH_1$, then they must use the same memory and sustained time to recompute the function when our static-memory-hard function is called on \emph{any subset of inputs} larger than the memory used to store the preprocessed computation.
We achieve this desired property in our $\cH_1$ functions using two novel DAG constructions, one of which is optimal for a specific graph class and the other we conjecture to be optimal for all general graph classes.
\fi

\subsection{$\cH_1$ constructions}\label{sec:h1-constructs}

\ificalpfullversion
	We first note the differences between the graph constructions we present here and the constructions presented in previous literature~\cite{AS15,ACKKPT16,ABP17,DFKP15}. Firstly, many of the constructions presented in previous work feature a single target node. This is reasonable in the context of memory-hard functions since both the honest party and the adversary must compute the hash function dynamically (obtaining a single label as the output of the function) on each input. However, in our context of static-memory-hard functions, single-target-node constructions do not make sense. Secondly, our constructions differ from even the multiple target node constructions presented in the literature (specifically, the constructions of~\cite{DFKP15}) since prior constructions mainly focused on finding graphs that have large memory vs. time tradeoffs.
\fi

Our constructions are designed with the goal that any adversary that does not store almost all the target labels must dynamically use \emph{the same amount of space as needed to store all the labels} to compute the hash function (while \emph{still incurring a cost in runtime}). Moreover, our constructions based on local hardness ensure a stronger guarantee than the constructions in~\cite{DFKP15}; in our case, one must use at least $S$ space (for some definition of $S$) to compute \emph{any} given subset of targets larger than one's current memory usage, whereas in their case, they use $S$ space to compute some subset of targets chosen uniformly at random. Therefore, our specifications are stronger in that we provide a space bound as well as a time bound for adversaries; and moreover, for \emph{honest} parties, the time cost is only a one-time setup cost. We prove our pebbling costs in terms of the black-magic pebble game (defined in Section~\ref{sec:pebbling-defs}) as opposed to the standard pebble game used in previous works. Most notably, this means that in all of our constructions, the pebbling number is upper bounded by the number of targets (since one can always just pebble the targets with magic pebbles).

\ificalpfullversion{
We begin with some simple and clean constructions of $\cH_1$ based on pebbling constructions that exist in the literature. 
We first prove a lemma regarding the minimum number of pebbles used in the PROM model and the minimum number of pebbles used in the sequential memory model.
This is useful in more than one way: (1) it tells us that parallelization does not save the adversary in space so honest parties (who can only compute a constant number of labels at a time) and adversaries (who can compute an arbitrary number of labels at the same time) operate under the same space constraints and (2) it allows us to directly compare sustained time complexities between adversaries and honest parties with respect to space usage 
\iffullversion
(see our section on \emph{decremental complexity})
\fi
.

\begin{lemma}[Standard Pebbling Sequential/Parallel Equivalence]\label{lem:sec-par-equiv}
Given a DAG $G = (V, E)$, $\p(G, T) = \p^{\parallel}(G, T)$ where $\p(G, T)$ is defined to be the minimum standard pebbling space complexity in the sequential model, and we define $\p^{\parallel}(G, T)$ to be the minimum standard pebbling space complexity in the parallel model.
\end{lemma}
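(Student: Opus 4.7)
The plan is to prove the two inequalities separately. The direction $\p^{\parallel}(G, T) \leq \p(G, T)$ is immediate, since the sequential condition $|P_i \setminus P_{i-1}| \leq 1$ is a strict restriction of the parallel rules: every sequential strategy is also a parallel strategy, so taking the minimum over a smaller set can only preserve or increase the value.

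For the reverse inequality $\p(G, T) \leq \p^{\parallel}(G, T)$, I would fix an optimal parallel strategy $\mathcal{P}^{\parallel} = (P_0, \ldots, P_t)$ with peak $S = \p^{\parallel}(G, T)$ and construct a sequential strategy achieving peak at most $S$. The construction proceeds one parallel transition at a time: for the transition $P_{i-1} \to P_i$, set $N = P_i \setminus P_{i-1}$ and $R = P_{i-1} \setminus P_i$, and serialize by placing the $|N|$ new pebbles in separate sequential steps. Since a single sequential step may simultaneously delete arbitrarily many pebbles, each placement of an $n \in N$ can be paired with the removal of any pebbles not needed for the remainder of the transition.

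The technical core is a local sequentialization claim: given configurations $C, C'$ with $|C|, |C'| \leq S$ connected by a legal parallel move, there is an ordering of placements and interleaved deletions realizing the transition whose intermediate configurations all have size at most $S$. I would establish this by choosing the placement order greedily---prioritizing placements that admit an immediate compensating deletion, for instance via a sliding move whenever the graph structure allows it---and, when no local schedule suffices, invoking a recursive re-pebbling argument that recomputes needed predecessors from scratch within the $S$-pebble budget. The recursion terminates on source nodes, each of which costs only one pebble. A small worked example (such as $K_{2,2}$, where the naive simultaneous placement of two children requires $3$ pebbles but a sequentialization that re-pebbles a source between target visits uses only $2$) illustrates why re-pebbling is the right device.

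The main obstacle will be the adversarial case where every pebble in $R$ is a predecessor of every pebble in $N$ (as in a bipartite structure $K_{|R|, |N|}$), so that no deletion can be deferred without losing a later-needed predecessor. Handling this requires showing that the recomputations do not exceed the $S$-pebble budget even when combined with the pebbles on $P_{i-1} \cap P_i$ that must be preserved; I expect this to follow from an induction on the sub-DAG of ancestors of each predecessor that must be recomputed, using that this sub-DAG inherits a parallel pebbling bound of at most $S$ from the original strategy, and applying the inductive hypothesis to transfer this to the corresponding sequential bound.
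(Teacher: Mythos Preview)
Your proposal is considerably more involved than the paper's. The paper handles the non-trivial direction by exactly the naive sequentialization you describe first: for each parallel transition $P_{i-1}\to P_i$, add the pebbles of $P_i\setminus P_{i-1}$ one at a time, then delete those of $P_{i-1}\setminus P_i$. It asserts this ``uses the same number of pebbles'' without engaging the intermediate-peak issue you raise, and there is no repebbling or recursion in the paper's argument.

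You are right to worry about that issue, and your $K_{2,2}$ example is exactly the obstruction the paper does not confront. However, your proposed fix via recursive repebbling does not work either, and in fact cannot: extend $K_{2,2}$ by one level, taking sources $a,b$, middle nodes $c,d$ with $\pred(c)=\pred(d)=\{a,b\}$, and a single target $\tau$ with $\pred(\tau)=\{c,d\}$. The parallel strategy $\emptyset\to\{a,b\}\to\{c,d\}\to\{\tau\}$ has peak $2$, but any sequential strategy must at some step hold both $c$ and $d$, and the step placing the second of them needs both of $a,b$ together with the first---forcing three pebbles. Recomputing either of $c,d$ from scratch already consumes the full budget of two pebbles, leaving no room to preserve the other; this is precisely the failure mode you flag in your final paragraph, and the proposed inductive transfer (``the sub-DAG inherits a parallel bound of at most $S$'') does not rescue it, since $S$ pebbles for the sub-DAG plus anything preserved already exceeds $S$. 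So the obstacle you identify is genuine not only for the paper's argument but for the claimed equality itself under these definitions.
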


\ificalpfullversion
\begin{proof}
Any sequential pebbling strategy, $\s$ can be simulated by a parallel pebbling strategy, $\ps$ since $\ps$ can choose to place one pebble at a time. Therefore, $\p^{\parallel}(G, T) \leq \p(G, T)$. We now show that there exists a sequential pebbling strategy, $\s$, that uses the same number of pebbles to pebble a graph as a parallel strategy $\ps$. Suppose that at time $i$, a set of pebbles are added to nodes in $P_i$ in $G$ under algorithm $\ps$. Then, $pred(P_i)$ must be pebbled at time $i-1$. $\s$ can thus spend $|P_i\backslash P_{i-1}|$ pebbling steps to pebble the graph sequentially by adding pebbles on all vertices $v \in P_i \backslash P_{i-1}$ sequentially until the state of the graph is the same as the state of the graph at time $i$ under strategy $\ps$. Similarly, if a set of pebbles $D_i$ are deleted from the graph at time $i$, then $\s$ can choose to spend at most $|D_i|$ sequential pebbling steps to delete $|D_i|$ pebbles. If both strategies start on identical graphs with the same starting configuration $P_0$, then we have shown that $\p^{\parallel}(G, T) \geq \p(G, T)$. Thus, $\p^{\parallel}(G, T) = \p(G, T)$. 
\end{proof}
\fi

\ificalpfullversion
	We use Lemma~\ref{lem:sec-par-equiv} to prove an equivalent lemma for the black-magic pebble game below.
\fi

\begin{lemma}[Black-Magic Pebbling Sequential/Parallel Equivalence]\label{lem:magic-sec-par-equiv}
Given a DAG $G = (V, E)$, $\scostm(G, |T|, T) = \scostm^{\parallel}(G, |T|, T)$ where $\scostm(G, |T|, T)$ was defined to be the minimum black-magic pebbling space complexity in the sequential model, and we define $\scostm^{\parallel}(G, |T|, T) $ to be the minimum black-magic pebbling space complexity in the parallel model.
\end{lemma}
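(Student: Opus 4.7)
The plan is to establish the equality by proving both inequalities. The direction $\scostm^{\parallel}(G, |T|, T) \leq \scostm(G, |T|, T)$ is immediate from the observation that every sequential strategy is also a valid parallel strategy (parallel moves include single-pebble moves as a special case), using the same set of magic pebbles and the same peak configuration size. So the same space cost in the sense of Definition~\ref{def:magic-satcost} is preserved.

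For the reverse inequality, I would adapt the simulation argument used in the proof of Lemma~\ref{lem:sec-par-equiv}. Given a valid parallel black-magic pebbling strategy $\pp = (P_0^{\parallel}, \dots, P_t^{\parallel})$ with $P_i^{\parallel} = (M_i, B_i)$, I would construct a sequential strategy $\s$ that simulates each parallel transition $P_{i-1}^{\parallel}\to P_i^{\parallel}$ as a sequence of single-pebble moves in the following order: (i) first, place each new black pebble $v \in B_i \setminus B_{i-1}$ one at a time (this is legal since $\pred(v) \subseteq M_{i-1}\cup B_{i-1}$ by validity of $\pp$ and no pebble has been removed yet in the simulation of this step); (ii) then, place each new magic pebble $v \in M_i \setminus M_{i-1}$ one at a time (magic pebble placements are always legal by Definition~\ref{def:blackmagic}); (iii) finally, remove the pebbles in $P_{i-1}^{\parallel}\setminus P_i^{\parallel}$ one at a time.

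The resulting $\s$ is a valid sequential black-magic pebbling, and $m(\s) = m(\pp)$ since the set of magic pebbles placed across the whole simulation is identical to that of $\pp$. The main obstacle is controlling the peak configuration size: after step (ii) and before step (iii), $|P_{i-1}^{\parallel}\cup P_i^{\parallel}|$ pebbles are on the graph, which a priori could exceed $\max(|P_{i-1}^{\parallel}|, |P_i^{\parallel}|)$. To avoid any such blow-up, I would refine the simulation by interleaving additions and removals according to a topological ordering of $G$: at each micro-step, process vertices in a topological order so that any pebble in $P_{i-1}^{\parallel}\setminus P_i^{\parallel}$ which is \emph{not} a predecessor of any still-unpebbled vertex in $P_i^{\parallel}\setminus P_{i-1}^{\parallel}$ is removed before the next addition. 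Because $G$ is acyclic, this scheduling is always possible, and one can verify inductively that the peak number of pebbles on the graph at any sequential micro-step is bounded by $\max_{j \leq i}|P_j^{\parallel}|$. Combining this bound with the preservation of magic pebble count then yields $\scostm(\s)\leq\scostm^{\parallel}(\pp)$, and taking the minimum over all $\pp$ completes the proof.
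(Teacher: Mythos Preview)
Your overall approach matches the paper's: both argue by direct simulation, translating a parallel black-magic strategy into a sequential one step by step, handling the black pebbles via the technique of Lemma~\ref{lem:sec-par-equiv} and sequentializing magic-pebble placements one at a time. The paper's proof is in fact terser than yours and does not even raise the peak-size issue you identify; it simply asserts that the translation goes through.

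You deserve credit for noticing that a naive simulation can temporarily hold $|P_{i-1}^{\parallel}\cup P_i^{\parallel}|$ pebbles, which may exceed $\max(|P_{i-1}^{\parallel}|,|P_i^{\parallel}|)$. However, your proposed fix via topological interleaving does not actually close this gap. Consider the directed bipartite graph with sources $a,b$ and sinks $c,d$, where each of $c,d$ has both $a$ and $b$ as predecessors. The parallel transition $\{a,b\}\to\{c,d\}$ has both configurations of size~$2$. Under your interleaving rule, neither $a$ nor $b$ can be removed until both $c$ and $d$ have been placed (each remains a predecessor of a still-unpebbled vertex), so the sequential simulation must pass through a configuration of size~$3$ even with the sliding rule. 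Hence the inductive bound ``peak $\leq \max_{j\leq i}|P_j^{\parallel}|$'' that you assert without proof is false in general.

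Note that this is a gap in the \emph{simulation argument}, not in the lemma itself: for visiting pebbling (which is what the paper uses), a sequential strategy with peak~$2$ does exist for this example, but it is not a step-by-step simulation of the parallel one---it revisits a source after pebbling one sink. The paper's own proof, by deferring entirely to Lemma~\ref{lem:sec-par-equiv}, glosses over exactly the same point.
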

\fi

\ificalpfullversion
\begin{proof}
Any placement of black pebbles can be translated from the sequential to the parallel pebbling strategy and vice versa using the techniques stated in the proof of Lemma~\ref{lem:sec-par-equiv}. Any sequential pebbling placement of magic pebbles can be simulated trivially by a parallel pebbling strategy. Any parallel pebbling placement of $M$ magic pebbles can be simulated via a sequential pebbling strategy using $M$ additional steps. Thus, $\scostm(G, |T|, T)  = \scostm^{\parallel}(G, |T|, T) $.
\end{proof}

Now, we jump into our constructions. We first provide a simple construction and show why this construction is not optimal. In addition, we define some subgraph components in the pebbling literature that are important subcomponents of our constructions.

\subsubsection{A failed attempt at $\cH_1$} We first provide a failed attempt at constructing $\cH_1$ due to the large amount of time
that is needed to compute the function (for the sequential honest party) with respect to the amount of memory needed to store the output of the function. 
In other words, this construction is problematic in the sense that an exponential number of steps is necessary to compute the stored results of the function from scratch for
the honest party but the adversary with parallel processing time can compute the function from scratch in linear time. Although the honest party
could obtain the results of the preprocessing (i.e.\ the static part of the hash function) from elsewhere, we must ensure that they
can still feasibly compute $\cH_1$ themselves in the event that they do not trust any of the sources from which they can obtain
the static data.

Intuitively, our failed attempt at constructing $\cH_1$ is a series of binary search trees. From here onwards, we describe
all constructions of $\cH_1$ as a directed acyclic graph with $n$ nodes and later use our theorems above to 
prove static memory hardness from our constructed DAGs. 

\begin{gconstruction}[Composite Binary Tree DAG]\label{def:binary-search-tree-compression}
Let $B^C_h$ be a composite binary tree DAG with height $h$ constructed in the following way where $T$ is the number of targets of our DAG. Let $s \defeq |T|$. In our intended construction $h = s$.

\begin{enumerate}
\item Let the set of nodes be $V$. Let the set of edges be $E$.
\item Create $(s+1)2^{h-1} + s$ nodes. 
\item Create $s+1$ binary search trees using $(s+1)2^{h-1}$ nodes in total where edges are directed from children to parents in each binary tree. Let $r_i$ for $i \in [1, s+1]$ be the roots of these binary search trees.
\item Order the remaining nodes in some arbitrary order, let $s_j$ be the $j$th node in this order for $j \in [1, s]$. 
\item Create directed edges $(r_i, s_i)$ and $(r_{i+1\Mod s}, s_{i})$ for all $i \in [1, s]$.
\end{enumerate}
\end{gconstruction}

Given any binary search tree with height $h$, the minimum number of pebbles necessary to pebble the tree is $h$ (assuming a `tree' with one node has height $1$) using the rules of the standard pebble game. Therefore, to ensure that the apex of the tree is pebbled and that both the honest party and the adversary both use $h$ space to pebble the apex, the number of leaves necessary at the base of the tree is $2^{h-1}$. If we suppose that
the computationally weak honest party (who does not build special circuits) can only evaluate a constant number of random oracle calls at a time (place a constant number of pebbles), the number of sequential evaluations necessary for the honest party is $\geq \Omega(2^h)$ which is infeasible to accomplish. In constrast, the adversary only has to make $O(h)$ parallel random oracle calls, an exponential factor difference between the honest party and the adversary! Such a construction fails since it is clearly infeasible for the honest party since they would never be able to compute all target values of $\cH_1$ from scratch (since this computation requires exponential time for the honest party). 
Thus, we would like a construction that has the same minimum space requirement but also small sequential evaluation time. We prove a better (but also simply defined) construction below.
\fi

\ificalpfullversion
    \subsubsection{Cylinder construction}

\fi
\ificalpshortversion
    \subh{Cylinder construction}
\fi
We make use of what is defined in the pebbling literature as a \emph{pyramid graph}~\cite{GLT79} in constructing our \emph{cylinder graph}. The key characteristic of the pyramid graph we use is that the number of pebbles that is required to pebble the apex of the pyramid is equal to the height of the pyramid~\cite{GLT79} using the rules of the standard pebble game. Note that a pyramid by itself is not useful for our purposes since the black-magic pebbling space complexity of a pyramid with one apex is $1$. Therefore, we need to be able to use the pyramid in a different construction that uses superconstant number of pebbles in the magic pebble game in order to successfully pebble all target nodes.

\begin{gconstruction}[Illustrated in Fig.~\ref{fig:wraparound}]
\label{def:wrappyramid}
Let $\Pi^C_h$ be a \emph{cylinder} graph with height $h$. We define $\Pi^C_h$ as follows:

\begin{enumerate}
\item Create $2h^2$ nodes. Let this set of $2h^2$ nodes be $V$.
\item Arrange the nodes in $V$ into $2h$ levels of $h$ nodes each, ranging from level $0$ to level $2h-1$. Let the $j$-th node in level $i$ be $v_i^j$. Create directed edges $(v_{i}^{j\Mod{h}}, v_{i+1}^{j \Mod{h}})$ and $(v_{i}^{j\Mod{h}}, v_{i+1}^{(j+1)\Mod{h}})$ for all $i \in [0, 2h-2]$. Let this set of edges be $E$.
\end{enumerate}

\end{gconstruction}

 	\begin{figure}[h!]
        \centering
        \includegraphics[width=0.5\textwidth, angle=0]{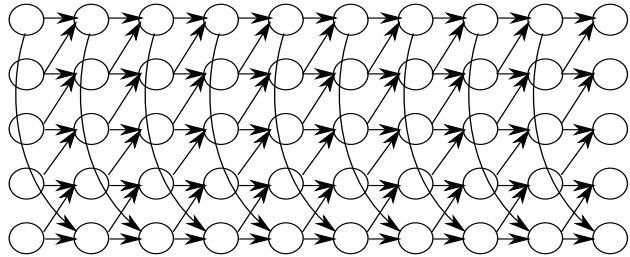}
        \caption{Cylinder construction (Def.~\ref{def:wrappyramid}) for $h=5$.}
		\label{fig:wraparound}
    \end{figure}

\ificalpfullversion
\begin{lemma}\label{lem:pyramids-tradeoff-proof}
Given a cylinder graph with height $h$, $\pyr$, $\p(\pyr, T) \geq h$. 
\end{lemma}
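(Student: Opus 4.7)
The plan is to embed a standard pyramid $\Pi_h$ of height $h$ inside $\Pi_h^C$ rooted at any top-level target, and then reduce any cylinder pebbling to a pebbling of this embedded pyramid, invoking the classical Gilbert--Lengauer--Tarjan lower bound that pebbling $\Pi_h$ requires at least $h$ pebbles.

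To locate the embedded pyramid, fix a top-level target $t = v_{2h-1}^{j_0}$. From the edge rule, the predecessors of $v_{i+1}^{j}$ are $v_i^j$ and $v_i^{(j-1)\bmod h}$, so by induction on $k$ the ancestors of $t$ at level $2h-1-k$ form the set $A_k := \{v_{2h-1-k}^{(j_0-i)\bmod h} : 0 \leq i \leq k\}$. For $0 \leq k \leq h-1$ the $k+1$ offsets are distinct modulo $h$, so no wrap-around occurs and $|A_k| = k+1$. Let $\Pi$ be the subgraph of $\Pi_h^C$ induced on $\bigcup_{k=0}^{h-1} A_k$: it has a single apex $t$, $h$ base nodes at level $h$, and the in-degree-$2$ stacked-triangle structure isomorphic to $\Pi_h$. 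Crucially, every non-base node of $\Pi$ retains both of its $\Pi_h^C$-predecessors inside $\Pi$; only the base nodes of $\Pi$ lose their level-$(h-1)$ predecessors and become sources in $\Pi$.

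Next, given any valid standard pebbling $\s = (P_0, \dots, P_\tau)$ of $(\Pi_h^C, T)$ that covers $t$, I would define the restriction $\s' := (P_0 \cap V(\Pi), \dots, P_\tau \cap V(\Pi))$ and argue it is itself a valid standard pebbling of $\Pi_h$ with target $\{t\}$. A placement in $\s'$ onto an internal node of $\Pi$ is legal because both of the node's predecessors lie in $\Pi$ and are pebbled in $P_{i-1}$ by validity of $\s$; a placement onto a base node of $\Pi$ is legal as a source placement in $\Pi_h$ (dropping the level-$(h-1)$ predecessors can only make a move easier, not harder); removals carry over for free. Hence $\scost(\s') \leq \scost(\s)$ while $\s'$ still pebbles $t$.

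Combining this reduction with the classical pyramid lower bound $\scost(\Pi_h, \{t\}) \geq h$ of~\cite{GLT79} gives $\scost(\s) \geq \scost(\s') \geq h$ for every valid pebbling of the cylinder, and hence $\p(\Pi_h^C, T) \geq h$. The main technical point to verify carefully is the non-wrap-around calculation, which is exactly why the cylinder is defined with $2h$ levels rather than $h$: with fewer levels the ancestor cone would wrap around before achieving the full pyramid structure, the embedded subgraph would have identifications among supposedly distinct pyramid vertices, and the reduction would collapse. Beyond that combinatorial check I do not anticipate further obstacles, since the restriction step is essentially local and standard.
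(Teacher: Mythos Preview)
Your proposal is correct and takes essentially the same approach as the paper: identify a height-$h$ pyramid sitting below any target node and invoke the classical \cite{GLT79} lower bound. The paper's proof is a terse three-sentence version of exactly this argument, whereas you spell out the embedding and the restriction-of-strategy step explicitly; your closing remark about needing $2h$ levels is slightly off (even $h$ levels would suffice for \emph{this} lemma, since at $k=h-1$ the ancestor cone fills the entire bottom row without identification---the extra height matters for the sustained-complexity results that follow), but this does not affect the correctness of your proof.
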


\begin{proof}
Let $T$ be the target nodes of $\pyr$. Each target node is connected to a pyramid of height $h$. Therefore, by the proofs of minimum pebbling cost of pyramids given in~\cite{GLT79}, the pyramid requires $h$ pebbles to pebble using the rules of the standard pebble game. Therefore, to pebble any one target node $t \in T$ requires $h$ pebbles, so pebbling all target nodes of $\pyr$, $T$, trivially requires $h$ pebbles.
\end{proof}

\begin{lemma}\label{thm:pyramds-ptime}
$\stime(\pyr, T) \geq 2h$.
\end{lemma}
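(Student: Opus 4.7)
The goal is to show that any valid pebbling strategy $\cP=(P_0,\dots,P_t)$ for $(\pyr,T)$ has $|\{i : |P_i|\geq h\}|\geq 2h$. Combined with Lemma~\ref{lem:pyramids-tradeoff-proof} and the trivial matching upper bound $\scost(\pyr,T)\leq h$ witnessed by the natural level-by-level pebbling, this will yield $\stime(\pyr,T)\geq 2h$.

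My plan is to exploit the structural fact that $\pyr$ consists of $2h$ levels $L_0,\dots,L_{2h-1}$, each of size $h$, with each node $v_\ell^j$ having parents $v_{\ell-1}^j$ and $v_{\ell-1}^{(j-1)\bmod h}$, so that collectively the parents of the nodes at level $\ell$ cover all of $L_{\ell-1}$. Two structural consequences will drive the argument: (i) by downward induction starting from $L_{2h-1}=T$, every node at every level must be pebbled at some moment during any valid pebbling; and (ii) for any node at level $\ell\geq h-1$ the cone of predecessors spans the full width $h$ (the cone doubles at each step going down until the wraparound cap at $h$ kicks in), so by the classical pyramid argument any pebbling of such a node must at some point maintain $\geq h$ pebbles simultaneously on the graph.

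Given these, I would define for each $\ell\in\{0,\dots,2h-1\}$ a critical time $\tau_\ell$ at which $|P_{\tau_\ell}|\geq h$, and argue that the $\tau_\ell$ are distinct. The most natural first attempt is $\tau_\ell=\min\{i : L_\ell\subseteq P_i\}$, the first time all of $L_\ell$ is \emph{simultaneously} pebbled; when this is well-defined it immediately gives $|P_{\tau_\ell}|\geq h$. Since a clever strategy might avoid ever having $L_\ell$ simultaneously pebbled for small $\ell$, as a fallback I would use the cone-based pyramid argument from (ii) to witness a high-pebble step associated with each level completion for $\ell\geq h-1$, and a symmetric argument treating the graph upside-down for the lower half $\ell<h-1$ (using that the reverse cone from any level-$\ell$ node to the targets also spans full width $h$).

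The main obstacle will be ensuring that the $2h$ witnessed time-steps are \emph{distinct}. In principle several level completions can coincide if the strategy has more than $h$ pebbles in play at that instant; I would handle this by an amortization argument that charges each coinciding completion to a distinct nearby step with $\geq h$ pebbles, using that in order for $|P_i|$ to drop below $h$ after a high-pebble step the strategy must first evict pebbles, and each eviction step itself has $\geq h$ pebbles. A cleaner alternative would be to split the cylinder into two sub-pyramids of height $h$ (top half and bottom half) and apply a sustained-complexity lower bound for pyramids separately to each, before combining them via the fact that the two halves must be pebbled in temporal sequence.
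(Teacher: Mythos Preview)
Your high-level plan---find $2h$ distinct time-steps with $\geq h$ pebbles, roughly one per level---matches the shape of the paper's argument, but you are missing the key structural tool that makes it go through. The paper's proof restricts attention (without loss of generality) to \emph{normal} pebbling strategies in the sense of~\cite{GLT79} (see Definitions~\ref{def:frugal-strategy},~\ref{def:normal-strategy} and Theorem~\ref{thm:normal-strategy}). Normality is what guarantees that the middle level $L_{h-1}$ is at some moment \emph{simultaneously} pebbled, and more generally that the levels $L_{h-1},L_h,\dots,L_{2h-2}$ are each fully pebbled at distinct successive times; the paper then runs an induction on the number $k$ of targets to show that producing $k$ targets forces $k$ additional full-level configurations above $L_{h-1}$. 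Combined with the $h$ full-level steps needed to reach the configuration $L_{h-1}$, this yields the $2h$ bound.

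Without normality, your fallbacks do not close the gap. First, the ``reverse cone'' argument for levels $\ell<h-1$ is not sound: the pyramid lower bound says pebbling an apex forces many pebbles among its \emph{predecessors}, but pebbling a low-level node imposes no constraint on its descendants, so spanning the targets from below buys nothing. Second, the amortization sketch for distinctness (``each eviction step itself has $\geq h$ pebbles'') is not true in general---a single parallel removal step can drop from $\geq h$ to $0$ pebbles---so coinciding level-completions cannot be separated this way. Your ``cleaner alternative'' of splitting into top and bottom height-$h$ sub-cylinders is in fact closest to what the paper does, but even there you need normality (or an equivalent structural lemma) to argue that each half independently contributes $h$ high-pebble steps and that these do not collapse. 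The concrete fix is to invoke Theorem~\ref{thm:normal-strategy} up front and then run the level-by-level induction as the paper does.
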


\begin{proof}
The depth of $\pyr$ is $2h$ (i.e.\ the longest directed path in $\pyr$ has length $2h$). Thus, the minimum number of parallel steps necessary to pebble any $v \in T$ is $2h$. Let $L_i$ be the set of nodes at the $i$-th level of $\pyr$ where $T$ is at level $2h-1$ and $S$ is at level $0$. To pebble each target node requires that all vertices in $L_{h-1}$ ($v_{h-1}^i$ for all $i \in [1, h]$) be pebbled at some time step simultaneously\footnote{Whereby `simultaneously', we mean there exists some time $t'$ where all vertices in  $L_{h-1}$ are pebbled.}, $t \in [0, t_{\s}]$,  by normality of pebbing strategies (see the definition of \emph{frugal} and \emph{normal} strategies in Definitions~\ref{def:frugal-strategy} and~\ref{def:normal-strategy}~\cite{GLT79,DL17}).
given any normal strategy $\s$. Thus, at least $h$ parallel time steps where $h$ pebbles are on the graph simultaneously are necessary to pebble any target $v \in T$ because to pebble all nodes in $L_{h-1}$ at time $t$ requires $h$ parallel time steps where $h$ pebbles are used at each time step.  

Suppose for contradiction that $\stime(\pyr, T) < 2h$. We first prove that to pebble any $k$ targets (where $k \leq h$) simultaneously require at least $k$ time steps (where each time step is larger than $t$ defined above) where $h$ pebbles are on the graph simultaneously. Furthermore, there exists time steps $t_{l-1} > t_{l-2} > \cdots  > t_1> t$ where $h$ pebbles are on all vertices in $L_{h-1+j}$ ($v_{h - 1 + j}^i$ for all $i \in [1, h]$) at time $t_j$. We prove this by induction. Let the base case be $k = 1$. In order to pebble any target $v \in T$ using a normal strategy $\s$, there must be a time step $t_1 > t$ where $h$ pebbles are on all vertices in $L_{h}$ ($v_{h}^i$ for all $i \in [1, h]$) by normality of pebbling strategies (see Theorem~\ref{thm:normal-strategy} 
~\cite{GLT79}). We assume as our induction hypothesis that the statement is true for all $k \leq l-1$ where $l \leq h$. We now prove the statement for $k = l$. At time $t_{l-1}$, there exist $h$ pebbles on all vertices in $L_{h+l-2}$ by definition of $t_{l-1}$ and by our induction hypothesis. By inspection, to pebble any subset of $l$ targets requires all vertices in $L_{h+l-1}$ to be pebbled at some point in the execution of the pebbling strategy. Suppose there exists a strategy that pebbles $k$ targets using at most $k-1$ parallel moves where $h$ pebbles are on the graph during each of the $k-1$ parallel moves. By our induction hypothesis, pebbling any $k-1$ sized subset of the $k$ targets requires $k-1$ parallel moves where $h$ pebbles are on the graph and all nodes in $L_{h+k-1}$ for all $k < l$ are pebbled simultaneously at time $t_k$. If no more than $h-1$ pebbles can be on the vertices in $L_{h + l-1}$, this means that there exists a vertex in $L_{h+l-1}$ that must be pebbled with at least $l$ pebbles (given there exists a previous time step when $h$ pebbles are on all vertices in $L_{h+l-2}$ and no more than $h-1$ of these pebbles can be moved to the vertices in $L_{h+l-1}$). Let this vertex be $u$. If we continue strategy $\s$ without pebbling $u$, then there will exists a vertex at every level $h + l' -1$ (for all $l' \geq l$) where $l'$ pebbles are necessary to pebble the vertex. Thus, the lower bound on the minimum number of pebbles necessary to pebble $k$ targets using strategy $\s$ is $h -1 + l'$ at some time step $t_{l'} > t_{l-1}$, a contradiction since $l' \geq 1$\footnote{Note that a simpler proof can be shown to state that at least $h$ pebbles are needed to pebble $u$ at level $l'$ but we present the present proof to show that even for a cylinder with height $h$ (instead of $2h$) our proof here still holds--i.e.\ $h$ steps where $h$ pebbles are on the cylinder are necessary to pebble all targets $T$}.

Given that to pebble any $k$ targets requires at least $k$ time steps (inaddition to the $h$ timesteps necessary to pebble all nodes in $L_{h-1}$) where $h$ pebbles are on the graph simultaneously. Thus, pebbling all targets using any strategy that pebbles sequentially subsets of targets $S_1, \dots, S_d$ where $\bigcup_{i = 1}^d S = T$ results in $\sum_{i = 1}^d |S_i| \geq h$ steps where $h$ pebbles are on the graph simultaneously. 
In all cases, we reach a contradiction with $\stime(\pyr, T) < 2h$. Therefore, $\stime(\pyr, T) \geq 2h$. 
\end{proof}

\begin{theorem}\label{thm:sustained-wraparound}
Using the rules of the standard pebble game, $h$ pebbles are necessary for at least $h$ parallel steps to pebble \emph{any} target of a height $2h$ cylinder graph, $\pyr$. 
\end{theorem}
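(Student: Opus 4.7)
The plan is to specialize the argument of Lemma~\ref{thm:pyramds-ptime} to the single-target setting. The cylinder's wrap-around edges make the ancestor set of any target $v_{2h-1}^j$ saturate to full width quickly: the ancestors at level $L_{2h-1-k}$ form $\min(k+1,h)$ consecutive nodes, so for every $\ell \in \{0,1,\ldots,h-1\}$ the entire level $L_\ell$ consists of ancestors of the target, while the portion at levels $L_h, L_{h+1}, \ldots, L_{2h-1}$ forms the classical pyramid of height $h$ above. I would record this structural observation first, since it is what distinguishes the single-target behavior in $\pyr$ from that in a plain pyramid.

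The heart of the argument is to show that any normal valid pebbling strategy $\s$ for the single target must, at some time $t^*$, have the full level $L_{h-1}$ simultaneously pebbled. I would prove this by invoking normality (Theorem~\ref{thm:normal-strategy}), following the same reasoning as in Lemma~\ref{thm:pyramds-ptime}: each of the $h$ nodes of $L_{h-1}$ is an ancestor of the target, and in a normal strategy the only way to feed the upward-narrowing pyramid above (whose base $L_h$ is itself a full-width $h$-bottleneck) is to place all $h$ nodes of $L_{h-1}$ on the graph at a single step. From this anchor I would back-propagate along the predecessor relation: since the union of predecessors of the nodes in $L_{h-1}$ equals all of $L_{h-2}$, the rule of the standard pebble game that all predecessors of a newly pebbled node be present at the previous step forces all of $L_{h-2}$ to be simultaneously pebbled at time $t^*-1$. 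Iterating this observation yields $h$ consecutive time steps $t^*-(h-1) < t^*-(h-2) < \cdots < t^*$ at each of which a full $h$-wide level sits on the graph.

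These $h$ distinct time steps each witness at least $h$ pebbles on $\pyr$, so $\stime(\pyr,\{v\}) \geq h$ and the theorem follows. The main obstacle I anticipate is formalizing the normality step: a priori a strategy could ``stagger'' its placements on $L_{h-1}$ across non-consecutive instants, keeping some placed pebbles alive while pebbling others, so that $L_{h-1}$ is never fully populated at any single step. Ruling this out rigorously requires combining the normality theorem with the $h$-wide bottleneck at $L_h$ induced by the pyramid above, and is the principal technical ingredient shared with (and extracted from) the proof of Lemma~\ref{thm:pyramds-ptime}.
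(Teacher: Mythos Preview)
Your approach diverges from the paper's, and the back-propagation step contains a genuine gap.

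You argue: once all of $L_{h-1}$ is simultaneously pebbled at time $t^*$, the predecessor rule forces all of $L_{h-2}$ to be pebbled at $t^*-1$, and iterating gives $h$ consecutive full-width steps. But the predecessor rule only applies to \emph{newly placed} pebbles. If some nodes of $L_{h-1}$ were pebbled before $t^*$ and simply retained, their predecessors in $L_{h-2}$ need not be present at $t^*-1$. Concretely, for $h=3$ one can have $v_2^0,v_2^1$ already pebbled at $t^*-1$ together with only $v_1^1,v_1^2$ (the two predecessors of $v_2^2$), then place $v_2^2$ at $t^*$; all of $L_2$ is now pebbled at $t^*$, yet $v_1^0$ was never on the graph at $t^*-1$. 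So ``full level at $t^*$'' does not propagate to ``full level at $t^*-1$,'' and the induction breaks immediately. You correctly flag staggering as a danger in your last paragraph, but you locate it only at the anchor step; in fact it recurs at \emph{every} step of the back-propagation, and invoking normality does not help---normality (Definition~\ref{def:normal-strategy}) constrains how \emph{pyramid} subgraphs are pebbled, and the lower half of the cylinder is not a pyramid (the wrap-around edges are present at every level).

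The paper does not back-propagate. Instead it reuses the \emph{forward} inductive/contradiction argument already developed inside the proof of Lemma~\ref{thm:pyramds-ptime}: to pebble any $k$ nodes of a given full-width level with at most $h$ pebbles requires $k$ parallel steps with $h$ pebbles on the graph (the footnote there notes this argument works already for a height-$h$ cylinder). Since pebbling any single target forces all $h$ nodes of a full-width level to be pebbled, applying that inductive claim with $k=h$ yields the $h$ required high-memory steps. If you want to salvage a downward argument, you would need a potential-function or counting argument that tracks not ``which level is full'' but something weaker that does propagate; the naive full-level invariant does not.
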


\begin{proof}
To pebble any target of $\pyr$ requires $h$ pebbles on all nodes in level $h$ by normality of pebbling strategies. Given at most $h$ pebbles, to pebble any subset $k$ of nodes in level $h$ (by the normality of pebbling strategies) require $h$ pebbles to be present on the graph for at least $k$ parallel time steps as proven in the proof for Lemma~\ref{thm:pyramds-ptime}. Thus, given a pebbling strategy that pebbles the following subsets of nodes in level $h$ sequentially, $S_1, \dots, S_d$ where $T = \bigcup_{i = 1}^d S_i$, the number of time steps where $h$ pebbles are on the graph is given by $\sum_{i = 1}^{d} |S_i| \geq h$. Therefore, $h$ pebbles are on the graph during at least $h$ time steps when pebbling any target of $\pyr$, proving our theorem.
\end{proof}
\fi

\ificalpshortversion
We first prove the black-magic pebbling space cost of pebbling \emph{all} the target nodes of the cylinder construction defined in Graph Construction~\ref{def:wrappyramid}.
\fi 

\begin{theorem}\label{thm:pyramid-locally-hard}
$\scost(\pyr, |T|, T) \geq h$ where $\pyr$ is defined as in Def.~\ref{def:wrappyramid} where $|S| = |T| = h$.
\end{theorem}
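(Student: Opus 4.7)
The plan is to case-split on the total number of magic pebbles used by any valid strategy $\cP$. Let $\cP$ be any valid black-magic pebbling strategy for $(\pyr, T)$ respecting the magic-pebble bound $\mbound = |T| = h$; the aim is to show $\scostm(\cP) \geq h$. If $m(\cP) \geq h$, then by Def.~\ref{def:magic-satcost}, $\scostm(\cP) \geq m(\cP) \geq h$, and we are done. The remaining work is therefore to prove $\max_i |P_i| \geq h$ in the case $m(\cP) < h$.

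Suppose $m(\cP) < h$. The rule that each magic pebble is placed at most once (Def.~\ref{def:blackmagic}) implies that fewer than $h$ distinct nodes ever carry a magic pebble. Since $|T| = h$, at least one target $t^* \in T$ is never magic-pebbled and hence must receive a black pebble at some time $i^*$; by the black-pebble rule, both predecessors of $t^*$ carry pebbles at time $i^* - 1$. The key structural observation is that the cone of ancestors of $t^*$ restricted to levels $h, h+1, \ldots, 2h-1$ forms a height-$h$ pyramid with apex $t^*$: direct inspection of Def.~\ref{def:wrappyramid} shows that at level $2h - 1 - d$ the cone contains exactly $d + 1$ consecutive nodes for $d \in \{0, 1, \ldots, h-1\}$. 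By the classical pyramid lower bound invoked in Lemma~\ref{lem:pyramids-tradeoff-proof} (originally \cite{GLT79}), pebbling this apex in the standard pebble game requires at least $h$ pebbles on the graph at some time.

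The principal obstacle will be lifting this standard-pebbling lower bound to the black-magic setting, since magic pebbles residing inside the cone could in principle reduce the number of black pebbles required. The idea is that magic pebbles placed within the cone still contribute to $|P_i|$ at the times they occupy the graph, while magic pebbles placed outside the cone do not lessen the pebbling burden inside the cone. Crucially, $t^*$ itself is never magic-pebbled, so the trivial shortcut of placing a magic pebble directly on the apex is unavailable. A careful argument, leveraging the normality-based reasoning of Lemma~\ref{thm:pyramds-ptime} together with the ``single-use'' property of magic pebbles (so that magic pebbles removed to free space cannot be replaced later), then shows that the combined black-plus-magic pebble count on the cone of $t^*$ must reach $h$ at some time step. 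Combining with the $m(\cP) \geq h$ case yields $\scostm(\cP) \geq h$, completing the plan.
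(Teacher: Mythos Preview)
Your case split on $m(\cP)$ and the identification of a target $t^*$ that is never magic-pebbled are fine, and the observation that the ancestor cone of $t^*$ restricted to levels $h,\dots,2h-1$ is a height-$h$ pyramid is correct. The genuine gap is in the lifting step. The claim that ``the combined black-plus-magic pebble count on the cone of $t^*$ must reach $h$ at some time step'' is false as a statement about a single target in isolation: for $h\geq 4$, place two magic pebbles on the two level-$(2h-2)$ predecessors of $t^*$ and then one black pebble on $t^*$. At most three pebbles ever sit on the graph while $t^*$ is being reached, and $t^*$ itself is not magic-pebbled. The pyramid lower bound of \cite{GLT79} relies essentially on the rule that a pebble may only be placed when all predecessors are already pebbled; magic pebbles bypass exactly this rule, so the bound does not transfer simply by observing that magic pebbles ``still contribute to $|P_i|$''. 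Neither the single-use property of magic pebbles nor normality helps here, since the counterexample uses each magic pebble once and is already frugal.

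What actually forces $h$ pebbles is a \emph{global} accounting across all $h$ targets: spending magic pebbles to shortcut the cone of $t^*$ depletes the budget available for the remaining $h-1$ targets. The paper's proof makes this precise via an exchange argument rather than a per-target cone argument. It shows that $s'$ magic pebbles placed on non-target nodes can yield at most $s'-1$ targets without at some point having $h$ pebbles on the graph (because a node at level $h+i$ is the apex of a height-$(i+1)$ pyramid, so black-pebbling any missing node at that level already costs $i+1$ pebbles on top of the $h-i-1$ magic pebbles present). Hence it is strictly better to place magic pebbles directly on targets; with $s<h$ magic pebbles all on targets, some target is uncovered and must be black-pebbled from scratch, and now the standard pyramid bound applies cleanly to give $h$. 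To repair your approach you would need to replace the local cone argument by a comparable global exchange or charging argument across all targets.
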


\begin{proof}

Assume for the sake of contradiction that $s < h$ pebbles can be used to pebble all target nodes in $T$. By the rules of the black-magic pebble game, we can choose to use either magic pebbles or black pebbles at each time step in a valid strategy. 

We first prove that given $s < |T|$ magic pebbles, one would choose to place the pebbles on $s$ target nodes as opposed to any number of intermediate nodes. Let $L_i$ be the set of nodes at the $i + h$-th level of $\pyr$ (for $0 \leq i \leq h-1$) where $T$ is at level $2h-1$ and $S$ is at level $0$. Given $s$ adjacent pebble placements on nodes in $L_i$, we can pebble at most $j \leq \max(0, s + i - h + 1)$ target nodes by construction of $\pyr$ without performing any repebbling of any nodes in $S$. (Note that we do not need to account for the case when $s < |T|$ pebbles are placed on levels $0$ to $h-1$ since no targets can be pebbled if that is the case.) If repebbling of any node in $L_i$ needs to be done (using black pebbles), then at least $h$ total pebbles are necessary to pebble $T$. We now show this is true. Suppose that in order to pebble a target node $v \in T$, there exist at most $h - i - 1$ magic pebbles on adjacent nodes in $L_i$. Then, at least $1$ additional pebble is necessary at some node in $L_i$ to pebble $v$. Let the node that needs to be pebble in $L_i$ be $w$. Suppose that we use a black pebble to pebble $w$ at level $i$ (i.e.\ we wouldn't choose to use magic pebbles to pebble the ancestors of $w$ since that would use more magic pebble than if we used a magic pebble to pebble $w$). Note that $w$ is the apex of a pyramid of height at least $i+1$. Therefore, at least $i+1$ black pebbles are necessary to pebble $w$ resulting in $i + 1 +h - i -1 = h$ total pebbles necessary to pebble $v$, which is greater than the initial $h-i-1$ magic pebbles in total pebble count for all $i \in [0, h-1]$ (our desired range of values of $i$). Note that this argument applies recursively to any number $i' \leq i$ missing pebbles at level $i$. 

Therefore, for any number of magic pebbles $s' \leq s$ that are \emph{not} on target nodes, we can obtain at most $s' - 1$ target values without performing repebbling of any nodes in $S$. It is then strictly more efficient to pebble $s'$ target nodes with magic pebbles instead of $s'$ non-target nodes. We can have a total of $s < h$ magic pebbles which is not enough pebbles to pebble all the target nodes. To pebble the target node that is not pebbled by a magic pebble, we require $h$ additional pebbles by pebbling price of pyramids~\cite{GLT79}, contradicting our assumption.
\end{proof}

\ificalpshortversion
As a simple extension of our theorem and proof above, we get Corollary~\ref{cor:cylindar-incrementally-hard}. Moreover, as an extension of the proof given for Theorem~\ref{thm:pyramid-locally-hard} that all magic pebbles are placed on targets and from lemmas regarding the sustained complexity of pebbling \emph{any one target} in the attached full version, we obtain Corollary~\ref{cor:cylinder-subset-sustained-time}.
\fi

\ificalpfullversion
As a simple extension of our theorem and proof above, we get Corollary~\ref{cor:cylindar-incrementally-hard}. Moreover, as an extension of the proof given for Theorem~\ref{thm:pyramid-locally-hard} that all magic pebbles are placed on targets and from Theorem~\ref{thm:sustained-wraparound}, we obtain Corollary~\ref{cor:cylinder-subset-sustained-time}.
\fi

\begin{corollary}\label{cor:cylindar-incrementally-hard}
Given a cylinder $G = (V, E)$ as constructed in Graph Construction~\ref{def:wrappyramid}, $G$ is incrementally hard: $\scostm(\g,|C| - 1, C) \geq |T|$ for any subset $C \subseteq T$.
\end{corollary}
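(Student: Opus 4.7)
The plan is to follow the structure of Theorem~\ref{thm:pyramid-locally-hard} almost verbatim, substituting $C$ for $T$ and $|C|-1$ for $|T|$ as the magic pebble budget. Because $|T|=h$ and $|C|\le h$, showing $\scostm(\g,|C|-1,C)\ge h$ will establish the corollary.

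First, I would reprise the ``placement swap'' step that forms the core of Theorem~\ref{thm:pyramid-locally-hard}. For any strategy using at most $|C|-1$ magic pebbles, I claim one may assume without loss of generality that every magic pebble is placed on a distinct target node of $C$. The justification is the cascade argument already used in Theorem~\ref{thm:pyramid-locally-hard}: if $s'$ magic pebbles are placed on adjacent nodes at an intermediate level $L_i$ (for $i\in[0,h-1]$ measuring levels above the pyramid base), then by the $\Pi^C_h$ geometry they cover at most $\max(0, s'+i-h+1)$ targets without further repebbling, while the same $s'$ magic pebbles placed directly on targets in $C$ cover exactly $s'$ targets; and any attempt to ``complete'' a partially-covered target at level $L_i$ requires at least $h$ total pebbles (the pyramid bound of \cite{GLT79} applied to the height-$(i+1)$ subpyramid whose apex must be pebbled with black pebbles). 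This swap dominates the original strategy in space cost, so we may restrict attention to strategies whose $|C|-1$ magic pebbles all sit on distinct elements of $C$.

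Second, since $|C|-1 < |C|$, there must exist at least one target $c^* \in C$ that never receives a magic pebble, and $c^*$ must therefore be pebbled by black pebbles at some step. By construction, $c^*$ is the apex of a pyramid of height $h$ inside $\pyr$ (specifically, a pyramid vertex-disjoint from the other targets in its apex but sharing lower layers), and pebbling this apex with standard pebbles requires $h$ simultaneously-placed pebbles by the classical pyramid lower bound~\cite{GLT79}. Combining with the magic-pebble count then yields $\scostm(\g,|C|-1,C)\ge \max(|C|-1,h)=h=|T|$, as desired.

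The step I expect to require the most care is the swap argument, because magic pebbles placed at intermediate levels could in principle be shared across pyramids of several targets in $C$ thanks to the overlap of pyramid bases in the cylinder's cyclic structure. The cascade bound in Theorem~\ref{thm:pyramid-locally-hard} is exactly what rules this out: each off-target magic pebble saves at most one target's worth of black pebbling work, so one can never do strictly better than placing the pebbles directly on $|C|-1$ distinct targets. Once this swap lemma is recorded, the corollary reduces to a single application of the pyramid lower bound to the uncovered target $c^*$.
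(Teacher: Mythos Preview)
Your proposal is correct and follows essentially the same approach as the paper, which simply states the corollary is ``a simple extension'' of Theorem~\ref{thm:pyramid-locally-hard} and its proof. You have faithfully carried out that extension: the swap argument reducing to magic pebbles on targets of $C$, followed by the pyramid lower bound applied to an uncovered target $c^*\in C$, is exactly the intended route.
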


\begin{corollary}\label{cor:cylinder-subset-sustained-time}
Given a cylinder $G = (V, E)$ as constructed in Graph Construction~\ref{def:wrappyramid}, $\stimem(G, |C| - 1, C) = \Theta(|T|)$ for all subsets of $C \subseteq T$. 
\end{corollary}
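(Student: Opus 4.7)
The plan is to prove $\stimem(G,|C|-1,C)=\Theta(|T|)$ by establishing matching upper and lower bounds of $\Theta(h)$, where $h=|T|$ is the cylinder height. Both bounds will piggy-back on machinery already developed in this section: Theorem~\ref{thm:pyramid-locally-hard} (the space complexity is exactly $h$ and the optimal allocation of magic pebbles is onto targets) and Theorem~\ref{thm:sustained-wraparound} (any single target of the height-$2h$ cylinder requires $h$ black pebbles sustained for at least $h$ parallel steps).

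\textbf{Upper bound ($O(|T|)$).} I would give an explicit strategy. Pick any target $t^\star\in C$. Place the $|C|-1$ magic pebbles on the $|C|-1$ remaining targets in $C\setminus\{t^\star\}$ at the first time-step. Then pebble $t^\star$ using the standard ``advancing frontier'' strategy on the pyramid rooted at $t^\star$: at each parallel step, slide the frontier of $h$ black pebbles one level up. This takes $O(h)$ parallel time-steps and peaks at exactly $h$ pebbles simultaneously on the graph (the magic pebbles plus the black frontier never exceed $\max(|C|-1,h)\le h$, counting that magic pebbles on targets are not distinct from the black frontier location until the last step, and the total magic count $|C|-1 < h$ is subsumed). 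Hence the configuration attains the graph-optimal count $h$ for only $O(h)$ steps, yielding $\stimem\le O(h)=O(|T|)$.

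\textbf{Lower bound ($\Omega(|T|)$).} Let $\s$ be any valid strategy using at most $|C|-1$ magic pebbles. I would first invoke the key structural observation from the proof of Theorem~\ref{thm:pyramid-locally-hard}: any magic pebble placed off-target can be ``converted'' into a placement directly on a target without increasing the magic pebble count, and doing otherwise forces the total pebble count to reach $h$. Consequently, without loss of generality we may assume the $|C|-1$ magic pebbles sit on $|C|-1$ of the targets in $C$, leaving at least one target $t^\star\in C$ that must be reached by black pebbles traversing the full pyramid of height $h$ that feeds into $t^\star$. By Theorem~\ref{thm:sustained-wraparound}, pebbling $t^\star$ in the standard (black-only) pebble game requires $h$ pebbles simultaneously on the graph for at least $h$ parallel time-steps. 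Since $\scostm(G,|C|-1,C)=h$ by Corollary~\ref{cor:cylindar-incrementally-hard}, every such step contributes to the graph-optimal sustained count, giving $\stimem\ge h=\Omega(|T|)$.

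\textbf{Main obstacle.} The subtle point is the WLOG step in the lower bound: an adversary could in principle use magic pebbles on interior cylinder nodes to reduce the effective pyramid height feeding $t^\star$, potentially shortening the sustained window. Dispatching this requires re-examining the argument in Theorem~\ref{thm:pyramid-locally-hard} that shows each off-target magic pebble at level $i$ only saves $1$ from the black pebble cost while removing $1$ from the pool available to cover targets, so the trade is never profitable. I would spell out this exchange argument carefully and conclude that at least one target in $C$ must be pebbled with a full height-$h$ pyramid subpebbling, inheriting the $\Omega(h)$ sustained lower bound from Theorem~\ref{thm:sustained-wraparound}. Combining the two bounds gives the claimed $\Theta(|T|)$.
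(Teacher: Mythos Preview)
Your approach is essentially the same as the paper's: the paper derives the corollary by combining the argument from Theorem~\ref{thm:pyramid-locally-hard} (that without loss of generality all magic pebbles sit on targets) with Theorem~\ref{thm:sustained-wraparound} (that pebbling any single target of the height-$2h$ cylinder forces $h$ pebbles for at least $h$ parallel steps), which is exactly your lower-bound plan. The paper does not spell out the upper bound, treating it as immediate; your explicit frontier strategy fills that gap. One small wrinkle: your pebble-count bookkeeping in the upper bound is garbled (the magic pebbles on targets and the black frontier are disjoint, so the union has size $|C|-1+h$, not $\max(|C|-1,h)$), but this does not affect the $O(h)$ conclusion since you only need to bound the number of time-steps at or above the threshold $h$, and there are $O(h)$ such steps regardless.
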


A logical question to ask after constructing our very simple hash function based on a cylinder graph is whether such a construction is optimal in terms of graph-optimal sustained complexity \emph{and} follows our requirements for a static-memory-hard hash function. As it turns out, the graph-optimal sustained complexity of a cylinder graph is optimal in the class of layered graphs. In other words, if we choose to use layered graphs in our constructions, then we cannot hope to get a better memory and time guarantee. From an implementation and practical standpoint, layered graphs are easier to implement and hence this result has potential practical applications (as more complicated constructions need to consider memory allocation factors in the real-life implementation, not considered in the theoretical model).

\begin{theorem}\label{thm:magic-pebble-game}
Given a \emph{layered graph}, $G = (V, E)$, if the number of target nodes is $|T| = s$ and $\scost(\g, s, T) \geq s$, then $|V| = \Omega(s^2)$. A \emph{layered graph} is one such that the vertices can be partitioned into layers and edges only go between vertices in consecutive layers.
\end{theorem}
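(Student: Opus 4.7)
The plan is to prove the statement via contraposition together with a reduction to single-target pebbling. My approach proceeds in two main steps.

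\textbf{Step 1: Reduction to single-target pebbling.} I first observe that for any graph $G$ with target set $T$ of size $s$, one has
\[
  \scostm(G, s, T) \;\leq\; \max_{t \in T} \scost(G, \{t\}),
\]
via the ``one target at a time'' strategy: for each $t \in T$ in turn, pebble the ancestors of $t$ using only black pebbles (with $\scost(G, \{t\})$ peak pebbles), record the label of $t$, and then completely clear the graph before moving on to the next target. This strategy uses no magic pebbles and has peak equal to $\max_t \scost(G, \{t\})$. Consequently, if $\scostm(G, s, T) \geq s$, then at least one target $t^* \in T$ must satisfy $\scost(G, \{t^*\}) \geq s$.

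\textbf{Step 2: Single-target lower bound for layered DAGs.} The remaining task is to show that, for a layered DAG $G$, if pebbling a single sink node $t^*$ requires $\geq s$ pebbles in the standard pebbling game, then the set of ancestors of $t^*$ must contain $\Omega(s^2)$ vertices. The intuition is that a layered DAG with small pebbling number ``looks like'' a pyramid, and pyramids of pebbling number $s$ already have $\Theta(s^2)$ vertices. Concretely, let $w'_i$ denote the number of ancestors of $t^*$ in layer $L_i$. The standard rolling argument on the ancestor sub-DAG gives $\scost(G, \{t^*\}) \leq 2 \max_i w'_i + O(1)$, so some layer must have $w'_i = \Omega(s)$. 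A ``feeding'' argument based on the layered constant-in-degree structure then needs to show that $\Omega(s)$ consecutive layers each contain $\Omega(s)$ ancestors of $t^*$, because ancestor widths cannot shrink too quickly as we move toward $t^*$ (each later ancestor has at most $\delta$ predecessors contributing it). Summing the widths, the ancestor set of $t^*$ has $\Omega(s^2)$ vertices, and since this set is contained in $V$, we conclude $|V| = \Omega(s^2)$.

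\textbf{The main obstacle.} Step 2 is the crux. The simple bound $\scost(G, \{t^*\}) \leq 2 \max_i w'_i$ only forces a single ancestor layer of width $\Omega(s)$, yielding merely $\Omega(s)$ vertices. Upgrading this to $\Omega(s^2)$ requires showing that $\Omega(s)$ distinct layers must each contribute $\Omega(s)$ ancestors, which is a nontrivial structural claim about layered DAGs of small single-target pebbling number. The ``feeding'' propagation must carefully track how the ancestor widths evolve across consecutive layers, leveraging the constant-in-degree restriction; this is reminiscent of the classical pyramid lower bound of \cite{GLT79} but must be applied to a derived ancestor sub-DAG rather than the whole graph, and handled so that the ``width profile'' of the ancestor set is forced to contain a pyramidal block of area $\Omega(s^2)$.
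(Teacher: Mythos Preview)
Your Step~1 is fine, but Step~2 has a genuine gap, and the sketch you give for it does not work. The in-degree bound yields $w'_i \leq \delta\, w'_{i+1}$, i.e., $w'_{i+1} \geq w'_i/\delta$, so starting from a single wide layer $w'_{i_0} \geq s/2$ the widths going toward $t^*$ are only guaranteed to satisfy $w'_{i_0+k} \geq s/(2\delta^k)$. This is geometric decay, and summing gives merely $O(s)$ ancestors above layer $i_0$, not $\Omega(s)$ layers each of width $\Omega(s)$. So the ``feeding argument'' as stated produces only an $\Omega(s)$ bound on $|V|$, not $\Omega(s^2)$. You correctly flagged this as the main obstacle, but the resolution you propose does not overcome it.

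The deeper issue is that your reduction in Step~1 discards exactly the structure that makes the argument go through. The paper does \emph{not} reduce to standard black pebbling; it uses the black-magic hypothesis directly. After observing that the height $h$ must be at least $s-1$ (since a node at level $i$ in a layered in-degree-$2$ graph can be pebbled with $i+1$ pebbles), the paper argues: for every layer $j \in [h/2,\,h-1]$, if that layer had width $< h/2$, then one could place fewer than $h/2$ \emph{magic} pebbles on all of layer $j$ and then reach every target with at most $h-j \leq h/2$ black pebbles (since the remaining height is $\leq h/2$), giving $\scostm(G,s,T) < h \leq s$, a contradiction. Hence every one of the $\Theta(h)$ top-half layers has width $\geq h/2$, yielding $|V| \geq h^2/4 = \Omega(s^2)$. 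The magic pebbles are what let you ``start from layer $j$ for free''; once you pass to the pure black-pebble game in Step~1, you lose this, and you are left trying to prove a much harder structural claim about single-target black pebbling of layered graphs that you have not established.
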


\begin{proof}
In order to satisfy $\scostm(\g, s, T) \geq s$, the number of targets has to be at least $s$; if $|T| < s$, then $T$ can be completely pebbled with less than $s$ magic pebbles and $\scostm(\g, s, T) < s$. Suppose the sources (the first level) are at level $0$ and the targets (the last level) are at level $h-1$ where $h$ is the height of the layered graph. In any layered graph with in-degree $2$, the cost of pebbling a vertex $v_i$ in level $i$ is at most $i+1$~\cite{Nor15}. Therefore, the height of $G$ must be at least $s - 1$, in order for $\scost(\g, s, T) \geq s$. Let $h = s - 1$. In order for $\scostm(\g, s, T) \geq s$, the width of the layered graph in layer $j$ for all $j \in \left[\frac{h}{2}, h-1\right]$ must be at least $\frac{h}{2}$ (where by width, we mean the number of nodes in layer $j$). 

Suppose that a layer $j$ where $j \in \left[\frac{h}{2}, h-1\right]$ has width less than $\frac{h}{2}$. We can subsequently use less than $\frac{h}{2}$ magic pebbles to pebble layer $j$. Then, at most $\frac{h}{2}$ black pebbles are necessary to pebble all targets in $T$ resulting in $\scostm(\g, s, T) < h$ and $\scostm(\g, s, T) < s$ (by our definition of $h$), a contradiction. The total number of nodes in layers $[\frac{h}{2}, h-1]$ must then be at least $\frac{h^2}{4}$, and $|V| = \Omega(h^2) = \Omega(s^2)$. 
\end{proof}

\ificalpfullversion
\fi

Thus, our construction of the cylinder graph is optimal in terms of amount of memory used in the asymptotic sense for the class of layered graphs. An open question is whether this is also optimal when we consider the larger class of all DAGs.

\ificalpfullversion
    \begin{openquestion}
    Does Thm~\ref{thm:magic-pebble-game} also hold for general graphs with bounded in-degree $2$? 
    \end{openquestion}
\fi

Given the impossibility of providing a better space guarantee for layered graphs, we provide a general (non-layered) construction that transforms a graph from a certain class into another graph with the same space guarantee as in Theorem~\ref{thm:magic-pebble-game}. Furthermore, we provide an example below that has the same space guarantees but a better time guarantee.

\ificalpfullversion
    \subsubsection{Layering \emph{shortcut-free} graphs}

\fi
\ificalpshortversion
    \subh{Layering \emph{shortcut-free} graphs}
\fi
We now show how to convert any \emph{shortcut-free} DAG, $G = (V, E)$, with $\scost(G, T) = s$ and one target node (i.e. $|T| = 1$) into a DAG, $G' = (V', E')$, with $|T'| = s$ targets and $\scostm(G', s, |T'|) = s$.

\begin{definition}[Shortcut-Free Graphs]\label{def:shortcut-free-graph}
 Let $G = (V, E)$ be a DAG where $\scost(G, T) \geq s$. Let $t^{\s}_s$ be the last time step that exactly $s$ pebbles must be on $G$ during any normal and regular pebbling strategy, $\s$, 
 \iffullversion
 (where normal and regular are defined as in 
Theorems~\ref{thm:normality},~\ref{thm:regularity} and 
\fi
(see Thms~\ref{thm:normal-strategy} and~\ref{thm:regular-strategy}, \cite{GLT79,DL17}) that uses $s$ pebbles. More specifically, let 
\iffullversion
    \begin{align*}
    t^{\s}_s = \arg\max_{t' \in [t_{\s}]} \{|P_{t'}|: |P_{t'}| \geq \scost(\g, T)\}
    \end{align*} 
    where $t_{\s} = |\s|$ and $P_{t'} \in \s$ for all $t' \in [t_{\s}]$.

\fi
\ificalpshortversion
    $t^{\s}_s = \arg\max_{t' \in [t_{\s}]} \{|P_{t'}|: |P_{t'}| \geq \scost(\g, T)\}$
    where $t_{\s} = |\s|$ and $P_{t'} \in \s$ for all $t' \in [t_{\s}]$.
\fi
Let $X$ be the union of the set of nodes that are pebbled at $t^{\s}_s$ for all normal and regular strategies 
\ificalpfullversion
\fi
$\s$: $X = \bigcup\limits_{\s \in \mathbb{P}} P_{t^{\s}_s}$. Let $D$ be the set of descendants of nodes of $X$. A DAG is \emph{shortcut-free} if $|X| \leq s$ and given $s_1 < s$ pebbles placed on any subset $X_1 \subset X$, no normal and regular strategy uses less than $s - s_1$ pebbles to pebble $D \cup (X \backslash X_1)$.
\end{definition}

\ificalpfullversion
\begin{gconstruction}\label{def:layering-any-graph}
Given a shortcut-free DAG, $G = (V, E)$, with $\scost(G, T) = s$ and $|T| = 1$, we create a DAG, $G' = (V', E')$, with the following vertices and edges and with the set of targets $T'$ where $|T'| = s$. Let $X$ be defined as in Definition~\ref{def:shortcut-free-graph}.

\begin{enumerate}
\item $V'$ is composed of the nodes in $V$ and $s-1$ copies of $X \cup D$. Let the $i$-th copy of $X$ be $X_i$ (the original is $X_0$) and let the $i$-th copy of $x \in X_i$ be $x_i$. 
\item $E'$ is composed of the edges in $E$ and the following directed edges. If $(v, w) \in E$ and $v, w \in X$, then create edges $(v_i, w_i) \in E'$ for all $i \in [1, s-1]$. Create edges $(u, v_i) \in E'$ if $(u, v) \in E$ and $u \in V \backslash X, D$. 
\item The set of targets $T'$ is the union of the set of targets of the different copies: $T' = \bigcup_{i = 0}^{s-1} T_i$. 
\end{enumerate} 

Using the above construction, we have created a graph $G' = (V', E')$ where $|V'| = |V| + (s-1)(|D| + |X|)$ and $|T'| = s$.
\end{gconstruction}

\begin{theorem}\label{thm:general-layering-transformation}
Given a shortcut-free DAG $G=(V, E)$ with $\scost(G, T) = s$ and $|T| = 1$, the construction produced by Graph Construction~\ref{def:layering-any-graph} produces a DAG $G' = (V', E')$ such that $\scostm(G', s, |T|) = s$.
\end{theorem}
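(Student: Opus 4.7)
The plan is to establish the equality $\scostm(G', s, |T'|) = s$ by proving both inequalities. The upper bound is immediate: I would exhibit the strategy that places one magic pebble on each of the $s$ target nodes of $T'$ one at a time, removing each pebble before placing the next. Under the visiting pebbling convention this is valid; it uses exactly $s$ magic pebbles in total and places at most one pebble on the graph at any step, yielding $\scostm(\s) = \max(m(\s), \max_i |P_i|) = s$.

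For the lower bound I would argue by contradiction. Suppose some valid strategy $\s$ satisfies $\scostm(\s) < s$, so that both $m(\s) < s$ magic pebbles are used in total and $|P_i| < s$ at every step $i$. Since $|T'| = s$ but strictly fewer than $s$ magic pebbles are used, a pigeonhole argument produces a copy $i^*$ whose target $t_{i^*}$ is not pebbled by a magic pebble placed directly on $t_{i^*}$; therefore $t_{i^*}$ is placed under the black-pebble rule, so all of its predecessors are pebbled in the configuration immediately before. The subgraph of $G'$ induced by $X_{i^*} \cup D_{i^*}$ together with the shared ancestors $V \setminus (X \cup D)$ is isomorphic to the original $G$, so I would apply the shortcut-free property (Definition \ref{def:shortcut-free-graph}) to this sub-pebbling: letting $s_{i^*}$ denote the maximum number of magic pebbles simultaneously present on $X_{i^*}$, we have $s_{i^*} \leq m(\s) < s$, so pebbling $D_{i^*} \cup (X_{i^*} \setminus X_1)$ from this configuration requires at least $s - s_{i^*}$ additional pebbles at its peak. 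The pebble count in copy $i^*$'s subgraph therefore reaches at least $s$ at some moment, contradicting $\max_i |P_i| < s$.

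The main obstacle will be handling magic pebbles placed on nodes of $D_{i^*}$, which are not directly covered by the shortcut-free hypothesis (whose precondition only allows pre-pebbled nodes in a subset $X_1 \subset X$). I would address this by a reduction argument: any magic pebble on a node $d \in D_{i^*} \setminus \{t_{i^*}\}$ effectively creates a sub-problem of pebbling $t_{i^*}$ from $d$ as a free starting point, and an analogous bottleneck argument applied to the descendant substructure still forces a $(s - s_{i^*})$-style lower bound, since every such $d$ must itself have been reached through the $X_{i^*}$ bottleneck. A secondary technical point is normalizing the restriction of $\s$ to copy $i^*$ into a normal, regular strategy so that Definition \ref{def:shortcut-free-graph} applies, which follows from standard pebbling normalization results (Thms.~\ref{thm:normal-strategy} and~\ref{thm:regular-strategy}) that preserve peak pebble count.
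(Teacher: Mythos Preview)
Your upper bound and the pigeonhole step (locating a copy $i^*$ whose target is black-pebbled) are correct and match the paper. The divergence is in how you dispose of magic pebbles that land on $D_{i^*}$ (or on non-target nodes of $X_{i^*}$), and here your proposal has a genuine gap.

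The shortcut-free hypothesis (Definition~\ref{def:shortcut-free-graph}) only promises that if $s_1$ pebbles are pre-placed on a subset $X_1\subset X$, then at least $s-s_1$ further pebbles are needed to pebble $D\cup(X\setminus X_1)$. It says nothing about configurations in which some pebbles are already sitting inside $D$. Your ``analogous bottleneck argument'' assumes a recursive structure in $D$ that a generic shortcut-free DAG need not have: once a magic pebble sits on some $d\in D_{i^*}$, there is no hypothesis forcing a further $s-s_{i^*}$ pebbles between $d$ and $t_{i^*}$. So as stated, the shortcut-free property cannot be invoked, and the normalization step does not rescue this because normalization controls the black-pebble portion of the strategy, not where the magic pebbles were deposited.

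The paper closes this gap differently: rather than analyzing magic pebbles on $D_{i^*}$ directly, it argues by exchange that in any strategy of cost $<s$, magic pebbles on non-target nodes of any $D_j$ or $X_j$ are dominated by magic pebbles on the corresponding target $t_j$ (since a non-magic-pebbled target would then force $s$ black pebbles via $\scost(G,T)=s$). After this reduction, the copy $i^*$ carries no magic pebbles at all on $X_{i^*}\cup D_{i^*}$; the remaining magic pebbles live only on the shared predecessors of $X$, stay put (removing one would require re-placing it $s$ times, once per copy), and therefore occupy $s'$ of the fewer-than-$s$ available slots, leaving fewer than $s-s'$ pebbles to traverse each $X_i\cup D_i$ --- which the shortcut-free property forbids. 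To complete your argument you need either this exchange step or a strengthened hypothesis on $G$ that extends shortcut-freeness to pre-placed pebbles inside $D$.
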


\begin{proof}
We first prove that $\scostm(G', s, |T|) \leq s$. Since there are $s$ different targets, $\scostm(G', s, |T|) \leq s$ trivially. 

We now prove that $\scostm(G', s, |T|) \geq s$. If only black pebbles are used to pebble the targets in $T'$, then $s$ black pebbles must trivially be used provided $\scost(G, T) = s$. Suppose some number of magic pebbles are used. Using the magic pebbles on any node in a copy of $D$ (defined in Def.~\ref{def:layering-any-graph}) that is not a target in $T'$ is strictly worse than using a magic pebble on a target. Suppose the total number of pebbles used is less than $s$. We first prove that no magic pebbles are used on copies of $D$. If the total number of pebbles used is less than $s$, then not all of the $s$ targets can be pebbled using magic pebbles. The remaining target that is not pebbled must be pebbled using $s$ black pebbles since $\scost(G, T) = s$ by definition. By the same logic, no magic pebbles are used on the nodes in the copies of $X$. 

Therefore, if less than $s$ magic pebbles are used to pebble the graph, all magic pebbles should be used to pebble the predecessors of $X$. No magic pebble can be removed and repebbled since such a magic pebble must be placed $s$ times (once for each copy of $X$ and $D$), exceeding the maximum number of magic pebbles we can have. Given that we can use a total of less than $s$ magic pebbles to pebble the predecessors of $X$, suppose some $s' < s$ pebbles are used, then less than $s -s'$ pebbles are left to pebble each copy of $X$ and $D$; by incremental hardness, less than $s-s'$ cannot be used to pebble each copy of $X$ and $D$. At least one magic pebble is used on the predecessors of $X$; by our definition of incremental hardness, less than $s-1$ pebbles cannot be used to pebble $X$ and $D$, a contradiction. Thus, $\scostm(G', s, T) \geq s$. 
\end{proof}
\fi

If $D = \Theta(s)$ and $s = O(\sqrt{|V|})$, then $|V'| = \Theta(s^2 + |V|)$ which has a better sustained time guarantee than our cylinder construction.

We first note that the sustained memory graphs presented in~\cite{ABP17} \emph{do not} achieve optimal local memory hardness because $X \cup D$ 
\ificalpfullversion
(as defined in Definition~\ref{def:layering-any-graph})
\fi 
\ificalpshortversion
(as defined in the layering graph construction given in the attached full version)
\fi
is $\Theta(n)$ (since the sources are the ones that remain pebbled in their construction). Thus, we would like to provide a construction of a shortcut-free DAG where $|X \cup D| = \Theta(s)$. Note that the size of $X \cup D$ will always be $\Omega(s)$, trivially.
\ificalpfullversion
We now provide a definition of a shortcut-free graph class $G$ that can be transformed using Definition~\ref{def:layering-any-graph}.
\fi
\ificalpshortversion
We now provide a definition of a shortcut-free graph class $G$ that can be transformed using the layering technique given in more detail in the full version.
\fi

\begin{gconstruction}[Illustrated in Fig.~\ref{fig:time-optimal}]\label{def:longer-time}
Let $G = (V, E)$ be a graph defined by parameter $s$ and in-degree $2$ with the following set of vertices and edges:

\begin{enumerate}
\item Create a height $s$ pyramid. Let $r_i$ be the root of a \emph{subpyramid} (i.e. a pyramid that lies in the original height $s$ pyramid) with height $i \in [2, s]$. One can pick any set of these subpyramids.
\item Topologically sort the vertices in each level and create a path through the vertices in each level (see Fig.~\ref{fig:time-optimal}). Replace any in-degree-$3$ nodes with a pyramid of height $3$, with a $6$-factor increase in the number of vertices.
\item Create $c_1s$ additional nodes for some constant $c_1 \geq 2$ (in Fig.~\ref{fig:time-optimal}, $c_1 = 6$). Label these nodes $v_j$ for all $j \in [1, c_1s]$.
\item Create directed edges $(r_s, v_1)$ and $(r_i, v_{k(i - 1)})$ for all $k \in [1, s]$. 
\item Create $s-1$ additional nodes. Let these nodes be $w_l$ for all $l \in [1, s-1]$.
\item Create directed edges $(v_{c_1 s}, w_1)$ and $(r_i, w_{i-1})$ for all $i \in [2, s]$. 
\item The target node is $w_{s-1}$. 
\end{enumerate}
\end{gconstruction}

\begin{figure}[h]
\centering
\includegraphics[width=0.5\textwidth, angle=270]{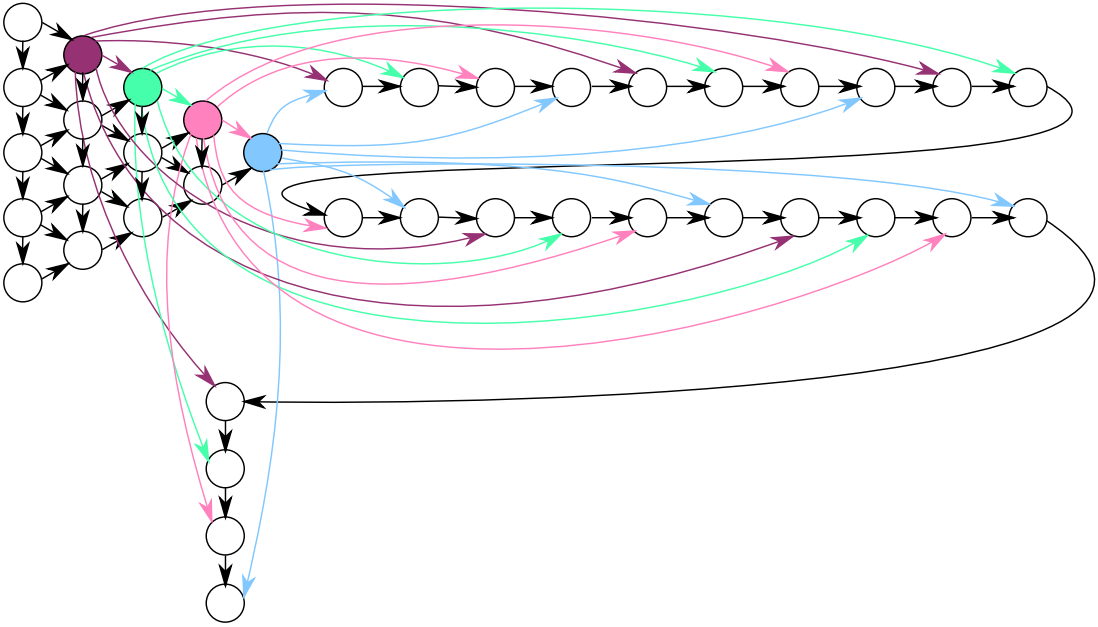}
\caption{Example of a time optimal graph family construction as defined in Def.~\ref{def:longer-time}. Here, $s=5$.}
\label{fig:time-optimal} 
\end{figure}

\ificalpfullversion
\begin{lemma}\label{lem:longer-time-scost}
Given a DAG $G = (V, E)$ and a parameter $s$ where $G$ is defined by Definition~\ref{def:longer-time}, $\scost(G, T) = s$. 
\end{lemma}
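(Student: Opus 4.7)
The plan is to establish $\scost(G, T) = s$ by matching upper and lower bounds, both of which reduce essentially to the classical pyramid pebbling theorem of Cook \cite{GLT79}.

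For the lower bound $\scost(G, T) \geq s$, I would first observe that by Graph Construction~\ref{def:longer-time}, the unique target $w_{s-1}$ has $r_s$ (the apex of the full height-$s$ pyramid) as one of its predecessors: instantiating the rule ``create edges $(r_i, w_{i-1})$ for all $i \in [2, s]$'' with $i = s$ yields the edge $(r_s, w_{s-1})$. Thus any valid pebbling strategy for $(G, T)$ must, at some time step, place a pebble on $r_s$. Since the full height-$s$ pyramid is a subgraph of $G$ and every predecessor of $r_s$ lies within it, the standard pyramid pebbling lower bound of \cite{GLT79} applies unchanged, showing that $s$ pebbles must be simultaneously present at the moment $r_s$ is pebbled. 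This gives $\scost(G, T) \geq s$.

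For the upper bound $\scost(G, T) \leq s$, I would exhibit an explicit strategy $\s$ using exactly $s$ pebbles. The strategy has two phases. In the first phase, execute the standard optimal pebbling of the height-$s$ pyramid, which uses exactly $s$ pebbles at its peak and ends with a single pebble on $r_s$; remove every other pebble so that only this one remains. In the second phase, inspect the set of predecessors of $w_{s-1}$ in $G$: by the edge list in Graph Construction~\ref{def:longer-time}, the only predecessor of $w_{s-1}$ is $r_s$ itself (the edges $(v_{c_1 s}, w_1)$ and $(r_i, w_{i-1})$ for $i < s$ do not touch $w_{s-1}$). Hence at this moment $\pred(w_{s-1})$ is pebbled, and by the pebble-sliding rule we may move the pebble from $r_s$ directly to $w_{s-1}$ in one move. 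No step of the strategy uses more than $s$ pebbles, so $\scost(G, T) \leq s$.

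Combining the two bounds yields $\scost(G, T) = s$, as desired. The only potentially subtle point is confirming that the construction does not introduce additional predecessors of $w_{s-1}$ beyond $r_s$ (in particular that the extra $v_j$-path, the $w_l$'s for $l < s-1$, and the in-degree-$3$ replacement pyramids contribute no edges into $w_{s-1}$); once this is verified from the edge list, both bounds are immediate consequences of the classical pyramid result, and no clever scheduling is required for the single-target pebbling cost. The richer structure of $G$ (the $v_j$-path and the subpyramid roots $r_i$ for $i < s$) plays no role in this particular lemma; it is designed to support the later, multi-target incremental-hardness and sustained-complexity claims obtained via the layering transformation of Graph Construction~\ref{def:layering-any-graph}.
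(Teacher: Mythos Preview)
Your lower-bound argument is essentially identical to the paper's: both observe that $r_s$ must be pebbled en route to the target and invoke the height-$s$ pyramid lower bound of \cite{GLT79}. That part is fine.

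The gap is in your upper-bound argument. You assert that the only predecessor of $w_{s-1}$ is $r_s$, reading the edge list of Graph Construction~\ref{def:longer-time} literally. But the intended construction has the $w_l$'s (and the $v_j$'s) arranged in a path $w_1\to w_2\to\cdots\to w_{s-1}$; this is visible in Figure~\ref{fig:time-optimal} and is relied on throughout the subsequent proofs. For instance, the proof of Lemma~\ref{lem:shortcut-free-prelim} explicitly uses that $v_{q-1}\in\pred(v_q)$, and the proof of Lemma~\ref{lem:longer-time-incremental} takes $X=\{r_2,\dots,r_s,v_{c_1s}\}$ as the configuration every normal strategy must reach before pebbling the target---which would be nonsensical if $w_{s-1}$ depended only on $r_s$. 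With the correct predecessor structure, $\pred(w_{s-1})=\{w_{s-2},r_s\}$, and your ``pebble $r_s$, then slide'' strategy is not a valid move: you would also need $w_{s-2}$ pebbled, which in turn forces you through the entire $v$-chain and $w$-chain. An honest upper-bound argument must exhibit an $s$-pebble strategy that maintains pebbles on the appropriate subset of the $r_i$'s while walking these chains; this is not the one-line slide you describe.

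For comparison, the paper's own proof of this lemma gives only the lower bound and leaves the matching upper bound implicit, so your attempt to supply it goes beyond what the paper does---but the attempt, as written, does not succeed.
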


\begin{proof}
In order to pebble the apex of the pyramid of height $s$, we must use at least $s$ pebbles as proven in the proof for black pebbling cost of pyramids~\cite{GLT79}.
\end{proof}

Before we prove that $G=(V, E)$ created by Definition~\ref{def:longer-time} with parameter $s$ is shortcut-free, we first prove the following stronger lemma which will help us prove that $G$ is shortcut-free.

\begin{lemma}\label{lem:shortcut-free-prelim}
Let $G=(V, E)$ be a graph created using Definition~\ref{def:longer-time} with parameter $s$. Given a normal strategy $\s$ to pebble $G$, when $v_q$ for $q \in [1, c_1 s]$ is pebbled at some time step, black pebbles are present on all nodes in $[r_{i}, r_s]$ where $i = (q \Mod s-1) + 1$ from the time when $v_1$ is pebbled to when $v_q$ is pebbled.
\end{lemma}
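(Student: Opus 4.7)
The plan is to proceed by strong induction on $q$, leveraging two structural facts from Graph Construction~\ref{def:longer-time}: (i) by step~4, $r_i$ is a direct predecessor of $v_q$ exactly when $(i-1)\mid q$ (together with the special edge $(r_s, v_1)$), so as $q$ sweeps through $[1, c_1 s]$ each apex $r_i$ gets re-used roughly every $i-1$ indices; and (ii) the subpyramid rooted at $r_i$ has black-pebbling cost exactly $i$ (by the classical pyramid lower bound of~\cite{GLT79}), so recomputing $r_i$ from scratch after its pebble is removed is a nontrivial ``side computation''.

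First I would do the divisibility bookkeeping: for each $i\in[2,s]$, enumerate $Q_i = \{q\in[1,c_1 s] : r_i\in\pred(v_q)\}$ and observe that consecutive elements of $Q_i$ differ by $i-1$, with the first element at $q=i-1$ (except $Q_s$ which also contains $q=1$). Then I would verify that the threshold $i = (q\Mod s-1)+1$ in the lemma is precisely the smallest index such that every $r_j$ with $j\geq i$ still has a use among $v_{q'}$ for $q'>q$ up to $c_1 s$; this is a straightforward calculation using the chosen value of $c_1\geq 2$, which guarantees that within $[1,c_1 s]$ each $r_j$ is visited by at least two $v_q$'s.

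The inductive step then invokes the definition of a normal and regular strategy (Theorems~\ref{thm:normal-strategy} and~\ref{thm:regular-strategy}, cf.~\cite{GLT79,DL17}): once a pebble has been placed on $r_j$, the strategy cannot remove it and later re-pebble $r_j$ without having used it to pebble a descendant in between, because such a ``round-trip'' on the apex of a height-$j$ subpyramid is wasteful (any normal strategy can be transformed into one that keeps the pebble on $r_j$ without increasing peak pebble count). Consequently, for each $j$, the pebble first placed on $r_j$ at or before time $t_{j-1}$ (when $v_{j-1}$ is pebbled, using $r_j$) must persist until the last time $r_j$ is consumed within $\{v_1,\ldots,v_{c_1 s}\}$. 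Combined with the base case --- where $v_1$ uses $r_s$, and normality forces the earlier preparation of all other $r_j$ that will be needed before their respective pyramids would have to be rebuilt --- this yields that at time $t_q$ black pebbles lie on exactly the active set $\{r_{i(q)}, \ldots, r_s\}$, and the same set persists throughout $[t_1, t_q]$ since each $r_j$ in that range is still ``active''.

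The main obstacle will be rigorously formalizing the normality exchange argument: we need to argue that any normal and regular strategy that removes a pebble from some $r_j$ and later rebuilds the height-$j$ subpyramid (while other $r_{j'}$ pebbles are alive) can be replaced by one that simply retains the pebble on $r_j$, without violating the pebble budget or the normality conditions elsewhere. A secondary technical subtlety is handling the step-2 modification --- replacing in-degree-$3$ nodes along each level-path by height-$3$ pyramids --- which changes local pebble accounting but not the global dependency between the $r_j$'s and the $v_q$'s; I would verify that these replacements only add predecessor chains that are disjoint from the subpyramids rooted at the $r_j$'s, so the persistence argument for the $r_j$ pebbles is unaffected.
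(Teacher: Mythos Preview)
Your exchange argument is the wrong tool here, and this is a genuine gap. You write that ``any normal strategy can be transformed into one that keeps the pebble on $r_j$ without increasing peak pebble count,'' and then conclude that the pebble on $r_j$ \emph{must persist} in the given strategy $\cP$. But transformability only shows that \emph{some} normal strategy keeps the pebble; it says nothing about $\cP$ itself. A normal (frugal) strategy is perfectly allowed to remove a pebble from $r_j$ after a use and re-pebble it before the next use --- frugality rule~3 only bounds the number of placements by the number of uses, it does not force a single persistent placement. So the ``consequently'' step does not go through, and the ``main obstacle'' you flag is not a technicality to be formalized but the heart of the matter.

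The paper's argument is not an exchange argument at all; it is a direct pebble-counting argument, inducting on $i$ (decreasing from $s$). The key missing idea is that the lemma is implicitly about normal strategies that use the minimum number $s$ of pebbles (from $\scost(G,T)=s$). Suppose inductively that when some $v_q$ is being pebbled, pebbles already sit on $r_s,\ldots,r_{j+1}$ and on $v_{q-1}$ (that is $s-j+1$ pebbles outside the height-$j$ subpyramid). If the strategy had removed the pebble from $r_j$ and must now rebuild it, normality freezes all pebbles outside the subpyramid during the rebuild, and the subpyramid needs $j$ pebbles by the pyramid lower bound of~\cite{GLT79}; the total is $s+1$, contradicting the $s$-pebble budget. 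That contradiction --- not any wastefulness/exchange reasoning --- is what forces persistence. Your induction on $q$ could be made to work, but only if you replace the exchange step by this counting step.
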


\begin{proof}
We prove this lemma via induction.

In our base case when $i = s$, when the corresponding $v_q$ is pebbled, a black pebble must be on $r_s$ and $v_{q-1}$ in the previous time step. Thus, a black pebble remains on $r_s$ from the time $v_1$ is pebbled till the time that $v_q$ is pebbled or a set of $s-j$ black pebbles remain on the $j$-th level of the pyramid for some $j \in [0, s-1]$ (in which case we can charge one of these pebbles to be ``present on $r_s$''). Suppose neither of these conditions are met. Then, by the pebbling number of pyramids (see Thm~\ref{thm:pyramid-space},~\cite{Nor15}),
at least $s$ pebbles must be used to pebble $r_s$, contradicting the frugality of $\s$ (since at most $s$ pebbles are used to pebble $G$). In general, we make the observation that if there $s-j$ pebbles on some level $j \in [0, s-1]$, then we can charge these $s-j$ pebbles to be ``on all nodes in $[r_j, r_s]$''. 

For our induction hypothesis, we assume that the theorem is true for $j$ and prove the stratement for $i = j - 1$. When $i = j-1$ and the corresponding $v_q$ is pebbled, we assume by our induction hypothesis that there are $s - j$ black pebbles present on $[r_{j}, r_s]$ (or charged to be on $[r_j, r_s]$) from when $v_1$ is pebbled to when $v_q$ is pebbled. In order to pebble $v_q$, there must be black pebbles on $r_i$ and $v_{q-1}$. If there does not exist a black pebble on $r_i$ (or on the predecessors of $r_i$) from when $v_1$ is pebbled to when $v_q$ is pebbled, then at least one pebble must be removed from some $r \in [r_j, r_s]$ or from $v_{q-1}$ since at least $j-1$ pebbles are necessary to pebble $r_{j-1}$ ($s - j + 2$ pebbles are currently in use--leaving not enough pebbles to pebble $r_{j-1}$ unless a pebble is removed). If the black pebble is removed from $v_{q-1}$, the frugality of $\s$ is contradicted. If the black pebble is removed from some $r \in [r_j, r_s]$, then by observation, $r_s$ will need to be repebbled sometime in the future, also a contradiction to the frugality of $\s$. Thus, we prove our statement.
\end{proof}

\begin{lemma}\label{lem:longer-time-incremental}
Given a DAG $G = (V, E)$ and a parameter $s$ where $G$ is defined by Definition~\ref{def:longer-time}, $G$ is shortcut-free.
\end{lemma}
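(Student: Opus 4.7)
The plan is to identify the set $X$ from Definition~\ref{def:shortcut-free-graph} for the graph of Construction~\ref{def:longer-time}, verify $|X|\le s$, and then establish the incremental-hardness clause using Lemma~\ref{lem:shortcut-free-prelim}.

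To identify $X$, recall that $\scost(G,T)=s$ by Lemma~\ref{lem:longer-time-scost}, so every normal regular strategy peaks at $s$ pebbles. By Lemma~\ref{lem:shortcut-free-prelim}, throughout the traversal of the $v$-chain a normal strategy maintains $s-1$ pebbles on (or charged via subpyramids to) the apex chain $\{r_2,\ldots,r_s\}$, while a single pebble advances along the $v$-chain. When the $v$-chain terminates at $v_{c_1 s}$ and the $w$-chain begins, the only way to stay within the $s$-pebble budget is to slide the pebble from $v_{c_1 s}$ onto $w_1$ (using that $r_2$ is still pebbled); the $w$-chain thereafter strictly decreases the pebble count, because each $w_j$ consumes and then discards the pebble on $r_{j+1}$. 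Hence the last time step at which $s$ pebbles coexist on $G$ is the step that pebbles $w_1$, at which point $P_{t_s^\s} = \{r_2,\ldots,r_s,w_1\}$. Frugality and regularity force this configuration identically across every normal regular strategy, so $X=\{r_2,\ldots,r_s,w_1\}$ and $|X|=s$, which is the first half of shortcut-freeness.

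For incremental hardness, fix any $X_1\subsetneq X$ with $|X_1|=s_1<s$, suppose $s_1$ pebbles initially sit on $X_1$, and let $\s$ be any normal regular strategy that completes $D\cup(X\setminus X_1)$. Pick a $v$-chain vertex $v_{q^*}\in D\setminus X$ with $(q^*\bmod(s-1))+1=2$ (for instance $q^*=s$, which satisfies this congruence whenever $s\ge 2$). Applying Lemma~\ref{lem:shortcut-free-prelim} to the suffix of $\s$ that traverses the $v$-chain starting from the inherited configuration, at the moment $\s$ pebbles $v_{q^*}$ all of $[r_2,r_s]$ must be pebbled, which together with $v_{q^*}$ itself yields $s$ simultaneous pebbles on $G$. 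Of those, at most $|X_1\cap[r_2,r_s]|\le s_1$ are inherited from $X_1$, so $\s$ itself must place at least $s-s_1$ additional pebbles; if $w_1\in X_1$ then $|X_1\cap[r_2,r_s]|\le s_1-1$ and the bound only strengthens to $s-s_1+1$. The main obstacle is justifying that Lemma~\ref{lem:shortcut-free-prelim}'s maintenance conclusion still applies to strategies that do not start from the empty configuration; this will be resolved by noting that the lemma's induction rests purely on the pyramid structure below each $r_j$ (re-pebbling $r_j$ costs $j$ pebbles regardless of what sits elsewhere on the graph), so inherited $X_1$-pebbles can only reduce the placement burden at the bottleneck $v_{q^*}$ by the count $|X_1\cap[r_2,r_s]|$, never below the $s-s_1$ threshold.
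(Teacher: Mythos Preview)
Your argument is essentially correct and follows the same two-step skeleton as the paper (identify $X$, then verify incremental hardness), but the details diverge in a couple of noteworthy ways.

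\textbf{Identification of $X$.} You take $X=\{r_2,\ldots,r_s,w_1\}$, arguing that the very last configuration with $s$ pebbles occurs after sliding onto $w_1$. The paper instead takes $X$ to be the configuration at the moment $v_{c_1 s}$ is pebbled, namely $\{r_2,\ldots,r_s,v_{c_1 s}\}$. Your choice is arguably closer to the letter of Definition~\ref{def:shortcut-free-graph}; in either case $|X|=s$ and, crucially, $r_s\in X$ so the entire $v$-chain lies in $D$, which is what both arguments ultimately exploit.

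\textbf{Incremental hardness.} The paper proceeds by a two-case analysis on whether $v_{c_1 s}\in X'$, separately counting the pebbles needed to restore the missing $r_i$'s and to traverse the chain. You instead pick a single bottleneck vertex $v_{q^*}\in D$ and invoke Lemma~\ref{lem:shortcut-free-prelim} to conclude that all of $[r_2,r_s]$ together with a chain pebble must coexist at that moment, giving $s$ simultaneous pebbles. This is cleaner in that it sidesteps the case split, and it is what the paper's case analysis is implicitly relying on anyway.

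\textbf{One loose thread.} Your sentence ``at most $|X_1\cap[r_2,r_s]|\le s_1$ are inherited from $X_1$, so $\s$ itself must place at least $s-s_1$ additional pebbles'' is not quite the right accounting: the $X_1$ pebbles are movable, so ``inherited'' is ill-defined. What you actually establish (and what suffices) is that the \emph{total} pebble count reaches $s$ at the $v_{q^*}$ moment, contradicting any assumed budget of at most $s-1$. The extension of Lemma~\ref{lem:shortcut-free-prelim} to strategies starting from a nonempty $X_1$ goes through because that lemma's induction only needs an upper bound on the total pebble budget, and a budget of $s-1$ is stricter than $s$; you should state this explicitly rather than leave it as a parenthetical.
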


\begin{proof}
We first prove that any normal standard pebbling strategy $\s$ that pebbles $G$ must contain pebbles on all $r_i$ and $v_{c_1 s}$ at some time (say, $t_X$) during the execution of $\s$.

Let $X$ be the set of vertices containing black pebbles when $v_{c_1 s}$ is pebbled. Thus, a total of $s$ pebbles must be on the graph (specifically on all nodes in $X$) at this time in any normal strategy by proof of Lemma~\ref{lem:shortcut-free-prelim}. We now prove the incremental hardness of $G$. Let $s' < s$ pebbles be on $X$ at time $t_X$. We prove that we cannot pebble $X \cup D$ using less than $s - s'$ pebbles. 

Suppose for the purposes of contradiction, given $s' < s$, assume that $s'$ pebbles are placed on $X$ and less than $s-s'$ pebbles can be used to pebble $X\backslash X' \cup D$. Supose that $X \backslash X'$ includes either:

\begin{enumerate}
\item $v_{c_1 s}$ and some $s - s' - 1$ subset of vertices in $[r_2, r_s]$, or
\item some $s-s'$ subset of vertices in $[r_2, r_s]$. 
\end{enumerate}

In the first case, if no pebbles are on $v_i$ for $i \in [1, c_1 s]$, then at least one pebble needs to be used to pebble $v_i$ for $i \in [1, c_1 s]$. If $X\backslash X'$ includes some $s - s' - 1$ subset of vertices in $[r_2, r_s]$, then at least $s - s'$ pebbles are needed to pebble the vertices missing the pebbles. 

In the second case, if some subset $s - s'$ of vertices in $[r_2, r_s]$ are in $X \backslash X'$, then at least $s-s' + 1$ pebbles are necessary to pebble the nodes missing pebbles in order to be able to pebble $w_l$ for $l \in [1, s-1]$. 

In either case, at least $s-s'$ pebbles are necessary to pebble $X \backslash X' \cup D$, thus, this construction is shortcut-free.
\end{proof}

\begin{theorem}\label{thm:sustained-hardness-longer-time}
$s$ pebbles are necessary for at least $\Theta(s^2)$ parallel steps to pebble any target of $G'$. 
\end{theorem}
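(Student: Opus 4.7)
My plan is to lift the single-copy sustained analysis implicit in Lemmas~\ref{lem:shortcut-free-prelim} and~\ref{lem:longer-time-incremental} to all $s$ copies of $X \cup D$ that comprise $G'$, and then observe that an $s$-pebble budget leaves no room to run two copies' ``$v$-chain windows'' concurrently. The $\Theta(s^2)$ bound will then follow by summing $s$ copies' $\Theta(s)$-length pairwise-disjoint windows; the matching upper bound is witnessed by the obvious strategy that pebbles the copies one at a time while reusing the shared pyramid predecessors.

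\textbf{Step 1 (per-copy sustained cost).} I would first fix any valid pebbling strategy $\s$ for $G'$ that uses at most $s$ pebbles, and any copy index $i \in \{0,\ldots,s-1\}$. Restricting $\s$ to the subgraph consisting of the shared pyramid predecessors together with copy $i$'s duplicated $X \cup D$ yields a strategy on a subgraph isomorphic to $G$ that pebbles its target $w_{s-1,i}$. Applying Lemma~\ref{lem:shortcut-free-prelim} to this restriction forces, throughout the pebbling of the chain $v_{1,i},\ldots,v_{c_1 s,i}$, the simultaneous presence of $s-1$ pebbles charged to a suffix of the $r_{\cdot,i}$'s together with one pebble on the current $v_{q,i}$. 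Each of these $\Theta(s)$ moves therefore uses exactly $s$ pebbles, yielding $\Theta(s)$ sustained-$s$ time steps attributable to copy $i$.

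\textbf{Step 2 (disjointness of windows).} Next, I would show that the copy-$i$ and copy-$i'$ windows from Step~1 cannot overlap in time when $i \neq i'$. During copy $i$'s window every one of the $s$ pebbles is accounted for by nodes of copy $i$ (the $s-1$ pebbles on $r_{\cdot,i}$ and the single pebble on some $v_{q,i}$), and the duplication in Graph Construction~\ref{def:layering-any-graph} makes $r_{j,i'}$ and $v_{j,i'}$ distinct nodes from $r_{j,i}$ and $v_{j,i}$. Hence no pebble is available on any node of copy $i'$. Shortcut-freeness (Lemma~\ref{lem:longer-time-incremental}) then precludes beginning copy $i'$'s chain, which would require an additional $s-1$ pebbles on copy $i'$'s own $r$'s. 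Summing over all $s$ copies yields $\Omega(s)\cdot s = \Omega(s^2)$ sustained-$s$ time steps.

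\textbf{Main obstacle.} The delicate point will be ruling out clever interleavings where a strategy holds a few leftover $r_{\cdot,i}$-pebbles past the nominal end of copy $i$'s window in order to get a jump start on copy $i'$, amortizing sustained time across copies. I expect the cleanest remedy is to sharpen Step~2 by defining, at each sustained-$s$ time step, a \emph{witness copy} --- the (necessarily unique) index whose $v$-chain currently holds a pebble, with uniqueness forced by the $s$-pebble budget together with shortcut-freeness --- and then showing that every preimage of this step-to-copy map has size $\Theta(s)$, which immediately gives the $\Theta(s^2)$ sustained-step bound.
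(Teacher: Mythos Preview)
Your proposal has a genuine gap: you prove the $\Theta(s^2)$ sustained-$s$ bound only for pebbling \emph{all} $s$ targets of $G'$, whereas the theorem asserts it for pebbling \emph{any single} target. Your Step~1 gives $\Theta(s)$ sustained-$s$ steps per copy (from one $v$-chain of length $c_1 s$), and your Step~2 sums these over all $s$ copies. But if the goal is a single target $w_{s-1,i}$, only copy $i$ need ever be touched, Step~2 is vacuous, and your argument yields only $\Theta(s)$ sustained steps --- off by a factor of $s$.

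The paper obtains the $\Theta(s^2)$ from a different source that you overlooked: the \emph{shared pyramid itself}. Step~2 of Graph Construction~\ref{def:longer-time} threads a Hamiltonian path through each level of the height-$s$ pyramid, so the pyramid has $\Theta(s^2)$ nodes and, crucially, depth $\Theta(s^2)$ rather than $\Theta(s)$. The paper then argues (via Lemma~\ref{lem:shortcut-free-prelim} and frugality) that during the entire $\Theta(s^2)$-step traversal of this pyramid, $s$ pebbles must remain on the graph; the $v$-chain contributes only the lower-order $\Theta(c_1 s)$ term in the paper's final tally $\Theta(s^2 + c_1 s)$. Since the pyramid is shared and must be traversed to reach any copy's $r_i$'s, this gives the bound for every individual target. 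Your decomposition treats the pyramid as a cheap shared prefix (``reusing the shared pyramid predecessors'') and locates all the cost in the per-copy chains, which inverts where the actual hardness lies.
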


\begin{proof}
To pebble $v_j$ for all $j \in [1, c_1 s]$, we require pebbles on all $r_i$ for $i \in [2, s]$ and one pebble on the path from $v_1$ to $v_{c_1 s}$; otherwise, the entire pyramid must be rebuilt, resulting in repebbling all nodes in the graph as we showed in the proof of Lemma~\ref{lem:shortcut-free-prelim}. To pebble the pyramid requires $s$ pebbles on the pyramid at all times and takes $\Theta(s^2)$. We show this is true.

Suppose that at some point before pebbling the apex of the pyramid that a pebble is removed from the graph, then, by our requirement that $s-1$ pebbles must remain on $r_i$ for $i \in [2, s]$ and that a pebble must be on the path from $v_1$ to $v_{c_1s}$, the removed pebble cannot be used for either of these tasks. Thus, the entire pyramid must be rebuilt, contradicting the frugality of the strategy.

Thus, $s$ nodes must remain on the graph for $\Theta(s^2 + c_1 s) = \Theta(s^2)$ parallel time steps, proving our theorem.
\end{proof}
\fi

We create $G' = (V', E')$ from $G$ (as constructed using Definition~\ref{def:longer-time}) using 
\ificalpfullversion
Definition~\ref{def:layering-any-graph}
\fi
\ificalpshortversion
the layering construction given in the full version
\fi
, resulting in a graph with $\Theta(s^2)$ total  nodes.

\begin{theorem}\label{thm:longer-time-magic-number}
$\scostm(G', s, T) = s$.
\end{theorem}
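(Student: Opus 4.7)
The plan is to obtain this essentially as a corollary of the general layering result (Theorem~\ref{thm:general-layering-transformation}) applied to the shortcut-free base graph of Graph Construction~\ref{def:longer-time}. First I would establish the easy upper bound $\scostm(G', s, T) \leq s$: since $|T| = s$ by construction (each of the $s$ copies of the base graph contributes one target), the player can trivially place a magic pebble on each target, using exactly $s$ magic pebbles and $0$ black pebbles. This satisfies the magic pebble budget $\mbound = s$ and meets the goal.

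For the matching lower bound $\scostm(G', s, T) \geq s$, the strategy is to invoke Theorem~\ref{thm:general-layering-transformation} with the base graph $G$ from Graph Construction~\ref{def:longer-time}. To do this I need two preconditions on $G$: (i) $\scost(G, T) = s$ with a single target, which is exactly Lemma~\ref{lem:longer-time-scost}, and (ii) $G$ is shortcut-free in the sense of Definition~\ref{def:shortcut-free-graph}, which is exactly Lemma~\ref{lem:longer-time-incremental}. Both hypotheses are already proven earlier in the paper, so the lower bound drops out of Theorem~\ref{thm:general-layering-transformation} directly.

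Combining the two bounds yields $\scostm(G', s, T) = s$, which is the claim. The main subtlety — and the only place where real work would be needed were it not already done — is verifying shortcut-freeness of the base graph; in particular, the argument of Lemma~\ref{lem:shortcut-free-prelim} used to keep a pebble on each $r_i$ throughout the long path $v_1,\ldots,v_{c_1 s}$ is what forces any frugal strategy to simultaneously hold the set $X = \{r_2, \ldots, r_s, v_{c_1 s}\}$, and this is precisely the property the layering transformation amplifies into a magic-pebble lower bound. Since all of that heavy lifting is encapsulated in prior lemmas, the proof of Theorem~\ref{thm:longer-time-magic-number} itself reduces to citing them and applying the transformation theorem.
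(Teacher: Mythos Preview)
Your proposal is correct and matches the paper's approach exactly: the paper's proof is a one-line invocation of Theorem~\ref{thm:general-layering-transformation} using Lemma~\ref{lem:longer-time-scost} and Lemma~\ref{lem:longer-time-incremental} as the two required preconditions. Your explicit upper bound is already contained in the proof of Theorem~\ref{thm:general-layering-transformation} itself, so this is just a slightly more unpacked version of the same argument.
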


\begin{proof}
By Lemma~\ref{lem:longer-time-incremental} the graph is shortcut-free and by Lemma~\ref{lem:longer-time-scost} $\scost(G, T) = s$, therefore, we use Theorem~\ref{thm:general-layering-transformation} to prove that $\scostm(G', s, T) = s$.
\end{proof}

\ificalpfullversion
    By the proof that $G'$ is shortcut-free, we obtain the following corollary that $G'$ is also incrementally hard. Moreover, Corollary~\ref{cor:notcylinder-subset-sustained-time} follows directly from the proof of Theorem~\ref{thm:general-layering-transformation}.
\fi
\ificalpshortversion
    From the proof that $G'$ is shortcut-free, we obtain Corollary~\ref{cor:notcylindar-incrementally-hard}. Moreover, Corollary~\ref{cor:notcylinder-subset-sustained-time} follows directly from the proof of the sustained complexity of the general layering transformation given in the full version.
\fi

\begin{corollary}\label{cor:notcylindar-incrementally-hard}
Given a graph $G = (V, E)$ as constructed in Graph Construction~\ref{def:longer-time}, $G$ is incrementally hard: $\scostm(\g,|C| - 1, C) \geq |T|$ for any subset $C \subseteq T$.
\end{corollary}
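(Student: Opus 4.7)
The plan is to adapt the argument of Theorem~\ref{thm:general-layering-transformation} to the setting of an arbitrary target subset $C \subseteq T'$ with at most $|C|-1$ magic pebbles. Fix such a $C$ and any valid black-magic strategy $\s$ that visits every node in $C$ using at most $|C|-1$ magic pebbles. By pigeonhole, at least one target $t^* \in C$ is never directly covered by a magic pebble; hence $t^*$ must be reached by a black pebble at some step $i^*$, at which point every predecessor of $t^*$ within its associated copy of $X \cup D$ must be covered by either a black or a magic pebble.

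The next step is to invoke the shortcut-free property of the base graph established in Lemma~\ref{lem:longer-time-incremental}. Because the copy of $X \cup D$ containing $t^*$ inherits the shortcut-free structure of the original $G$, whenever $s' < s$ pebbles are placed on a subset $X'_{i^*} \subset X_{i^*}$, pebbling the remainder $D_{i^*} \cup (X_{i^*} \setminus X'_{i^*})$ requires at least $s - s'$ additional pebbles simultaneously on the graph. Hence the configuration $P_{i^*}$ contains at least $s$ pebbles (magic plus black) localized in $t^*$'s copy and its predecessors. Magic pebbles placed on other copies $X_j \cup D_j$ with $j \neq i^*$ cannot help pebble $t^*$'s copy, so they do not weaken this bound.

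The only remaining subtlety is magic pebbles placed on the shared predecessor nodes $V \setminus (X \cup D)$, which in principle could be reused across copies. However, as in the proof of Theorem~\ref{thm:general-layering-transformation}, any such placement is dominated by placing that magic pebble directly on a target: a magic pebble on a target covers one whole target, whereas one on a shared predecessor only shaves a single unit off each copy's local completion cost, and each such completion still forces at least $s-s'$ pebbles to coexist on the graph by shortcut-freeness. Combining these observations, the peak configuration satisfies $\max_i |P_i| \geq s = |T'|$, and by Definition~\ref{def:magic-satcost} we conclude $\scostm(\s) \geq s$, yielding the desired $\scostm(G, |C|-1, C) \geq |T|$.

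The main obstacle I anticipate is the bookkeeping in the last paragraph: formally ruling out clever hybrid strategies that distribute magic pebbles between targets and shared predecessors, especially when $|C|$ is small and the budget seems to permit trading internal savings across copies. This is resolved by invoking the exact shortcut-free accounting of Lemma~\ref{lem:longer-time-incremental} together with the dominance argument from Theorem~\ref{thm:general-layering-transformation}, which together constrain the pebble count on the critical copy containing $t^*$ to reach $s$ regardless of how the $|C|-1$ magic pebbles are distributed.
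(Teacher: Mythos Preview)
Your proposal is correct and follows essentially the same approach the paper intends. The paper's own ``proof'' of this corollary is a single sentence pointing back to the shortcut-free argument (Lemma~\ref{lem:longer-time-incremental}) and the layering theorem (Theorem~\ref{thm:general-layering-transformation}); your write-up is a faithful elaboration of exactly that route---pigeonhole to find an uncovered target $t^*$, then invoke shortcut-freeness on $t^*$'s copy, then dispose of magic pebbles on other copies and on shared predecessors via the same dominance reasoning used in Theorem~\ref{thm:general-layering-transformation}. Note also that you correctly read the statement as concerning the layered graph $G'$ with target set $T'$ (the paper's corollary text literally says ``$G$ as constructed in Graph Construction~\ref{def:longer-time}'', which has a single target, but the surrounding prose makes clear $G'$ is meant).
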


\ificalpfullversion
    The following corollary about the graph-optimal sustained time complexity is proven directly from the proof of Lemma~\ref{lem:shortcut-free-prelim} and Theorem~\ref{thm:sustained-hardness-longer-time} that if less than $\frac{s}{2}$ magic pebbles are on the pyramid, then half the pyramid must be rebuilt resulting in $\Theta(s^2)$ time-steps in which $s$ pebbles are on the graph; thus proving for the cases when $|C| - 1 < \frac{s}{2}$. We now prove the case when $|C|-1 \geq \frac{s}{2}$.
\fi

\begin{corollary}\label{cor:notcylinder-subset-sustained-time}
Given a graph $G=(V, E)$ as constructed in Graph Construction~\ref{def:longer-time}, $\stimem(G, |C| - 1, C) = \Theta(|V|)$ for all subsets of $C \subseteq T$. 
\end{corollary}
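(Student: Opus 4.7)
The plan is to prove only the case $|C|-1\geq s/2$, since the preceding discussion already settles $|C|-1<s/2$ via Lemma~\ref{lem:shortcut-free-prelim} and Theorem~\ref{thm:sustained-hardness-longer-time}. My approach is a pigeonhole-style reduction to the shortcut-free property of $G$ (Lemma~\ref{lem:longer-time-incremental}). Since the adversary's magic-pebble budget $|C|-1$ is strictly smaller than the number of targets $|C|$ it must produce, there exists some target $t^{*}$ in some copy $i^{*}$ of $X\cup D$ on which no magic pebble is ever placed. Hence $t^{*}$ must be produced by an actual black-pebble placement, and that placement must be preceded by a valid pebbling of its predecessors.

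The core step is to bound how many magic pebbles can ``help'' this particular pebbling. Let $s_1$ denote the number of magic pebbles that end up on nodes of $X_{i^{*}}$. Because the global budget is $|C|-1<s$, we have $s_{1}<s$. Restricting attention to the segment of the strategy that actually builds up $t^{*}$, and treating the $s_1$ persistent magic pebbles on $X_{i^{*}}$ as the ``pre-placed'' pebbles on a subset $X_1\subset X$ in the shortcut-free definition, the shortcut-free property forces at least $s-s_1\geq 1$ additional black pebbles to be used to pebble $D_{i^{*}}\cup(X_{i^{*}}\setminus X_1)$. Moreover, by normality of the strategy and the pebbling price of pyramids (Lemma~\ref{lem:longer-time-scost}), some black pebble must actually reach a subpyramid root $r_j$ with $j=\Theta(s)$; otherwise the cheaper alternative would contradict $\scost(G,T)=s$. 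From here the level-by-level argument in the proof of Theorem~\ref{thm:sustained-hardness-longer-time} applies verbatim, yielding $s$ pebbles on the graph for $\Theta(s^{2})$ parallel time-steps during the rebuild.

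The main obstacle I anticipate is making the ``restriction to $t^{*}$'' step precise: a single black-magic strategy $\s$ interleaves work for many targets simultaneously, so the naive restriction is not itself a valid pebbling of $t^{*}$. To handle this I would identify the earliest time-step $t_{*}$ at which $t^{*}$ is pebbled, extract the contiguous configurations of $\s$ that contribute to pebbling $t^{*}$, and interpret any magic pebble present on $X_{i^{*}}$ throughout this window as an initial placement in the shortcut-free setup; the $s_1<s$ bound then flows directly from the global magic-pebble budget. The matching upper bound $\stimem(G,|C|-1,C)\leq O(|V|)$ is immediate, since the na\"ive topological-order pebbling visits each of the $\Theta(s^{2})=\Theta(|V|)$ vertices exactly once. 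Combining the two bounds gives $\stimem(G,|C|-1,C)=\Theta(|V|)$.
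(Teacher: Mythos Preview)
Your approach is genuinely different from the paper's, and it has a real gap. The paper does \emph{not} reduce the case $|C|-1\geq s/2$ to Theorem~\ref{thm:sustained-hardness-longer-time} via pigeonhole and the shortcut-free property. Instead it performs a direct case-split on how many magic pebbles are placed on the subpyramid roots $r_i$: if fewer than $s/2$ of the $r_i$ carry magic pebbles, at least half of the pyramid must be rebuilt with black pebbles and the rebuild argument gives the $\Theta(s^2)$ sustained steps; otherwise $s'\geq s/2$ magic pebbles sit on the roots, and the $\Theta(s^2)$ sustained steps come from those $s'$ persistent pebbles remaining on the graph while a black pebble traverses the $\Theta(s)$-long path in each of the $|C|=\Theta(s)$ copies of $D$.

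Your argument breaks precisely at the second of these cases. The step ``by normality \ldots\ some black pebble must actually reach a subpyramid root $r_j$ with $j=\Theta(s)$'' is not justified: if the adversary places $\geq s/2$ magic pebbles on the $r_i$'s, then the roots needed for $t^{*}$ are already covered, no substantial pyramid rebuild is forced, and the ``level-by-level argument in the proof of Theorem~\ref{thm:sustained-hardness-longer-time}'' does \emph{not} apply verbatim. In this regime the sustained time has a different source (the persistence of the many magic pebbles during the path traversals), and your reduction misses it entirely. A secondary imprecision is that the shortcut-free property (Definition~\ref{def:shortcut-free-graph}) concerns standard pebbling with pre-placed pebbles only on a subset $X_1\subset X$; your invocation counts only magic pebbles on $X_{i^*}$ and does not account for magic pebbles the adversary may place on the shared pyramid nodes in $V\setminus(X\cup D)$, which fall outside the hypothesis of that definition.
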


\begin{proof}
If $|C| - 1 \geq \frac{s}{2}$ magic pebbles are not placed on $r_i$ for all $i \in [2, s]$, then we have to rebuild at least half the pyramid, resulting in $\Theta(s^2) = \Theta(|V|)$ time being used. Thus, some $s' \geq \frac{s}{2}$ magic pebbles must be used on $r_i$ for all $i \in [2, s]$. Then, to pebble all $|C| \geq \frac{s}{2}$ targets requires $\Theta(s^2) = \Theta(|V|)$ time using another black pebble since $s' \geq \frac{s}{2}$ pebbles are used on the pyramid.
\end{proof}

\subsection{$\cH_2$ construction}

Our construction of $\cH_2$ is presented in Algorithm \ref{algo:basic_H2}.

\begin{algorithm}[ht!]\caption{$\cH_2$}\label{algo:basic_H2}
On input $(1^\sec,x)$ and given oracle access to $\Seek_{R}$ 
(where $R$ is the string outputted by $\cH_1$):
\begin{enumerate}
\item Let $\wsize{R}=|R|/w$ be the length of $R$ in words.
\item\label{step:rhos} Query the random oracle to obtain $\rho_0=\RO(x)$ and $\rho_1=\RO(x+1)$.
\item\label{step:sample-loc} Use $\rho_0$ to sample a random $\iota\in[\wsize{R}]$.
\item\label{step:target-label} Query the $\Seek_R$ oracle to obtain $y'=\Seek_R(\iota)$.
\item\label{step:h2-output} Output $y'\oplus\rho_1$.
\end{enumerate}
\end{algorithm}

\begin{lemma}\label{lem:RO}
    For any $R$, the output distribution of $\cH_2$ is 
    uniform over
    \ificalpfullversion the choice of random oracle \fi
    $\RO\gets\OSet$.
\end{lemma}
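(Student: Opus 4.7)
The plan is to treat this as a one-time-pad style argument: we will show that $\rho_1 = \RO(x+1)$ acts as a uniformly random, independent mask on the value $y'$ that $\cH_2$ reads from $R$. Since the lemma fixes $R$ (and in particular $R$ does not depend on $\RO$ within this statement), the only source of randomness we need to track is the oracle $\RO$ itself.

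First I would unwind the output of $\cH_2$ as $y'\oplus\rho_1$, and note that both $\iota$ and $y'=\Seek_R(\iota)$ depend on $\RO$ only through the evaluation $\rho_0=\RO(x)$ (used in Step~\ref{step:sample-loc} to sample $\iota$, which in turn determines the location read in Step~\ref{step:target-label}). In particular, the pair $(\iota, y')$ is a deterministic function of $\rho_0$ (and of the fixed string $R$), so it is independent of $\rho_1=\RO(x+1)$.

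Next I would invoke the fact that $\RO$ is sampled uniformly from $\OSet$, so evaluations at the two distinct points $x$ and $x+1$ are independent and each uniform on $\{0,1\}^w$. Conditioning on any fixed value of $\rho_0$, and hence on any fixed $\iota$ and $y'$, the quantity $\rho_1$ is still uniform on $\{0,1\}^w$. Therefore $y'\oplus\rho_1$ is uniform on $\{0,1\}^w$ conditional on $\rho_0$; averaging over $\rho_0$ preserves uniformity, so the unconditional output distribution is uniform as well.

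The only potentially subtle point is making sure we really are entitled to treat $\rho_1$ as independent of $y'$: this relies on (i) $x\ne x+1$, so $\RO(x)$ and $\RO(x+1)$ are independent random variables of the random oracle, and (ii) the hypothesis that $R$ is fixed (the lemma quantifies over ``any $R$''), so no hidden dependence on $\RO(x+1)$ can leak into $y'$ through $R$. Apart from noting these two points carefully, the proof is essentially a two-line one-time-pad argument and should not present any real obstacle.
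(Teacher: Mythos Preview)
Your proposal is correct and follows essentially the same approach as the paper's proof: the paper simply observes that $\rho_1$ is truly random over the choice of $\RO$ and that $y'$ is independent of $\rho_1$ by construction, so $y'\oplus\rho_1$ is uniform. You have spelled out the independence argument (tracing $y'$ back to $\rho_0=\RO(x)$ and the fixed $R$, and using $x\ne x+1$) in more detail than the paper does, but the underlying one-time-pad reasoning is identical.
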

\begin{proof}
    Over the choice of random oracle, the value $\rho_1$ computed in Step~\ref{step:rhos} is truly random, and $y'$ is independent of $\rho_1$ by construction, so the output $y'\oplus\rho_1$ is also truly random.
\end{proof}

\begin{remark}
Lemma~\ref{lem:RO} is important as an indication that
our SHF construction ``behaves like a random oracle.''
The memory-hardness guarantee alone does not assure
that the hash function is suitable for cryptographic hashing:
e.g., a modified version of $\cH_2$ which directly outputted $y'$
instead of $y'\oplus\rho_1$ would still satisfy memory-hardness,
but would be an awful hash function (with polynomial size 
codomain). 
The inadequacy of existing memory-hardness definitions for assuring that
a function ``behaves like a hash function'' is discussed
by \cite{AT17}.
\end{remark}

\subsection{Proofs of hardness of SHF Constructions}

\ificalpfullversion
    We now prove the hardness of our graph constructions given earlier in Section~\ref{sec:constructions}.  

    We begin by stating two supporting lemmata.
    The first is due to Erd\H{o}s and R\'{e}nyi \cite{ER61}, on the topic of the Coupon Collector's Problem.

    \begin{lemma}[\cite{ER61}]\label{lem:coupon}
        Let $Z_n$ be a random variable denoting the number of
        samples required, when drawing uniformly from a set of $n$ distinct objects
        with replacement, to draw each object at least once.
        Then for any $c$, 
        $\lim_{n\rightarrow\infty}\Pr[Z_n<n\log{n}+cn]=e^{-e^{-c}}$.
    \end{lemma}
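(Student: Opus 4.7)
The plan is to analyze the event $\{Z_n < n\log n + cn\}$ by translating it into a statement about the number of coupons still uncollected after a fixed number of draws, which lets us invoke inclusion-exclusion cleanly. Set $m = m(n) = \lceil n \log n + cn \rceil - 1$, so that $\Pr[Z_n < n\log n + cn] = \Pr[Z_n \leq m]$, i.e., the probability that every coupon has appeared at least once among $m$ independent draws. Let $A_i$ denote the event that coupon $i$ is \emph{absent} from the first $m$ draws. Then
\[
\Pr[Z_n \leq m] = 1 - \Pr\Bigl[\bigcup_{i=1}^n A_i\Bigr],
\]
and by symmetry the inclusion-exclusion expansion reduces to
\[
\Pr\Bigl[\bigcup_{i=1}^n A_i\Bigr] = \sum_{k=1}^{n} (-1)^{k+1} \binom{n}{k} \Bigl(1 - \tfrac{k}{n}\Bigr)^{m}.
\]

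Next I would examine each term of this alternating sum in the limit $n\to\infty$. Using $(1 - k/n)^m = \exp\bigl(m \log(1-k/n)\bigr)$ together with $\log(1-k/n) = -k/n + O(k^2/n^2)$ and the choice $m = n\log n + cn + O(1)$, the exponent becomes $-k(\log n + c) + o(1)$, so $(1 - k/n)^m \to n^{-k} e^{-kc}$. Combining this with $\binom{n}{k} = n^k/k! \cdot (1 + o(1))$ for fixed $k$ gives the termwise limit
\[
\binom{n}{k}\Bigl(1 - \tfrac{k}{n}\Bigr)^m \longrightarrow \frac{e^{-kc}}{k!}.
\]
Summing the alternating series yields
\[
\lim_{n\to\infty} \Pr\Bigl[\bigcup_i A_i\Bigr] = \sum_{k=1}^{\infty} (-1)^{k+1} \frac{e^{-kc}}{k!} = 1 - e^{-e^{-c}},
\]
so that $\Pr[Z_n < n\log n + cn] \to e^{-e^{-c}}$, as claimed.

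The main technical obstacle is justifying the interchange of limit and infinite sum. Termwise convergence of an alternating sum is not automatically enough; one needs either a dominated-convergence-type bound or a uniform tail estimate. I would handle this by splitting the sum at some slowly growing threshold $k \leq K(n)$ (say $K = \lfloor \log\log n \rfloor$), controlling the head by the explicit estimate above, and bounding the tail by noting that $\binom{n}{k}(1 - k/n)^m \leq \frac{1}{k!}(n e^{-m/n})^k = \frac{1}{k!}(e^{-c}(1+o(1)))^k$, whose sum over large $k$ is negligible. An equivalent and perhaps slicker approach is to observe that the number $N_m$ of uncollected coupons after $m$ draws converges in distribution to $\mathrm{Poisson}(e^{-c})$ by the method of moments (since $\E[\binom{N_m}{k}] \to e^{-kc}/k!$), which immediately gives $\Pr[N_m = 0] \to e^{-e^{-c}}$ and hence the claim. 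I would use whichever variant is cleanest in the paper's notational context.
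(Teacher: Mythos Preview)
The paper does not prove this lemma at all: it is stated with a citation to Erd\H{o}s and R\'{e}nyi~\cite{ER61} and used as a black box, so there is no ``paper's own proof'' to compare against. Your argument is the standard classical one (inclusion--exclusion followed by a Poisson limit for the number of uncollected coupons), and it is correct; the tail-bound justification you sketch is adequate, though for the paper's purposes none of this detail is needed since the result is simply imported from the literature.
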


    \begin{corollary}\label{cor:coupon}
        Let $Z_{n,k}$ be a random variable denoting the number of samples required, when drawing uniformly from a set of $n$ distinct objects with replacement, to have drawn at least $k\in[n]$ distinct objects.
        Let $q\in\omega(k\log{k})$.
        Then $\Pr[Z_{n,k}<q]$ is overwhelming (in $k$).
    \end{corollary}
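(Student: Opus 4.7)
The plan is to combine a stochastic domination argument with an explicit (non-asymptotic) tail bound for the standard coupon collector, which is a strengthening of Lemma~\ref{lem:coupon}.

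First, I would show that $Z_{n,k}$ is stochastically dominated by $Z_{k,k}$ whenever $k\leq n$, i.e.\ $\Pr[Z_{n,k}\geq m]\leq \Pr[Z_{k,k}\geq m]$ for every $m$. This is established by a step-by-step coupling: if the process has already collected $j<k$ distinct coupons, then the conditional probability that the next draw yields a new coupon is $(n-j)/n$ in the $n$-coupon process versus $(k-j)/k$ in the $k$-coupon process, and $(n-j)/n\geq (k-j)/k$ since $k\leq n$. Hence the $n$-coupon process always accumulates distinct coupons at least as fast, and the first time it has seen $k$ distinct coupons is no later than in the $k$-coupon process. This lets us reduce the analysis to the standard collector on $k$ coupons.

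Next I would prove an explicit tail bound on $Z_{k,k}$ via a direct union bound over the $k$ coupons:
\[
\Pr[Z_{k,k} > m] \;\leq\; k\,(1-1/k)^m \;\leq\; k\,e^{-m/k}.
\]
Taking $m = k\log k + ck$ gives $\Pr[Z_{k,k} > k\log k + ck] \leq e^{-c}$, which is a uniform-in-$k$ analogue of (and in fact stronger than) the asymptotic statement of Lemma~\ref{lem:coupon}. Finally, writing $q = \alpha(k)\cdot k\log k$ where $\alpha(k)\to\infty$ by the hypothesis $q\in\omega(k\log k)$, we set $c := q/k - \log k = (\alpha(k)-1)\log k$, which is $\omega(\log k)$. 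Combining with the stochastic domination,
\[
\Pr[Z_{n,k}\geq q] \;\leq\; \Pr[Z_{k,k}\geq q] \;\leq\; e^{-c} \;=\; k^{-(\alpha(k)-1)},
\]
which is negligible in $k$ since $\alpha(k)-1=\omega(1)$. Hence $\Pr[Z_{n,k}<q]$ is overwhelming, as claimed.

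The only real subtlety is that Lemma~\ref{lem:coupon} as stated gives a limiting distribution for $Z_n$ rather than a concrete concentration inequality, so it does not, on its own, immediately yield a probability that decays faster than every inverse polynomial. The resolution is the explicit union-bound estimate $\Pr[Z_{k,k}>m]\leq k\,e^{-m/k}$ above, which holds for all $k$ and $m$ and gives the negligibility needed in the cryptographic setting.
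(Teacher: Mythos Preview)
Your proposal is correct and follows the same high-level structure as the paper's proof: first reduce from $Z_{n,k}$ to $Z_{k,k}$ by stochastic domination, then bound the upper tail of the classical coupon collector. The domination step is essentially the same (your coupling argument is a cleaner formulation of what the paper writes via the events $\cE_{i,m}$, arriving at the same inequality $\Pr[Z_{n,k}<c']\geq\Pr[Z_k<c']$).

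The meaningful difference is in the second step. The paper invokes the Erd\H{o}s--R\'enyi limit theorem (Lemma~\ref{lem:coupon}) directly and then Taylor-expands $e^{-e^{-c}}$ to argue negligibility, whereas you replace this by the elementary union bound $\Pr[Z_{k,k}>m]\leq k\,e^{-m/k}$. Your observation at the end is exactly on point: the limit statement in Lemma~\ref{lem:coupon} does not by itself give a uniform-in-$k$ tail bound, so the paper's route tacitly requires something extra, while your explicit estimate is self-contained and yields the negligible-in-$k$ conclusion without any asymptotic passage. Both arguments land on the same threshold $c\in\omega(\log k)$, but yours is the more rigorous way to get there.
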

    \begin{proof}
        For $m\in\NN$ and $i\in[m-1]$, let $\cE_{i,m}$
        denote the event that after $i$ elements out of a set of $m$ elements have already been sampled uniformly with replacement, the $(i+1)$th sample will coincide with one of the elements already drawn. For any $i\leq k\leq n$, it holds that $\Pr[\cE_{i,n}]\geq\Pr[\cE_{i,k}]$. The desired event of drawing $k$ distinct objects corresponds exactly to the conjunction of $\cE_{i,m}$ for $i\in[k]$.
        Therefore, for all $k\in[n]$ and any $c'$,
        \begin{equation}\label{eqn:coupon-is-enough}
            \Pr[Z_{n,k}<c'] \geq \Pr[Z_k<c']\ .
        \end{equation}

        Hence, it suffices for our purposes to bound $\Pr[Z_k]$.
        From Lemma~\ref{lem:coupon},
        \begin{align*}
            & \lim_{k\rightarrow\infty}\Pr[Z_k<k\log{k}+ck] 
                =\lim_{k\rightarrow\infty}e^{-e^{-c}}.
        \end{align*}
        Applying a Taylor expansion,  we get
        $\Pr[Z_k<k\log{k}+ck]\in O(1-e^{-c})$.
        This probability is overwhelming in $k$ (i.e., $e^{-c}$ is negligible) whenever $c\in\omega(\log(\sec))$.
    \end{proof}

    Theorems~\ref{thm:wraparound}--\ref{thm:highest-memory2} state the static-memory-hardness of our SHF constructions based on Graph Constructions~\ref{def:wrappyramid} and \ref{def:longer-time}.
\fi

\begin{theorem}\label{thm:wraparound}
Define a static-memory hash function family $(\cH_1,\cH_2)$ as follows: let $\cH_1$ be the graph function family $\cF_{\pyr}$ (Graph Construction~\ref{def:wrappyramid}), and let $\cH_2$ be as defined in Algorithm \ref{algo:basic_H2}.
Let $\cH=\{h_\sec\}_{\sec\in\NN}$ be the static-memory hash function family
described by $(\cH_1,\cH_2)$.
For any $\lambda\geq 0$ and $\Lambda\in\Theta(\sqrt{n})$, 
$(\cH_1,\cH_2)$ is $(\Lambda-\lambda,\Theta(\sqrt{n}),q)$-hard 
where $q\in\omega(\Lambda\log\Lambda)$.
\end{theorem}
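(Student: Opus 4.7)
The plan is to combine a coupon-collector argument (showing every target label must be computed) with the standard ex-post-facto black-magic pebbling reduction, then invoke the incremental hardness of the cylinder graph to obtain a lower bound on sustained memory.

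First I would unfold Algorithm~\ref{algo:basic_H2}: on input $x$, the output is $h_\sec(x) = \ell_{\iota(x)} \oplus \RO(x+1)$, where $\iota(x) \in [h]$ is a fresh uniform index determined by $\RO(x) \bmod h$ and $\ell_{\iota(x)}$ is the label of the $\iota(x)$-th target of the cylinder graph $\pyr$. Since $\RO(x+1)$ is a uniform value independent of all label-computation queries, the only way $\cA$ can produce $h_\sec(x_i)$ on each of $q$ distinct inputs is to ``know'' (in the ex-post-facto sense formalized in Definition~\ref{def:expostfacto-magic}) the label $\ell_{\iota(x_i)}$ for every $i$. Because $\iota(x_1),\dots,\iota(x_q)$ are i.i.d.\ uniform over $[h] = [\Theta(\Lambda)]$ and $q \in \omega(\Lambda\log\Lambda) = \omega(h\log h)$, Corollary~\ref{cor:coupon} gives that $\{\iota(x_i)\}_{i\in[q]} = [h]$ with overwhelming probability. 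Thus, whenever $\cA$ succeeds, it must pebble \emph{all} $h$ target nodes of $\pyr$.

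Next I would apply the ex-post-facto black-magic pebbling framework to $\cA$'s execution. By Lemma~\ref{lem:epfm-legal}, with overwhelming probability the induced strategy $\epfm_\zeta(\cA,\RO,R,\$)$ is a valid black-magic pebbling using at most $\chi = \lfloor |R|/\sec\rfloor$ magic pebbles; since $|R| = \Lambda - \lambda - \Delta$ and $\Lambda$ corresponds to at most $h$ target labels, we have $\chi < h = |T|$. Because the induced pebbling visits every target using fewer than $|T|$ magic pebbles, Corollary~\ref{cor:cylindar-incrementally-hard} (incremental hardness) forces peak pebble count $|P_i| \geq h$ at some step, and Corollary~\ref{cor:cylinder-subset-sustained-time} strengthens this to: $|P_i| \geq h$ for $\Theta(h) = \Theta(\sqrt{n})$ distinct time steps.

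Finally, I would convert the pebbling bound into a memory bound via Lemma~\ref{lem:epfm-space}: with overwhelming probability $\wsize{\sigma_i} \geq |P_i| - \lambda'$ for any chosen slack $\lambda'\geq 0$. Choosing $\lambda'$ to absorb the $\lambda$-slack of the theorem (i.e., $\lambda' = \lceil\lambda/\sec\rceil$) yields $|\sigma_i| \geq \Lambda - \lambda$ for at least $\Theta(\sqrt{n})$ time steps, so $\sattime_{\OSet}(\Lambda-\lambda,\cA,R,\rho)\geq\Theta(\sqrt{n})$ with overwhelming probability. This contradicts the low-sustained-memory branch of $(\Lambda-\lambda,\Theta(\sqrt{n}),q)$-hardness. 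The main obstacle will be making the three ``overwhelming probability'' events (the coupon covering, validity of the ex-post-facto pebbling with $\leq\chi$ magic pebbles, and the pebble-to-memory translation) compose cleanly, together with careful bit-vs-word bookkeeping so that the theorem's slack $\lambda$ aligns with the slack available in Lemma~\ref{lem:epfm-space}.
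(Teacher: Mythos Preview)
Your proposal is correct and follows essentially the same approach as the paper: coupon collector via Corollary~\ref{cor:coupon} to force coverage of the target set, the ex-post-facto black-magic pebbling reduction (Lemmas~\ref{lem:epfm-legal} and~\ref{lem:epfm-space}), and then the cylinder's sustained-space bound (Corollary~\ref{cor:cylinder-subset-sustained-time}) to reach a contradiction. The only cosmetic difference is that the paper wraps $\cA$ in an explicit algorithm $\cB$ that demasks each output pair $(x_i,y_i)$ into the target label $y'_i=y_i\oplus\RO(x_i{+}1)$ before invoking the ex-post-facto machinery, whereas you apply the framework directly to $\cA$ and argue informally that $\cA$ must ``know'' each $\ell_{\iota(x_i)}$; the wrapper just makes that step syntactically cleaner.
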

\begin{proof}
    Fix any $\lambda\geq 0$ and let $\Lambda\in\Theta(\sqrt{n})-\lambda/\sec$.
    Suppose, for contradiction, that the theorem does not hold. 
    Then by Definition~\ref{def:hardness},
    there exist: $\sec\in\NN$, $\Delta\in\omega(\log(\sec))$, 
    a string $R\in\zo^{\Lambda-\Delta}$, an algorithm $\cA$,
    and a set $X=\{x_1,\dots,x_q\}$ such that 
    the following probability is non-negligible:
    \begin{equation}
        \Pr_{\RO,\rho}\Big[
        \big\{(x_1,h_\sec(x_1)),\dots,(x_q,h_\sec(x_q))\big\}=\cA(1^\sec,R;\rho)
        \wedge
        \sattime_{\OSet}(\Lambda,\cA,R,\rho)\geq \tau\Big]\ .
    \end{equation}
    That is, the probability that an execution of $\cA$ on input $R$ outputs a set of $q$ correct input-output pairs of $h_\sec$, without using $\Lambda$ space for at least $\tau$ time-steps, is non-negligible. We denote by $\cE_\rho$ the event that
    \begin{equation*}
        (x_1,h_\sec(x_1)),\dots,(x_q,h_\sec(x_q))\big\}=\cA(1^\sec,R;\rho)
        \wedge
        \sattime_{\OSet}(\Lambda,\cA,R,\rho)\geq \tau\ .
    \end{equation*}

    Given a correct evaluation $y=h_\sec(x)$ of $\cH_2$ 
    on a given input $x$,
    one can easily compute $\rho_0,\rho_1$ by evaluating $\RO$ on $x,x+1$ respectively, and demask $y$ to obtain the value $y'(x)=y\oplus\rho_1$ of the target label computed in Step~\ref{step:target-label} of Algorithm~\ref{algo:basic_H2}. Moreover, the index $\iota$ computed in Step~\ref{step:sample-loc} can be computed as a deterministic function of $\rho_0$.
    Define $\cB'$ to be the deterministic algorithm that on input $(x,y)$ computes $\rho_0,\rho_1$ and $y'$ as described above, and outputs $(\iota,y')$.

    Next, define $\cB$ to be the algorithm that runs $\cA$ and then applies $\cB'$ on each pair $(x_i,y_i)$ outputted by $\cA$,
    and outputs the resulting set $J=\{(\iota_1,y'_1),\dots,(\iota_q,y'_q)\}$ where each $(\iota_i,y'_i)=\cB'(x_i,y_i)$.
    By construction, if $y_i=h_\sec(x_i)$, each $y'_i$ is the correct label of $\iota_i$th target node of the cylinder graph.
    Notice that this means that for each value of $\iota$,
    there is a unique value of $y'$ such that
    $(\iota,y')=\cB'(x,h_\sec(x))$ for any $x$.
    
    Let $I$ denote $|\{\iota_i\}_{x_i\in X}|$.
    Since the set $X$ is fixed before the random oracle, the locations $I$ are distributed uniformly and independently with replacement in a support of size at least $|R|$. Then by Corollary~\ref{cor:coupon}, the number of distinct locations $|I|$ is at least $\Lambda$ with overwhelming probability. That is, there is a negligible function $\eps'$ such that
    $\Pr\left[|I|\geq\Lambda\right]\geq1-\eps'$.
    Conditioned on $\cE_\rho$, all pairs $(x_i,y_i)$ outputted by $\cA$ are such that $y_i=h_\sec(x_i)$, and we have already observed that each value of $\iota$ induces a unique value of $y'$ outputted by $\cB'$ on input pairs of the form $(x_i,h_\sec(x_i))$.
    It follows that $\Pr[|J|\geq\Lambda ~|~ \cE_\rho]\geq1-\eps$. 

    Now consider the ex-post-facto magic pebbling strategy $\s$ induced by $\cB$.
    By Lemma~\ref{lem:epfm-legal}, with overwhelming probability over the choice of random oracle and the coins of $\cB$, $\s$ is legal and uses no more than $\wsize{R}$ magic pebbles; call this event $\cE'_\rho$ (where $\rho$ denotes the randomness of $\cB$).
    By Lemma~\ref{lem:epfm-space}, with overwhelming probability over the same,
    \begin{equation}\label{eqn:pebbling-space-bounded-per-step}
    \forall i\in[t], ~ |P_i^\RO|\leq\wsize{\sigma_i}+\lambda\ ,
    \end{equation}
    where $t$ is the length of $\s$, $P_i$ is the $i$th configuration of $\s$, and $\sigma_i$ is the $i$th state of the execution of $\cB$. We denote by $\cE''_\rho$ the event that \eqref{eqn:pebbling-space-bounded-per-step} is satisfied (where $\rho$ denotes the randomness of $\cB$).

    Finally, we observe that conditioned on $\cE_\rho$, since we established above that $\cB$ outputs a set of at least $\Lambda$ correct target labels, the strategy $\s$ must successfully pebble the corresponding $\Lambda$ target nodes.
    Since $\Pr[\cE_\rho]$ is non-negligible and $\Pr[\cE']$ and $\Pr[\cE'']$ are overwhelming, $\Pr[\cE'\wedge\cE'' | \cE]$ must be negligibly close to $\Pr[\cE]$ (and thus, non-negligible).
    The occurrence of $\cE\wedge\cE'\wedge\cE''$ implies the existence of a pebbling strategy $\s$ that is legal, uses no more than $\wsize{R}$ magic pebbles, and uses fewer than $\sqrt{n}$ pebbles overall.
    This contradicts Corollary~\ref{cor:cylinder-subset-sustained-time}.
\end{proof}

\begin{theorem}\label{thm:highest-memory}
Define a static-memory hash function family $(\cH_1,\cH_2)$ as follows: let $\cH_1$ be the graph function family $\cF_{\sg{2}}$ (Graph Construction~\ref{def:longer-time}), and let $\cH_2$ be as defined in Algorithm \ref{algo:basic_H2}.
Then, for any $\lambda\geq 0$ and $\Lambda\in\Theta(\sqrt{n})$  
where $n$ is the number of nodes in the graph,
$(\cH_1,\cH_2)$ is $(\Lambda-\lambda,\Theta(n),q)$-hard,
where $q\in\omega(\Lambda\log\Lambda)$.
\end{theorem}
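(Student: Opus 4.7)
The plan is to mirror the proof of Theorem~\ref{thm:wraparound} almost verbatim, substituting Corollary~\ref{cor:notcylinder-subset-sustained-time} for Corollary~\ref{cor:cylinder-subset-sustained-time} at the final step. The stronger $\Theta(n)$-sustained guarantee of Graph Construction~\ref{def:longer-time} is precisely what upgrades the time bound from $\Theta(\sqrt{n})$ to $\Theta(n)$; everything else in the reduction is unchanged, because neither the definition of $\cH_2$, the random-oracle demasking step, nor the ex-post-facto machinery depends on which particular incrementally hard graph family underlies $\cH_1$.

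First I would assume for contradiction that the claimed hardness fails. That gives $\sec\in\NN$, $\Delta\in\omega(\log\sec)$, a reference string $R\in\{0,1\}^{\Lambda-\Delta}$, a PPT adversary $\cA$, and an input set $X=\{x_1,\dots,x_q\}$ such that with non-negligible probability the event $\cE_\rho$---namely, that $\cA$ outputs $\{(x_i,h_\sec(x_i))\}_i$ while the $\Lambda$-sustained memory complexity of $\cA$ on $R$ is strictly less than $\tau\in\Theta(n)$---occurs. As in the cylinder-graph proof, I define a deterministic post-processing algorithm $\cB'$ that on input $(x,y)$ computes $\rho_0=\RO(x)$, $\rho_1=\RO(x+1)$, derives the target index $\iota$ deterministically from $\rho_0$, and outputs $(\iota,y\oplus\rho_1)$. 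Let $\cB$ be the algorithm that runs $\cA$ on $R$ and applies $\cB'$ to each returned pair. Whenever $y_i=h_\sec(x_i)$, the pair $(\iota_i,y'_i)$ is the correct label of the $\iota_i$-th target node of the underlying graph, and the value $y'$ associated with each $\iota$ is unique.

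Next I would apply the coupon-collector-style bound of Corollary~\ref{cor:coupon} to the induced indices $\{\iota_i\}$. Because $X$ is fixed before $\RO$ is sampled, these indices are uniform i.i.d.\ over $[\wsize{R}]$, so the choice $q\in\omega(\Lambda\log\Lambda)$ ensures that with overwhelming probability at least $\Lambda$ distinct indices appear. Conditioned on $\cE_\rho$, $\cB$ therefore produces correct labels of a subset $C$ of the graph's target set with $|C|\geq\Lambda$, again with overwhelming probability. I would then invoke the ex-post-facto machinery: Lemma~\ref{lem:epfm-legal} guarantees, with overwhelming probability, that the induced pebbling $\s=\epfm_\zeta(\cB,\RO,R,\$)$ is a legal black-magic strategy using at most $\wsize{R}=\Lambda-\Delta$ magic pebbles, and Lemma~\ref{lem:epfm-space} guarantees $|P_i|\leq\wsize{\sigma_i}+\lambda$ for every $i$. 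Combined with $\cE_\rho$, this means $\s$ places fewer than $\Lambda$ pebbles on the graph during all but at most $\tau$ steps, while still pebbling $C$.

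The final step is to derive a contradiction from Corollary~\ref{cor:notcylinder-subset-sustained-time}, which asserts that for the graph of Construction~\ref{def:longer-time}, any valid black-magic strategy for a target subset $C$ with $|C|-1\geq\mbound$ must maintain $\scost(G,T)\in\Theta(\sqrt{n})$ pebbles for $\Theta(n)$ sustained steps. Since $\wsize{R}=\Lambda-\Delta<\Lambda\leq|C|$, the magic-pebble budget of $\s$ satisfies $\mbound\leq|C|-1$, so the corollary applies and forces $\tau\in\Theta(n)$, contradicting our assumption. The only real subtlety, and hence the main obstacle, is the parameter matching: one must ensure that the slack $\Delta\in\omega(\log\sec)$ in the definition of hardness is used to simultaneously (i) render negligible the hint-based predictor failure probabilities in Lemmas~\ref{lem:epfm-legal} and~\ref{lem:epfm-space}, and (ii) guarantee $|C|>\wsize{R}$ so that the incremental-hardness hypothesis of Corollary~\ref{cor:notcylinder-subset-sustained-time} is genuinely triggered. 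The remainder of the argument is essentially bookkeeping over a constant number of overwhelming events.
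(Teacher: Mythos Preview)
Your proposal is correct and follows exactly the paper's own approach: the paper's proof is a one-line sketch stating that the argument is identical to that of Theorem~\ref{thm:wraparound} with Corollary~\ref{cor:notcylinder-subset-sustained-time} substituted for Corollary~\ref{cor:cylinder-subset-sustained-time} at the final step. Your write-up is in fact more detailed than the paper's sketch, and your discussion of the parameter-matching subtlety (ensuring $\wsize{R}<|C|$ so that the incremental-hardness hypothesis is triggered) is a useful elaboration that the paper leaves implicit.
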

\begin{proof}[Proof sketch]
    Identical proof structure to the proof of Theorem~\ref{thm:wraparound}, except instead of invoking Corollary~\ref{cor:cylinder-subset-sustained-time} at the end, we derive a contradiction to Corollary~\ref{cor:notcylinder-subset-sustained-time}.
\end{proof}

The parameter $q$ is suboptimal in Theorems~\ref{thm:wraparound} and \ref{thm:highest-memory}. We can achieve optimality (i.e., $q=\wsize{|R|}$) by the following alternative construction of $\cH_2$: make $q'=\omega(\log(\sec))$ random calls instead of just one call to the $\Seek$ oracle in Step~\ref{step:target-label}.
To preserve the output size of $h_\sec$,
it may be useful to reduce the size of node labels by a corresponding factor of $q'$. This can be achieved by truncating the random oracle outputs used to compute labels in Definition~\ref{def:labeling}. 
\ificalpfullversion
    The description of this altered $\cH_2^{q'}$ and the definition of graph function family $\cF{q'}_{G}$ with shorter labels are given in Appendix~\ref{appx:h2-alternative}.
\fi

\begin{theorem}\label{thm:wraparound2}
Define a static-memory hash function family $(\cH_1,\cH_2)$ as follows: let $\cH_1$ be the graph function family $\cF^{\sec/q'}_{\pyr}$ (Graph Construction~\ref{def:wrappyramid}), and let $\cH_2$ be $\cH_2^{q'}$ as defined in 
\ificalpshortversion the attached full version \fi
\ificalpfullversion Algorithm \ref{algo:fancy_H2} \fi
for some $q'\in\omega(\log\Lambda)$.
Let $\cH=\{h_\sec\}_{\sec\in\NN}$ be the static-memory hash function family
described by $(\cH_1,\cH'_2)$.
Then, for any $\lambda\geq 0$ and $\Lambda\in\Theta(\sqrt{n})$, 
$(\cH_1,\cH_2)$ is $(\Lambda-\lambda,\Theta(\sqrt{n}),q)$-hard 
where $q=\wsize{\Lambda}$.
\end{theorem}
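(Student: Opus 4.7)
The plan is to mirror the proof of Theorem~\ref{thm:wraparound} almost verbatim, with the only substantive change in the coupon-collector step, which is what enables pushing the number of required outputs $q$ down to the optimum $\wsize{\Lambda}$.

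First, I would assume for contradiction that the claim fails. By Definition~\ref{def:hardness} this yields a security parameter $\sec$, a function $\Delta\in\omega(\log\sec)$, a string $R\in\{0,1\}^{\Lambda-\Delta}$, a PPT adversary $\cA$, and an input set $X=\{x_1,\dots,x_q\}$ with $q=\wsize{\Lambda}$, such that with non-negligible probability $\cA(1^\sec,R;\rho)$ outputs all $q$ correct pairs $(x_i,h_\sec(x_i))$ while failing to sustain $\Lambda$ memory for $\Theta(\sqrt n)$ steps. Call this joint event $\cE_\rho$.

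Second, following the template of Theorem~\ref{thm:wraparound}, I would define a deterministic unmasking routine $\cB'$ that, on input a correct pair $(x,y)$, uses $\RO(x)$ to reconstruct the $q'$ indices $\iota_1,\dots,\iota_{q'}$ sampled inside $\cH_2^{q'}$, queries the corresponding random-oracle masks, and demasks $y$ to recover the $q'$ individual target labels. Letting $\cB$ be the algorithm that runs $\cA$ and applies $\cB'$ to each of its $q$ output pairs, $\cB$ assembles a set of $qq'$ index--label pairs $(\iota_j,y'_j)$; conditioned on $\cE_\rho$, every $y'_j$ is the genuine label of the $\iota_j$-th target of the cylinder graph. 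Then I would run the coupon-collector argument: since $X$ is fixed independently of $\RO$, the $qq'$ indices are drawn independently and uniformly with replacement from the target positions, and by the parameter choices $qq'=\wsize{\Lambda}\cdot\omega(\log\Lambda)=\omega(\wsize{\Lambda}\log\wsize{\Lambda})$, so Corollary~\ref{cor:coupon} (with $k=\wsize{\Lambda}$) guarantees that with overwhelming probability at least $\wsize{\Lambda}$ distinct target positions appear in $\cB$'s output; call this covered set $C$.

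Third, applying Lemmata~\ref{lem:epfm-legal} and~\ref{lem:epfm-space} to the ex-post-facto black-magic pebbling $\s$ induced by $\cB$, with overwhelming probability $\s$ is legal, uses at most $\wsize{R}<|C|$ magic pebbles, and satisfies $|P_i|\le\wsize{\sigma_i}+\lambda$ at every step. Combined with $\cE_\rho$, $\s$ pebbles the $|C|\ge\wsize{\Lambda}$ targets of $C$ using strictly fewer than $|C|$ magic pebbles and with graph-optimal sustained complexity below $\Theta(\sqrt n)$, contradicting Corollary~\ref{cor:cylinder-subset-sustained-time}.

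The main obstacle is the coupon-collector calibration: the argument needs $qq'\in\omega(k\log k)$ for $k=\wsize{\Lambda}$, which is exactly why the construction chooses $q'\in\omega(\log\Lambda)$ and why this pays for reducing $q$ from $\omega(\wsize{\Lambda}\log\wsize{\Lambda})$ (in Theorem~\ref{thm:wraparound}) down to the optimal $\wsize{\Lambda}$. A secondary bookkeeping subtlety is the conversion between the bit-level constraint $|R|=\Lambda-\Delta$ and the label-level pebble counts (each magic pebble encodes one label of $\sec/q'$ bits), but the required strict inequality $\wsize{R}<|C|$ follows from $\Delta\in\omega(\log\sec)$ by the same accounting used in the proof of Theorem~\ref{thm:wraparound}.
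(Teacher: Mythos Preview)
Your proposal is correct and follows essentially the same approach as the paper's proof sketch: you reuse the structure of Theorem~\ref{thm:wraparound} verbatim and isolate the coupon-collector step as the one place that changes, exploiting that $\cH_2^{q'}$ yields $q'$ random indices per output pair so that $qq'=\wsize{\Lambda}\cdot\omega(\log\Lambda)$ suffices for Corollary~\ref{cor:coupon}. The paper's own argument is only a two-line sketch to exactly this effect, so your write-up is in fact more detailed than the original.
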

\begin{proof}[Proof sketch]
    Identical proof structure to the proof of Theorem~\ref{thm:wraparound}, except that when invoking Corollary~\ref{cor:coupon}, due to the design of $\cH'_2$ which calls $\Seek$ more times than $\cH_2$, we obtain the stronger statement that an adversary that successfully outputs $q$ pairs $((x_1,h_\sec(x_1)),\dots,(x_q,h_\sec(x_q)))$ must correctly guess $q$ target labels of the graph.
\end{proof}

\begin{theorem}\label{thm:highest-memory2}
Define a static-memory hash function family $(\cH_1,\cH_2)$ as follows: let $\cH_1$ be the graph function family $\cF^{\sec/q'}_{\sg{2}}$ (Graph Construction~\ref{def:longer-time}), and let $\cH_2$ be $\cH_2^{q'}$ as defined in 
\ificalpshortversion the full version \fi
\ificalpfullversion Algorithm \ref{algo:fancy_H2} \fi
for some $q'\in\omega(\log\Lambda)$.
Then, for any $\lambda\geq 0$ and $\Lambda\in\Theta(\sqrt{n})$  
where $n$ is the number of nodes in the graph,
$(\cH_1,\cH_2)$ is $(\Lambda-\lambda,\Theta(n),q)$-hard,
where $q\in\wsize{\Lambda}$.
\end{theorem}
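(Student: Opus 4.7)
The plan is to follow the same proof skeleton as Theorem~\ref{thm:wraparound2} verbatim, only swapping in Graph Construction~\ref{def:longer-time} for the cylinder and appealing to the sustained-time bound of Corollary~\ref{cor:notcylinder-subset-sustained-time} at the very end in place of Corollary~\ref{cor:cylinder-subset-sustained-time}. So I would proceed by contradiction: assume $\sec$, $\Delta\in\omega(\log\sec)$, a string $R\in\zo^{\Lambda-\Delta}$, a PPT $\cA$, and a set $X=\{x_1,\dots,x_q\}$ with $q=\wsize{\Lambda}$ for which the event $\cE_\rho$ that $\cA$ outputs all $q$ correct input-output pairs \emph{and} fails to have $\Lambda$-sustained memory for $\tau=\Theta(n)$ steps occurs with non-negligible probability.

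Next, exactly as in the proof of Theorem~\ref{thm:wraparound}, I would define a deterministic ``demasking'' algorithm $\cB'$ that on input $(x,y)$ recomputes $\rho_0,\rho_1$ via two $\RO$ queries and returns each $(\iota,y')$ pair that an honest evaluation would have produced in Step~\ref{step:target-label} of $\cH_2^{q'}$. Composing $\cB'$ with $\cA$ yields $\cB$, whose output set $J$, conditioned on $\cE_\rho$, consists of $q$ correct label-index pairs for the sinks of Graph Construction~\ref{def:longer-time}. Because $\cH_2^{q'}$ performs $q'\in\omega(\log\Lambda)$ independent $\Seek$ calls per evaluation (this is the key change versus Theorem~\ref{thm:highest-memory}), the distinct sink indices revealed across all $q=\wsize{\Lambda}$ correct outputs number at least $\Lambda$ with overwhelming probability by Corollary~\ref{cor:coupon}, just as in Theorem~\ref{thm:wraparound2}.

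Now I would invoke the PROM-to-pebbling translation. By Lemma~\ref{lem:epfm-legal}, the ex-post-facto black-magic pebbling $\epfm_\zeta(\cB,\RO,x,\$)$ is legal and uses at most $\chi=\wsize{R}\leq\Lambda-\Delta$ magic pebbles with overwhelming probability; by Lemma~\ref{lem:epfm-space}, $|P_i|\leq\wsize{\sigma_i}+\lambda$ for every step $i$ with overwhelming probability. Combining these with the conditional event $\cE_\rho$, we obtain (with non-negligible probability) a valid magic pebbling of a target subset $C$ of size $|C|\geq\Lambda$ using $\mbound\leq\Lambda-\Delta<|C|$ magic pebbles, such that the number of steps with at least $\Lambda-\lambda$ black-plus-magic pebbles is strictly less than $\tau=\Theta(n)$.

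The main obstacle is book-keeping the parameters so that this contradicts Corollary~\ref{cor:notcylinder-subset-sustained-time}: that corollary asserts $\stimem(G,|C|-1,C)=\Theta(|V|)=\Theta(n)$, i.e.\ any valid magic pebbling of $C$ with fewer than $|C|$ magic pebbles must sustain $\scostm(G,|C|-1,C)$ pebbles (which by Corollary~\ref{cor:notcylindar-incrementally-hard} equals $|T|=\Theta(\sqrt n)$, matching $\Lambda$ up to the additive $\lambda$ slack absorbed by Lemma~\ref{lem:epfm-space}) for $\Theta(n)$ steps. Hence the extracted pebbling's sustained complexity must be $\Theta(n)$, contradicting the bound derived from $\cE_\rho$. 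The routine but finicky part will be tracking the additive $\lambda$ and $\Delta$ slacks so that ``$\Lambda-\lambda$ pebbles present'' in the PROM model still maps to ``$\scostm(G,|C|-1,C)$ pebbles present'' in the pebbling model, which is exactly why the theorem is stated with the $\Lambda-\lambda$ threshold. Everything else is routine recycling of the earlier three proofs.
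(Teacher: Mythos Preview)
Your proposal is correct and matches the paper's approach exactly: the paper's proof sketch says to reuse the structure of Theorem~\ref{thm:wraparound2} (which itself reuses Theorem~\ref{thm:wraparound}, strengthening the coupon-collector step via the extra $q'$ $\Seek$ calls), replacing the final appeal to Corollary~\ref{cor:cylinder-subset-sustained-time} with Corollary~\ref{cor:notcylinder-subset-sustained-time}. Your elaboration of the demasking step, the ex-post-facto pebbling via Lemmas~\ref{lem:epfm-legal} and~\ref{lem:epfm-space}, and the parameter bookkeeping is precisely the content that the paper leaves implicit in its proof sketch.
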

\begin{proof}[Proof sketch]
    Identical proof structure to the proof of Theorem~\ref{thm:wraparound2}, except instead of invoking Corollary~\ref{cor:cylinder-subset-sustained-time} at the end, we derive a contradiction to Corollary~\ref{cor:notcylinder-subset-sustained-time}.
\end{proof}


\iffullversion
\fi


\section{Upper bounds and motivation for $\cc^{\alpha}$}\label{sec:cc-alpha}

Next, we motivate our notion of $\cca$ (defined in Definition~\ref{def:cca}). We show that both the honest party and the adversary may choose to use different pebbling strategies given different values of $\alpha$ even when $\alpha$ is constant. Furthermore, we show that both of our pebbling constructions of $\cH_1$ (given in Section~\ref{sec:constructions}) have the desirable feature that the honest party \emph{and} the adversary use the same strategy regardless of the size of $\alpha$.

\ificalpfullversion
\subsection{$\cc$ and $\cca$ consider cumulative cost of \emph{different strategies}}

	We present a graph family with in-degree-$2$ where the strategy that an adversary chooses to pebble an instance $G$ in the graph family differs depending on the $\alpha$ parameter of the $\cc^{\alpha}$ complexity measure. We show that in our case, for certain $\alpha$, we would choose to use constant space, whereas for other $\alpha$, using superconstant space is the preferred option. We define our graph family as follows:
\fi

\begin{gconstruction}\label{def:cc-alpha-diff}
We define a graph family $\sgf$ with bounded degree $2$ and arbitrary $n \in \mathbb{N}$ nodes such that the time-space tradeoff of a graph with $n$ nodes in the family is $T(S) \geq (\frac{n^c}{n^a}) (n^a - (S - 2)) (n^b) + n$ (where $S$ is the number of pebbles used to pebble the graph) where $0 \leq a, b, c < 1$, $b + c > a + 1$, $a < b, c$, and $n^c \approx n - n^{a+b}$. 

\begin{itemize}
\item Given a graph $\sg{2} = (V, E)$ with $n$ vertices, partition the set of vertices, $V$, into $2$ sets, $A$ and $C$ where $|A| = n^{a+b}$ and $|C| = n^c$ (since we know $n^c \approx n-n^{a+b}$, $n^c + n^{a+b} \approx n$). 
\item We arbitrarily order all vertices in $C$ in some order, $[v_i, \dots, v_n]$ and create edges $(v_j, v_{j+1}) \in E$ for all $j \in [i, n-1]$.
\item We arbitrarily order all vertices in $A$ in some order, $[v_1, \dots, v_{i-1}]$ and create edges $(v_{j}, v_{j+1}) \in E$ for all $j \in [1, i-2]$. 
\item We create edge $(v_{i-1}, v_{i})$.
\item Create edges $(v_{k}, v_l) \in E$ ($v_k \in A$ and $v_l \in C$) where $k \Mod n^b = 0$ and $l = n^{a + b} + \left(\frac{k}{n^b}\right) + (q -1)n^a$ for all integers $q \in \Big[1, \frac{n^c}{n^a}\Big]$.
\end{itemize}
\end{gconstruction}

\ificalpfullversion
	Fig.~\ref{fig:cc-alpha-diff} illustrates Graph Construction~\ref{def:cc-alpha-diff}.
\fi

\ificalpfullversion
	\begin{figure}[h]
	\centering
	\includegraphics[width=0.5\textwidth]{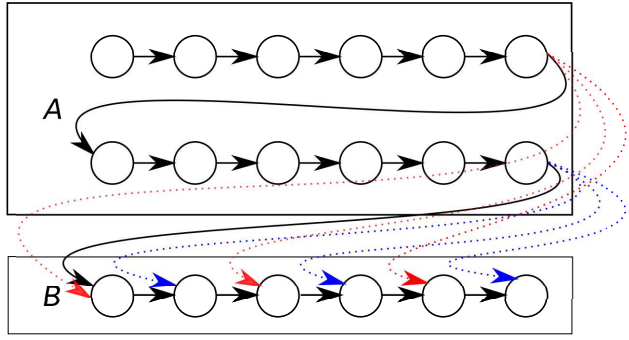}
	\caption{Graph Construction~\ref{def:cc-alpha-diff} with $n = 16$, $a = \frac{1}{4}$, $b = \frac{2}{3}$, $c = \frac{2}{3}$. For clarity, we depict $n^a = 2$, $n^b \approx 6$ and $n^c \approx 6$.}\label{fig:cc-alpha-diff}
	\end{figure}
\fi

We show that there are at least two pebbling strategies, $\s_1$ and $\s_2$, where an adversary would differ in his preferred strategy depending on $\alpha$ when using the $\cca$ complexity measure when $\alpha > \alpha'$ where $\alpha'$ is calculated with respect to the parameters of the graph family constructed from Graph Construction~\ref{def:cc-alpha-diff}.

\ificalpshortversion
	\begin{wrapfigure}{r}{0.4\textwidth}
	  	\centering
		\includegraphics[width=0.5\textwidth, angle=-90]{different-alpha.ps}
		\caption{Graph Construction~\ref{def:cc-alpha-diff} with $n = 16$, $a = \frac{1}{4}$, $b = \frac{2}{3}$, $c = \frac{2}{3}$. For clarity, we depict $n^a = 2$, $n^b \approx 6$ and $n^c \approx 6$.}\label{fig:cc-alpha-diff}
	\end{wrapfigure}
\fi

\ificalpfullversion
\begin{lemma}\label{lem:cc-tradeoff}
Given a pebbling strategy $\s_1$ that uses constant space $S_1$, $\ti(\s_1) = \Theta(n^{b + c})$ where $\sg{2} \in \sgf$ in defined by Graph Construction~\ref{def:cc-alpha-diff}. 
\end{lemma}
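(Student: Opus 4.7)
The plan is to show matching upper and lower bounds of $\Theta(n^{b+c})$ on the time used by a constant-space pebbling strategy. I would split the argument accordingly.

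For the upper bound, I would exhibit an explicit $S_1$-space strategy with $S_1 = 3$ (a small constant). The strategy processes the $C$-path in $n^c/n^a$ consecutive \emph{blocks} of $n^a$ vertices. Concretely, I maintain the following invariant after pebbling $v_l \in C$: one pebble sits on $v_l$ and a second pebble sits on the special $A$-vertex $v_{jn^b}$ that was just used to pebble $v_l$. To advance to $v_{l+1}$, I slide the $A$-pebble forward one step at a time from $v_{jn^b}$ to $v_{(j+1)n^b}$ (using exactly $n^b$ slide moves in $A$, no extra pebble required), then place a third pebble on $v_{l+1}$ using the pebbles on $v_l$ and $v_{(j+1)n^b}$ as predecessors, and finally remove the pebble on $v_l$, restoring the invariant. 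Between blocks, when the $A$-pebble reaches $v_{n^{a+b}}$ and I need $v_{n^b}$ again, I simply remove the $A$-pebble, place a fresh pebble on the source $v_1$, and slide it to $v_{n^b}$ in $n^b - 1$ steps. Counting moves: each of the $n^a$ placements within a block contributes $O(n^b)$ slide-moves plus $O(1)$ placement-moves, giving $O(n^{a+b})$ per block, and summing over the $n^c/n^a$ blocks yields $O(n^{b+c})$ total. Adding $O(n)$ for the initial pebbling of $A$ up to $v_{n^{a+b}}$ (needed once to start the $C$-traversal via the edge $(v_{i-1},v_i)$) gives $O(n^{b+c} + n) = O(n^{b+c})$.

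For the lower bound, I would directly invoke the time-space tradeoff that is part of Graph Construction~\ref{def:cc-alpha-diff}: $T(S) \geq (n^c/n^a)(n^a - (S-2))(n^b) + n$. With $S = S_1 = O(1)$, we have $n^a - (S_1 - 2) = \Theta(n^a)$ since $n^a \to \infty$, hence
\[
\ti(\s_1) \;\geq\; \tfrac{n^c}{n^a} \cdot \Theta(n^a) \cdot n^b + n \;=\; \Omega(n^{b+c}).
\]
Combining both bounds gives $\ti(\s_1) = \Theta(n^{b+c})$, as required.

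The main obstacle is the upper bound: I need to be careful that the invariant above is preserved across block boundaries without ever exceeding $3$ simultaneous pebbles, and that the sliding-rule applications (rule~\ref{itm:peb-slide}) are each legal, i.e., the predecessor of each destination vertex is pebbled in the preceding configuration. A subtler issue is checking that $v_{n^{a+b}}$ (the last vertex of $A$, needed to activate the $C$-path through the edge $(v_{i-1}, v_i)$) is already pebbled at the moment we first enter $C$; this is handled by performing one initial traversal of $A$ that costs $O(n^{a+b}) = O(n)$ moves, absorbed into the additive $n$ in the bound. The lower bound side is immediate from the constructed tradeoff and requires no additional pebbling argument within this lemma.
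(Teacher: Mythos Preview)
Your proposal is correct and arrives at the same conclusion as the paper, but the organization differs slightly. The paper gives a single informal argument: with only a constant $S_1$ pebbles, at most $S_1-1$ of the $n^a$ special $A$-vertices $v_{jn^b}$ can be cached at any time, so for $\Theta(n^c)$ of the $C$-vertices the needed $A$-predecessor must be repebbled from scratch at cost $n^b$ each, yielding the formula $(n^c/n^a)(n^a-S_1)n^b+n=\Theta(n^{b+c})$, which simultaneously serves as both the lower bound and (implicitly) the upper bound.

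You instead split the two directions cleanly. Your upper bound is more explicit than the paper's, exhibiting a concrete $3$-pebble sliding strategy; this is a genuine improvement in clarity. For the lower bound, however, you invoke the tradeoff $T(S)\geq(n^c/n^a)(n^a-(S-2))n^b+n$ as if it were an established property of Graph Construction~\ref{def:cc-alpha-diff}. Be aware that in the paper this tradeoff is \emph{asserted} in the construction statement but not proven separately; the present lemma is precisely where the paper justifies it (for constant $S$). So your lower-bound step is mildly circular unless you regard the tradeoff as part of the construction's defining specification. To make your argument self-contained, you would want to include the one-line pigeonhole reasoning the paper gives: with $O(1)$ pebbles you cannot cover $\Omega(n^a)$ special $A$-vertices, hence $\Omega(n^c)$ vertices of $C$ each force an $\Omega(n^b)$ repebbling of their $A$-predecessor.
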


\begin{proof}
Suppose that a constant $S_1$ pebbles can be on the graph at any particular time, then at most $S_1 - 1$ of the vertices in $A \subseteq V$ can be pebbled. It does not help to pebble the vertices in $C$ since all vertices in $C$ needs to be pebbled only once regardless of the pebbling strategy used. Since only the vertices $v_j \in A$ where $j \mod n^b = n^b - 1$ are connected to vertices in $C$, using the given $S_1$, the optimal placements are on vertices $v_j$ in order to minimize pebbling time since any extra space needs to be used to pebble $C$ and pebbling anywhere else results in greater pebbling time since the pebble needs to be moved to vertex $v_j$ by the pigeonhole principle. Given constant $S_1$ pebbles, there exist $v_j$ vertices that do not contain pebbles. Thus, each time one reaches a vertex in $C$ with predecessor $v_j \in A$ without a pebble, at least $n^b$ time must be spent to pebble it. Therefore, given $S_1$ pebbles, the total amount of time necessary to pebble $\sg{2}$ is $(\frac{n^c}{n^a})(n^a - S_1)(n^b) + n = \Theta(n^{b + c})$. 
\end{proof}

\begin{corollary}\label{cor:cc-tradeoff}
Given a pebbling strategy, $\s_1$, that uses constant space $S_1$, $\pcost(\s_1) = \Theta(n^{b + c})$ where $\sg{2} \in \sgf$ is constructed by Def.~\ref{def:cc-alpha-diff}. 
\end{corollary}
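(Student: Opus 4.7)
The plan is to deduce this directly from Lemma~\ref{lem:cc-tradeoff}, which already gives $\ti(\s_1) = \Theta(n^{b+c})$, by bounding the per-step contribution $|P_i|^\alpha$ both above and below by constants. The key observation is that $\pcost$ as defined in Definition~\ref{def:pcost} is simply the sum $\sum_{P_i \in \s_1} |P_i|^\alpha$, so constant-sized configurations turn the $\pcost$ computation into a counting problem over time steps.

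First I would establish the upper bound. Since $\s_1$ uses at most $S_1 = O(1)$ pebbles at every time step, we have $|P_i| \le S_1$ and hence $|P_i|^\alpha \le S_1^\alpha$, which is a constant depending only on $S_1$ and $\alpha$. Summing over all time steps gives
\[
\pcost(\s_1) = \sum_{P_i \in \s_1} |P_i|^\alpha \le S_1^\alpha \cdot \ti(\s_1) = O(n^{b+c}),
\]
where the final equality invokes Lemma~\ref{lem:cc-tradeoff}.

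For the lower bound, I would argue that during essentially all of the $\Theta(n^{b+c})$ time steps required by Lemma~\ref{lem:cc-tradeoff}, at least one pebble must be present on the graph in order for $\s_1$ to be making progress toward pebbling the targets (the only time steps with $|P_i| = 0$ are the trivial initial/final configurations, a number that is $O(1)$). For each such step, $|P_i|^\alpha \ge 1$, so
\[
\pcost(\s_1) \ge \ti(\s_1) - O(1) = \Omega(n^{b+c}).
\]
Combining the two bounds yields $\pcost(\s_1) = \Theta(n^{b+c})$ as desired.

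There is no real obstacle here: the proof is a direct consequence of the constancy of $S_1$ together with the time bound already proved. The only thing worth being careful about is confirming that the hidden constants absorbed into $\Theta(\cdot)$ may depend on $\alpha$ and $S_1$ (since $S_1^\alpha$ appears as a multiplicative factor), which is consistent with the asymptotic notation in $n$ used throughout the section.
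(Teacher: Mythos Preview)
Your proposal is correct and matches the paper's approach exactly: the paper's proof is the single sentence ``This follows immediately from Lemma~\ref{lem:cc-tradeoff} since constant space is used throughout the pebbling,'' and you have simply unpacked what ``immediately'' means by bounding each term $|P_i|^\alpha$ above and below by constants and then summing over the $\Theta(n^{b+c})$ time steps.
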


\begin{proof}
This follows immediately from Lemma~\ref{lem:cc-tradeoff} since constant space is used throughout the pebbling. 
\end{proof}

\begin{lemma}\label{lem:min-time}
Given a pebbling strategy $\s_2$ that uses space $S_2 = n^{a} + 1$, $\ti(\s_2) = \Theta(n)$ where $\sg{2} \in \sgf$ is constructed by Graph Construction~\ref{def:cc-alpha-diff}.
\end{lemma}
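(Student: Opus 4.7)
}

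The plan is to establish matching upper and lower bounds on $\ti(\s_2)$. For the upper bound, I will exhibit an explicit strategy $\s_2$ using exactly $n^a+1$ pebbles that finishes in time $n^{a+b}+n^c$, and then appeal to the hypothesis $n^c \approx n - n^{a+b}$ to conclude this is $\Theta(n)$. For the lower bound, I will observe that since the target $v_n$ is reachable from every vertex, any valid pebbling strategy must place a pebble on each of the $n$ vertices at least once, which forces at least $n$ sequential moves.

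For the upper bound, the strategy has two phases. In \emph{Phase 1} I walk a single pebble along the path $v_1, v_2, \ldots, v_{n^{a+b}}$ through $A$, but whenever the sliding pebble reaches a vertex $v_k$ with $k \equiv 0 \pmod{n^b}$, I leave a pebble there instead of sliding it forward (and use a fresh pebble to continue). Since there are exactly $|A|/n^b = n^a$ such ``special'' vertices, at the end of Phase 1 I have deposited $n^a$ pebbles on the special vertices of $A$, plus one additional sliding pebble currently on $v_{n^{a+b}-1}$ or $v_{n^{a+b}}$; so at most $n^a+1$ pebbles are simultaneously present. Phase 1 takes $n^{a+b}$ time steps. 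In \emph{Phase 2} I slide a single pebble along the path $v_{n^{a+b}}, v_{n^{a+b}+1}, \ldots, v_n$ through $C$. By Graph Construction~\ref{def:cc-alpha-diff}, every vertex $v_l \in C$ that has an incoming edge from $A$ receives it from some special vertex $v_{j\cdot n^b}$, and these special vertices are already pebbled from Phase 1. Hence each slide in Phase 2 is legal. Phase 2 takes $n^c$ time steps, and at most $n^a+1$ pebbles are on the graph throughout. The total time is $n^{a+b}+n^c$, which by the construction's parameter assumption $n^c \approx n - n^{a+b}$ is $\Theta(n)$.

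For the lower bound, every vertex in the graph is an ancestor of $v_n$ (the path through $A$ concatenated with the path through $C$ visits every vertex), so pebbling $v_n$ forces each of the $n$ vertices to be pebbled at some point during the strategy. Since a sequential strategy places at most one new pebble per step, $\ti(\s_2) \geq n$. Combining the two bounds gives $\ti(\s_2) = \Theta(n)$.

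I do not expect a main obstacle here: the argument is entirely constructive and the only thing to verify carefully is that Phase 2 slides are legal, which follows directly from the edge structure specified in Graph Construction~\ref{def:cc-alpha-diff} (each $v_l \in C$ with an in-edge from $A$ has its unique $A$-predecessor among the special vertices $v_{j\cdot n^b}$, all of which are pebbled after Phase 1). The bookkeeping to confirm that the peak pebble count in each phase is $n^a+1$ is the only quantitatively delicate point, and it follows immediately from the sliding rule of the pebble game.
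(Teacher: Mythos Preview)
Your proposal is correct and follows essentially the same approach as the paper's proof: both give the trivial $\Omega(n)$ lower bound from the fact that the graph contains a Hamiltonian path, and both exhibit the same explicit $O(n)$-time strategy that traverses $A$ in topological order while depositing pebbles on the $n^a$ ``special'' vertices $v_k$ with $k\equiv 0\pmod{n^b}$, and then traverses $C$ using those stored pebbles to satisfy the cross-edges. Your write-up is somewhat more explicit about the sliding mechanics and the pebble-count bookkeeping than the paper's, but the argument is the same.
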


\begin{proof}
It is trivial to show that pebbling a line takes $\Omega(n)$ time since all nodes have to be pebbled at least once. We now show a strategy using $n^{a} + 1$ pebbles that uses $O(n)$ time. 

We start with the vertices in $A$ and pebble them in topological order, keeping pebbles on all $v_j \in A$ where $j \Mod n^b = n^b - 1$. There exists exactly $n^a$ vertices in $A$ by definition that are predecessors of vertices in $C$. Therefore, as we pebble the vertices in $A$ in topological order, we leave a pebble on each vertex $v_j$. When we pebble $C$ all predecessors of vertices in $C$ are either in $C$ or are pebbled in $A$. Therefore, we only need to pebble all vertices in $A$ and $C$ once, resulting in $\ti(\s_2) = \Theta(n)$.
\end{proof}

The following corollary is directly proven by the proof of Lemma~\ref{lem:cc-tradeoff}.

\begin{corollary}\label{cor:min-time-max-space}
Given a pebbling strategy, $\s_2$, that uses space $S_2 = n^{a} + 1$ and $\ti(\s_2) = \Theta(n)$, $\pcost(\s_2) = \Theta(n^{\alpha a + 1})$ where $\sg{2} \in \sgf$ is constructed by Graph Construction~\ref{def:cc-alpha-diff}.
\end{corollary}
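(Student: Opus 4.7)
The plan is to unpack the definition $\pcost(\s_2)=\sum_i |P_i|^\alpha$ and evaluate it directly on the strategy $\s_2$ exhibited in the proof of Lemma~\ref{lem:min-time}. Upper and lower bounds will be obtained separately, though both will follow from a fairly coarse accounting of the pebble counts in the two natural phases of the strategy.

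For the upper bound, I would simply observe that by hypothesis every configuration $P_i$ in $\s_2$ satisfies $|P_i|\leq S_2=n^a+1$, so each term in the sum is at most $(n^a+1)^\alpha=\Theta(n^{\alpha a})$. Combined with $\ti(\s_2)=\Theta(n)$ from Lemma~\ref{lem:min-time}, this immediately gives $\pcost(\s_2)=O(n^{\alpha a+1})$.

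For the matching lower bound I would partition the execution of $\s_2$ into its two natural phases, as described in the proof of Lemma~\ref{lem:min-time}: (i) a topological sweep through $A$, and (ii) a sweep through $C$ that keeps pebbles on all $n^a$ vertices $v_j\in A$ with $j\bmod n^b = n^b-1$ and uses one additional pebble to traverse $C$. Phase (ii) lasts $|C|=n^c$ steps, and since the parameter constraints $b+c>a+1$ and $c<1$ force $a+b<1$, we have $n^c\approx n - n^{a+b}=\Theta(n)$; during every one of these steps the configuration contains exactly $n^a+1$ pebbles, contributing $\Theta(n)\cdot (n^a+1)^\alpha=\Theta(n^{\alpha a+1})$ to $\pcost(\s_2)$. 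Phase (i) contributes at most $n^{a+b}\cdot (n^a+1)^\alpha=\Theta(n^{\alpha a+a+b})=o(n^{\alpha a+1})$ since $a+b<1$, so phase (ii) dominates and $\pcost(\s_2)=\Theta(n^{\alpha a+1})$.

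The only real subtlety is verifying that $|C|=n^c=\Theta(n)$, which is where the algebraic constraints on $a,b,c$ from Graph Construction~\ref{def:cc-alpha-diff} are actually used; everything else is a direct substitution into the definition of $\pcost$. I do not expect any step to be a substantive obstacle.
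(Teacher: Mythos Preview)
Your proposal is correct and follows the same direct-computation approach that the paper intends (the paper merely states that the corollary ``is directly proven by the proof of Lemma~\ref{lem:cc-tradeoff}'' without giving further detail). One small correction: your claim that ``$b+c>a+1$ and $c<1$ force $a+b<1$'' is not a valid deduction (e.g., $a=0.3,\,b=0.8,\,c=0.6$ satisfies both premises but has $a+b>1$); the inequality $a+b<1$ instead follows from the construction's explicit constraint $n^c\approx n-n^{a+b}$, which you also cite and which is all you actually need.
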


\begin{lemma}\label{lem:cc-min-alpha-1}
When $\alpha = 1$, then $\cca(\sg{2}) = \Theta(n^{a + 1})$.  
\end{lemma}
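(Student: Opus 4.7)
The plan is to establish matching upper and lower bounds of $\Theta(n^{a+1})$ on $\cca(\sg{2})$ for $\alpha = 1$. The upper bound is immediate from Corollary~\ref{cor:min-time-max-space}: with $\alpha = 1$, the strategy $\s_2$ achieves $\pcost(\sg{2}, \s_2) = \Theta(n^{a+1})$, since it holds roughly $n^a$ pebbles on the shortcut-source vertices of $A$ for the $\Theta(n^c) = \Theta(n)$ steps needed to pebble $C$ (using $n^c \approx n$ from the construction). Hence $\cca(\sg{2}) \le O(n^{a+1})$.

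For the lower bound, I would start from an arbitrary valid strategy $\s$ with maximum pebble count $S = \max_i |P_i|$. The graph contains a directed path on all $n$ vertices, so $T := |\s| \geq n$ and trivially $\sum_i |P_i| \geq T \geq n$. Combined with the stated time-space tradeoff $T(S) \geq n^{b+c-a}(n^a - S + 2) + n$ valid for $S \leq n^a + 2$, this yields $\sum_i |P_i| \geq \Omega\bigl(n^{b+c-a}(n^a - S + 2)\bigr)$. In the regime where $n^a - S + 2 \geq n^{2a+1-b-c}$, this already gives $\sum_i |P_i| \geq \Omega(n^{a+1})$, which suffices, and the assumption $b + c > a + 1$ ensures that the exponent $2a + 1 - b - c$ is strictly less than $a$, so the regime is nonempty.

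The hard part will be the complementary regime where $S$ is within $n^{2a+1-b-c}$ of $n^a$, in which the tradeoff alone only yields $\sum_i |P_i| \geq n$, too weak by a factor of $n^a$. To close this gap I would argue that any strategy with $S = \Theta(n^a)$ must in fact \emph{sustain} $\Omega(n^a)$ pebbles on the graph during essentially the entire pebbling of $C$, which contributes $\Omega(n^c) = \Omega(n)$ time steps: dropping any of the $n^a$ shortcut-source vertices in $A$ and later recomputing it costs $\Omega(n^b)$ extra time along the $A$-path, and the assumptions $a < b$ and $b + c > a + 1$ ensure that any strategy deviating significantly from this sustained profile pays back more in time than it saves in average space. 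A charging argument that assigns to each of the $n^c$ shortcut-recipient vertices of $C$ the responsible shortcut-source in $A$ used at the moment it was pebbled then produces a sustained pebble-time of $\Omega(n^a \cdot n^c) = \Omega(n^{a+1})$, closing the remaining case and matching the upper bound.
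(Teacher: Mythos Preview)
Your upper bound via $\s_2$ and your regime split for the lower bound are sound, and in fact more careful than the paper's own argument: the paper simply writes $\pcost(\s) = s \cdot T(s)$ for a strategy using $s$ pebbles and minimizes this expression over $s$, tacitly assuming that the space $s$ is sustained throughout --- which is exactly what needs proof in the hard regime you identify.

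The gap is in your final charging step. Assigning to each shortcut-target $v_l \in C$ the single shortcut source $v_k \in A$ present at the moment $v_l$ is pebbled accounts for \emph{one} pebble at \emph{one} time step per $C$-vertex --- a total of $O(n^c)$ pebble-steps, not the $\Omega(n^a \cdot n^c)$ you claim. To reach $\Omega(n^{a+1})$ you need $\Omega(n^a)$ pebbles \emph{simultaneously} on the graph for $\Omega(n)$ steps, and the one-to-one charging does not establish that. Your preceding sentence has the right intuition: any drop below $\Theta(n^a)$ stored sources forces regeneration along the $A$-path, and since sources are spaced $n^b$ apart with $b+c>a+1$, the time penalty exceeds the space saved. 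But this must be turned into an actual amortized argument --- for instance, partition the execution by the first-pebbling times of each block of $n^a$ consecutive $C$-vertices, and for each interval lower-bound its contribution to $\sum_i |P_i|$ either by the number of sources held at its start or by the regeneration time incurred within it --- rather than the per-vertex charging you describe.
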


\begin{proof}
Suppose in the case when $\alpha = 1$, we use a pebbling strategy, $\s$, that uses nonconstant space $s = o(n^a)$. Then, for each pebble, we pebble one of the vertices $v_j \in A$ where $j \Mod n^b = n^b - 1$. The resulting $\pcost(\s) = s(\frac{n^c}{n^a})(n^a - s)(n^b) + n$ which is minimized when $s = n^{a} + 1$ given $b + c > a + 1$ by definition of our graph family.
\end{proof}

\begin{lemma}\label{lem:cc-alpha-exists}
For all $a, b, c$, there exists an $\alpha'$ such that for all constant $\alpha > \alpha'$, $\cca(\sg{2}) = \Theta(n^{b + c})$.
\end{lemma}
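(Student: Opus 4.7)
The plan is to choose $\alpha' \defeq (b+c-1)/a$ (which is positive by the hypothesis $b+c>a+1$) and to show that for any constant $\alpha>\alpha'$, the constant-space strategy $\s_1$ is asymptotically optimal, so $\cca(\sg{2})=\Theta(n^{b+c})$. The upper bound is immediate: by Corollary~\ref{cor:cc-tradeoff}, $\pcost(\s_1)=\Theta(n^{b+c})$, hence $\cca(\sg{2})\le O(n^{b+c})$. The bulk of the work is establishing the matching lower bound $\cca(\sg{2})\ge\Omega(n^{b+c})$ once $\alpha>\alpha'$.

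For the lower bound, I would parametrize an arbitrary valid pebbling strategy $\s$ by a quantity $k=k(\s)\in[0,n^a]$ defined as the maximum number of ``shortcut'' vertices in $A$ (i.e., those $v_j\in A$ with $j\Mod n^b=0$) that are simultaneously pebbled at some point during the traversal of $C$. The time-space tradeoff given in Graph Construction~\ref{def:cc-alpha-diff} (essentially the argument used in the proof of Lemma~\ref{lem:cc-tradeoff}) yields $\ti(\s)\ge\Omega\bigl(n^{b+c-a}(n^a-k)+n\bigr)$, since any $C$-vertex whose unique $A$-predecessor is not among the $k$ saved shortcuts forces re-pebbling a length-$\Theta(n^b)$ prefix of $A$. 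Moreover, during essentially the full traversal of $C$ those $k$ pebbles must remain on the graph, so the space used during a $\Theta(\ti(\s))$-length window is at least $k$. Thus
\[
\pcost(\s)\;\ge\;k^{\alpha}\cdot\Omega\bigl(n^{b+c-a}(n^a-k)+n\bigr)\;+\;\Omega\bigl(\ti(\s)\bigr),
\]
where the additive $\ti(\s)$ accounts for strategies with $k=0$ via the trivial bound $\pcost(\s)\ge\ti(\s)$.

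Minimizing the right-hand side over $k\in[0,n^a]$ is now a straightforward calculus exercise that I would split into three regimes: (i) $k=O(1)$, where the additive term dominates and gives $\pcost(\s)\ge\Omega(n^{b+c})$; (ii) $k=n^{\beta}$ with $0<\beta<a$, where $n^a-k=\Theta(n^a)$ and the bound becomes $\Omega(n^{\alpha\beta+b+c})\ge\Omega(n^{b+c})$; (iii) $k=\Theta(n^a)$, where $n^a-k$ may vanish but the bound is $\Omega(n^{\alpha a+1})$. Since $\alpha>\alpha'=(b+c-1)/a$ is equivalent to $\alpha a+1>b+c$, regime (iii) also yields $\pcost(\s)\ge\Omega(n^{b+c})$. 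Taking the minimum over regimes gives the lower bound uniformly in $\s$, completing the proof.

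The main obstacle I anticipate is the careful accounting in regime (iii): making rigorous the claim that $k=\Theta(n^a)$ shortcut pebbles must actually be held for $\Omega(n)$ time (and not merely ``touched'') in any strategy attaining the minimal time $\Theta(n)$. This follows from the construction's design, where the shortcut vertex required by the $l$-th vertex of $C$ is determined by $l\Mod n^a$ and cycles through all $n^a$ shortcuts, but one must rule out clever re-pebbling schemes that release and reacquire shortcut pebbles. A standard exchange argument---showing that any release-and-reacquire of a shortcut vertex either adds $\Omega(n^b)$ time for each such release (thereby falling back into regime (ii)) or else leaves the total held-pebble count in the window unchanged---should suffice, mirroring the reasoning already used in Lemmas~\ref{lem:cc-tradeoff} and~\ref{lem:min-time}.
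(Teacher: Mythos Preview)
Your approach is essentially the same as the paper's: parametrize strategies by the number of shortcut vertices kept pebbled and show that for large enough $\alpha$ any non-constant choice incurs cost $\omega(n^{b+c})$, leaving the constant-space strategy $\s_1$ as the optimum. The paper's proof is much terser---it simply writes down the cost expression $s^{\alpha}(n^c/n^a)(n^a-s)n^b+n$, observes it is $\Theta(\min(s^{\alpha}n^{b+c},n^{\alpha a+1}))$, and sets $\alpha'=(b+c)/a$---whereas you split into three regimes and obtain the tighter threshold $\alpha'=(b+c-1)/a$. Both thresholds suffice for the existential statement; yours is sharper because you track the regime $k=\Theta(n^a)$ separately and use the time bound $\Omega(n)$ rather than folding it into the generic formula. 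The obstacle you flag in regime (iii)---that the $\Theta(n^a)$ shortcut pebbles must genuinely be held for $\Omega(n)$ steps rather than repeatedly released and reacquired---is real, and the paper's proof glosses over the analogous point (it never argues that its cost formula lower-bounds \emph{every} strategy with peak space $s$, only a canonical one). Your proposed exchange argument is the right way to close that gap.
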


\begin{proof}
Given a pebbling strategy, $\s$, that uses space $s = \omega(1)$, the pebbling cost is then $\pcost(\s) = s^{\alpha}(\frac{n^c}{n^a})(n^a - s)(n^b) + n$. When $\alpha > 1$, $\pcost(\s) = \Theta(\min(s^{\alpha}n^{b+c}, n^{\alpha a + 1})) = \omega(n^{b + c})$ when $\alpha > \frac{b + c}{a}$ and $s = \omega(1)$. Therefore, only for $s = O(1)$, does the pebbling cost become $\pcost(\s) = \Theta(n^{b + c})$ when $\alpha' = \frac{b + c}{a}$. Since $a, b, c$ are constants, for all $\alpha > \alpha'$, $\cca(G) = \Theta\left(n^{b + c}\right)$.
\end{proof}

From the above two lemmas, we immediately get the following theorem regarding the $\cca$ of the constructions given different constant values of $\alpha$.
\fi

\begin{theorem}\label{thm:different-strategies}
Given a graph $G = (V, E)$ as constructed by Graph Construction~\ref{def:cc-alpha-diff}, when $\alpha = 1$, $\cca(\sg{2}) = \Theta(n^{a + 1})$ but when $\alpha > \alpha'$ for some constant $\alpha'$, $\cca(\sg{2}) = \Theta(n^{b+c})$.
\end{theorem}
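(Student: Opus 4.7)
The plan is to derive Theorem~\ref{thm:different-strategies} as an immediate consequence of the two lemmas that have already been set up, namely Lemma~\ref{lem:cc-min-alpha-1} (covering the $\alpha=1$ regime) and Lemma~\ref{lem:cc-alpha-exists} (covering the $\alpha > \alpha'$ regime). The theorem statement is essentially the conjunction of these two lemmas, so the ``proof'' is a one-line citation of each. The real work has already been done in establishing those lemmas.

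Before writing the proof, I would re-examine why these two lemmas cover all plausible strategies. Graph Construction~\ref{def:cc-alpha-diff} forces a sharp time-space tradeoff governed by the $n^a$ ``junction'' vertices $v_j\in A$ with $j\equiv n^b-1\pmod{n^b}$, which are the only predecessors in $A$ of any vertex in $C$. Any valid strategy $\s$ using space $s\in[O(1), n^a+1]$ incurs pebbling time at least $T(s) = (n^c/n^a)(n^a-s)n^b + n$ by Lemma~\ref{lem:cc-tradeoff}, giving $\pcost(\s) \ge s^{\alpha}\cdot T(s)$. The strategy $\s_1$ at $s=\Theta(1)$ gives $\pcost(\s_1)=\Theta(n^{b+c})$ independently of $\alpha$ (Corollary~\ref{cor:cc-tradeoff}), and the strategy $\s_2$ at $s=n^a+1$ gives $\pcost(\s_2)=\Theta(n^{\alpha a+1})$ (Corollary~\ref{cor:min-time-max-space}). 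The crossover between these costs happens precisely at $\alpha a + 1 = b+c$, i.e., $\alpha'=(b+c-1)/a$, which is strictly larger than $1$ by the construction's parameter constraint $b+c>a+1$.

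For the $\alpha=1$ half, I would cite Lemma~\ref{lem:cc-min-alpha-1}: the function $s\mapsto s(n^c/n^a)(n^a-s)n^b + sn$ is minimized over $s\in[1,n^a+1]$ at $s=n^a+1$ (because $b+c>a+1$ means increasing $s$ buys more savings in the product term than it costs in the $s^1$ factor), giving the claimed $\Theta(n^{a+1})$ bound. For the $\alpha>\alpha'$ half, I would cite Lemma~\ref{lem:cc-alpha-exists}: once $\alpha$ exceeds $(b+c)/a$, the penalty $s^{\alpha}$ for any superconstant-space strategy swamps the time savings, so $\s_1$ becomes optimal with $\pcost(\s_1)=\Theta(n^{b+c})$. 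Combining these two statements yields the theorem directly.

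The main obstacle, if one were proving the lemmas from scratch, would be confirming that no asymmetric or non-canonical strategy beats both $\s_1$ and $\s_2$ for a given $\alpha$—for instance, a strategy that keeps only a subset of the junction pebbles, or that uses the pebble-sliding rule along the $C$-path to shave constants. However, the tradeoff $T(S) \ge (n^c/n^a)(n^a - (S-2))n^b + n$ baked into Graph Construction~\ref{def:cc-alpha-diff} already upper-bounds what any strategy can achieve, so the optimization reduces to a single-variable minimization of $s^\alpha T(s)$ in $s$, and this is convex-like in $\log s$ so the minimum lies at one of the two endpoints $s=\Theta(1)$ or $s=\Theta(n^a)$. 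Thus the proof of the theorem itself is just: by Lemma~\ref{lem:cc-min-alpha-1}, $\cca(G)=\Theta(n^{a+1})$ when $\alpha=1$; by Lemma~\ref{lem:cc-alpha-exists}, for all $\alpha>\alpha'=(b+c)/a$, $\cca(G)=\Theta(n^{b+c})$; the two regimes differ because $\alpha'>1$, completing the argument.
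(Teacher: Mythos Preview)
Your proposal is correct and takes essentially the same approach as the paper: the theorem is stated immediately after Lemmas~\ref{lem:cc-min-alpha-1} and~\ref{lem:cc-alpha-exists} and is obtained by directly combining them. Your extended discussion of the tradeoff and the crossover point is more detailed than what the paper provides, but the proof itself is the same one-line citation of the two lemmas.
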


As an immediate result of the above, there exists a point for constants $a, b, c$ that the adversary chooses a different strategy to pebble a graph for different constant values of $\alpha$ (we can pick values of $a, b,c$ such that $\alpha'$ can be reduced even down to $\alpha' \geq 3$). 
\iffullversion
	thus, it is important to consider specific values of $\alpha$ when designing graph constructions that are meant for proofs of space.
\fi

\ificalpfullversion
	\subsection{Upper bounds for $\cca$}

\fi
\ificalpshortversion
	\subh{Upper bounds for $\cca$}
\fi
We prove a tighter upper bound for $\cca$ when $\alpha$ is a constant than the trivial upper bound of $n^{\alpha + 1}$. We first note that $n^{\alpha + 1}$ is a trivial upper bound on the $\cca(\sg{2})$ of a graph, $\sg{2}$, since at any timestep $\p(\sg{2}, T) \leq n$ and the algorithm runs for $\ti(\sg{2}, |T|) \leq n$ given $n$ space is used throughout. Therefore, $\cca(\sg{2}) \leq n^{\alpha + 1}$ for all graphs $\sg{2}$. We now prove a tighter upper bound using the general pebbling algorithm described in~\cite{AB16} as $\mathsf{GenPeb}(G, S, g, d)$.

\ificalpfullversion
We formulate a simplified version of the $\mathsf{GenPeb}(G, S, g, d)$ procedure which we call the $\mathsf{GenPeb}(G)$ procedure. At a high-level the $\mathsf{GenPeb}(G)$ algorithm proceeds as follows (see~\cite{AB16} for more detail).

\begin{definition}[$\mathsf{GenPeb}(G)$:]

~

\begin{enumerate}
\item There exists a subset $S$ of $|S| \leq \frac{2\alpha n \log{\log{n}}}{\log{n}}$ vertices (for large enough $n$ where $2\alpha \log{\log{n}} \leq \log{n}$) such that $\mathsf{depth}(G-S) \leq \frac{n}{\log^{\alpha}{n}}$ (Lemma 6.1, 6.2 in~\cite{AB16},~\cite{Val77}). 
\item \textbf{Balloon Phase:} Pebble all nodes up to depth $\frac{n}{\log^{\alpha}{n}}$ (depth measured from the last light phase) until all immediate descendants lie in $S$.
\item \textbf{Light Phase:} When all immediate descendants lie in $S$, remove all pebbles from nodes not in $S$ and not on parents of the next nodes to be pebbled. Continue in the light phase until a node not in $S$ must be pebbled. 
\item Repeat the above until no more nodes need to be pebbled. 
\end{enumerate}
\end{definition}
\fi

\ificalpfullversion
\begin{lemma}\label{lem:total-balloon-pebbles}
Let $s_{\Sigma}$ be the total number of pebbles used in the balloon phase (the sum of the number of pebbles used in all balloon phases) and $s_{\Sigma \backslash S}$ be the total number of pebbles used in the balloon phases on all nodes $v \not\in S$. Then, $s_{\Sigma \backslash S} \leq n$.  
\end{lemma}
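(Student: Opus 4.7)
The plan is to reduce the lemma to the single claim that \emph{each node $v \notin S$ is freshly pebbled at most once} over the entire execution of $\mathsf{GenPeb}(G)$; summing over $V \setminus S$ then immediately gives $s_{\Sigma \setminus S} \leq |V \setminus S| \leq n$. So the core work is establishing this ``no-repebbling'' invariant for non-$S$ nodes.

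First, I would fix the global picture: $\mathsf{GenPeb}(G)$ proceeds by alternating balloon and light phases, and throughout execution, pebbles on nodes of $S$ are retained (they are only removed at the end, not during inter-phase cleanups), while pebbles on nodes outside $S$ are cleared at the end of each balloon phase (modulo parents of immediately-upcoming nodes, which I will handle separately). I would also set up the frontier picture: at the start of the $j$-th balloon phase, the already-pebbled nodes consist entirely of $S$-nodes plus a thin ``working set'' that will be consumed at the start of the phase; the phase then grows pebbles downward into $V \setminus S$, terminating precisely when every immediate descendant of a currently pebbled non-$S$ node lies in $S$.

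The main step is then the following claim, which I would prove by induction on balloon phases. \emph{Invariant:} after the $j$-th balloon/light pair completes, every node $v \notin S$ that has been pebbled so far has the property that all of its out-neighbors have already been pebbled (either during the same balloon phase as $v$, or as an $S$-node during the same phase or the subsequent light phase). Once this invariant holds, $v$ is never needed again: any future pebbling step that requires a predecessor which is a descendant of $v$ can obtain it either from an $S$-pebble that is kept, or by pebbling forward from an $S$-frontier, without ever reaching back past $v$. The termination condition of the balloon phase (``continue until all immediate descendants of pebbled non-$S$ nodes lie in $S$'') is exactly what guarantees the inductive step: when the balloon phase stops, every out-edge of any pebbled non-$S$ node either ends inside the current balloon chunk (pebbled) or in $S$ (pebbled shortly in the light phase that follows). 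Thus $v \notin S$ is never the parent of a not-yet-pebbled node once its balloon phase is over, so it is never repebbled.

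The main obstacle is the exceptional rule in the light phase that keeps pebbles on \emph{parents of the next nodes to be pebbled}, which could in principle let a non-$S$ node's pebble survive across the phase boundary; I would need to argue that such a non-$S$ pebble is the \emph{same} pebble placed during the preceding balloon phase, and its eventual removal (once the next $S$-node has been pebbled) does not require any new placement. Similarly, I must verify that the depth bound $\mathsf{depth}(G - S) \leq n/\log^{\alpha} n$ ensures the balloon phase can always reach a legal termination frontier inside its depth budget, so the process is well-defined and every $v \notin S$ belongs to the ``chunk'' of exactly one balloon phase. Once those bookkeeping issues are resolved, the no-repebbling invariant gives the total count $s_{\Sigma \setminus S} \leq |V \setminus S| \leq n$.
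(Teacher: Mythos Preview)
Your argument is not wrong in spirit, but it is vastly more elaborate than what the paper does. The paper's entire proof is one sentence: ``This proof is trivial since at most $n$ pebbles can be on the graph at any time.'' That is, the graph has $n$ vertices, so no configuration can ever contain more than $n$ pebbles; in particular the number of pebbles on non-$S$ nodes during balloon phases is at most $n$. No induction, no invariant, no analysis of $\mathsf{GenPeb}$'s phase structure is needed.

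You have interpreted $s_{\Sigma\setminus S}$ as a cumulative count of \emph{fresh placements} on non-$S$ nodes summed over all balloon phases, and then worked hard to establish a no-repebbling invariant. The paper instead reads the quantity as simply bounded by the obvious cap of $n$ nodes (and in the following lemma uses that the balloon-phase configurations $B_i$ form a disjoint partition of $V$ with $\sum|B_i|=n$). Your no-repebbling argument would also yield the bound, and it does capture a true structural fact about $\mathsf{GenPeb}$ (each non-$S$ node belongs to exactly one balloon chunk), but it is unnecessary machinery here: the lemma is intended and used as a triviality.
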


\begin{proof} 
This proof is trivial since at most $n$ pebbles can be the graph at any time.
\end{proof}

\begin{lemma}\label{lem:upper-bound-heavy}
Let $\Sigma \backslash S$ be the subgraph of $\sg{2} = (V, E)$ which is pebbled during the balloon phase and whose vertices are not in $S$. Then, $\cca(\Sigma \backslash S) \leq \frac{n^{\alpha + 1}}{\log^{\alpha}{n}}$ . 
\end{lemma}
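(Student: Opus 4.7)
The plan is to decompose the $\cca$ cost incurred by the $\mathsf{GenPeb}(G)$ procedure, restricted to vertices outside $S$, phase by phase, and to bound each balloon phase using (i) the depth bound guaranteed by the decomposition and (ii) the ``total pebble count'' bound from Lemma~\ref{lem:total-balloon-pebbles}. Enumerate the balloon phases as $1,2,\dots,k$, and for each balloon phase $j$ let $s_j$ denote the maximum number of pebbles sitting on vertices of $V \setminus S$ at any single time-step of that phase. Since $\mathsf{depth}(G-S) \leq \frac{n}{\log^{\alpha} n}$, every balloon phase can be completed in at most $\frac{n}{\log^{\alpha} n}$ parallel time-steps, so the restricted $\cca$ contribution of balloon phase $j$ is at most $s_j^{\alpha}\cdot\frac{n}{\log^{\alpha} n}$.

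Summing over all balloon phases yields
\[
    \cca(\Sigma \setminus S) \;\leq\; \frac{n}{\log^{\alpha} n}\sum_{j=1}^{k} s_j^{\alpha}.
\]
The key step is to control $\sum_j s_j^{\alpha}$. Interpreting $s_{\Sigma\setminus S} = \sum_{j} s_j$ as in Lemma~\ref{lem:total-balloon-pebbles}, that lemma gives $\sum_{j} s_j \leq n$. For $\alpha \geq 1$ the function $x \mapsto x^{\alpha}$ is superadditive on the nonnegative reals, so $\sum_{j} s_j^{\alpha} \leq \bigl(\sum_{j} s_j\bigr)^{\alpha} \leq n^{\alpha}$. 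Substituting gives $\cca(\Sigma \setminus S) \leq \frac{n^{\alpha+1}}{\log^{\alpha} n}$, as claimed.

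The step I expect to be the main obstacle is justifying the interpretation of $s_{\Sigma\setminus S}$ (and the use of Lemma~\ref{lem:total-balloon-pebbles}) in the aggregated form $\sum_j s_j \leq n$: the short proof of that lemma only argues the trivial per-time-step bound, so I would want to tighten the statement by observing that once a non-$S$ vertex is pebbled in balloon phase $j$ and its descendant set in $S$ is filled, its pebble is released before the next balloon phase begins, so the $s_j$'s correspond to disjoint ``batches'' of non-$S$ vertices whose total size is at most $|V\setminus S|\leq n$. Everything else reduces to the simple depth bound and the convexity inequality above, both of which are routine.
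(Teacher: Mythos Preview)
Your approach diverges from the paper's, and the divergence introduces a real gap.

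The paper does not decompose by balloon phase and does not use superadditivity. It simply bounds each balloon-phase configuration by the trivial $|B_i|\le n$ (this is exactly what Lemma~\ref{lem:total-balloon-pebbles} supplies: a per-time-step bound, as its one-line proof ``at most $n$ pebbles can be on the graph at any time'' makes clear), and then bounds the \emph{number} of balloon-phase configurations by $n/\log^\alpha n$, the depth of $G-S$. Multiplying gives $n^\alpha\cdot \frac{n}{\log^\alpha n}$ directly.

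Your argument, by contrast, hinges on the claim $\sum_j s_j\le n$, which you justify by saying the peak non-$S$ configurations across balloon phases correspond to disjoint batches of vertices. This is where it breaks: in $\mathsf{GenPeb}$, releasing a non-$S$ pebble at the end of phase $j$ does \emph{not} prevent that same vertex from being repebbled in phase $j'>j$. Repebbling non-$S$ vertices across balloon phases is exactly the mechanism of the time-for-space trade in $\mathsf{GenPeb}$, so the ``batches'' are not disjoint and $\sum_j s_j$ can far exceed $n$. You correctly flagged this step as the main obstacle, but the proposed fix (``released before the next balloon phase begins, hence disjoint'') conflates removal with non-recurrence.

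If you want to keep your per-phase decomposition, the quantity you actually need to control is the number of balloon phases (or equivalently the total balloon time), not $\sum_j s_j$; once you have total balloon time $\le n/\log^\alpha n$, the crude bound $s_j\le n$ already suffices and the superadditivity step becomes unnecessary. That is precisely the paper's route.
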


\begin{proof}
By Lemma~\ref{lem:total-balloon-pebbles}, the number of pebbles necessary to pebble $\Sigma \backslash S$ is at most $n$: $s_{\Sigma \backslash S} \leq n$. Therefore, we can compute $\cca(\Sigma \backslash S) \leq \sum_{B_i \in \mathcal{B}} (|B_i|)^{\alpha} \leq n^{\alpha}(\frac{n}{\log^{\alpha}{n}}) = \frac{n^{\alpha + 1}}{\log^{\alpha}{n}}$ given a series of balloon phase pebble configurations $\mathcal{B}$ where $\sum_{B_i \in \mathcal{B}} |B_i| = n$ and $B_0 \bigcupdot \cdots \bigcupdot B_{|\mathcal{B}|} = V$. 
\end{proof}

\begin{lemma}\label{lem:upper-bound-light}
Let $\cca(S)$ be the cost of pebbling $S$ in both the light and the balloon phases. The $\cca(S)$ of the light and balloon phases is at most $O\left(\frac{n^{\alpha + 1} (\log{\log{n}})^{\alpha}}{\log^{\alpha}{n}}\right)$. 
\end{lemma}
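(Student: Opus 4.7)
The plan is to bound $\cca(S)$ as the product of the maximal per-time-step contribution of $S$-pebbles and the total runtime $T$ of $\mathsf{GenPeb}(G)$. First I would observe that at any single time step the set of pebbles lying on $S$ is trivially a subset of $S$ itself, so the number of such pebbles is at most $|S| \leq \frac{2\alpha n \log\log n}{\log n}$ by Lemma~\ref{lem:upper-bound-heavy}'s decomposition. Hence each time step contributes at most $|S|^{\alpha}$ to the $\alpha$-space cumulative complexity arising from $S$-pebbles, and summing across the $T$ steps of the algorithm gives $\cca(S) \leq T \cdot |S|^{\alpha}$.

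The main task is then to show that $T = O(n)$. For this I would argue directly from the structure of $\mathsf{GenPeb}$. Each balloon phase pebbles at most $n/\log^{\alpha} n$ additional depth past the previous frontier, so a single balloon phase takes at most $O(n/\log^{\alpha} n)$ parallel steps. Since the balloon phases collectively advance through a depth of at most $O(n)$ (the longest path in $G$), the number of balloon phases is at most $O(\log^{\alpha} n)$, yielding a total balloon-phase runtime of $O(\log^{\alpha} n) \cdot O(n/\log^{\alpha} n) = O(n)$. For the light phases, each light-phase time step corresponds to placing exactly one new pebble on a node of $S$ which is thereafter retained, so the total duration of all light phases is bounded by $|S|$, and $|S| = o(n)$. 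Combining both contributions gives $T = O(n)$.

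Substituting into the bound from the first paragraph yields
\[
    \cca(S) \;\leq\; T \cdot |S|^{\alpha} \;=\; O\!\left(\frac{n^{\alpha+1}(\log\log n)^{\alpha}}{\log^{\alpha} n}\right),
\]
which matches the claimed bound.

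The hard part will be formalizing the runtime accounting. In particular, the count of at most $O(\log^{\alpha} n)$ balloon phases relies on the fact that successive balloon phases advance through disjoint segments of the graph's topological depth (so their contributions to the overall depth add up rather than overlap); and the $o(n)$ light-phase bound relies on $\mathsf{GenPeb}$'s ``once placed, kept'' behavior for $S$-pebbles, so that no $S$-node is re-pebbled during a light phase. Both of these structural facts are established in the analysis of $\mathsf{GenPeb}$ in~\cite{AB16}, so the remainder of the argument reduces to careful invocation of those facts rather than new combinatorial reasoning.
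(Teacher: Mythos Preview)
Your approach is essentially the paper's: bound $\cca(S)$ by (total runtime) $\times$ (per-step pebble count on or around $S$)$^\alpha$, then argue the total runtime is $O(n)$. The paper's own proof is terser still---it simply asserts $T\le n$ and multiplies by $(2|S|)^\alpha$---so your high-level plan matches.

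Two small points of divergence are worth noting. First, the paper uses $2|S|$ rather than $|S|$ for the per-step count, because during light phases $\mathsf{GenPeb}$ retains not only the pebbles on $S$ but also the (in-degree-$2$) parents of the next node to be pebbled, and these parents need not lie in $S$. This affects only the constant. Second, your balloon-phase count via ``disjoint depth segments'' is not the right justification: balloon phases in $\mathsf{GenPeb}$ restart from the retained $S$-pebbles after each light phase and may re-traverse the same topological depths, so disjointness of depth does not hold. The clean way to get at most $O(\log^\alpha n)$ balloon phases (equivalently, $T=O(n)$) is the explicit scheduling in~\cite{AB16}, which triggers a balloon phase every $g$ light rounds with $g$ chosen equal to $d=n/\log^\alpha n$; this gives $n/g$ balloon phases of $d$ parallel steps each, hence $O(n)$ balloon steps total. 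Since you already flag that the runtime accounting must ultimately be grounded in~\cite{AB16}, this is a correction to the intuition rather than a gap in the overall argument.
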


\begin{proof}
The total amount of time that light and balloon phases last in which nodes in $S$ are pebbled is at most $n$ timesteps since a number greater than $n$ implies that $|S| \geq n$ which is impossible since the number of nodes in the graph is $n$. In the light phases, at most $2|S| = \frac{4\alpha n \log{\log{n}}}{\log{n}}$ pebbles are kept on the graph since each node has bounded in-degree $2$. Therefore, $\cca{(\sg{2})} \leq \frac{4^{\alpha} \alpha^\alpha n^{\alpha + 1} (\log{\log{n}})^{\alpha}}{\log^{\alpha}{n}}$. 
\end{proof}
\fi

\ificalpfullversion
	\begin{theorem}\label{lem:cc-bound}
	For any bounded in-degree-$2$ graph, $\cca{(\sg{2})} = O\left(\frac{n^{\alpha + 1}(\log{\log{n}})^{\alpha}}{\log^{\alpha}{n}}\right)$ for constant $\alpha \geq 1$. 
	\end{theorem}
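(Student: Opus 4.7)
The plan is to apply the $\mathsf{GenPeb}(G)$ strategy and sum the $\cca$ contributions from its two phases, using the lemmas already established. Concretely, for any bounded in-degree-$2$ graph $G$, I would exhibit the specific pebbling strategy $\mathsf{GenPeb}(G)$ described above; then $\cca(G)$ is upper-bounded by $\pcost(G, \mathsf{GenPeb}(G))$, which is the sum of the per-step costs $|P_i|^\alpha$ over the balloon and light phases.

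First I would decompose $\pcost(G, \mathsf{GenPeb}(G))$ into two parts, corresponding to (a) the pebbles placed on vertices not in $S$ during balloon phases, and (b) the pebbles kept on $S$ (plus on the $\leq 2|S|$ parents of the next-to-pebble nodes during light phases, and the persistent pebbles on $S$ during balloon phases). These are exactly the quantities bounded in Lemma~\ref{lem:upper-bound-heavy} and Lemma~\ref{lem:upper-bound-light}, respectively. Adding an upper bound of the form $(a+b)^\alpha \leq 2^\alpha (a^\alpha + b^\alpha)$ at each time step lets me split the $\alpha$-th power cleanly between the two phase contributions, at the cost of an absorbed constant factor (since $\alpha$ is constant).

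Then I would combine: by Lemma~\ref{lem:upper-bound-heavy} the part (a) contribution is $O\bigl(\tfrac{n^{\alpha+1}}{\log^\alpha n}\bigr)$, and by Lemma~\ref{lem:upper-bound-light} the part (b) contribution is $O\bigl(\tfrac{n^{\alpha+1}(\log\log n)^\alpha}{\log^\alpha n}\bigr)$. The latter dominates the former (since $(\log\log n)^\alpha \geq 1$), so the total is $O\bigl(\tfrac{n^{\alpha+1}(\log\log n)^\alpha}{\log^\alpha n}\bigr)$, yielding $\cca(G) \leq \pcost(G, \mathsf{GenPeb}(G)) = O\bigl(\tfrac{n^{\alpha+1}(\log\log n)^\alpha}{\log^\alpha n}\bigr)$, as claimed.

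The main (minor) obstacle is being careful about double-counting: in a balloon phase, one simultaneously has pebbles from the previous light phase's retained set $S$ and freshly-placed balloon pebbles on the depth-$\tfrac{n}{\log^\alpha n}$ sub-DAG. The cleanest way to handle this is to use the subadditivity-with-constant of $\alpha$-th powers above, so that each step is charged separately to the ``$S$-contribution'' (already accounted for by Lemma~\ref{lem:upper-bound-light}) and to the ``balloon contribution'' (already accounted for by Lemma~\ref{lem:upper-bound-heavy}). Since $\alpha \geq 1$ is a constant, the resulting constant factors are absorbed into the big-$O$ and the stated bound follows.
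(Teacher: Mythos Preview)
Your proposal is correct and takes essentially the same approach as the paper: the paper's proof simply invokes Lemmas~\ref{lem:upper-bound-heavy} and~\ref{lem:upper-bound-light} and adds the two bounds. Your treatment is slightly more careful in making explicit the subadditivity step $(a+b)^\alpha \le 2^\alpha(a^\alpha+b^\alpha)$ needed to split each per-step cost $|P_i|^\alpha$ into the $S$- and non-$S$-contributions, a point the paper leaves implicit in its informal notation $\cca(\Sigma\backslash S)$ and $\cca(S)$.
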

\fi
\ificalpshortversion
\begin{theorem}\label{lem:cc-bound}
	For any in-degree-$2$ graph and $\alpha\geq1$, $\cca{(\sg{2})} = O\left(\frac{n^{\alpha + 1}(\log{\log{n}})^{\alpha}}{\log^{\alpha}{n}}\right)$.
	\end{theorem}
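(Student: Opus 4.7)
The plan is to follow the $\mathsf{GenPeb}$ strategy of Alwen--Blocki, adapting its analysis from ordinary cumulative complexity ($\alpha=1$) to our $\alpha$-weighted measure. First I would invoke Valiant's depth-reduction lemma to obtain, for every in-degree-$2$ DAG $G$ on $n$ vertices, a set $S\subseteq V$ with $|S|\leq\frac{2\alpha n\log\log n}{\log n}$ such that $\mathrm{depth}(G-S)\leq\frac{n}{\log^\alpha n}$. This set will serve as a persistent "skeleton" of pebbles throughout the pebbling, while the rest of the graph is pebbled in bounded-depth bursts.

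Next I would describe the pebbling as alternating phases: in each \emph{balloon phase}, starting from pebbles on $S$ (and a few required predecessors), pebble a batch of non-$S$ nodes up to depth $\frac{n}{\log^\alpha n}$ until every immediate successor of the frontier lies in $S$; in the following \emph{light phase}, keep only the pebbles on $S$ together with at most $|S|$ extra pebbles needed to supply inputs to upcoming computations. Repeat until all target vertices have been pebbled.

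The cost accounting splits into two pieces. For the balloon phases, the total number of balloon pebble placements across the whole execution is at most $n$ (each node is pebbled at most once during balloon activity). Since at any single step of a balloon phase there are at most $n$ pebbles on the board, the $\alpha$-space cost of each balloon step is at most $n^\alpha$; bounding the \emph{number} of such steps by the depth $\frac{n}{\log^\alpha n}$ (since each balloon phase respects this depth cap and the total balloon time can be amortized against the total pebble placements) yields a contribution of $O\!\left(\frac{n^{\alpha+1}}{\log^\alpha n}\right)$. For the light phases, the peak number of pebbles is at most $2|S|=O\!\left(\frac{n\log\log n}{\log n}\right)$ (due to in-degree $2$), and the total number of light-phase time-steps is at most $n$, contributing $O\!\left(\frac{n^{\alpha+1}(\log\log n)^\alpha}{\log^\alpha n}\right)$. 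Summing the two contributions gives the claimed bound.

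The main obstacle I anticipate is the bookkeeping required to bound the \emph{total} number of balloon steps rather than the per-phase depth: naively multiplying peak space by worst-case duration would give $n^\alpha\cdot n$, which is too large. I would resolve this by charging balloon steps to the pebble placements they perform, observing that across the entire strategy at most $n$ placements occur outside $S$, so the aggregate balloon cost is controlled by $n^\alpha$ times depth rather than $n^\alpha$ times $n$. Once this amortization is made precise, the two-phase decomposition together with Valiant's lemma immediately yields the stated upper bound.
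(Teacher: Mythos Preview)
Your proposal is correct and follows essentially the same approach as the paper: invoke Valiant's depth-reduction lemma with the parameters $|S|\le\frac{2\alpha n\log\log n}{\log n}$ and $\mathrm{depth}(G-S)\le\frac{n}{\log^{\alpha}n}$, run the $\mathsf{GenPeb}$ balloon/light-phase strategy, and bound the balloon contribution by $n^{\alpha}\cdot\frac{n}{\log^{\alpha}n}$ and the light contribution by $(2|S|)^{\alpha}\cdot n$. The paper packages these two bounds as separate lemmas and then sums them, exactly as you outline.
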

\fi

\begin{proof}
This follows directly from Lemmas~\ref{lem:upper-bound-heavy} and~\ref{lem:upper-bound-light}.
\end{proof}
\ificalpfullversion
    \subsection{Asymptotically tight sequential lower bound for $\alpha = 1$}

\fi
\ificalpshortversion
    \subh{Asymptotically tight lower bound for $\alpha = 1$}
\fi
\ificalpcorrectedversion
    \subh{Asymptotically tight lower bound for $\alpha = 1$}
\fi
We give an explicit construction of a graph that achieves asymptotically tight lower bound (up to $\log{\log{n}}$ factors) in $\cca$ that matches our upper bound provided in Section~\ref{sec:cc-alpha} for $\alpha = 1$ and in~\cite{AB16,ABP17sustained} \emph{when considering the sequential pebbling model}. Previous constructions~\cite{AB16, ABP17} ignored $\log{\log{n}}$ factors and were not tight up to such factors in the parallel model. Because we consider the sequential pebbling model (and not the parallel model) in proving our lowerbound below, our results are incomparable to these previous lower bound results in the parallel model. 
Our graph constructions are new, and their tightness in the parallel pebbling model is an open question.

In our construction, we make use of the stacked superconcentrators constructed in \cite[\S4]{LT82} except that the vertices are connected in some topological order (blowing up our graph by only a constant factor of $6$ if we replace all degree $3$ nodes with a height $3$ pyramid).

\begin{gconstruction}\label{def:cca-lower-bound-construct}
Let $C(n, k)$ be a stacked superconcentrator with $k$ layers where $C_i$ is the $i$-th linear superconcentrator. We create the following edges between nodes. Let $\mathrm{T}$ be a topological sort order of the vertices in $C(n, k)$. Create edges $(v_i, v_{i+1})$ where $v_i$ is the vertex immediately preceding $v_{i+1}$ in $\mathrm{T}$. Replace all degree $3$ nodes with pyramids of height $3$.
\end{gconstruction}

\ificalpfullversion
    It was proven in~\cite{LT82} (Theorem 4.2.6) that given $S \leq \frac{n}{20}$ pebbles, $k$ layers, and $n$ nodes in each linear superconcentrator per layer, the pebbling time, $T(n, k, S)$, of pebbling $C(n, k)$ is lower bounded by:

    \begin{align*}
    T(n, k, S) = n \Omega\left(\left(\frac{nk}{64S}\right)^k\right).
    \end{align*}

    In our construction defined by Def.~\ref{def:cca-lower-bound-construct}, we first let $S = c_1(N\log{\log{N}}/\log{N})$ (for some constant $c_1$), $n = 20S$, $k = \floor{N/S}$, and we get a graph $C(n, k)$ with $\Theta(N)$ vertices. Thus, we obtain the following tradeoff for this graph given $S$ pebbles:

    \begin{align*}
    T \geq S \Omega\left(\frac{N}{S}\right)^{\Omega(N/S)}
    \end{align*}

    for $S \leq c_2\left(\frac{N\log{\log{N}}}{\log{N}}\right)$ for some constant $c_2$ where $c_2 < c_1$.

    Thus, we notice two main characteristics of our graph. If $S \geq c_1\left(\frac{N\log{\log{N}}}{\log{N}}\right)$, then the time it takes to pebble the graph is $O(N)$ since the width of the graph is $\Theta\left(\frac{N\log{\log{N}}}{\log{N}}\right)$. Second, if $S \leq c_2 \left(\frac{N\log{\log{N}}}{\log{N}}\right)$ then $S$ pebbles are used to pebble the graph for $\omega(N)$ time by Theorem 4.2.6 of~\cite{LT82}. Note that if the tradeoff is sufficiently great, then we achieve our stated lower bound. To prove our stated lower bound, we modify the proof for Theorem 4.2.5 of~\cite{LT82} so that we account for $\cca$ instead of just the time-space tradeoff. Minimizing the equation for tradeoff in terms of $\alpha = 1$ and showing that the cost is greater than the cost of when $S \geq c_1 \left(\frac{N\log{\log{N}}}{\log{N}}\right)$ and the cumulative complexity for when $S \geq c_1\left(\frac{N\log{\log{N}}}{\log{N}}\right)$ is $\Theta\left(\frac{N^2 \log{\log{N}}}{\log{N}}\right)$ then provides us with the lower bound we want.

    We use the same notation as that used in the proof of Theorem 4.2.5 in~\cite{LT82}. Let $n$ be the number of outputs of the superconcentrator $C(n, k)$ and $k$ be the number of copies of the linear superconcentrators (number of levels in the stack of superconcentrators) in $C(n, k)$. We number the parts of $C(n, k)$ similarly to how they are numbered in the proof of Theorem 4.2.5, let $C_i$ be the $i$-th copy of the linear superconcentrators that composes $C(n, k)$. We consider the outputs of $C_k$ as numbered in the order in which they are first pebbled. Let $z_i$ be the time that output $i$ (where $1 \leq i \leq n$) is pebbled. Therefore, $z_0 = 0$ and $z_{n + 1} = \ti(C(n, k), S)$. 
Then, let $[z'_i, z''_i]$ be the interval of time starting with the $z'_i$-th move and ending with the $z''_i$-th move where $z_{i -1} \leq z'_i \leq z''_i \leq z_i$. Let $p_i$ be the minimum number of pebbles on $C_k$ in the interval $[z_{i-1}, z_{i}]$ for $1 \leq i \leq n$ and where $p_0 = 0$, $p_{n+1} = 0$, and $p_{i} \leq S$ for all $i$ in the valid range. 

    We first note that since we do not remove any vertices or edges (only add edges to the construction to maintain the topological order and to ensure that at most one additional pebble is added to the graph at each time step), all properties of the graph with respect to $n$ as proven in~\cite{LT82} still hold (i.e. adding edges does not change the linear superconcentrator properties of the graphs). Hence, we restate some of the key theorems and lemmas in~\cite{LT82} that will allow us to prove the lower bound in $\cca$ when $\alpha =1$ that we seek.

    We restate the definition of a good interval given in~\cite{LT82} below:

    \begin{definition}[Good Intervals~\cite{LT82}]\label{def:good}
    An interval $[i, j] \subset [1, n]$ is \emph{good} if it fulfills the following three requirements:

    \begin{align}
    p_i \leq \frac{j - i}{2},\\
    p_{j+1} \leq \frac{j - i}{2},\\
    p_k > \frac{j - i}{8} \text{ for } i < k \leq j.
    \end{align}
    \end{definition}

    We also restate one key lemma relating to good intervals below:

    \begin{lemma}[Lemma 4.2.3~\cite{LT82}]\label{lem:good-bound}
    During the good interval $[i, j]$ at least $n - 2S$ different outputs of $C_{k-1}$ are pebbled. Only $S - 1 - \floor{\frac{j-i}{8}}$ pebbles are available to pebble the $n-2S$ different outputs of $C_{k-1}$.
    \end{lemma}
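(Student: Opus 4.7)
The plan is to leverage the superconcentrator property of $C_k$: for any integer $r$ and any subsets $A$ of $r$ inputs and $B$ of $r$ outputs, there exist $r$ vertex-disjoint paths from $A$ to $B$ inside $C_k$. The argument translates statements about pebbles on outputs of $C_k$ (where the activity of a good interval is most visible) into lower bounds on pebbles placed at outputs of $C_{k-1}$. I would also first observe that the extra edges added in Graph Construction~\ref{def:cca-lower-bound-construct} to enforce a topological ordering do not destroy the superconcentrator property of each $C_i$, so the Lengauer--Tarjan disjoint-paths machinery still applies.

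For the first claim, I would look at the real-time window $[z_{i-1}+1,\, z_j]$, during which exactly $j-i+1$ outputs of $C_k$ are pebbled for the very first time. Let $U$ be the set of outputs of $C_{k-1}$ that receive a pebble at some moment during this window. Suppose for contradiction that $|U| < n - 2S$, so the complementary set $U^c$ of outputs of $C_{k-1}$ has $|U^c| > 2S$. Choose the subset $X$ of newly pebbled outputs of $C_k$ and apply the superconcentrator property with $r = \min(|U^c|, |X|)$ to obtain $r$ vertex-disjoint paths from $U^c$ into $X$. For each target output of $C_k$ pebbled in the window, the unique path feeding it from $U^c$ must have its source already pebbled at the exact moment that target is first pebbled, because the source itself is never pebbled during the window. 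Hence these sources must have been already pebbled at the start of the window and survive long enough to be used. By disjointness, this forces at least $r - p_i$ simultaneously-live pebbles on outputs of $C_{k-1}$ at some point in the window, and combining with $p_i, p_{j+1} \le (j-i)/2 \le S$ at the boundaries shows the total pebble budget $S$ is exceeded whenever $|U^c| > 2S$, yielding the bound $|U| \geq n - 2S$.

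For the second claim, the good-interval condition $p_l > (j-i)/8$ for each $i < l \leq j$ means that at every moment of $[z_{i-1}+1,\, z_j]$ at least $\lfloor (j-i)/8\rfloor + 1$ pebbles are tied up on outputs of $C_k$. In any individual move that first pebbles a new output of $C_{k-1}$, the move itself occupies one pebble at that output. Subtracting these from the global budget $S$ leaves at most $S - 1 - \lfloor (j-i)/8 \rfloor$ pebbles truly available for advancing the pebbling of outputs of $C_{k-1}$, which is the stated bound.

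The main obstacle is Part~1: the superconcentrator property only supplies the existence of vertex-disjoint paths, whereas the argument above requires a temporal coupling between when each path's endpoint in $C_{k-1}$ carries a pebble and when its endpoint in $C_k$ is first pebbled. My plan is to make this rigorous by mirroring the case analysis in~\cite{LT82}: partition $U^c$ according to which of the three categories the pebble at its endpoint falls into (alive at time $z_{i-1}$, alive at time $z_j{+}1$, or live only during the interval), bound the first two categories by $p_i + p_{j+1} \le j-i \le 2S$, and show the third category is empty by definition of $U$. The Part~2 count is then immediate from the good-interval definition.
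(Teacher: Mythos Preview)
The paper does not give its own proof of this lemma; it is explicitly introduced as a restatement of Lemma~4.2.3 from~\cite{LT82} and cited without argument. So there is no in-paper proof to compare against, and your task is really to reconstruct the Lengauer--Tarjan argument.

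Your overall strategy is the right one and coincides with the standard approach: use the superconcentrator property of $C_k$ to produce vertex-disjoint paths from candidate inputs to the newly pebbled outputs, then argue that any pebble-free path at the start of the window forces its input to be pebbled during the window. Your Part~2 reasoning is essentially correct: the good-interval condition $p_l>(j-i)/8$ for $i<l\le j$ ties up at least $\lfloor(j-i)/8\rfloor+1$ pebbles on $C_k$ throughout $[z_i,z_j]$, leaving at most $S-1-\lfloor(j-i)/8\rfloor$ for the rest of the graph. (Two small slips: these pebbles sit on $C_k$, not just on its outputs; and the interval where this holds is $[z_i,z_j]$, not $[z_{i-1}+1,z_j]$.)

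Part~1, however, has a genuine gap in the justification. You write $p_i+p_{j+1}\le j-i\le 2S$, but the second inequality is false in general: from $p_l\le S$ and $p_l>(j-i)/8$ you only get $j-i<8S$, and indeed the index set $D_i$ allows $x_i$ up to $8S-6$. The bound you need, $p_i+p_{j+1}\le 2S$, follows instead directly from $p_i\le S$ and $p_{j+1}\le S$. More importantly, the three-category partition is not the right mechanism. The clean argument is one-sided: pick the time $t'\in[z_{i-1},z_i]$ where only $p_i$ pebbles lie on $C_k$; since $j-i+1>p_i$, for \emph{any} $p_i+1$ inputs and any $p_i+1$ of the outputs $i,\dots,j$, the superconcentrator property yields $p_i+1$ disjoint paths, at least one pebble-free at $t'$, forcing that input to be pebbled after $t'$. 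Hence at most $p_i$ inputs of $C_k$ escape being pebbled during the window, so at least $n-p_i\ge n-S\ge n-2S$ are pebbled. Your ``Category~2'' (pebble alive only at the end) plays no role, because the BLBA contradiction is triggered solely by pebble-freeness at the \emph{start}; once you fix the $2S$ justification and drop the unnecessary end-boundary case, your outline becomes the standard proof.
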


    We also restate a combinatorial lemma proved in~\cite{LT82} that will allow us to prove a recursive relation on $\cca$ (which will subsequently allow us to provide a bound for our construction).

    \begin{lemma}[Lemma 4.2.4~\cite{LT82}]\label{lem:bound-on-good}
    Let $r \leq n$. We can find a set of disjoint good intervals in $[1, r]$ that covers at least $\frac{r}{4} - S - p_{r+1}$ elements of $[1, r]$.
    \end{lemma}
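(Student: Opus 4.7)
The plan is to construct the disjoint good intervals greedily from left to right and then bound the number of uncovered indices in $[1, r]$ via a charging argument against the pebble-count sequence $p_1, \ldots, p_{r+1}$. Concretely, maintain a pointer $i$ initialized to $1$; at each step, check whether any good interval $[i, j] \subseteq [1, r]$ (in the sense of Definition~\ref{def:good}) exists, and if so select the one with the largest such $j$, add it to the collection, and advance $i$ to $j + 1$; otherwise mark $i$ as uncovered and advance to $i + 1$. Terminate once $i > r$.

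The main obstacle is to bound the uncovered indices. I would classify each uncovered index $k$ by which of the three good-interval conditions fails for every candidate $[k, j] \subseteq [k, r]$. Indices where the endpoint condition $p_k \leq (j-k)/2$ fails for every admissible $j$ must have $p_k$ large relative to $r - k$, so they live near the right boundary and collectively contribute at most $O(S + p_{r+1})$. The harder case is indices where every candidate interval is spoiled because the interior condition $p_\ell > (j-k)/8$ fails at some $\ell \in (k, j]$; here I would charge each such index to a local dip of the pebble-count sequence below a dyadic threshold $S/2^t$, and pack these charges via a pigeonhole-style argument that leverages the implicit $8S$ upper bound on the length of any good interval (since interior pebble counts must exceed $(j-i)/8$ while $p_\ell \leq S$).

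The fraction $1/4$ in the target bound should arise from the ratio between the endpoint threshold $(j-i)/2$ and the interior threshold $(j-i)/8$, combined with the maximality of the greedy right endpoints; this is precisely the delicate combinatorial step. Adding the boundary loss of $S + p_{r+1}$ should yield a total uncovered count of at most $3r/4 + S + p_{r+1}$, and hence coverage of at least $r/4 - S - p_{r+1}$ as claimed. If the direct greedy accounting proves stubborn, I would fall back on strong induction on $r$: place the rightmost admissible good interval $[i^*, j^*]$ and recurse on $[1, i^* - 1]$ with $p_{i^*}$ playing the role of $p_{r+1}$; the same combinatorial obstacle reappears in bounding the tail gap $(j^*, r]$, but the inductive framing isolates the step where the threshold ratio enters.
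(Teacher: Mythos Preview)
The paper does not give its own proof of this lemma: it is explicitly introduced as ``a combinatorial lemma proved in~\cite{LT82}'' and is merely restated, so there is no in-paper argument to compare your proposal against.

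As for the proposal itself, it is a plan rather than a proof, and the critical step is not actually carried out. Your greedy-plus-charging outline identifies the right quantities (the thresholds $(j-i)/2$ and $(j-i)/8$, the $8S$ length cap, the boundary term $S+p_{r+1}$), but the sentence ``the fraction $1/4$ \ldots\ should arise from the ratio between the endpoint threshold $(j-i)/2$ and the interior threshold $(j-i)/8$, combined with the maximality of the greedy right endpoints'' is exactly where the content of the lemma lives, and you have not shown how to make that charging precise. In particular, when an index $k$ is uncovered because every candidate $[k,j]$ is spoiled by an interior dip, you need to exhibit a concrete receptacle that absorbs at most three uncovered indices per covered one (or the equivalent), and your ``dyadic threshold $S/2^t$'' idea is not tied to any specific accounting. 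The inductive fallback you mention is closer to how Lengauer and Tarjan actually argue, but there too the work is in bounding the gap between the recursively handled prefix and the chosen interval, and you have only named that step, not done it. So as written this is a reasonable orientation toward the problem but not yet a proof.
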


    Finally, we adapt a theorem based on a simple application of BLBA that provides a (not quite tight enough) lower bound on the time necessary to pebble our constructed graph given $S$ pebbles and provide a proof for our construction defined in Graph Construction~\ref{def:cca-lower-bound-construct}.

    \begin{theorem}[Theorem 4.2.1~\cite{LT82}]\label{thm:blba-simple}
    In order to pebble all outputs of $C(n,k)$ as defined in Graph Construction~\ref{def:cca-lower-bound-construct} using $S$ black pebbles, $2 \leq S \leq \frac{n-1}{4}$ (starting with any configuration of pebbles on the graph), we need $T$ placements where 

    \begin{align*}
    T \geq n\left(\frac{n}{10S}\right)^k.
    \end{align*} 
    \end{theorem}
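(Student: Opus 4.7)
The plan is to induct on $k$, the number of layers in the stacked superconcentrator, following the template of Lengauer--Tarjan but carefully verifying that the added topological-order edges (and the replacement of in-degree $3$ nodes by height-$3$ pyramids) do not break any of the structural properties used in the argument. The base case is $k=1$: a single linear superconcentrator on $n$ outputs must receive at least one pebble placement on each output, giving $T \geq n \geq n\cdot(n/10S)^1$ whenever $S \geq n/10$; for $S$ smaller than that, the bound follows from the fact that we must at least pebble each of the $n$ outputs, and the topological-order edges only add to the work.

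For the inductive step, assume the bound holds for $k-1$ and consider $C(n,k)$. As in the restated setup, let $z_1 < \dots < z_n$ be the times at which the $n$ outputs of $C_k$ are first pebbled, and let $p_i$ be the minimum number of pebbles on $C_k$ during $[z_{i-1},z_i]$. First I would apply Lemma~\ref{lem:bound-on-good} with $r = n$ to obtain a family $\mathcal{I}$ of disjoint good intervals in $[1,n]$ whose total length is at least $n/4 - S - p_{n+1} \geq n/4 - S$. For each good interval $[i,j]\in\mathcal{I}$, Lemma~\ref{lem:good-bound} guarantees that during $[z_i,z_j]$ the pebbling must produce at least $n - 2S$ distinct outputs of $C_{k-1}$, and during that whole sub-execution at most $S - 1 - \lfloor (j-i)/8\rfloor$ pebbles are free to work inside $C_{k-1}$ (because $\lfloor (j-i)/8\rfloor + 1$ pebbles are committed to $C_k$ by the third condition of Definition~\ref{def:good}).

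Next I would invoke the induction hypothesis on $C_{k-1}$ inside each good interval, applied to the sub-instance that must produce $n - 2S$ outputs using only $S' = S - 1 - \lfloor (j-i)/8 \rfloor$ pebbles; this yields a lower bound on the number of placements in $[z_i,z_j]$ of at least
\[
(n-2S)\left(\frac{n-2S}{10 S'}\right)^{k-1}.
\]
Summing over all good intervals in $\mathcal{I}$, using disjointness, and using the constraint $2\leq S \leq (n-1)/4$ so that $n-2S \geq n/2$, I would lower-bound the total number of placements by a sum of the form
\[
\sum_{[i,j]\in\mathcal{I}} (n-2S)\left(\frac{n-2S}{10 S'}\right)^{k-1}.
\]
Since $\sum_{[i,j]\in\mathcal{I}} (j-i)\geq n/4-S$, a convexity / averaging argument on the function $x\mapsto 1/(S - x/8)^{k-1}$ (which is convex increasing in $x$ for $x\leq 8S$) lets me replace each $S'$ in the denominator by the average, and a cruder absorption of constants into the factor of $10$ yields the claimed $T \geq n(n/10S)^k$.

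The main obstacle will be the last step: squeezing the constants so that the base of the exponential is exactly $n/10S$ rather than some weaker $n/cS$ with $c>10$. This requires being careful about (i) the factor of $2$ lost in $n-2S\geq n/2$, (ii) the factor of $4$ lost in the good-interval coverage $n/4-S$, and (iii) the fractional loss in $S'$ versus $S$. The plan for handling this is to observe that at most one of these losses is activated simultaneously in the worst-case pebbling strategy the adversary would choose, and to absorb the others into the factor $10$ by a direct numerical check (as is done in the LT82 original for their constant). The added topological edges and pyramid replacements only strengthen the pebbling requirement, so no step of the inductive argument needs to be weakened to accommodate Graph Construction~\ref{def:cca-lower-bound-construct}.
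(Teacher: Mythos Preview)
The paper does not actually prove this theorem: it is explicitly labeled as a restatement of Theorem~4.2.1 from \cite{LT82}, and no proof environment follows the statement. So there is no ``paper's own proof'' to compare against beyond the original Lengauer--Tarjan argument.

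That said, your proposal has two genuine gaps. First, the base case is broken. You write that for $S < n/10$ the bound ``follows from the fact that we must at least pebble each of the $n$ outputs,'' but that only gives $T \geq n$, whereas the claim for $k=1$ is $T \geq n \cdot (n/10S) > n$. The base case in \cite{LT82} is not trivial: it already requires the basic lower bound argument (BLBA) applied to a single superconcentrator, using the vertex-disjoint-paths property to force many input repebblings whenever the output side is pebbled with few pebbles in hand.

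Second, you are importing the wrong machinery into the inductive step. The good-interval lemmas (Lemma~\ref{lem:good-bound} and Lemma~\ref{lem:bound-on-good}) are what \cite{LT82} uses to prove the \emph{stronger} Theorems~4.2.5 and~4.2.6 (the $(nk/64S)^k$ bound). Theorem~4.2.1 is their ``simple application of BLBA,'' proved by a direct partition of the output sequence into blocks and a straightforward induction, without good intervals. Using good intervals here creates a mismatch: Lemma~\ref{lem:good-bound} only guarantees that $n-2S$ outputs of $C_{k-1}$ are pebbled during a good interval, but your induction hypothesis is stated for pebbling \emph{all} $n$ outputs, so you cannot invoke it as written. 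You would need a version parameterized by the number of outputs actually produced, which is exactly the refinement that leads to Theorem~4.2.5 rather than~4.2.1. Your final paragraph about ``at most one of these losses is activated simultaneously'' is also not an argument; the constants in \cite{LT82} come out of the direct BLBA calculation, not from a case analysis of which loss is active.
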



    Using these lemmas, we now write our final recursive theorem for the $\cca$ of our construction. 

    \begin{theorem}\label{thm:cca-lower-1}
    Let $\cca(N, k, S)$ be the $\cca$ (when $\alpha = 1$) necessary to pebble all the outputs of $C(n, k)$ (recall that the topological sort of the vertices requires that for the last output to be pebbled, all other outputs must be pebbled) with $S \leq \frac{n}{20}$ pebbles. Then,

    \begin{align}
    T(n, 1, S) \geq \frac{n^2}{10S} \\
    T(n, k, S) \geq \min_{(x_1, \dots, x_m) \in D_k} \sum_{1 \leq i \leq m} T\left(n, k-1, S-1-\left\lfloor\frac{x_i - 1}{8}\right\rfloor\right) \text{ for } k > 1,\\
    \cca(N, k, S) \geq \min_{D_1, \dots, D_k} \sum_{1 \leq j \leq k}\sum_{(x_1, \dots, x_m) \in D_j} \left\lfloor\frac{x_i - 1}{8}\right\rfloor \left(T\left(n, j - 1, S-1-\left\lfloor\frac{x_i - 1}{8}\right\rfloor\right)\right)\\
    \geq \min_{D} \sum_{(x_1, \dots, x_m) \in D} \left\lfloor\frac{x_i - 1}{8}\right\rfloor T\left(n, k-1, S-1-\left\lfloor\frac{x_i - 1}{8}\right\rfloor\right).\label{eq:cca-lb}
    \end{align}

    where $D_i$ is an index set that contains all the ways in which we can select a large number of good intervals. Specifically,

    \begin{align*}
    D_i = \left\{(x_1, \dots, x_m) | m>\frac{n}{64S}, 1 \leq x_i \leq 8S - 6 \text{ for } 1 \leq i \leq m, \text{ and } \sum_{1 \leq i \leq m} x_i \geq \frac{n}{8}\right\}.
    \end{align*}
    \end{theorem}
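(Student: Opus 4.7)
The plan is to prove the three inequalities by induction on $k$, leveraging the good-interval machinery (Definition~\ref{def:good}, Lemmas~\ref{lem:good-bound} and \ref{lem:bound-on-good}) already established in the excerpt and developed originally in~\cite{LT82}. The base case $k=1$ for $T(n,1,S)\geq n^2/(10S)$ follows directly from Theorem~\ref{thm:blba-simple} applied with $k=1$, so the real work is in the inductive step and in the $\cca$ bound.

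For the inductive step on $T(n,k,S)$, I would fix an arbitrary pebbling strategy and apply Lemma~\ref{lem:bound-on-good} to extract a family of disjoint good intervals $[i_1,j_1],\dots,[i_m,j_m]$ in $[1,n]$ whose total length is at least a constant fraction of $n$ (concretely $\geq n/8$ after discarding boundary losses absorbed by the constraint $S\leq n/20$). Setting $x_l = j_l - i_l + 1$, Lemma~\ref{lem:good-bound} says that within the $l$th good interval at least $n-2S$ outputs of $C_{k-1}$ must be freshly pebbled while only $S-1-\lfloor(x_l-1)/8\rfloor$ pebbles are available to do so. Since each such sub-task is (after padding to a full $C(n,k-1)$ pebbling, which only lowers the count) at least as costly as a fresh pebbling of $C(n,k-1)$ with the reduced pebble budget, we may sum the sub-costs to obtain $T(n,k,S)\geq\sum_{l}T(n,k-1,S-1-\lfloor(x_l-1)/8\rfloor)$; minimizing over the allowed tuples $(x_1,\dots,x_m)\in D_k$ yields the stated recurrence.

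The $\cca$ inequality follows from the same decomposition, but now we must also \emph{charge} the pebbles present on $C_k$ during each good interval. By the third clause of Definition~\ref{def:good}, throughout the $l$th good interval at least $\lfloor x_l/8\rfloor$ (and hence at least $\lfloor(x_l-1)/8\rfloor$) pebbles sit on $C_k$ at every time step, and this contribution is paid \emph{on top of} the pebbles being moved inside the corresponding $C_{k-1}$ sub-pebbling. Multiplying this per-step pebble cost by the length $T(n,k-1,S-1-\lfloor(x_l-1)/8\rfloor)$ of that sub-pebbling (which, by the inductive hypothesis on $T$, lower-bounds the number of steps the good interval must span) and summing over $l$ gives (\ref{eq:cca-lb}); the double-min form over $D_1,\dots,D_k$ comes from iterating the argument through the $k$ layers of the stack, while the single-min form is obtained by applying the decomposition only at the top layer and absorbing the inner layers into the $T(\cdot,k-1,\cdot)$ factor.

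The main obstacle I anticipate is making the charging rigorous: one must argue that the pebbles counted inside a good interval on level $C_k$ are genuinely distinct from (and therefore additive to) those counted by the recursive $T(n,k-1,\cdot)$ call attributed to pebbling outputs of $C_{k-1}$ within that same interval. This requires a careful accounting showing that the $\lfloor(x_l-1)/8\rfloor$ pebbles mandated on $C_k$ by the good-interval definition lie outside the subset available for the $C_{k-1}$ computation (which is exactly how the reduced budget $S-1-\lfloor(x_l-1)/8\rfloor$ was derived in Lemma~\ref{lem:good-bound}). A secondary subtlety is ensuring that the $m\geq n/(64S)$ and $\sum x_l\geq n/8$ constraints defining $D_k$ are achievable simultaneously; this follows by combining Lemma~\ref{lem:bound-on-good} with the upper bound $x_l\leq 8S-6$ forced by the last clause of Definition~\ref{def:good} together with $p_k\leq S$, which limits individual interval lengths and thereby guarantees a large number of intervals.
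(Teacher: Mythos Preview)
Your proposal is correct and follows essentially the same approach as the paper's proof: defer the $T(n,k,S)$ recurrence to the good-interval machinery of~\cite{LT82} (the paper simply cites Theorem~4.2.5 there), and for the $\cca$ bound observe that during each good interval of length $x_l$ at least $\lfloor(x_l-1)/8\rfloor$ pebbles are pinned on $C_k$ while the remaining $S-1-\lfloor(x_l-1)/8\rfloor$ pebbles spend at least $T(n,k-1,\cdot)$ steps on $C_{k-1}$, then sum and minimize over admissible interval allocations. Your write-up is in fact more explicit than the paper's own proof, which is quite terse; the charging concern you flag (that the $\lfloor(x_l-1)/8\rfloor$ pebbles on $C_k$ are disjoint from those available to the recursive sub-pebbling) is exactly the content of Lemma~\ref{lem:good-bound}, as you correctly identify.
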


    \begin{proof}
    The proof for the expression for $T(n, k, S)$ follows directly from Theorem 4.2.5 in~\cite{LT82}. 

    Now we prove the expression for $\cca$ of $C(n, k)$ for the case when $S \leq n/20$. For each good interval, at least $\left\lfloor \frac{x_i-1}{8}\right\rfloor$ pebbles must remain on $C_k$ while $C_1, \dots, C_{k-1}$ are pebbled with the remaining $S-1 - \left\lfloor\frac{i-1}{8}\right\rfloor$ pebbles. Therefore, the $\cca$ when $\alpha = 1$ of the good period with length $x$ is $\left\lfloor \frac{x_i-1}{8}\right\rfloor T\left(n, k-1, S-1-\left\lfloor\frac{x_i - 1}{8}\right\rfloor\right)$. By Lemma~\ref{lem:bound-on-good}, we have that the total length of the disjoint good intervals is at least $n/8$ (since $p_{r+1} \leq S$ and $n/4 - 2S \geq n/8$). Thus, summing over the $\cca$ for all good intervals and minimizing over all possible allocations of good intervals gives a lower bound on the $\cca$ for $C_k$ which is a lowerbound on the $\cca$ when $\alpha = 1$ of the entire graph. 
    %
    \end{proof}

    \begin{lemma}\label{lem:closed-form-lb}
    When $S = c_1\left(\frac{N\log\log{N}}{\log{N}}\right)$ for some constant $c_1$, $n = 20S$, $k = \left\lfloor N/S \right\rfloor$ and we create a graph according to Graph Construction~\ref{def:cca-lower-bound-construct}, $C(n, k)$ with $\Theta(N)$ vertices, 
    \begin{align}
    \cca(N, k, S) \geq \min_{D} \sum_{(x_1, \dots, x_m) \in D} \left\lfloor\frac{x_i - 1}{8}\right\rfloor \left(20S\left(\frac{20S(\floor{N/S}-1)}{c\left(S-1-\left\lfloor\frac{x_i - 1}{8}\right\rfloor\right)}\right)^{\floor{N/S}-1}\right)
    \end{align}

    for $S \leq c_2 \left(\frac{N\log\log{N}}{\log{N}}\right)$ for some constants $c$ (specified in the proof) and $c_2 < c_1$.
    \end{lemma}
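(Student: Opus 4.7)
The plan is to substitute a closed-form lower bound for the time term $T(n, k-1, S')$ appearing in the recursive $\cca$ inequality~\eqref{eq:cca-lb} of Theorem~\ref{thm:cca-lower-1}, where $S' = S - 1 - \lfloor (x_i - 1)/8 \rfloor$, and then specialize to the parameter choices $n = 20S$, $k = \lfloor N/S \rfloor$. Concretely, once a bound of the form $T(n, k-1, S') \geq 20S \bigl( 20S(k-1)/(cS') \bigr)^{k-1}$ is obtained, plugging it into \eqref{eq:cca-lb} gives the statement of the lemma essentially verbatim.

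First I would derive the closed-form lower bound for $T(n, k-1, S')$ by iteratively unrolling the time-recurrence from Theorem~\ref{thm:cca-lower-1}, namely
\[T(n, k, S) \geq \min_{(x_1, \ldots, x_m) \in D_k} \sum_i T\bigl(n, k-1, S - 1 - \lfloor (x_i - 1)/8\rfloor\bigr),\]
with base case $T(n, 1, S) \geq n^2 / (10 S)$. At each level of recursion, Lemma~\ref{lem:bound-on-good} guarantees that the selected disjoint good intervals cover at least $n/8$ of $[1, r]$, so there are at least $n/(64S)$ of them. The minimum over $D_k$ can be analyzed by a balancing / convexity argument: since the cost inside depends on $1/S'$ (convex in $S'$) while the aggregate $\sum x_i$ is constrained, a Jensen-style inequality forces the extremal tuples to be essentially uniform. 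Unrolling $k-1$ times, these contributions accumulate multiplicatively and yield a bound of the form $n \bigl(n(k-1)/(cS')\bigr)^{k-1}$ for an absolute constant $c$, at which point substituting $n = 20S$ produces the exact expression in the lemma.

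Next I would verify that the constraint $S \leq c_2(N\log\log N/\log N)$ with $c_2 < c_1$ keeps the entire unrolling valid. This hypothesis ensures both that the preconditions $S \leq n/20$ and $S \leq (n-1)/4$ of Theorem~\ref{thm:blba-simple} hold at every level of the recursion, and that $S' = S - 1 - \lfloor(x_i - 1)/8\rfloor$ remains strictly positive throughout the $k-1$ recursive steps (otherwise the $T$-recurrence becomes degenerate and the bound trivial). The choice of $c_2$ will come out of checking that these inequalities propagate down the recursion tree.

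The main obstacle will be tracking the extra $(k-1)$ factor appearing inside the parenthesized base of the exponent, which is \emph{not} present in the standalone bound of Theorem~\ref{thm:blba-simple}. This factor materializes only by exploiting the sum over good intervals at every level of the recursion --- essentially a multi-level AM--GM argument --- so care is required to avoid losing or over-counting it as the recursion unrolls. A secondary bookkeeping concern is selecting the constant $c$ so that all the floor terms, additive $-1$'s, and per-level losses can be absorbed uniformly; this is tedious but routine, and the cleanest path is to fix $c$ slightly larger than the constant $10$ in Theorem~\ref{thm:blba-simple} to create the slack needed by the other losses.
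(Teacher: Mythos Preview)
Your plan is correct, and the final substitution step matches the paper exactly. The difference is in how you obtain the closed-form lower bound on $T(n,k-1,S')$. You propose to unroll the recursion in Theorem~\ref{thm:cca-lower-1} yourself, using a convexity/AM--GM argument across levels to recover the $(k-1)$ factor in the base of the exponent. The paper instead treats this as a black box: it simply invokes the bound $T(n,k,S)\ge n\bigl(nk/(cS)\bigr)^k$ for some constant $c\ge 10$, which is Theorem~4.2.6 of~\cite{LT82} (already quoted just above Theorem~\ref{thm:cca-lower-1}), and plugs it directly into Eq.~\eqref{eq:cca-lb} with $k\leftarrow k-1$ and $S\leftarrow S-1-\lfloor(x_i-1)/8\rfloor$. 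After that single substitution, setting $n=20S$ and $k=\lfloor N/S\rfloor$ gives the lemma immediately, and the condition $S\le c_2(N\log\log N/\log N)$ is just the requirement $S\le n/20$ needed to invoke the cited theorem.

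So your ``main obstacle'' --- tracking the $(k-1)$ factor through the recursion --- is real, but it is precisely the content of the result the paper cites, not something the paper re-proves. Your route would work and is self-contained; the paper's is a two-line citation-and-substitute.
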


    \begin{proof}
    We know from~\cite{LT82} that the expression for $T(n, k, S)$ is lower bounded by $T(n, k, S) \geq n\left(\frac{nk}{cS}\right)^k$ for some constant $c \geq 10$. Therefore, we can substitute this expression into our Eq.~\ref{eq:cca-lb} to obtain the following expression:

    \begin{align*}
    \cca(N, k, S) 
    \geq \min_{D} \sum_{(x_1, \dots, x_m) \in D} \left\lfloor\frac{x_i - 1}{8}\right\rfloor \left(n\left(\frac{n(k-1)}{c(S-1-\left\lfloor\frac{x_i - 1}{8}\right\rfloor)}\right)^{k-1}\right).
    \end{align*}

    Substituting our values as stated above then gives 

    \begin{align}\label{eq:summation-lower-bound}
    \cca(N, k, S) \geq \min_{D} \sum_{(x_1, \dots, x_m) \in D} \left\lfloor\frac{x_i - 1}{8}\right\rfloor \left(20S\left(\frac{20S(\floor{N/S}-1)}{c\left(S-1-\left\lfloor\frac{x_i - 1}{8}\right\rfloor\right)}\right)^{\floor{N/S}-1}\right)
    \end{align}

    for some number of pebbles used that is less than $n/20$; or in other words, for some constant $c_2$, $S \leq c_2 \left(\frac{N\log\log{N}}{\log{N}}\right)$ where we determine the exact values of $c_1$ and $c_2$ later on (since the exact values of $c_1$ and $c_2$ also depend on the types of linear superconcentrators used in each of the $k$ layers of our construction).
    \end{proof}

    \begin{lemma}\label{lem:closed-form-lb-exp}
    Given $S \leq c_2\left(\frac{N\log{\log{N}}}{\log{N}}\right)$ for some constant $c_2$ where $c_2 \left(\frac{N\log{\log{N}}}{\log{N}}\right) < n/20$,

    \begin{align}
    \cca(N, k, S) \geq \frac{c_2}{8}\left(\frac{N\log{\log{N}}}{\log{N}}\right)\left(20S \left(\frac{20S \left(\left\lfloor \frac{N}{S}\right\rfloor - 1\right)}{c(S-1)}\right)^{\left\lfloor\frac{N}{S}\right\rfloor -1}\right).\label{eq:closed-form-lb-exp}
    \end{align}
    \end{lemma}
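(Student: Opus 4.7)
My plan is to simplify Eq.~\ref{eq:cca-lb} by two successive lower-bounding steps and then substitute the hypothesized upper bound on $S$. First, for every $i$ we have $0 \leq \lfloor (x_i-1)/8 \rfloor \leq S-2$, hence $S-1-\lfloor (x_i-1)/8 \rfloor \leq S-1$. Replacing the denominator $c(S-1-\lfloor (x_i-1)/8 \rfloor)$ by the larger quantity $c(S-1)$ shrinks the base of the exponent; since the base is positive and the exponent $\lfloor N/S\rfloor-1$ is a positive integer, this preserves a valid lower bound. After this replacement every summand shares the $i$-independent factor $20S\bigl(20S(\lfloor N/S\rfloor-1)/(c(S-1))\bigr)^{\lfloor N/S\rfloor-1}$, which can be pulled outside the $\min$ and outside the sum. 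The problem is thereby reduced to showing $\min_{(x_1,\dots,x_m)\in D}\sum_{i=1}^m \lfloor (x_i-1)/8\rfloor \geq c_2 N\log\log N/(8\log N)$.

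Second, for this residual sum I would exploit the constraints defining $D$ from Theorem~\ref{thm:cca-lower-1}: $\sum_i x_i \geq n/8$, $1\leq x_i\leq 8S-6$, and $m > n/(64S)$. Using the elementary bound $\lfloor (x_i-1)/8\rfloor \geq (x_i - 9)/8$ gives $\sum_i \lfloor (x_i-1)/8 \rfloor \geq (\sum_i x_i)/8 - 9m/8$, and the constraints together with $n = 20S$ force $\sum_i \lfloor (x_i-1)/8\rfloor = \Omega(S)$. Finally, under the assumption $S \leq c_2 N\log\log N/\log N$ (equivalently, $c_2 N \log\log N/\log N$ is at most a constant multiple of $S$), the $\Omega(S)$ lower bound translates into the claimed factor $c_2 N\log\log N/(8\log N)$, giving Eq.~\ref{eq:closed-form-lb-exp} after multiplying by the uniform factor pulled out in the first step.

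The main obstacle is ruling out a degenerate minimizer of the residual sum: a priori an adversary could split the required total $\sum x_i \geq n/8$ across many tiny intervals (all with $x_i \leq 8$), which would make every $\lfloor (x_i-1)/8\rfloor$ vanish while still meeting $\sum x_i \geq n/8$ by making $m$ large. To preclude this, I would return to the semantics of good intervals via Lemma~\ref{lem:good-bound}: each good interval must accommodate the pebbling of $n-2S$ outputs of $C_{k-1}$ with only $S - 1 - \lfloor (x_i-1)/8\rfloor$ available pebbles, which forces each nontrivial $x_i$ to be at least $T(n, k-1, S-1-\lfloor (x_i-1)/8\rfloor)$, far larger than $8$ for any $S$ polynomial in $N$. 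Alternatively, since $y \mapsto y\cdot T(n, k-1, S-1-y)$ is convex, applying Jensen's inequality to the original (pre-replacement) sum shows that the minimizer over $D$ is the balanced one in which all $x_i$ are equal, and evaluation at this balanced point directly yields the $\Omega(S)$ bound on $\sum \lfloor(x_i-1)/8\rfloor$ without any degeneracy.
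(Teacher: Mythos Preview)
Your first step (replacing $S-1-\lfloor(x_i-1)/8\rfloor$ by $S-1$ in the denominator and pulling out the resulting common factor) is correct and leads to exactly the displayed target in the paper. The paper's route to the remaining minimization differs from yours: instead of lower-bounding $\sum_i\lfloor(x_i-1)/8\rfloor$ via the constraints defining $D$, it argues that any configuration with some $x_i>1$ can be ``split'' into $x_i$ unit-length pieces, each with a smaller per-term value (since the denominator relaxes to $c(S-1)$), so the minimum over $D$ is attained when every $x_i=1$; it then reads off the coefficient using $\sum x_i\geq n/8$. In doing so the paper tacitly writes the weight as $x_i/8$ rather than $\lfloor(x_i-1)/8\rfloor$, which is exactly what keeps the sum from collapsing to zero at $x_i=1$ --- the same floor issue you flagged.

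Your two proposed fixes for the degenerate all-small-$x_i$ case both have gaps. Fix (a) conflates the length $x_i$ of a good interval (a count of outputs of $C_k$ first pebbled during it) with the \emph{time} spent inside it: Lemma~\ref{lem:good-bound} lower-bounds the latter (and this is what feeds the recursion on $C_{k-1}$), not the former, and a good interval with $x_i=1$ is entirely consistent with Definition~\ref{def:good}. Fix (b) via Jensen does not apply because the constraint in $D$ is on $\sum x_i$ and only lower-bounds $m$, not on $\sum\lfloor(x_i-1)/8\rfloor$; taking $m$ large with every $x_i\leq 8$ satisfies all constraints of $D$ yet makes every $\lfloor(x_i-1)/8\rfloor=0$, and convexity of $y\mapsto y\cdot T(\cdot,\cdot,S-1-y)$ says nothing about comparisons across different values of $m$ and different $\sum y_i$.
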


    \begin{proof}
    We assume for the sake of contradiction that there exists a closed formed lowerbound for the equation where some $x_i > 1$. Suppose there exists some good period with length $x_i > 1$, then the term 

    \begin{align*}
    \frac{x_i}{8}\left(20S \left(\frac{20S \left(\left\lfloor \frac{N}{S}\right\rfloor - 1\right)}{c(S-1-\left\lfloor \frac{x_i - 1}{8}\right\rfloor)}\right)^{\left\lfloor\frac{N}{S}\right\rfloor -1}\right)
    \end{align*}

    is in the summation of the calculation of $\cca(N, k, S)$ (see Eq.~\ref{eq:summation-lower-bound}). We can replace the term with the following:

    \begin{align*}
    x_i\left(\frac{1}{8} \left(20S \left(\frac{20S \left(\left\lfloor \frac{N}{S}\right\rfloor - 1\right)}{c(S-1)}\right)^{\left\lfloor\frac{N}{S}\right\rfloor -1}\right)\right)
    \end{align*}

    which results in a smaller $\cca(N, k, S)$ a contradiction, therefore no values of $x_i$ are greater than $1$ and the closed form lower bound is that as stated in Eq.~\ref{eq:closed-form-lb-exp}.
    \end{proof}

    \begin{lemma}\label{lem:cca-lb-lessS}
    Given $S \leq c_2 \left(\frac{N\log{\log{N}}}{\log{N}}\right)$ for some constant $c_2$ where $c_2 \left(\frac{N\log{\log{N}}}{\log{N}}\right) < n/20$, $\cca$ when $\alpha = 1$ is $\omega\left(\frac{N^2\log{\log{N}}}{\log{N}}\right)$.
    \end{lemma}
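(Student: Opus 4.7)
The plan is to start from the closed-form expression of Lemma~\ref{lem:closed-form-lb-exp}, write
\[
f(S) \;=\; \frac{c_2}{8}\cdot\frac{N\log\log N}{\log N}\cdot 20S\cdot\left(\frac{20S(\lfloor N/S\rfloor - 1)}{c(S-1)}\right)^{\lfloor N/S\rfloor - 1},
\]
and show $f(S) = \omega(N^2\log\log N/\log N)$ uniformly in the allowed range. My first step is to argue that $f(S)$ is monotonically decreasing on the interval $[2,\,c_2 N\log\log N/\log N]$, so the minimum is attained at the right endpoint. Taking logarithms and treating $\lfloor N/S\rfloor$ as $N/S$ up to lower-order terms, the derivative $d(\log f)/dS$ is dominated by the large negative term $-(N/S^{2})\log(20N/(cS))$, giving
\[
\frac{d\log f}{dS}\;\leq\;\frac{1}{S}-\frac{N}{S^{2}}\bigl[1+\log(20N/(cS))\bigr]\;<\;0
\]
for all $N$ large enough once $S\leq c_2 N\log\log N/\log N$, since $\log(20N/(cS))=\Omega(\log\log N)$ in this range.

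Next I would evaluate $f$ at $S^{\ast}=c_2(N\log\log N/\log N)$. Using $S^\ast\cdot\lfloor N/S^\ast\rfloor \geq N - S^\ast$ together with $S^\ast = o(N)$, the base of the exponent is at least $\frac{10N}{cS^\ast}=\Omega(\log N/\log\log N)$ for large $N$, and the exponent itself satisfies $\lfloor N/S^\ast\rfloor - 1 = \Omega(\log N/\log\log N)$. Taking the logarithm of the exponential factor yields
\[
\Omega\!\left(\frac{\log N}{\log\log N}\right)\cdot\log\!\left(\frac{\log N}{\log\log N}\right)\;=\;\Omega(\log N),
\]
once $c_2$ is small enough that the hidden constant exceeds $1$. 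Thus the exponential factor alone is $N^{\Omega(1)}$. The polynomial prefactor evaluates to $\frac{c_2}{8}\cdot\frac{N\log\log N}{\log N}\cdot 20S^\ast = \Theta\!\left(N^2(\log\log N)^2/(\log N)^2\right)$, so multiplying gives
\[
\cca(N,k,S^\ast)\;\geq\;\frac{N^{2+\Omega(1)}}{(\log N)^{2}},
\]
which is asymptotically much greater than $N^2\log\log N/\log N$, establishing the desired $\omega$-bound at $S=S^\ast$ and (by the monotonicity step above) for all smaller $S$ in the range.

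The main obstacle will be careful constant-tracking: the $\Omega(1)$ in the exponent comes from the inequality $\log(\log N/(c_2\log\log N))\geq (1+\epsilon)\log\log N$, which forces $c_2$ to be strictly less than $1$ and small enough to absorb the constants $20$ and $c$ inherited from Theorem~\ref{thm:blba-simple}. A secondary subtlety is the monotonicity argument itself, since I glossed over the floor in $\lfloor N/S\rfloor$; this only changes the value of $f$ by a bounded multiplicative factor and does not affect monotonicity at the resolution needed, but writing it out cleanly is somewhat fiddly. If the monotonicity step turns out awkward, a clean fallback is to handle small $S$ separately using Theorem~\ref{thm:blba-simple} directly, which gives $\cca\geq T \geq n(n/(10S))^k$, a super-polynomial quantity that is trivially $\omega(N^2\log\log N/\log N)$ whenever $S$ is bounded above by any slowly growing function of $N$.
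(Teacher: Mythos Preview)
Your approach is correct and starts from the same place as the paper (the closed-form bound of Lemma~\ref{lem:closed-form-lb-exp}), but you work harder than necessary. The paper simply rewrites the expression as $\Theta\bigl(\tfrac{N\log\log N}{\log N}\cdot S(N/S)^{N/S-1}\bigr)$ and then observes directly that $S(N/S)^{N/S-1}=\omega(N)$ for \emph{every} $S$ in the range: setting $m=N/S$, this quantity equals $N\cdot m^{m-2}$, and since $S\leq c_2 N\log\log N/\log N$ forces $m\to\infty$, we get $m^{m-2}\to\infty$. No monotonicity argument or endpoint evaluation is needed. Your route via ``$f$ is decreasing, so evaluate at $S^\ast$'' reaches the same conclusion but is longer and, as you acknowledge, fiddly around the floor function.

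One genuine slip in your constant-tracking: the inequality you invoke, $\log(\log N/(c_2\log\log N))\geq(1+\epsilon)\log\log N$, is false for every fixed $c_2,\epsilon>0$, since the left-hand side equals $\log\log N-\log c_2-\log\log\log N\sim\log\log N$. Fortunately this does not matter for your argument: the computation $\Omega(\log N/\log\log N)\cdot\Theta(\log\log N)=\Omega(\log N)$ already yields an exponential factor of the form $N^{c'}$ for \emph{some} positive constant $c'$, and multiplying that by your $\Theta(N^2(\log\log N)^2/(\log N)^2)$ prefactor gives $\omega(N^2\log\log N/\log N)$ regardless of whether $c'>1$. So your constraint ``$c_2<1$'' is unnecessary, and the whole obstacle paragraph about constant-tracking can be dropped.
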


    \begin{proof}
    From Lemma~\ref{lem:closed-form-lb-exp}, the $\cca$ when less than $c_2 \left(\frac{N\log{\log{N}}}{\log{N}}\right)$ pebbles are used is lower bounded by the closed form expression, 

    \begin{align}
    \cca(N, k, S) \geq \frac{c_2}{64}\left(\frac{N\log{\log{N}}}{\log{N}}\right)\left(20S \left(\frac{20S \left(\left\lfloor \frac{N}{S}\right\rfloor - 1\right)}{c(S-1)}\right)^{\left\lfloor\frac{N}{S}\right\rfloor -1}\right).\label{eq:lb-cca}
    \end{align} 

    We know that the lower bound given in Eq.~\ref{eq:lb-cca} is $\Theta\left(\frac{N\log{\log{N}}}{\log{N}}\left(S\left(\frac{N}{S}\right)^{\frac{N}{S} -1}\right)\right)$. 


    Given $S \leq \frac{N\log{\log{N}}}{\log{N}}$ pebbles, we now prove that the $\cca$ of our construction for $\alpha = 1$ is $\omega\left(\frac{N^2 \log{\log{N}}}{\log{N}}\right)$. We know that $S\left(\frac{N}{S}\right)^{\frac{N}{S} - 1} = \omega(N)$ for all $S \leq c_2\left(\frac{N\log{\log{N}}}{\log{N}}\right)$. Therefore, $\cca(N, k, S) = \omega\left(\frac{N^2\log{\log{N}}}{\log{N}}\right)$.
\end{proof}
\fi

\ificalpshortversion
We use the proofs presented in~\cite{LT82} of the closed-form time-space trade-offs of the superconcentrator construction used above. For more detail, refer to the full version attached.
\fi

\ificalpfullversion
    \begin{theorem}\label{thm:cca-lb}
    Given $S > c_2 \left(\frac{N\log{\log{N}}}{\log{N}}\right)$, $\cca$ when $\alpha = 1$ is $\Omega\left(\frac{N^2\log{\log{N}}}{\log{N}}\right)$. Therefore, $\cca(G) = \Theta\left(\frac{N^2\log{\log{N}}}{\log{N}}\right)$ in the sequential\footnote{Erratum: An earlier version of this paper stated the theorem for general pebbling strategies, not just sequential ones. The proof herein is unchanged from that earlier version, and proves the theorem only for sequential strategies.} pebbling model where $G$ is given by our Graph Construction~\ref{def:cca-lower-bound-construct} above.
    \end{theorem}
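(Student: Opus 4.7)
The plan is to combine the already-established low-peak case handled by Lemma~\ref{lem:cca-lb-lessS} with new arguments covering the two sub-regimes of large peak, and then invoke the matching upper bound of Lemma~\ref{lem:cc-bound} specialized to $\alpha=1$ to conclude $\cca(G) = \Theta(N^2\log\log N/\log N)$. First I would observe that Graph Construction~\ref{def:cca-lower-bound-construct} inserts a directed Hamiltonian path through the topological-sort edges, so every node is an ancestor of the sink, and hence any valid sequential visiting pebbling must use at least $T \geq N$ time steps and place at least $N$ pebbles in total. Lemma~\ref{lem:cca-lb-lessS} already handles every strategy whose peak space satisfies $S \leq c_2 N\log\log N/\log N$, so it suffices to lower-bound $\cca$ for strategies whose peak exceeds this threshold.

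For peak $S$ in the intermediate window $c_2 N\log\log N/\log N < S \leq c_1 N\log\log N/\log N$ (equivalently $S \leq n/20$), I would reuse the good-interval machinery: Lemma~\ref{lem:bound-on-good} still yields disjoint good intervals of total length $\Omega(S_0)$ with $S_0 = c_1 N\log\log N/\log N$, and the LT82 inner bound $T(n,k-1,S')$ remains superlinear in $N$ once $S'$ is decremented by the per-interval pebble overhead. Plugging these into the recursive inequality of Theorem~\ref{thm:cca-lower-1} yields $\cca \geq \Omega(N^2\log\log N/\log N)$ by essentially the same calculation as in Lemma~\ref{lem:closed-form-lb-exp}, since the range $(c_2,c_1)$ is chosen precisely so that the base of the exponential in the LT82 bound stays bounded away from $1$. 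For the remaining regime $S > c_1 N\log\log N/\log N = n/20$, the good-interval machinery breaks (the tradeoff of Theorem~\ref{thm:blba-simple} requires $S \leq (n-1)/4$), and moreover the naive ramp-up contribution $S^2/2$ to $\cca$ is insufficient whenever $S < N\sqrt{\log\log N/\log N}$. My plan here is a Markov-style averaging reduction: assuming for contradiction $\cca < c N^2 \log\log N/\log N$ for sufficiently small constant $c$, the bound $T \geq N$ combined with Markov's inequality implies that at least $T/2$ time steps have $|P_t| \leq 2c N\log\log N/\log N < c_2 N\log\log N/\log N$.

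The main obstacle will be making the averaging step rigorous: the set of low-space steps need not be contiguous, and truncating the pebbling to a sub-interval requires pre-loading the boundary configuration, which inflates the effective peak of the truncated sub-pebbling. My plan is to partition the time axis into maximal contiguous intervals on which $|P_t| \leq 2c N\log\log N/\log N$, view the restriction of the original pebbling to each such interval as an independent pebbling of (a sub-DAG of) $G$ starting from the at-most-$2cN\log\log N/\log N$ boundary configuration, and apply a pre-loaded variant of Lemma~\ref{lem:cca-lb-lessS} to each interval with the boundary pebbles counted toward the effective peak, which still stays below $c_2 N\log\log N/\log N$ for $c$ small enough. Since at most one new node can be pebbled per step, the high-space steps can account for at most $T/2$ fresh pebblings, so the low-space intervals must collectively pebble $\Omega(N)$ new nodes, and summing the per-interval $\cca$ contributions from the extended lemma gives $\omega(N^2\log\log N/\log N)$ overall, contradicting the assumed bound on the total $\cca$ and completing the proof.
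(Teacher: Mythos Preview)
Your approach diverges from the paper's in the large-peak regime $S>n/20$, and the gap there is genuine. The paper does not attempt a time-averaging reduction back to the low-peak lemma; instead it exploits the \emph{layered} structure of $C(n,k)$ directly. Because the added Hamiltonian path forces every output of every intermediate superconcentrator $C_i$ to be pebbled, the paper tracks, for each level $i$, the number $S_a^i$ of pebbles present on the graph at the moment the pebbling of $C_i$'s outputs begins. It then applies the basic BLBA bound of \cite{LT82} level-by-level: pebbling the $n$ outputs of $C_i$ costs at least $n\cdot\max(1,(n-2S_a^i)/(2S_a^i+1))$ steps, and throughout those steps $S_a^i$ pebbles sit on the graph, contributing $S_a^i\cdot T(n,i{-}1,S{-}S_a^i)$ to $\cca$. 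Summing over the $k=\Theta(\log N/\log\log N)$ levels and observing that either $\Theta(k)$ levels have $S_a^i>c_2 N\log\log N/\log N$ (giving the desired bound directly) or else some level starts with $S_a^i\leq c_2 N\log\log N/\log N$ (whence Lemma~\ref{lem:cca-lb-lessS} applies to that level's sub-pebbling) closes the argument.

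Your Markov-averaging plan runs into trouble at the step ``apply a pre-loaded variant of Lemma~\ref{lem:cca-lb-lessS} to each interval.'' That lemma (and the entire good-interval machinery of \cite{LT82} behind it) lower-bounds the cost of pebbling \emph{all $n$ outputs of the top superconcentrator $C_k$}, starting from a small configuration. A maximal low-space sub-interval of an arbitrary strategy need not pebble any output of $C_k$ whatsoever---it may be working entirely inside $C_1,\dots,C_{k-1}$---so there is no statement of the form ``a sub-interval that places $m$ new pebbles with peak $\leq c_2 N\log\log N/\log N$ has $\cca\geq f(m)$'' that follows from the existing lemmas. Establishing such a per-interval bound would require redoing the LT82 analysis for arbitrary target sets and arbitrary starting configurations, which is substantially more than a ``variant.'' The paper sidesteps this by never leaving the per-level framework where the relevant outputs are always the full output set of some $C_i$, so BLBA applies verbatim.
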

\fi
\ificalpshortversion
\begin{theorem}\label{thm:cca-lb}
    Given $S > c_2 \left(\frac{N\log{\log{N}}}{\log{N}}\right)$, $\cca$ when $\alpha = 1$ is $\Omega\left(\frac{N^2\log{\log{N}}}{\log{N}}\right)$. Therefore, $\cca(G) = \Theta\left(\frac{N^2\log{\log{N}}}{\log{N}}\right)$ in the sequential pebbling model where $G$ is given by our Graph Construction~\ref{def:cca-lower-bound-construct} above.
    \end{theorem}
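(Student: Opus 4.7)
The plan is to treat the claim as the conjunction of two lower bounds, combined with the matching upper bound from Theorem~\ref{lem:cc-bound}. First, I would split the minimization over pebbling strategies by their peak pebble usage $S$: the small-peak regime $S \leq c_2(N\log\log N/\log N)$ is already handled by Lemma~\ref{lem:cca-lb-lessS}, which gives $\cca \geq \omega(N^2\log\log N/\log N)$ on any such strategy.

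For the complementary large-peak regime $S > c_2(N\log\log N/\log N)$, I would extend the good-intervals argument of Theorem~\ref{thm:cca-lower-1} to this range. By Lemma~\ref{lem:bound-on-good}, any valid pebbling of $G$ admits disjoint good intervals of lengths $(x_1,\dots,x_m)$ covering at least $N/8$ outputs of $C_k$, with $1 \leq x_i \leq 8S-6$. During each good interval of length $x_i$, Lemma~\ref{lem:good-bound} forces $\lfloor(x_i-1)/8\rfloor$ pebbles to remain persistently on $C_k$ while the remaining $S-1-\lfloor(x_i-1)/8\rfloor$ pebbles carry out the $(k-1)$-level sub-computation, which by the \cite{LT82} time--space tradeoff spans at least $T(n,k-1,\,S-1-\lfloor(x_i-1)/8\rfloor)$ time steps. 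Summing these contributions reproduces the recurrence of Theorem~\ref{thm:cca-lower-1} and, minimizing exactly as in Lemma~\ref{lem:closed-form-lb-exp} but with $S$ in the large regime, yields $\cca \geq \Omega(N \cdot S) = \Omega(N^2\log\log N/\log N)$.

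Combining both regimes and matching against the upper bound $\cca \leq O(N^2\log\log N/\log N)$ from Theorem~\ref{lem:cc-bound} for any in-degree-$2$ graph gives the claimed $\cca(G)=\Theta(N^2\log\log N/\log N)$ for $G$ as in Graph Construction~\ref{def:cca-lower-bound-construct}.

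The main obstacle is ensuring the \cite{LT82} superconcentrator tradeoff, which is stated for $S\leq n/20$, still applies as $S$ crosses the threshold $c_2(N\log\log N/\log N)$. One remedy is to truncate the stacked structure layer-by-layer so each layer's local pebbling satisfies the $S\leq n/20$ precondition; an alternative is to argue directly per-layer that pebbling the $\Theta(S)$ outputs in topological order while preserving enough precursors for subsequent layers forces $\Omega(S)$ pebbles to remain on the graph for $\Omega(S)$ consecutive steps, producing $\Omega(S^2)$ cost per layer and $\Omega(S^2)\cdot \Theta(N/S)=\Omega(NS)$ aggregate cost over the $\Theta(N/S)$ stacked layers, matching the desired bound.
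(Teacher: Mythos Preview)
Your high-level decomposition—split strategies by peak pebble usage $S$, invoke Lemma~\ref{lem:cca-lb-lessS} for the small-$S$ regime, argue separately for large $S$, and match against the upper bound of Theorem~\ref{lem:cc-bound}—is the same as the paper's. The divergence is entirely in how the large-$S$ regime is handled.

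Your primary plan is to push the good-intervals recursion of Theorem~\ref{thm:cca-lower-1} through for $S > c_2(N\log\log N/\log N)$. This does not go through: the time lower bound $T(n,k-1,\cdot)$ feeding that recurrence comes from the \cite{LT82} tradeoff (their Theorem~4.2.6), which genuinely requires $S \le n/20$; once $S$ crosses this, the inner sub-pebbling may finish in linear time and the recursion collapses. Your first proposed remedy (``truncate layer-by-layer so each layer satisfies $S\le n/20$'') does not make sense, since $S$ is a property of the adversary's strategy, not of the graph. Your second remedy is closer to what is needed, but the claim ``$\Omega(S)$ pebbles remain for $\Omega(S)$ steps per layer'' is unjustified as written: $S$ is only the \emph{peak} usage, and the number of outputs per layer is $n = 20c_1(N\log\log N/\log N)$, fixed by the construction, not $\Theta(S)$ in general.

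The paper instead drops the good-intervals machinery in this regime and uses the simpler BLBA bound (Theorem~4.2.1 of \cite{LT82}) per layer. The crucial additional idea is to track, for each layer $C_i$, the number $S^i_a$ of pebbles present when that layer's pebbling begins, and to run a hybrid argument: if $\Omega(k)$ layers had $S^i_a \le c_2(N\log\log N/\log N)$, those layers alone would trigger the $\omega(N^2\log\log N/\log N)$ bound of Lemma~\ref{lem:cca-lb-lessS}. Hence $\Theta(k)$ layers must have $S^i_a > c_2(N\log\log N/\log N)$, and on each such layer those $S^i_a$ pebbles sit on the graph throughout the $\Omega(n)$ sequential steps needed to traverse it (this is where sequentiality and the topological-order edges of Graph Construction~\ref{def:cca-lower-bound-construct} are used). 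Summing gives $\Omega\bigl(n \cdot c_2(N\log\log N/\log N) \cdot k\bigr) = \Omega(N^2\log\log N/\log N)$. This hybrid step—leveraging the small-$S$ lemma to force most per-layer starting configurations to be large—is the missing ingredient in your sketch.
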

\fi

\begin{proof}
Let $S$ be large enough that a single linear superconcentrator with $n$ output nodes can be pebbled in almost linear time. In this case, we use the simple BLBA argument presented in Theorem 4.2.1 of~\cite{LT82} to prove that in this case, $\cca(N, k, S) = \Omega\left(\frac{N^2\log{\log{N}}}{\log{N}}\right)$ since each $C_i$ in the construction of $C(n, k)$ as defined in Graph Construction~\ref{def:cca-lower-bound-construct} along with the edges joining $C_{k-1}$ with $C_k$ is an $n$-superconcentrator. 

The BLBA theorem as proven in~\cite{LT82} proves a tradeoff in time with respect to the number of pebbles in the starting and ending configuration of the graph. Let $S_a$ be the starting number of pebbles on the graph and $S_b$ be the ending number of pebbles on the graph. Suppose that $S_b = 0$ for the sake of lowerbounding our cumulative complexity. Then $c_2 \left(\frac{N\log{\log{N}}}{\log{N}}\right) < \min_{1\leq i \leq k}\left(S^i_a\right) \leq S$ by our theorem statement where $S^i_a$ is the starting pebble configuration for level $i$. Suppose that $S^{c_i}_a \leq c_2 \left(\frac{N\log{\log{N}}}{\log{N}}\right)$ for $L$ levels (i.e. for some set of levels in $[c_1, \dots, c_L]$), then $\cca(n, i, S)$ is given by Lemma~\ref{lem:cca-lb-lessS} for the $L$ values. Using Lemma~\ref{lem:cca-lb-lessS}, we see that in order for the bound from Lemma~\ref{lem:cca-lb-lessS} to not hold, we must have $L = o(N/S)$. But, then, $N/S - o(N/S) = \Theta(N/S)$ layers are pebbled with $S^i_a > c_2\left(\frac{N\log{\log{N}}}{\log{N}}\right)$ pebbles. Therefore, we achieve the same asymptotic bound by considering $c_2 \left(\frac{N\log{\log{N}}}{\log{N}}\right) < \min_{1\leq i \leq k}\left(S^i_a\right) \leq S$.


Thus, by BLBA, we know that 

\begin{align}
T(n, 1, S) \geq \max\left(1, \frac{n-2S}{2S + 1}\right)\\
T(n, i, S) \geq n\left(\max\left(1, \frac{n}{10S}\right)\right)^i\\
\cca(N, k, S) \geq \sum_{1\leq i \leq k: S^i_a} S^i_a T(n, i - 1, S - S^i_a) \max\left(1, \left(\frac{n-2S^i_a}{2S^i_a + 1}\right)\right)\\
\geq n \min_{1 \leq i \leq k: S^i_a}\left(S^i_a\right)\max\left(1, \frac{n-2S}{2S + 1}\right)(k-1)\label{eq:blba-lb-eq}
\end{align}

We can simplify in the last step since $T(n, i - 1, S - S^i_a) \geq n$ for all $1 \leq i\leq k$. Furthermore, by our argument above, we know that $\min_{1 \leq i \leq k: S^i_a}\left(S^i_a\right) = \Theta(S)$.


When $n = c_1 \left(\frac{N\log{\log{N}}}{\log{N}}\right)$, $S > c_2 \left(\frac{N\log{\log{N}}}{\log{N}}\right)$, and $k = \frac{\log{N}}{\log{\log{N}}}$, then Eq.~\ref{eq:blba-lb-eq} simplifies to $\Omega\left(\frac{N^2 \log{\log{N}}}{\log{N}}\right)$ for some predefined $c_2$ and $c_1$. Otherwise, the time of pebbling is $N$ using $c_1\left(\frac{N\log{\log{N}}}{\log{N}}\right)$ pebbles resulting in $\cca$ when $\alpha = 1$ to be $\Theta\left(\frac{N\log{\log{N}}}{\log{N}}\right)$.
\end{proof}

\paragraph{Case of $\alpha = 2$} We briefly note that the above construction does not asymptotically achieve tightness for $\alpha = 2$ by our current analysis. This is due to the fact that when $\alpha = 2$, Lemma~\ref{lem:cca-lb-lessS} no longer holds due to the fact that $\left(\frac{N\log\log{N}}{\log{N}}\right)\cdot \left(\frac{\log{N}}{\log{\log{N}}}\right)^{\frac{\log{N}}{\log{\log{N}}}} = o(N^2)$.

\begin{openquestion}
Does there exist a bounded in-degree graph family that has $\cca$ for $\alpha \geq 2$ that meets the upper bound?
\end{openquestion}

\ificalpfullversion
  \section{Implementation}\label{sec:implementation}
We implemented our \emph{cylinder} construction defined in Def.~\ref{def:wrappyramid}. We choose to implement this construction because it is simplest of the constructions we present for $\cH_1$, yet achieves memory and time bounds comparable to our more complicated construction. In implementing the pebbling construction, we seek to minimize the runtime of $\cH_1$ while maximizing its output size.  This leads to some interesting tradeoffs as well as an observation about static-memory-hardness and the random oracle model in general.

\subh{Overview of implementation}
First, we map an entire row of labels (i.e., labels in a particular layer of our construction) in our cylinder construction defined in Definition~\ref{def:wrappyramid} to an array of bits in memory of length $l$; the computer is not aware of the label sizes nor the boundaries between labels.  We can implement a serialized pebbling algorithm by iteratively reading $n$ bits, starting at offset $f$, sending the read bits to a hash function, writing the $n$ bits returned from the hash function at offset $f$, and finally incrementing $f$ by additive $n$ bits for the next round. This process is repeated until the end of the string, which constitutes one row of the cylinder construction.  This procedure of processing the rows of the cylinder is repeated once for every row of the cylinder DAG.

\subh{Degree and label size in the implemented function}
In the code, there are no direct arguments for pebbling the graph in terms of number of pebbles, label size or degree; instead we parametrize by the following parameters (which we choose): the total size of the array $l$, the input size $i$ and the output size $n$ of the hash function.  

The label size is $n$ which is also the output length of the hash function.  Every hash produces one label. The number of labels per row is then $\frac{l}{n}$.  $l$ should be a multiple of $n$ so that there are no partial labels at the end of the array.

The indegree of the wraparound pyramid is $\frac{i}{n}$.  Here as well, $i$ should be a multiple
 of $n$ so that the degree is an integer and this maps cleanly to the pebbling model when we consider ingree $2n$ (ie constant indegree $2$ in the pebbling model).

The height needed for the wraparound pyramid is then the array size divided by the difference between input and output sizes, or $\frac{l}{i-n}$.  The input size must be greater than the output size for the height to be defined.  This corresponds with the requirement that the degree must be 2 or greater for the pyramid construction to provide memory guarantees.

\subh{Extra memory usage}
In order to implement the wraparound pyramid, memory usage needs to be greater than that stated in theoretical model due to necessary memory allocations in the hardware.  The leftmost bits of the array need to be copied and appended to the right side, so that the lower level input values are available to the final hash interations which consume the wraparound inputs.  This increases the memory needed by $i-n$.

\subh{Reducing number of hashes}
The runtime of evaluating $\cH_1$ is determined by the number of hash transformations called as very little other computation is done.  The number of hashes per row is $\frac{l}{n}$, and the number of rows is $\frac{l}{i-n}$, giving a total number of hash calls as

\[\left(\frac{l}{n}\right)\left(\frac{l}{i-n}\right) = \frac{l^2}{ni-n^2}\]

$l^2$ indicates the expected time requirement proportional to the square of the output size.  To optimize the time for a given $l$, we look at the denominator, $ni - n^2$, noting that $i > n$, keeping this positive.  To reduce the time taken, increase the input size $i$.  Graphically, this makes sense as descending from the top of the wraparound pyramid, the higher degree will quickly cover the entire width of a row.  However, in practice we cannot increase $i$ and maintain the memory-hard properties, which leads to an interesting divergence between the random oracle model and real-world hash functions.

\subh{Data busses to the random oracle}
One aspect which is rarely discussed in the Random Oracle Model is the exact process by which one makes a call to the oracle.  Does the query need to be sent to the oracle via a parallel bus, all bits at once, or is the query sent via a serial bus, one bit at a time?  If serially, can we send some of the bits, then wait a while, and send the rest?  We are not aware of literature dealing with these mechanics of data transmission to and from the oracle; however in our case it's quite relevant.  If serial transmission is allowed, $i$ can be made arbitrarily large without needing to store the whole row of the wraparound pyramid in memory.  For each bit of a label, as soon as it is computed it can be sent to all the oracles using that bit as an input, and promptly forgotten; the oracles act as a memory cache.  The memory-hardness proofs implicity assume an oracle model where the entire query is handed over simultaneously to the oracle, and as such, any query to the oracle must exist in its entirety in memory before the query is made.

In practice, real-world hash functions resemble a serial-bus oracle much more closely than a parallel bus oracle.  Whether we're referring to Merkle-Damgard, Sponge construction, or other methods, all widely used hash functions are built out of fixed length one-way functions.  The internal state of a hash function can thus act as a data cache for the purposes of the pebbling graph.  For a high-degree node, the left predecessors can be fed to the hash function and forgotten before the right predecessors are known.  Since the internal state of the hash function has a fixed size, this defeats the memory hardness promised by the pebbling construction.

\subh{Instantiating the random oracle}
We used blake2b, a fast and well-known hash function, for our implementation.  Blake2b has an internal state size of 1024 bits, so we were able to set $i$ to 1024 bits while keeping the memory hardness.  $n$ was set to 512 bits, giving a 2-degree pebbling graph.  Decreasing either $i$ or $n$ would lead to inefficient use of the function.  It would seem that hash functions with a larger internal state size, capable of supporting a larger $i$ would be faster for this usage, but it is not as clear as larger state sizes may correlate with slower evaluation of a single hash.

\begin{figure*}[t!]
	\centering
    \begin{subfigure}[b]{0.33\textwidth}
        \centering
		\includegraphics[width=0.99\textwidth]{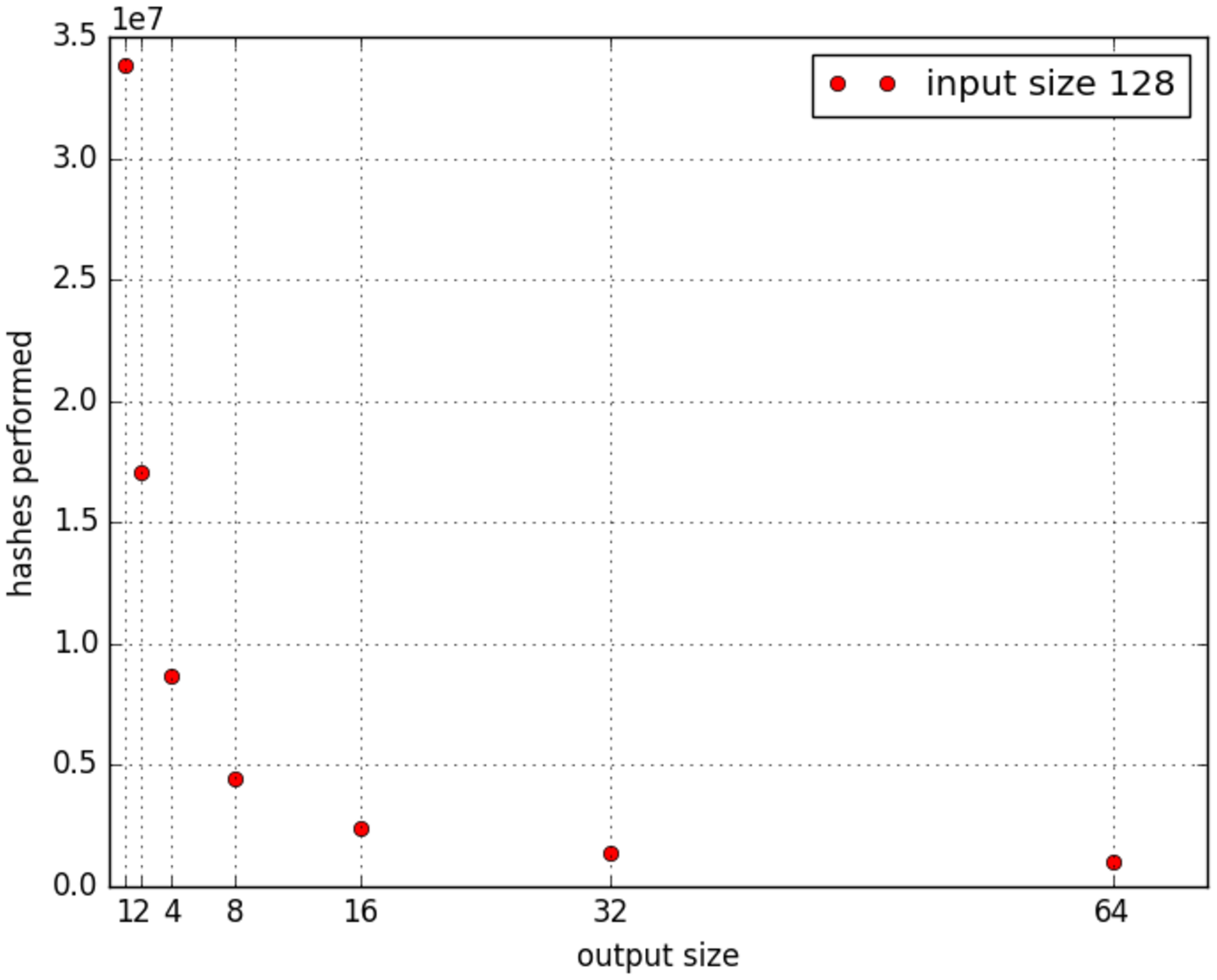}
		\caption{Runtime vs hash output size, 128 byte input, 65Kbyte row}
    \end{subfigure}%
    ~
    \begin{subfigure}[b]{0.33\textwidth}
        \centering
		\includegraphics[width=0.99\textwidth]{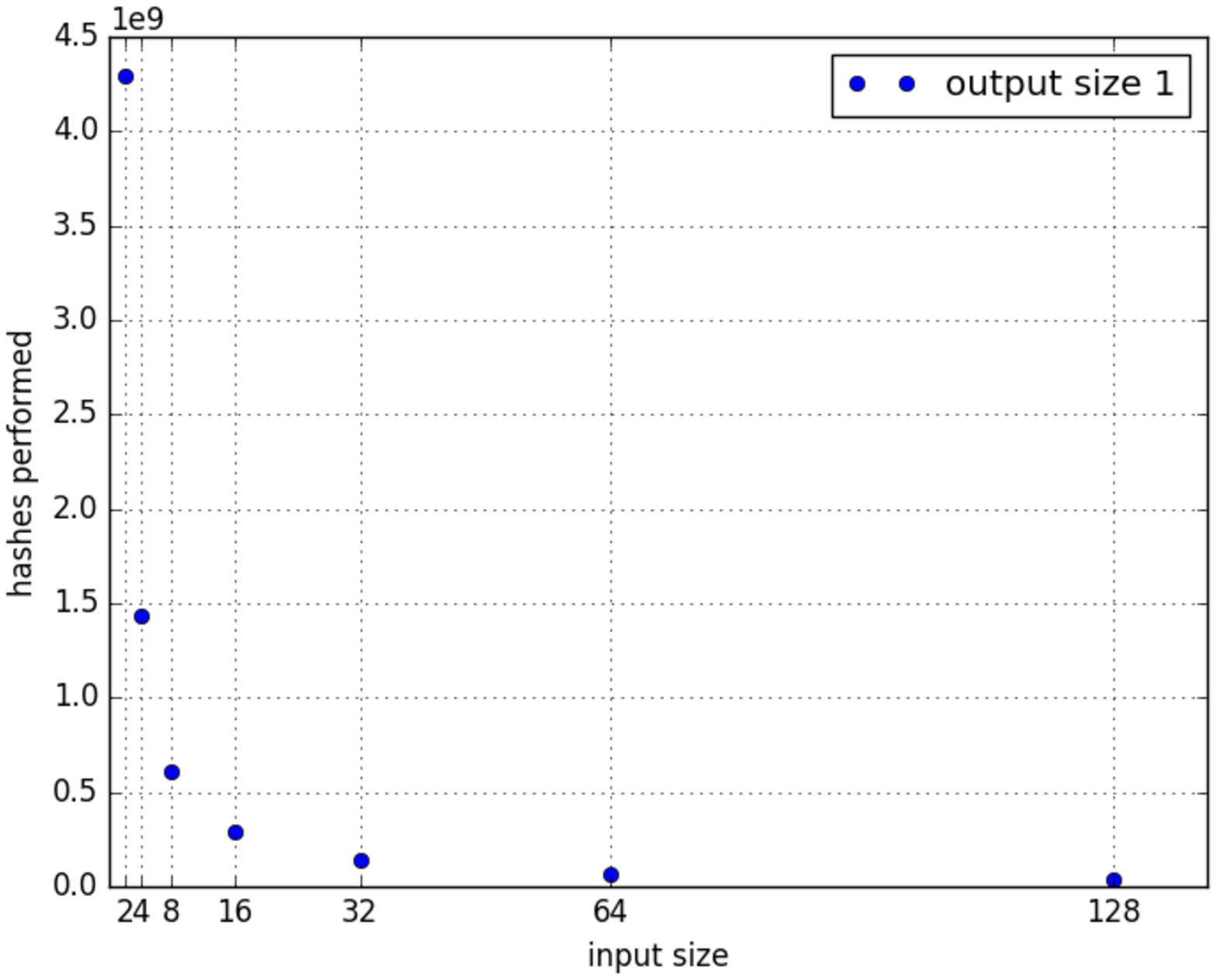}
		\caption{Runtime vs hash input size, 1 byte output, 65Kbyte row}
    \end{subfigure}%
    ~
    \begin{subfigure}[b]{0.33\textwidth}
        \centering
		\includegraphics[width=0.99\textwidth]{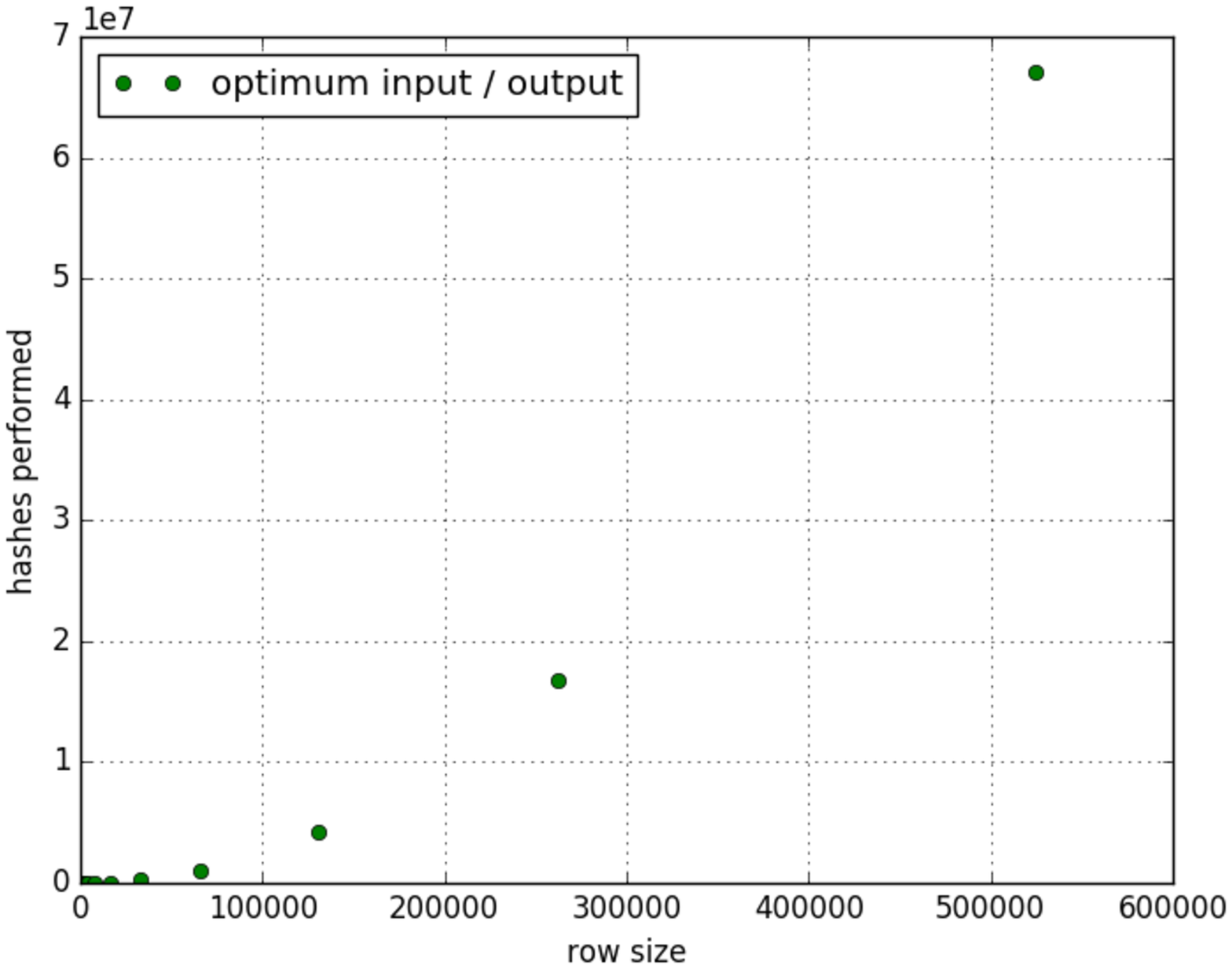}
		\caption{Runtime vs row size, 64 byte output, 128 byte input}
    \end{subfigure}

    \vspace{-5pt}
    
    \caption{Evaluation of cylinder implementation}
\end{figure*}

We implement a data-dependent memory access pattern for $\cH_2$. 
Other papers (eg catena~\cite{LW15} and balloonhash~\cite{BCS16}) have identified security vulnerabilities due to data-dependent memory access patterns which can leak information about the password to an attacker with incomplete access to the physical system evaluating the password hashing function.  These attacks occur because of the variable time taken to evaluate the function based on the input data, primarily due to the automatic caching of data inside the CPU.

Our $\cH_2$ function, while implementing a data-dependent memory access pattern, should not be vulnerable to cache timing attacks.  In practical usage, the output of $\cH_1$ will be very large with respect to the number of queries $\cH_2$ performs on the data; for any single evaluation of $\cH_2$, nearly all of the $\cH_1$ data will go unread.  Because of this sparse access, data will be read once and not used again before being evicted from the cache.  The probability that an input to $\cH_2$ results in a collision, and multiple reads from the same memory region, is thus modeled by $(\wsize{\Lambda})^{-Q}$
where $\Lambda$ is the size of the output of $\cH_1$, and $Q$ is the number of oracle calls made by $\cH_2$.

Inputs resulting in cache hits should be rare, and knowledge of a cache hit during $\cH_2$ evaluation give a bounded advantage to the attacker expressed by
\begin{align*}
\frac{\text{total number of access patterns with } n \text{ collisions}}{\text{total number of zero-collision access patterns}} \ .
\end{align*}
We observe that this could potentially be a large advantage in practice because attackers do not need to perform memory lookups into the set of $\cH_1$ outputs in order to detect collisions. That is, an attacker still has to do lots of hash operations, but their memory requirement could go down significantly.
\fi

\ificalpfullversion
  \section*{Acknowledgements}

\fi
\ificalpshortversion
  \subh{Acknowledgements}
\fi
We are grateful to Jeremiah Blocki for valuable feedback on an earlier version of this paper. We thank Ling Ren for helpful discussions. We also thank Erik D. Demaine and Shafi Goldwasser for their advice and discussions related to this paper. Finally, we thank our anonymous reviewers 
for insightful comments.

Sunoo's research is supported by the Center for Science of
  Information STC (CSoI),
  an NSF Science and Technology Center (grant agreement CCF-0939370),
  MACS project NSF grant CNS-1413920, and
  a Simons Investigator Award Agreement dated 2012-06-05.

\bibliographystyle{alpha}
\bibliography{ref}

\ificalpshortversion\else
  \appendix
  \iffullversion
    \input{pebble-sliding-rule}
  \fi
  \section{Details of SHF construction with short labels} 
\label{appx:h2-alternative}

\begin{algorithm}[ht!]\caption{$\cH^{q'}_2$}\label{algo:fancy_H2}
    On input $(1^\sec,x)$ and given oracle access to $\Seek_{R}$ 
    (where $R$ is the string outputted by $\cH_1$):
    \begin{enumerate}
    \item Let $\wsize{R}=|R|/w$ be the length of $R$ in words.
    \item Query the random oracle to obtain $\rho_0=\RO(x)$ and $\rho_1=\RO(x+1)$.
    \item Use $\rho_0$ to sample randomly $\iota_1,\dots,\iota_{q'}\in[\wsize{R}]$.
    \item Query the $\Seek_R$ oracle to obtain $\{y'_i=\Seek_R(\iota_i)\}_{i\in[q']}$.
    \item Output $(y'_1||\dots||y'_{q'})\oplus\rho_1$.
    \end{enumerate}
\end{algorithm}

\begin{definition}[$q''$-labeling]\label{def:labeling}
    Let $G=(V,E)$ be a DAG with maximum in-degree $\deg$, let
    $\LabelSet$ be an arbitrary ``label set,'' and define
    $\OSet(\deg,\LabelSet)=\left(V\times \bigcup_{\deg'=1}^{\deg}\LabelSet^{\deg'}\to \LabelSet\right)$.
    Let $\RO|_{q''}$ be the function that outputs the first $q'$ bits of the output of $\RO$.
    For any function $\RO\in\OSet(\deg,\LabelSet)$ and any label $\zeta\in\LabelSet$,
    the $(\RO,\zeta,q'')$-labeling of $G$ is a mapping 
    $\lab_{\RO,\zeta}:V\to\LabelSet$ defined recursively as follows.\footnote{%
    We abuse notation slightly and also invoke $\lab_{\RO,\zeta}$ on \emph{sets} of vertices, in which case the output is defined to be a tuple containing the labels of all the input vertices, arranged in lexicographic order of vertices.}
    $$\lab_{\RO,\zeta}(v)=\begin{cases}
    \RO|_{q''}(v,\zeta) & \mbox{ if } \indeg(v)=0 \\
    \RO|_{q''}(v,\lab_{\RO,\zeta}(\pred(v))) & \mbox{ if } \indeg(v)>0 
    \end{cases}\ .$$
    \end{definition}

    Then, we define our family of random oracle functions defined from our hard to pebble graph family constructions.
    
\begin{definition}[$q''$-graph function family]\label{def:gff}
    Let $n=n(\sec)$ and let $\gfsec=\{\glong\}_{\sec\in\NN}$ be a graph family. We write $\OSet_{\delta,\sec}$ to denote the set $\OSet(\deg,\zo^\sec)$ as defined in Definition \ref{def:labeling}.
    The \emph{$q''$-graph function family} of $\gf$ is the family of oracle functions 
    $\cF^{q''}_{\gf}=\{\ff_{\g}\}_{\sec\in\NN}$ where
    $\ff_{\g}=\{f^{\RO}_{\g}:\zo^\sec\to(\zo^\sec)^z\}_{\RO\in\OSet_{\delta,\sec}}$ and $z=z(\sec)$ is the number of sink nodes in $\g$. The output of $f^{\RO}_{\g}$ on input label $\zeta\in\zo^\sec$ is defined to be
    $$f^{\RO}_{\g}(\zeta)\defeq\lab_{\RO,\zeta}(\sink(\g))\ ,$$
    where $\sink(\g)$ is the set of sink nodes of $\g$.
\end{definition}
  \ificalpcorrectedversion\else
    \section{Regular and Normal Strategies} 

Here, we restate three theorems and prove briefly their equivalent formulation in the parallel model for the parallel model adapted from theorems in~\cite{GLT79,DL17} proven in the sequential model.

We first restate the definitions for normal and regular strategies:

\begin{definition}[Frugal Strategy~\cite{GLT79}]\label{def:frugal-strategy}
Given a DAG $G = (V, E)$, a \emph{frugal strategy} is a pebbling strategy with no unnecessary placements. In particular, the following are true of any frugal pebbling strategy:

\begin{enumerate}
\item At all times after the first placement on a vertex $v$, some path from $v$ to the goal vertex contains a pebble.
\item At all times after the last placement on a vertex $v$, all paths from $v$ to the goal vertex contain a pebble. 
\item The number of placements on a nongoal vertex is bounded by the total number of placements on its successors. 
\end{enumerate}
\end{definition}

\begin{definition}[Normal Strategy~\cite{GLT79}]\label{def:normal-strategy}
A \emph{normal strategy} is a standard pebbling strategy that is frugal and it pebbles each pyramid $P$ in $G$ as follows: after the first pebble is placed on $P$, no placement or removal of pebbles occurs outside $P$ until the apex of $P$ is pebbled and all other pebbles are removed from $P$. No new placement occurs on $P$ until after the pebble on the apex of $P$ is removed.
\end{definition}

\begin{theorem}[Normal Strategy Conversion~\cite{GLT79}]\label{thm:normal-strategy}
If the goal vertex is not inside a pyramid, any standard pebbling strategy can be transformed into a normal pebbling strategy without increasing the number of pebbles used in both the sequential and parallel pebbling models.  
\end{theorem}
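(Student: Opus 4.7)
The plan is to extend the classical sequential result of Gilbert--Lengauer--Tarjan~\cite{GLT79} to the parallel setting by induction on the number of ``non-normality violations'' in the given strategy, showing that each violation can be removed by a local reordering that never increases the peak pebble count. A violation occurs at a pyramid $P$ whenever the strategy $\s$ places or removes a pebble outside $P$ during the interval $[t_0, t_1]$, where $t_0$ is the first step $\s$ places a pebble on $P$ and $t_1$ is the first step at which the apex of $P$ is pebbled and every other pebble in $P$ has been removed. First I would pick the innermost such $P$ (so no sub-pyramid of $P$ is itself violating, using the partial order on pyramids by containment), and classify every action $\s$ performs in $[t_0,t_1]$ as an \emph{inside} action (touching a node of $P$) or an \emph{outside} action. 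The defining property of a pyramid---that it has no incoming edges from outside---immediately implies that inside actions depend only on earlier inside actions, and that no outside action in $[t_0,t_1]$ depends on a non-apex node of $P$; hence the two classes of actions commute.

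Second, I would produce the rearranged strategy $\s'$ as follows: execute all inside actions from $[t_0,t_1]$ in their original relative order, then execute a cleanup parallel move that removes all non-apex pebbles still on $P$ in a single step, then execute all outside actions from $[t_0,t_1]$ in their original relative order. In the parallel model, if some original parallel move at a time $t \in [t_0,t_1]$ contains both inside and outside sub-actions, I split it at $t$ into its inside-projection and outside-projection, which is legal because the sub-actions act on disjoint nodes and do not depend on each other. The resulting $\s'$ is a valid parallel strategy that ends $[t_0,t_1]$ in the normal state (only the apex of $P$ pebbled among nodes of $P$), and all later actions of $\s$ can be appended unchanged. Iterating across pyramids (innermost first) reduces the number of violations to zero and produces a fully normal strategy; the assumption that the goal vertex is not inside a pyramid ensures that the final ``remove all non-apex pebbles from $P$'' step is always consistent with eventually reaching the goal.

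Third, I must verify that $\scost(\s') \leq \scost(\s)$ step-by-step within $[t_0,t_1]$. During the inside phase of $\s'$ the outside contribution is exactly $|\mathrm{outside}_{t_0-1}|$ and during the outside phase the inside contribution is exactly $1$ (the apex). The total pebble count at any step of the rearranged interval is therefore bounded by
\[
\max\Bigl(\max_j |\mathrm{inside}^{\s}_j| + |\mathrm{outside}_{t_0-1}|,\; 1 + \max_j |\mathrm{outside}^{\s}_j|\Bigr),
\]
where $j$ ranges over $[t_0,t_1]$. I expect the main obstacle to be showing that this expression is $\leq \max_j(|\mathrm{inside}^{\s}_j| + |\mathrm{outside}^{\s}_j|)$ in both summands. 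The second summand is immediate (the original must have at least one pebble inside $P$ at any $j \in (t_0,t_1]$). For the first summand, the clean fix is to \emph{replace} the inside sub-strategy of $\s$ by the canonical $h_P$-pebble sequential pebbling of $P$ (which exists by the classical lower/upper matching bound on pyramids, cf.~\cite{GLT79,Nor15}); this keeps $\max_j |\mathrm{inside}^{\s'}_j| = h_P$, and by the same $h_P$ lower bound there is some $j^\star \in (t_0,t_1)$ at which $|\mathrm{inside}^{\s}_{j^\star}| \geq h_P$. Combining this with the monotonicity $|\mathrm{outside}_{t_0-1}| \leq |\mathrm{outside}^{\s}_{j^\star}|$ (which may require first pushing the outside-removals in $[t_0,t_0+k]$ backward past $t_0$, a separate but standard commuting argument) yields the required bound. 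The entire argument is symmetric between the sequential and parallel settings, since the only parallel-specific feature used is the splitting of mixed parallel moves into inside/outside projections.
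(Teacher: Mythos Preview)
Your approach is a direct, self-contained reconstruction of the \cite{GLT79} argument, extended to handle parallel moves via inside/outside splitting. The paper does something much lighter: it simply cites \cite{GLT79} for the sequential case, and for the parallel case invokes the proof of Lemma~\ref{lem:sec-par-equiv}, which constructively converts any parallel strategy into a sequential one with the same peak space. Composing that conversion with the sequential \cite{GLT79} normalization, and then observing that every sequential strategy is trivially a parallel strategy, yields a parallel normal strategy with no increase in space. So the paper's ``proof'' is essentially a two-line reduction to results already on the table, whereas you are re-deriving the normalization machinery from scratch.

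What your route buys is independence from the black-box \cite{GLT79} citation and a uniform argument for both models; what the paper's route buys is brevity, since the sequential--parallel space equivalence has already been established earlier.

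That said, your sketch has a genuine soft spot you yourself flag: the bound on the first summand relies on $|\mathrm{outside}_{t_0-1}| \leq |\mathrm{outside}^{\s}_{j^\star}|$, which is false in general because outside pebbles can be removed during $[t_0,j^\star]$. Your proposed patch (push outside removals in an initial segment of $[t_0,t_1]$ back before $t_0$) is the right instinct, but it is not ``standard'' enough to leave unproven: you must argue that those removals do not precede any outside \emph{placements} that depend on the removed pebbles' slots, and that pushing them earlier does not raise the peak before $t_0$. Without that argument made explicit, the space bound for the rearranged strategy is not actually established. If you want your direct proof to stand on its own, this step needs to be written out; alternatively, you could adopt the paper's reduction and avoid the issue entirely.
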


\begin{proof}
The proof of this statement in the sequential model is given in~\cite{GLT79}. We now prove this statement in the parallel model. By our proof of Lemma~\ref{lem:sec-par-equiv}, any sequential strategy can be simulated trivially by a parallel strategy; therefore, if any pebbling strategy can be transformed into a sequential normal pebbling strategy, then any pebbling strategy can be transformed into a parallel normal pebbling strategy. 
\end{proof}

We now define regular pebbling strategies:

\begin{definition}[Regular Strategy~\cite{DL17}]\label{def:regular-strategy}
Given a DAG $G = (V, E)$, a \emph{regular strategy} is a standard pebbling strategy that is frugal and after the first pebble is placed on any road graph $R_w \in G$, no placements of pebbles occurs outisde $R_w$ until the set of desired outputs of $R_w$ all contain pebbles and all other pebbles are removed from $R_w$.
\end{definition}

By the same argument as given for the proof of Theorem~\ref{thm:normal-strategy}, we can prove the equivalent for parallel regular pebbling strategies.

\begin{theorem}[Regular Strategy Conversion~\cite{DL17}]\label{thm:regular-strategy}
Given a DAG $G = (V, E)$, if each input, $i_j \in \left\{i_1, \dots, i_w\right\}$, to a road graph has at most $1$ predecessor, any standard pebbling strategy that pebbles a set of desired outputs, $O \subseteq \left\{o_1, \dots, o_w\right\}$, at the same tiime can be transformed into a \emph{regular strategy} without increasing the number of pebbles used.
\end{theorem}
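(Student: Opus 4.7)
The plan is to adapt the pyramid‑normalization argument of Gilbert–Lengauer–Tarjan (Theorem~\ref{thm:normal-strategy}) to road graphs. Given any standard pebbling strategy $\s$ for $(G,T)$, I will produce a regular strategy $\s'$ using no more pebbles, by iteratively ``consolidating'' the moves that occur inside each road graph into a single contiguous block of moves, in the style of a local exchange argument on the move sequence.

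First, I would reduce to the frugal case: by the same argument that appears in the proof of Theorem~\ref{thm:normal-strategy}, any strategy can be made frugal without increasing the pebble count (delete placements that never contribute to a pebble on a desired output, and push removals as early as possible). So without loss of generality assume $\s$ is already frugal. Next, identify the road graphs $R_{w_1},\dots,R_{w_k}$ appearing in $G$ and, for each one, the first time-step $t^{(j)}_{\mathrm{first}}$ at which a pebble is placed on some vertex of $R_{w_j}$ by $\s$. I will process the road graphs one by one in order of $t^{(j)}_{\mathrm{first}}$, showing how to rearrange $\s$ so that all moves touching $R_{w_j}$ occur contiguously starting from $t^{(j)}_{\mathrm{first}}$, culminating in a configuration where the desired outputs of $R_{w_j}$ are pebbled and no other pebbles remain inside $R_{w_j}$, before any further move outside $R_{w_j}$.

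The key step is the local exchange: given the first ``bad'' time-step $t$ where $\s$ places or removes a pebble outside $R_{w_j}$ while the work on $R_{w_j}$ is incomplete (i.e.\ either not all desired outputs are pebbled yet, or some auxiliary internal pebble is still present), I want to swap the $R_{w_j}$‑moves forward past the outside moves. This is legal provided the outside moves do not depend on a vertex of $R_{w_j}$ that has yet to be pebbled. Here is where the hypothesis that each input of $R_{w_j}$ has at most one predecessor becomes essential: predecessors of $R_{w_j}$'s inputs live entirely outside $R_{w_j}$, and because the road graph is entered only through its inputs, no outside move can require the label of an interior vertex of $R_{w_j}$. Hence the interior pebbling of $R_{w_j}$ commutes with outside moves, and any input pebble needed by $R_{w_j}$ can be placed (and its predecessor uniquely produced) before the bad outside moves are executed. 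Carrying the swap out carefully, the resulting move sequence (i) remains a legal pebbling, (ii) pebbles the same target set, and (iii) at each time-step contains a configuration that is a subset of the union of the configurations occurring in the original strategy at the two corresponding time-steps, so the peak pebble count does not grow. Iterating the swap terminates because each swap strictly decreases the number of outside moves that occur between $t^{(j)}_{\mathrm{first}}$ and the completion time of $R_{w_j}$.

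I would then close by induction on $j$: once $R_{w_1},\dots,R_{w_{j-1}}$ have been made contiguous, the remaining strategy restricted to moves after the last of these blocks is again a frugal pebbling strategy of a smaller subgraph, to which the same argument applies for $R_{w_j}$. The main obstacle I anticipate is the bookkeeping of the exchange argument in the parallel model, specifically verifying that when two ``parallel batches'' of moves are swapped past each other the intermediate configurations remain legal and pebble‑bounded; the at‑most‑one‑predecessor hypothesis is what rules out cross‑dependencies between the interior of $R_{w_j}$ and the outside, and once this is formalized the proof reduces to the same template as the normal‑strategy conversion invoked for pyramids in Theorem~\ref{thm:normal-strategy}, so the sequential‑to‑parallel lift follows just as in that theorem.
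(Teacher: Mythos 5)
You should first note that the paper does not actually reprove this statement: the sequential version is imported wholesale from~\cite{DL17} (that is what the citation in the theorem header means), and the only thing the paper adds is the one-line lift to the parallel model via the trivial sequential-to-parallel simulation, exactly as in the proof of Theorem~\ref{thm:normal-strategy} using Lemma~\ref{lem:sec-par-equiv}. Your proposal instead tries to reprove the conversion from scratch by an exchange argument, which is a legitimately different (and more ambitious) route, but as written it has a genuine gap at the one step that makes these normalization lemmas nontrivial: the claim that space does not increase. Your invariant (iii) -- that each configuration of the rearranged strategy is a subset of the \emph{union} of the configurations at the two corresponding original time-steps -- does not bound the peak pebble count, because a union of two legal configurations can contain strictly more pebbles than either one (and hence more than the original maximum). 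Concretely, when you pull the interior work on $R_{w_j}$ forward past outside moves, the consolidated interior block now runs against whatever exterior configuration was present at the start of the block; in the original strategy the interior peak may have occurred at a later moment when the exterior held far fewer pebbles, so the rearranged strategy can exceed the original bound. Commutation/legality (which is what you use the one-predecessor hypothesis for) is the easy half; the hard half is quantitative.

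The arguments of~\cite{GLT79} for pyramids and~\cite{DL17} for road graphs get around exactly this: they use the fact that the gadget has a known standard-pebbling space requirement, so at the moment the interior usage peaks the exterior usage is forced to be correspondingly small, and the consolidated interior block is anchored at (or charged against) that moment rather than at the first touch of the gadget; frugality is then used to argue that deferred or discarded interior placements never need to be repeated. Your sketch never invokes the road graph's own space lower bound, so it cannot recover this accounting. Separately, the role you assign to the ``each input has at most one predecessor'' hypothesis is not quite right: the absence of edges from interior road-graph vertices to the outside is a structural property of how road graphs sit in $G$, not a consequence of that hypothesis; the hypothesis is what lets pebbles be brought onto the inputs (e.g.\ by sliding from the unique predecessor) without spending extra pebbles when the block is consolidated. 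To repair the proposal you would either need to reproduce the GLT/DL-style charging argument in full, or simply do what the paper does: cite~\cite{DL17} for the sequential statement and lift to the parallel model via Lemma~\ref{lem:sec-par-equiv}.
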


In addition, we prove this stronger theorem about the pebbling space complexity of pyramid graphs below than the theorems provided in~\cite{GLT79,Nor15} that will be useful for determining the pebbling space complexity of pyramids in the magic pebble game.

\begin{theorem}\label{thm:pyramid-space}
Given a pyramid graph $\Pi_{h}$ with $h$ levels where level $1$ has $h$ nodes and level $h$ has $1$ node. Given $S$ pebbles and if all $S$ pebbles are placed on level $i$ of the pyramid and $S < h + 1 - i$, then the apex of the pyramid cannot be pebbled using the rules of the standard pebble game.
\end{theorem}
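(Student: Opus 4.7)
The plan is to reduce to the classical pebbling lower bound for pyramid graphs, which asserts that the apex of a pyramid of height $h'$ requires at least $h'$ pebbles when started from the empty configuration (see, e.g., the treatments in~\cite{GLT79,Nor15}). The key observation is that once the starting configuration consists solely of pebbles on level $i$, the nodes at levels $1, \ldots, i-1$ become redundant for the task of pebbling the apex: their only possible function would be to re-pebble level $i$ nodes, but those level $i$ nodes are precisely the sources of the subpyramid $\Pi'$ spanned by levels $i, i+1, \ldots, h$. That subpyramid is itself a pyramid of height $h' = h + 1 - i$, and its sources can be pebbled unconditionally under the rules of the standard pebble game.

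I would proceed by contrapositive: assume toward contradiction that there is a valid pebbling strategy $\s = (P_0, P_1, \ldots, P_t)$ satisfying $|P_j| \leq S$ for every $j$, with $P_0$ consisting of the $S$ initial pebbles on level $i$, such that the apex is pebbled at some time step. From $\s$ I would construct a \emph{projected} strategy $\s'$ on $\Pi'$ by setting $P'_j = P_j \cap V(\Pi')$ for each $j$ and prepending an empty configuration together with one opening move that places source pebbles on the level $i$ nodes occupied in $P_0$. I would then verify transition-by-transition that $\s'$ is legal on $\Pi'$: for any newly pebbled $v \in P'_j \setminus P'_{j-1}$, either $v$ sits on level $i$ (in which case it is a source of $\Pi'$ and the placement is always valid) or $v$ sits at a strictly higher level, whence its predecessors all reside at level at least $i$ and belong to $V(\Pi')$; by the validity of the original move they were already in $P_{j-1}$, hence in $P'_{j-1}$. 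Slide moves are handled the same way. By construction $|P'_j| \leq |P_j| \leq S$, and $\s'$ eventually pebbles the apex. Applying the classical pyramid lower bound to $\s'$ then forces $S \geq h+1-i$, contradicting the hypothesis.

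The main technical obstacle I anticipate is the careful verification that the projection $\s \mapsto \s'$ preserves legality under every move type of the standard pebble game, including slide moves that might straddle levels $i-1$ and $i$ and transient placements on lower levels. Once the layered structure of the pyramid — edges appear only between consecutive levels — is exploited to show that no vertex at a level below $i$ is a predecessor of any vertex at a level above $i$, the verification is routine. The remainder of the argument is essentially invocation of the classical pyramid bound as a black box together with the count $|P'_j| \leq |P_j|$.
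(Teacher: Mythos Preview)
Your proposal is correct and follows essentially the same approach as the paper: both reduce to the classical pyramid lower bound by observing that levels $i,\dots,h$ form a sub-pyramid of height $h+1-i$, so that $h+1-i$ pebbles are required to reach the apex. The paper's proof is a terse two-line invocation of that fact, whereas you spell out the projection $P_j \mapsto P_j \cap V(\Pi')$ and verify move-by-move that lower levels contribute nothing; this extra care is sound but not a genuinely different route.
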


\begin{proof}
Given $S < h+1-i$ pebbles on the $i$-th layer of a height $h$ pyramid, we know that the $i$-th level of the pyramid forms a height $h + 1 - i$ height pyramid with the apex. Thus, by the pebbling space complexity of pyramids, $h + 1 - i$ pebbles are necessary on level $h+1-i$ in order to pebble the apex. 
\end{proof}

  \fi
  \iffullversion
    \input{dropoff}
    \input{ro-mem-req}
    \input{decremental}
  \fi
  \fi

\end{document}